\documentclass{statsoc}
\makeatletter
\AtBeginDocument{%
  \pdfpagewidth=\paperwidth
  \pdfpageheight=\paperheight
}
\makeatother

\usepackage{etoolbox}

\makeatletter
\patchcmd{\@makecaption}
  {\parbox}
  {\advance\@tempdima-\fontdimen2}
  {}{}
\makeatother

\usepackage{longtable}
\usepackage{booktabs}

\usepackage{amsmath,amssymb,amsfonts,mathtools,bm}
\usepackage{bbm}
\usepackage{mathrsfs}

\usepackage{xcolor}
\usepackage{enumitem}

\usepackage{xr-hyper} 

\usepackage[colorlinks=true,linkcolor=blue,citecolor=blue,urlcolor=blue]{hyperref} 

\usepackage[nameinlink,capitalise,noabbrev]{cleveref}

\usepackage{array}

\newcolumntype{L}[1]{>{\raggedright\arraybackslash}p{#1}}

\usepackage{tikz}
\usetikzlibrary{positioning,arrows.meta}
\usepackage{adjustbox} 
\usetikzlibrary{arrows.meta,positioning,fit,calc}

\usepackage{amsmath}
\usepackage{tikz}
\usetikzlibrary{arrows.meta, positioning, fit, calc, shapes.geometric, backgrounds}

\usepackage{aliascnt}

\newtheorem{theorem}{Theorem}[section]

\newaliascnt{lemma}{theorem}
\newtheorem{lemma}[lemma]{Lemma}
\aliascntresetthe{lemma}

\newaliascnt{proposition}{theorem}
\newtheorem{proposition}[proposition]{Proposition}
\aliascntresetthe{proposition}

\newaliascnt{corollary}{theorem}
\newtheorem{corollary}[corollary]{Corollary}
\aliascntresetthe{corollary}

\newaliascnt{definition}{theorem}
\newtheorem{definition}[definition]{Definition}
\aliascntresetthe{definition}

\newaliascnt{assumption}{theorem}
\newtheorem{assumption}[assumption]{Assumption}
\aliascntresetthe{assumption}

\newaliascnt{remark}{theorem}
\newtheorem{remark}[remark]{Remark}
\aliascntresetthe{remark}

\crefname{theorem}{Theorem}{Theorems}
\crefname{lemma}{Lemma}{Lemmas}
\crefname{proposition}{Proposition}{Propositions}
\crefname{corollary}{Corollary}{Corollaries}
\crefname{definition}{Definition}{Definitions}
\crefname{remark}{Remark}{Remarks}
\crefname{assumption}{Assumption}{Assumptions}
\crefname{equation}{equation}{equations}
\crefname{section}{Section}{Sections}
\crefname{subsection}{Section}{Sections}

\newcommand{\bbE}{\mathbb{E}}
\newcommand{\bbP}{\mathbb{P}}
\newcommand{\dd}{\mathrm{d}}

\newcommand{\norm}[1]{\left\lVert #1 \right\rVert}
\newcommand{\bbR}{\mathbb{R}}
\newcommand{\1}{\mathbbm{1}}

\newcommand{\bbN}{\mathbb{N}}

\newcommand{\CovX}{\mathbf{X}}

\newcommand{\Fobs}{\mathcal H}

\newcommand{\Tsig}{\mathscr T}
\newcommand{\Nproc}{N}
\newcommand{\Nzero}{N_0}
\newcommand{\None}{N_1}

\newcommand{\Tmax}{\mathfrak{T}}

\DeclareMathOperator*{\argmax}{arg\,max}

\newif\ifinappendix
\inappendixfalse

\let\oldappendix\appendix
\renewcommand{\appendix}{%
  \oldappendix
  \inappendixtrue
}

\crefformat{section}{\ifinappendix Supplement~#2#1#3\else Section~#2#1#3\fi}
\Crefformat{section}{\ifinappendix Supplement~#2#1#3\else Section~#2#1#3\fi}
\crefformat{subsection}{\ifinappendix Supplement~#2#1#3\else Section~#2#1#3\fi}
\Crefformat{subsection}{\ifinappendix Supplement~#2#1#3\else Section~#2#1#3\fi}

\title[Spatiotemporal point process causal inference]%
{Causal inference for spatiotemporal point processes in the presence of outcome spillover and carryover}

\author[Kresin {\it et al.}]{Conor Kresin}\coaddress{Conor Kresin, Department of Mathematics and Statistics, University of Otago,
P.O. Box 56, Dunedin 9054, New Zealand.}
\address{Department of Mathematics and Statistics, University of Otago, Dunedin, New Zealand.}
\email{conor.kresin@otago.ac.nz}

\author{Duncan A. Clark}
\address{Department of Statistics, Williams College, Williamstown, MA, USA.}

\author{Louis Davis}
\address{Department of Statistics, Stanford University, Stanford, CA, USA.}

\author[Kresin {\it et al.}]{Martin L.\ Hazelton}
\address{Department of Mathematics and Statistics, University of Otago, Dunedin, New Zealand.}

\begin{document}

\begin{abstract}
We develop a framework for causal inference with continuous spatiotemporal point-process outcomes under cell-level interventions and outcome spillover. Potential outcomes are indexed by full treatment allocations, and the observed post-treatment process is represented as an unlabelled superposition of latent control and treatment components. On the observed design support, expected post-treatment event counts in any spacetime region under a given treatment allocation are identified under consistency, exchangeability, and positivity; off-support contrasts are identified relative to a regime-stable structural point-process model. Estimation is likelihood-based and implemented with stochastic EM. To understand when this is feasible, we analyse a predictable blockwise hard-EM surrogate and show nonasymptotic contraction of estimation error to a statistical floor governed by locally ambiguous regions. This yields plug-in guarantees for cell-level and global causal functionals, and clarifies the additional array conditions needed for unnormalised growing-window contrasts. The framework covers history dependent spatiotemporal point processes including Poisson and Hawkes models, with applications to settings such as epidemiology, seismology, and finance. We provide an application assessing the causal effect of injecting wastewater into the ground on seismic activity in Oklahoma.
\end{abstract}

\section{Introduction}
Economists, epidemiologists, and social scientists are perennially interested in measuring the effect of a treatment, policy implementation, or intervention. The central difficulty in estimating causal effects stems from the absence of counterfactual outcomes \citep{rubin2005causal}, prompting the development of various paradigms \citep{pearl2009causal,splawa1990application} to determine when and how these unobserved outcomes can be estimated and used for subsequent causal inference.

Many phenomena of interest are most naturally characterised as continuous spatiotemporal data, and often such processes take the form of point patterns defined by the complex interaction in both time and space domains \citep{daley2003introduction}. 
It is not immediate how to adapt such data to a potential-outcomes setting without aggregation; doing so discards exact locations, introduces the modifiable areal unit problem, and obscures within-cell spatiotemporal structure \citep{fotheringham1991modifiable}. A related problem is that outcomes for continuous spatiotemporal data are often spatially or temporally dependent \citep{reich2021review}, generating cascades of events that potentially obfuscate the impact of an intervention. Such dependence is termed \textit{spillover} in the spatial context or \textit{carryover} in the temporal context, and if unaccounted for, results in inconsistent estimation and invalid inference \citep{papadogeorgou2019adjusting}.

Standard methods for causal inference that assume units are independent can be seriously misleading, especially when causal conclusions have policy implications \citep{benjamin2018spillover,lee2021network}.  There has been much recent work in proposing general frameworks to deal with these types of problems \citep{papadogeorgou2022causal,jiang2023instrumental} and in particular spatial spillover \citep{papadogeorgou2023spatial}. In this paper, we develop a model-based framework for continuous spatiotemporal point processes under cell-level interventions with outcome spillover and carryover (henceforth, we refer to both as ``spillover''). Such spillover occurs in spatiotemporal data wherein points can interact (for instance contagious disease processes feature self-excitement) or when points are spatially dispersed.  We represent the observed post-treatment process as an unlabelled superposition of latent control and treatment components. Estimation is likelihood-based via stochastic expectation maximisation (EM), the practical feasibility of which is examined theoretically through analysis of a predictable blockwise hard-EM surrogate. This conditional-intensity formulation retains exact event times and locations rather than relying on coarse aggregation.

\subsection{Overview of our approach}

Before proceeding, we briefly summarise our framework in informal terms. We observe events $(t_i,x_i)$ in continuous time and space. For example, such events could comprise infection in a city after a neighbourhood-level vaccination rollout, or earthquake locations after wastewater regulation. Treatment is applied in some parts of the spatiotemporal region but not others, and because nearby events can ``excite'' further events in space and time, outcomes in one region may depend on treatment in neighbouring regions. In the vaccination example, cases in a treated neighbourhood may still be driven by infections just across the boundary in an untreated one. To handle this outcome spillover while retaining exact event times and locations, we take the following steps.
\begin{enumerate}[label=(\roman*), nosep]
    \item We partition the post-treatment spatiotemporal window into finitely many cells (e.g.\ city neighbourhoods), which serve as the units of analysis, and define potential outcomes and causal estimands in terms of expected event counts under alternative treatment allocations.
    \item We model the observed post-treatment data as the superposition of latent control and treatment point processes. The events are unlabelled, so even in a treated cell an observed event need not be a treatment event, and vice versa.
    \item Given a candidate labelling, we estimate the model parameters by maximising the complete-data log-likelihood and derive conditions under which the resulting plug-in causal estimates are stable and accurate.
\end{enumerate}

The paper is organised as follows. Sections~\ref{sec:ppFramework}--\ref{sec:identification} develop the framework, estimands, notation, and identification. Sections~\ref{sec:estimation}--\ref{sec:finite-sample-theory} present estimation, a practical algorithm for estimation, and finite-sample theory deriving sufficient conditions for estimation. Sections~\ref{sec:simulation_study} and \ref{sec:application} report the simulation study and an application to seismic activity in Oklahoma, and \cref{sec:conclusion} discusses implications and extensions.

\section{Framework and causal estimands}\label{sec:ppFramework}

We observe a spatiotemporal point pattern \(\Nproc\) on \(D_{\rm obs}=[0,T]\times\mathcal S\subset [0,\infty)\times\mathbb R^d\) (typically \(d=2\)). A realisation is \(\phi=\{(t_i,x_i)\}_{i=1}^{\Nproc(D_{\rm obs})}\). We write the base product measure as \(\dd\tau=\dd t\otimes\dd\mu(x)\) and set \(|B|:=\int_B \dd\tau\). Treatment is applied at time \(t^\ast\in(0,T]\), and the post-treatment analysis window is \(D=(t^\ast,T]\times\mathcal S\). On \(D\), the observed process is represented as the superposition of two simple latent component processes, a control component \(\Nzero\) and a treatment component \(\None\), so that \(\Nproc=\Nzero+\None\) and \(\Nproc(B)=\Nzero(B)+\None(B)\) for every Borel set \(B\subseteq D\). When interaction between components is allowed, we view \((\Nzero,\None)\) as a single bivariate point process.

Point process data often occurs in continuous spacetime, and therefore, however defined, there are infinite potential outcomes without some level of aggregation. To remedy this problem, we partition the post-treatment subset of observation window $(t^\ast,T]\times \mathcal{S}$ into $J$ cells $\mathcal{I}_j$ such that

\begin{assumption}[Exogenous finite tessellation]\label{ass:tessellation}
The tessellation $\mathcal T=\{\mathcal I_j\}_{j=1}^J$ is a finite, time-invariant partition of $(t^\ast,T]\times\mathcal S$: there exists a finite partition $\{\mathcal S_j\}_{j=1}^J$ of $\mathcal S$ such that
\[
\mathcal I_j := (t^\ast,T]\times \mathcal S_j,\qquad j=1,\ldots,J,
\]
with $|\mathcal I_j|>0$ for all $j$. Moreover, there exists a $\sigma$-field $\Tsig$ such that $\mathcal T$ is $\Tsig$-measurable and $\Tsig$ is independent of $\sigma(\Nproc)$. If $\mathcal T$ is deterministic, the exogeneity condition is vacuous.
\end{assumption}

$\mathcal{T}$ is assumed to be fixed \textit{a priori}; all potential outcomes derived are implicitly conditional upon $\mathcal{T}$. In many practical applications, $\mathcal{T}$ may naturally be dependent on the secondary characteristics of some exogenous point process. For instance $\mathcal{T}$ could be a Voronoi tessellation of an independent process realisation. Generally, any tessellation satisfying Assumption~\ref{ass:tessellation} can be selected by the practitioner, and the selection of tessellation scheme may depend on the causal questions of interest. 

Naive aggregation wherein points falling in continuous spacetime are binned and counted is insufficient for valid causal estimation as modelling the spatiotemporal structure within cells is necessary for addressing spillover. In contrast, partitioning the post-treatment subset of the observation window allows for the spatiotemporal structure within cells to be retained while simultaneously achieving finite outcomes by Assumption~\ref{ass:tessellation}.

A potential concern when defining outcomes based on spatial partitions is the modifiable areal unit problem. This problem arises because the results of analysing spatially aggregated data can depend on the specific scale (granularity) and geometry of the spatial units used \citep{fotheringham1991modifiable}. 
In our setting, we aim to estimate the underlying continuous conditional intensity function, which bypasses the information loss typical of purely aggregate analyses. However, the specific potential outcomes and the derived causal estimands, like the individual treatment effect (defined below), are explicitly defined with respect to the chosen tessellation $\mathcal{T}$ and its cells $\{\mathcal{I}_j\}$. This is discussed in detail in \cref{sec:scale-of-effect}. 

Treatment is assigned (or observed) at the cell level. Let $Z$ denote the binary treatment vector, and $Z_j$ the treatment status of $\mathcal{I}_j$. In the simplest experimental setting, treatment could be assigned randomly to cells where $Z_j\overset{\text{iid}}{\sim}\text{Bernoulli}(\zeta)$ where $\zeta$ is the probability of treatment. Throughout, \(Z\in\{0,1\}^J\) denotes the realised random treatment vector, while \(z\in\{0,1\}^J\) denotes a fixed allocation value used in an intervention. Potential processes and causal estimands are indexed by fixed allocations \(z\); uppercase \(Z\) is reserved for observed-data statements such as exchangeability, positivity, and design descriptions.

\subsection{Causal Framework}\label{sec:causal-framework-dag}
To motivate the study of the particular form of spillover we are addressing, we utilise directed acyclic graphs (DAGs) \citep{pearl2009causal} to formally comment on the conditional dependence structure of the random process of interest. The DAG below (\cref{fig:dag_2}) is a discrete-time analogue of the continuous-time point-process model used later. Their role is to explain why, once temporal dependence and cross-cell outcome spillover are present, potential outcomes must be indexed by the full treatment allocation. Formal estimands and causal identification are stated in \cref{sec:causalEstimands,sec:identification}, and \cref{sec:finite-sample-theory} concerns one observed spatiotemporal realisation under a growing-window regime rather than repeated i.i.d.\ copies of these discrete-time nodes.

Fix a tessellation \(\mathcal T=\{\mathcal I_j\}_{j=1}^J\) satisfying Assumption~\ref{ass:tessellation}, with
\(\mathcal I_j=(t^\ast,T]\times \mathcal S_j\). For intuition only, choose grid points \(t^\ast=s_0<s_1<\cdots<s_M=T\), and define the period counts
\(
Y_j^m:=\Nproc((s_{m-1},s_m]\times \mathcal S_j),
\qquad j=1,\ldots,J,\ \ m=1,\ldots,M.
\)
Let \(Z_j\) denote the treatment assigned to cell \(j\). Using Pearl's intervention notation, $\mathrm{do}(Z=z)$ denotes the hypothetical intervention that sets the treatment allocation to $z$, rather than conditioning on the event $Z=z$ in the observed data-generating process \citep{pearl2009causal}. To isolate outcome spillover, we suppress unmeasured confounding and treat any covariate process \(\CovX\) as observed, predictable, and exogenous to the intervention path, so its trajectory is unchanged under \(\mathrm{do}(Z=z)\); see \citet{papadogeorgou2023spatial} for settings with spatial confounding. 
Thus the DAG discussion here isolates outcome spillover rather than the separate problem of treatment confounding.

Figure~\ref{fig:dag_2} shows the essential difficulty in the two-cell case; the general \(J\)-cell case is the obvious extension. Past outcomes in one cell may affect future outcomes in another, so even without confounding the effect of changing \(Z_1\) on a later outcome such as \(Y_1^m\) depends on the entire treatment allocation through the intermediate outcome history. This is the discrete-time analogue of spatiotemporal outcome spillover. Consequently, the relevant potential outcomes are indexed by the full allocation \(z\in\{0,1\}^J\), not just by a single coordinate \(z_j\). Without cross-cell arrows, or with only one post-treatment transition, simpler cell-wise contrasts may sometimes be available; our interest is the generic continuously evolving regime in which that simplification fails.

\begin{figure}[!t]
\centering
\begin{adjustbox}{max width=.75\linewidth,max totalheight=0.32\textheight,keepaspectratio}
\begin{tikzpicture}[
    scale=0.92,
    transform shape,
    ->, >=Stealth,
    node distance=2.1cm,
    every node/.style={font=\small},
    lagged/.style={rectangle, draw, rounded corners, inner sep=2pt}
]

\node (Y1_t1) {Y$_1^{1}$};
\node[lagged] (Y1_t2) [right of=Y1_t1] {Y$_1^{2}$};
\node (Y1_t3) [right of=Y1_t2] {Y$_1^{3}$};
\node (Y1_T)  [right=3cm of Y1_t3] {Y$_1^{T}$};

\node (Y2_t1) [below of=Y1_t1] {Y$_2^{1}$};
\node[lagged] (Y2_t2) [right of=Y2_t1] {Y$_2^{2}$};
\node (Y2_t3) [right of=Y2_t2] {Y$_2^{3}$};
\node (Y2_T)  [right=3cm of Y2_t3] {Y$_2^{T}$};

\node (Z1) [left of=Y1_t1] {$Z_1$};
\node (Z2) [left of=Y2_t1] {$Z_2$};

\draw (Z1) -- (Y1_t1);
\draw (Z1) to [out=25,in=165,looseness=1.5] (Y1_t2);
\draw (Z1) to [out=35,in=155,looseness=1.5] (Y1_t3);

\draw (Z2) -- (Y2_t1);
\draw (Z2) to [out=335,in=205,looseness=1.5] (Y2_t2);
\draw (Z2) to [out=325,in=215,looseness=1.5] (Y2_t3);

\draw (Y1_t1) -- (Y1_t2);
\draw (Y1_t1) -- (Y2_t2);
\draw (Y1_t2) -- (Y1_t3);
\draw (Y1_t2) -- (Y2_t3);

\draw (Y2_t1) -- (Y2_t2);
\draw (Y2_t1) -- (Y1_t2);
\draw (Y2_t2) -- (Y2_t3);
\draw (Y2_t2) -- (Y1_t3);

\draw[dotted] (Y1_t3) -- (Y1_T);
\draw[dotted] (Y2_t3) -- (Y2_T);
\draw[dotted] (Y1_t3) -- (Y2_T);
\draw[dotted] (Y2_t3) -- (Y1_T);

\end{tikzpicture}
\end{adjustbox}
\caption{Discrete-time motivational DAG with temporal dependence and cross-cell outcome spillover. The cross-cell arrows show why later outcomes depend on the full treatment allocation. The boxed lagged outcomes \(Y_1^2\) and \(Y_2^2\) are the post-treatment variables one might be tempted to condition on; doing so blocks some directed spillover paths to later outcomes but can also open noncausal paths through collider structures such as \(Z_1\to Y_1^2\leftarrow Y_2^1\) and \(Z_2\to Y_2^2\leftarrow Y_1^1\).}
\label{fig:dag_2}
\end{figure}
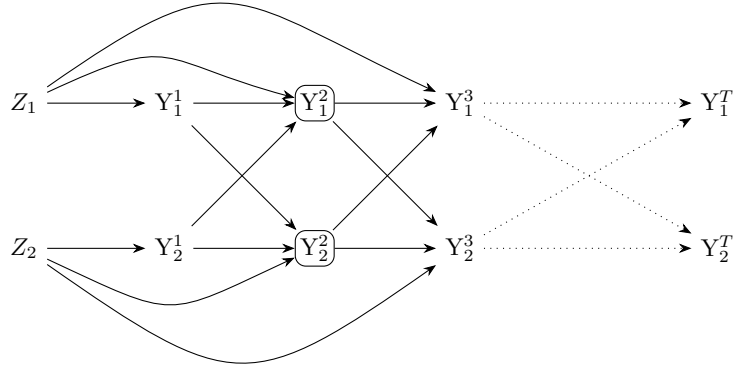

A natural temptation is to condition on lagged outcomes and study quantities such as
\(
P\!\left(Y_1^m\mid \mathrm{do}(Z_1=1),Y_1^{m-1},Y_2^{m-1}\right),
\)
but this generally does not identify the total causal effect. The lagged outcomes are post-treatment descendants of \(Z\), so conditioning on them changes the estimand. In Figure~\ref{fig:dag_2}, the boxed nodes also sit as colliders on paths such as \(Z_1\to Y_1^2\leftarrow Y_2^1\) and \(Z_2\to Y_2^2\leftarrow Y_1^1\). Thus conditioning on lagged outcomes can block some directed spillover paths while simultaneously opening noncausal paths, inducing dependence that in turn obstructs identifiability through exchangeability assumptions (see \cref{sec:identification}) and produces post-treatment bias \citep{elwert2014endogenous}. 
 The same warning applies to a data-adaptive tessellation: if \(\mathcal T\) were constructed from \(\Nproc\) (violating the exogeneity part of Assumption~\ref{ass:tessellation}), then conditioning on \(\mathcal T\) would itself induce selection.

The point-process framework used below replaces these discrete-time nodes by a continuous-history dependence structure. The arrows \(Y_i^{m-1}\to Y_j^m\) in Figure~\ref{fig:dag_2} become dependence of the conditional intensity on the full past, and causal effects are defined through the potential process \(\Nproc^z\) under the full intervention \(\mathrm{do}(Z=z)\). This is why the remainder of the paper works with full-allocation potential processes and likelihood-based estimation of interacting component intensities rather than conditioning on lagged outcomes.

\subsection{Definitions of Causal Estimands}\label{sec:causalEstimands}

For a fixed allocation \(z\in\{0,1\}^J\), let \(\Nproc^z\) denote the post-treatment potential process on \(D\) under the intervention \(\mathrm{do}(Z=z)\), and define the cell-level and total post-treatment counts
\[
Y_j(z):=\Nproc^z(\mathcal I_j),\qquad
Y_\bullet(z):=\Nproc^z(D)=\sum_{j=1}^J Y_j(z).
\]
Because outcome spillover is allowed, \(Y_j(z)\) may depend on the entire allocation vector \(z\), not only on \(z_j\) (and therefore,  implicitly, on $\mathcal{T}$). For a reference allocation \(z\), let \(z^{j=1}\) and \(z^{j=0}\) denote the allocations obtained by setting the \(j\)th coordinate of \(z\) to \(1\) and \(0\), respectively, while leaving all other coordinates unchanged.

\begin{definition}[Individual Treatment Effect (ITE)]
For cell \(\mathcal I_j\), define
\[
\psi_j(z)
:=
\bbE\!\left[Y_j\!\left(z^{j=1}\right)\right]
-
\bbE\!\left[Y_j\!\left(z^{j=0}\right)\right]
=
\bbE\!\left[\Nproc^{z^{j=1}}(\mathcal I_j)\right]
-
\bbE\!\left[\Nproc^{z^{j=0}}(\mathcal I_j)\right].
\]
\end{definition}

\begin{definition}[Average Individual Treatment Effect (AITE)]
\[
\psi(z):=\frac{1}{J}\sum_{j=1}^J \psi_j(z).
\]
\end{definition}
\begin{definition}[Difference under fixed allocations (DAITE)]
For \(z_a,z_b\in\{0,1\}^J\), define
\[
\psi(z_a,z_b)
:=
\bbE\!\left[Y_\bullet(z_a)\right]
-
\bbE\!\left[Y_\bullet(z_b)\right]
=
\bbE\!\left[\Nproc^{z_a}(D)\right]
-
\bbE\!\left[\Nproc^{z_b}(D)\right].
\]
\end{definition}

\begin{definition}[Treatment AITE (TAITE) and DTAITE]
Let \(\mathcal Z\subseteq\{0,1\}^J\) be a finite set of allocations. Define
\[
\psi(\mathcal Z):=\frac{1}{|\mathcal Z|}\sum_{z\in\mathcal Z}\psi(z).
\]
For two finite sets \(\mathcal Z_a,\mathcal Z_b\subseteq\{0,1\}^J\), define
\(
\psi(\mathcal Z_a,\mathcal Z_b):=\psi(\mathcal Z_a)-\psi(\mathcal Z_b).
\)
Equivalently, if \(\pi_{\mathcal Z}\) denotes the uniform distribution on \(\mathcal Z\), then
\[
\psi(\mathcal Z)=\sum_{z}\pi_{\mathcal Z}(z)\psi(z),
\qquad
\psi(\mathcal Z_a,\mathcal Z_b)
=
\sum_z \pi_{\mathcal Z_a}(z)\psi(z)
-
\sum_z \pi_{\mathcal Z_b}(z)\psi(z).
\]
\end{definition}

We retain the symbol \(\psi\) for several related causal estimands and distinguish them by their arguments: \(\psi_j(z)\) is the cell-level ITE, \(\psi(z)\) is the AITE at allocation \(z\), \(\psi(\mathcal Z)\) is the TAITE over the allocation set \(\mathcal Z\), \(\psi(z_a,z_b)\) is the fixed-allocation total-count contrast (DAITE), and \(\psi(\mathcal Z_a,\mathcal Z_b)\) is the corresponding difference in TAITEs (DTAITE).

\subsection{Point process preliminaries and notation}\label{sec:notation}

As the time step in \cref{fig:dag_2} tends to zero, the data-generating process becomes continuous in time. The resulting dependence structure under an arbitrary tessellation \(\mathcal T\) is therefore naturally modelled by a spatiotemporal point process. We now introduce the basic notation.

We continue to write \(D=(t^\ast,T]\times\mathcal S\) for the post-treatment window and \(\dd\tau=\dd t\otimes\dd\mu(x)\) for the base product measure. We model the observed process \(\Nproc\) as a simple counting measure on \(D\), adapted to the observed filtration
\((\Fobs_t)_{t\in[t^\ast,T]}\). Let \(\mathcal B(D)\) denote the Borel \(\sigma\)-algebra on \(D\). For \(A\in\mathcal B(D)\), \(\Nproc(A)\) denotes the number of observed events in \(A\), and \(|A|:=\int_A \dd\tau\). Whenever the \(\Fobs\)-conditional intensity exists, it is the predictable process
\[
\lambda(\tau;\theta\mid\Fobs_{t-})
=\lim_{(h,\delta)\downarrow 0}
\frac{\bbE\big[\Nproc([t,t+h)\times B_\delta(x))\mid\Fobs_{t-}\big]}
{h\,\mu(B_\delta(x))},
\qquad \tau=(t,x)\in D,
\]
where \(B_r(x)\) denotes the Euclidean ball of radius \(r\) centred at \(x\). All intensities are evaluated at pre-jump times (left limits), so \(\lambda(\gamma_i)=\lambda(t_i-,x_i)\). When we later work with the latent component processes \(\Nzero\) and \(\None\), we write \(\lambda_0\) and \(\lambda_1\) for the corresponding component intensities under the filtration then in force, with \(\lambda=\lambda_0+\lambda_1\).

A point process is {\sl simple} if, with probability one, all points are distinct. Since the conditional intensity \(\lambda\) uniquely determines the finite-dimensional distributions of a simple point process (Proposition 7.2.IV of \citet{daley2003introduction}), one typically specifies a simple spatiotemporal point process through a model for \(\lambda\). A point process is {\sl stationary} if its law is invariant under shifts in time and/or space (e.g.\ Definition 12.1.II of \citet{daley2007introduction}).

A canonical history dependent specification is the Hawkes process \citep{Hawkes1,hawkes1974cluster}, the conditional intensity of which takes the form

\begin{equation}\label{eq:canonicalHawkes}
\lambda_k(\tau\mid\Fobs_{t-};\theta)
= \mu_k(\tau;\theta)+\sum_{\ell\in\{0,1\}}\int_{(t-h,t)\times B_R(x)}
g_{k\ell}(t-u,x-y;\theta) \Nproc_\ell(\dd u,\dd y)
\end{equation}
where \(\mu_k(\tau;\theta)\ge 0\) are background intensities, \(g_{k\ell}\ge 0\) are triggering kernels with finite support
\((0,h]\times B_R(0)\), and \(\Nproc_\ell\) denotes the component counting measure.  The \(L^1\) masses of the kernels determine the excitation strength; the standard subcriticality condition is that the associated branching matrix has spectral radius less than \(1\). Hawkes processes exhibit self-excitation and clustering, capturing phenomena common in fields such as seismology, epidemiology, and finance \citep{bacry2015hawkes,reinhart2018review}.

We allow the treatment and control intensities to depend on an observed predictable covariate field \(\CovX(\tau)\in\mathbb R^p\) (e.g., spatial inhomogeneity, seasonality, weather, or population density), writing
\(
\lambda_k(\tau\mid\Fobs_{t-};\theta)
=
\lambda_k(\tau\mid\Fobs_{t-},\CovX;\theta).
\)
Unless stated otherwise, we treat \(\CovX\) as exogenous with respect to treatment, so that under \(\mathrm{do}(Z=z)\) the covariate path is unchanged. Consequently, all likelihoods and causal estimands are understood as conditional on \(\CovX\) (and may be marginalised over \(\CovX\) if desired). Covariates that are themselves affected by treatment would generally change the causal target (e.g., to a controlled direct effect) \citep{robins1992identifiability}, so we do not use post-treatment mediators as adjustment variables. For example, in the model corresponding to \cref{eq:canonicalHawkes}, one may specify
\(
\mu_k(\tau;\theta)=\exp\{\eta_k+\beta_k^\top \CovX(\tau)\}. 
\)
A multiplicative specification of the form \(\mu_k(\tau;\theta)=\lambda_{0k}(\tau)\exp\{\beta_k^\top \CovX(\tau)\}\) is the point-process analogue of a Cox proportional hazards model, with \(\lambda_{0k}\) playing the role of a baseline hazard/intensity and the exponential term encoding proportional covariate effects. More generally, one may instead use an additive model, spline basis, or another flexible specification.
\section{Identification}\label{sec:identification}

We now state the assumptions under which the causal estimands in \cref{sec:causalEstimands} are identified from the observed law. Because outcome spillover is allowed, potential outcomes must be indexed by the entire treatment vector rather than by a single cell-wise treatment status. The assumptions below are the standard consistency/exchangeability/positivity ingredients adapted to the present interference setting, together with a regime-stable point-process model for extrapolation to off-support interventions \citep{pearl2009causal,robins1992identifiability}.

Let \(\mathcal Z_{\mathrm{tar}}\subseteq \{0,1\}^J\) denote the set of treatment allocations relevant to the scientific question (or, more generally, the support of the policies under comparison). For each \(z\in\mathcal Z_{\mathrm{tar}}\), let \(\Nproc^z\) denote the post-treatment potential point process that would be observed on \((t^\ast,T]\times\mathcal S\) under the intervention \(\mathrm{do}(Z=z)\), and define the corresponding cell counts \(Y_j(z):=\Nproc^z(\mathcal I_j)\), \(j=1,\ldots,J\).

The intervention fixes the cell-wise allocation \(z\); it does not fix or reveal the latent component labels of post-treatment events. Under \(\mathrm{do}(Z=z)\), the post-treatment process remains a superposition
\(
\Nproc^z=\Nzero^z+\None^z,
\)
and the allocation field \(z\) may affect both component intensities through the post-treatment dynamics. In particular, treated cells need not contain only \(\None^z\)-events, and control cells need not contain only \(\Nzero^z\)-events. Spatial dispersion and excitation may place \(\Nzero^z\)-events in treated cells and \(\None^z\)-events in untreated cells. Thus the intervention acts on the regime, not on a cell-wise selector that turns one component off.

By construction, \(\bbE[\Nproc^z(B)] = \bbE[\Nproc(B)\mid \mathrm{do}(Z=z)]\) for every Borel \(B\subseteq (t^\ast,T]\times\mathcal S\). No no-interference restriction is imposed: \(\Nproc^z\) may depend on the full allocation vector \(z\) and on the entire post-treatment history. Throughout this section, define the pre-intervention information
\(
\mathcal B_0:=\sigma(\mathcal T,\CovX)\vee \Fobs_{t^\ast-},
\)
that is, the \(\sigma\)-field generated by the tessellation, the exogenous covariate path, and the observed history up to \(t^\ast\).

\begin{assumption}[Well-defined interventions and consistency]\label{ass:id-consistency}
For every \(z\in\mathcal Z_{\mathrm{tar}}\), the intervention \(\mathrm{do}(Z=z)\) defines a unique post-treatment potential process \(\Nproc^z\). Moreover, if the realised treatment vector equals \(z\), then
\[
\Nproc(B)=\Nproc^z(B)\qquad\text{a.s. for every Borel }B\subseteq (t^\ast,T]\times\mathcal S.
\]
The pre-treatment history \(\Fobs_{t^\ast-}\), the tessellation \(\mathcal T\), and the exogenous covariate path \(\CovX\) are common across \(\{\Nproc^z:z\in\mathcal Z_{\mathrm{tar}}\}\).
\end{assumption}

\begin{assumption}[Conditional exchangeability]\label{ass:id-exchange}
\[
\{\Nproc^z:z\in\mathcal Z_{\mathrm{tar}}\}\ \perp\!\!\!\perp\ Z\ \big|\ \mathcal B_0.
\]
That is, after conditioning on \(\mathcal B_0\), the realised treatment allocation carries no additional information about the collection of potential post-treatment processes.
\end{assumption}

\begin{assumption}[Positivity on the design support]\label{ass:id-positivity}
There exists a subset \(\mathcal Z_{\mathrm{supp}}\subseteq \mathcal Z_{\mathrm{tar}}\) such that, for every \(z\in\mathcal Z_{\mathrm{supp}}\),
\(
\bbP(Z=z\mid \mathcal B_0)>0\qquad\text{a.s.}
\)
If the target of interest is a randomised policy \(\pi\) over allocations, it suffices that \(\mathrm{supp}(\pi)\subseteq \mathcal Z_{\mathrm{supp}}\).
\end{assumption}

Assumptions~\ref{ass:id-consistency}--\ref{ass:id-positivity} identify causal quantities only for allocations in the conditional support of the treatment mechanism. In particular, we do not require full support over all \(2^J\) allocations. The next proposition is the point-process analogue of the usual g-formula.

\begin{proposition}[Identification on the observed design support]\label{prop:id-onsupport}
Suppose Assumptions~\ref{ass:id-consistency}--\ref{ass:id-positivity} hold. Then for every \(z\in\mathcal Z_{\mathrm{supp}}\) and every Borel set \(B\subseteq (t^\ast,T]\times\mathcal S\),
\(
\bbE\!\left[\Nproc^z(B)\right]
=
\bbE\!\left[\bbE\!\left\{\Nproc(B)\mid Z=z,\mathcal B_0\right\}\right].
\)
Equivalently, for each cell \(\mathcal I_j\),
\(
\bbE\!\left[Y_j(z)\right]
=
\bbE\!\left[\bbE\!\left\{\Nproc(\mathcal I_j)\mid Z=z,\mathcal B_0\right\}\right].
\)
\end{proposition}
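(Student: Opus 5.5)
The argument is the standard g-formula chain: tower property, then exchangeability, then consistency, with positivity guaranteeing that the conditioning event has positive probability. Fix \(z\in\mathcal Z_{\mathrm{supp}}\) and a Borel set \(B\subseteq(t^\ast,T]\times\mathcal S\).

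\textbf{Step 1: integrability.} First I must check that \(\Nproc^z(B)\) is integrable, so that conditional expectations exist. If \(\bbE[\Nproc^z(B)]=\infty\), I would instead carry out the argument for the nonnegative variable \(\Nproc^z(B)\) with extended-valued conditional expectations. Every step below goes through for nonnegative variables, so this costs nothing.

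\textbf{Step 2: tower property.} By the tower property,
\(
\bbE[\Nproc^z(B)]=\bbE\bigl[\bbE\{\Nproc^z(B)\mid\mathcal B_0\}\bigr].
\)

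\textbf{Step 3: conditional exchangeability.} Assumption~\ref{ass:id-exchange} states that \(\Nproc^z\) is independent of \(Z\) given \(\mathcal B_0\), so \(\Nproc^z(B)\) is as well. Hence
\[
\bbE\{\Nproc^z(B)\mid\mathcal B_0\}=\bbE\{\Nproc^z(B)\mid Z,\mathcal B_0\}\quad\text{a.s.}
\]
Evaluating on the event \(\{Z=z\}\) gives \(\bbE\{\Nproc^z(B)\mid Z=z,\mathcal B_0\}\).

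\textbf{Step 4: positivity.} This evaluation is where Assumption~\ref{ass:id-positivity} enters. Since \(\bbP(Z=z\mid\mathcal B_0)>0\) a.s., the elementary conditional expectation
\[
\bbE\{\,\cdot\,\1\{Z=z\}\mid\mathcal B_0\}\big/\bbP(Z=z\mid\mathcal B_0)
\]
is well defined a.s. It is therefore a version of \(\bbE\{\,\cdot\mid Z=z,\mathcal B_0\}\) that is defined off a null set of \(\mathcal B_0\)-atoms, not merely on \(\{Z=z\}\). This is what allows the outer expectation over \(\mathcal B_0\) to be taken against the full law rather than the law restricted to \(Z=z\).

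\textbf{Step 5: consistency.} Assumption~\ref{ass:id-consistency} gives \(\Nproc^z(B)\1\{Z=z\}=\Nproc(B)\1\{Z=z\}\) a.s. Substituting into the ratio form from Step 4 yields
\[
\bbE\{\Nproc^z(B)\mid Z=z,\mathcal B_0\}=\bbE\{\Nproc(B)\mid Z=z,\mathcal B_0\}.
\]
Combining Steps 2 through 5 gives the first display of the proposition.

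\textbf{Cell-level statement.} The second display follows by taking \(B=\mathcal I_j\), which is Borel by Assumption~\ref{ass:tessellation}, and recalling \(Y_j(z)=\Nproc^z(\mathcal I_j)\).

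\textbf{Main obstacle.} The measure-theoretic step of conditioning on the event \(\{Z=z\}\) jointly with the \(\sigma\)-field \(\mathcal B_0\) is the delicate point. I would handle it exactly through the ratio representation in Step 4, which makes explicit that positivity is what makes the identity meaningful almost surely.

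A secondary subtlety concerns the random tessellation. Here \(\mathcal T\) is \(\mathcal B_0\)-measurable and is common to all potential processes by Assumption~\ref{ass:id-consistency}, so \(\mathcal I_j\) may be treated as fixed inside the conditional expectation. Then taking \(B=\mathcal I_j\) is justified because \(\Nproc(\mathcal I_j)\) is a measurable function of \((\Nproc,\mathcal T)\).
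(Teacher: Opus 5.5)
Your proposal is correct and is exactly the standard g-formula chain (tower property, conditional exchangeability, positivity to make $\bbE\{\cdot\mid Z=z,\mathcal B_0\}$ a well-defined $\mathcal B_0$-measurable version, then consistency). This is the argument the paper has in mind when it calls the result the point-process analogue of the g-formula; the paper gives no separate written proof. A cleaner phrasing of your Steps 3--4 uses the conditional-independence factorisation $\bbE\{\Nproc^z(B)\1\{Z=z\}\mid\mathcal B_0\}=\bbE\{\Nproc^z(B)\mid\mathcal B_0\}\,\bbP(Z=z\mid\mathcal B_0)$, which together with positivity gives the almost-sure identity on the whole space directly.
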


For a point process, the natural mean object under allocation \(z\) is the map
\[
B \mapsto \bbE[\Nproc^z(B)],
\]
which gives the expected number of post-treatment events falling in any Borel region \(B \subseteq (t^\ast,T]\times\mathcal S\). We refer to this expected-count map as the interventional mean measure. Proposition~\ref{prop:id-onsupport} therefore identifies this interventional mean measure, equivalently the family of expected cell counts \(\{\bbE[Y_j(z)]\}_{j=1}^J\), rather than the full interventional law of \(\Nproc^z\). Proposition~\ref{prop:id-onsupport} is nonparametric, but it applies only to allocations that appear with positive probability under the observed treatment mechanism. Many scientifically relevant contrasts, however, involve off-support allocations, such as the all-treated versus all-untreated regimes considered later. Those contrasts are not nonparametrically identified from observational data without further structure. Our approach is therefore explicitly model-based for off-support regimes.

To formalise this, let
\[
\mathcal F_t^z
:=
\sigma\!\Big(
\mathcal B_0,\,
\Nproc^z\big((t^\ast,s]\times A\big):s\le t,\ A\in\mathcal B(\mathcal S)
\Big),
\qquad t\in[t^\ast,T],
\]
be the filtration generated by \(\mathcal B_0\) and the post-treatment potential history under allocation \(z\).

\begin{assumption}[Common structural model across regimes]\label{ass:id-structural}
There exists \(\theta^\star\in\Theta\) such that, for every \(z\in\mathcal Z_{\mathrm{tar}}\), the potential process \(\Nproc^z\) admits latent component processes \((\Nzero^z,\None^z)\) with
\(
\Nproc^z=\Nzero^z+\None^z,
\)
and predictable component intensities
\(
\lambda_{k,\theta^\star}^z(\tau\mid \mathcal F_{t-}^z,\CovX)\) for \( k\in\{0,1\},
\)
so that the total intensity is
\[
\lambda_{\theta^\star}^z(\tau\mid \mathcal F_{t-}^z,\CovX)
=
\lambda_{0,\theta^\star}^z(\tau\mid \mathcal F_{t-}^z,\CovX)
+
\lambda_{1,\theta^\star}^z(\tau\mid \mathcal F_{t-}^z,\CovX).
\]
Equivalently, the family \(\{\Nproc^z:z\in\mathcal Z_{\mathrm{tar}}\}\) is generated by a common parametric model \(\{\lambda_\theta^z:\theta\in\Theta\}\), with the same true parameter value \(\theta^\star\) across all regimes \(z\). Under the observed allocation \(Z\), the observed post-treatment process is generated by the corresponding regime \(z=Z\).
\end{assumption}

\begin{assumption}[Identifiability of the structural parameter]\label{ass:id-theta}
If \(\theta,\theta'\in\Theta\) satisfy
\[
\mathcal L_\theta\!\left(\Nproc\mid Z=z,\mathcal B_0\right)
=
\mathcal L_{\theta'}\!\left(\Nproc\mid Z=z,\mathcal B_0\right)
\qquad\text{a.s. for every } z\in\mathcal Z_{\mathrm{supp}},
\]
then \(\theta=\theta'\).
\end{assumption}

Under Assumption~\ref{ass:id-structural}, each regime \(z\) induces an interventional mean measure
\begin{equation}\label{eq:id-mean-measure}
\Lambda_\theta^z(B\mid \mathcal B_0)
:=
\bbE_\theta\!\left[
\int_B \lambda_\theta^z(\tau\mid \mathcal F_{t-}^z,\CovX)\,\dd\tau
\,\Big|\, \mathcal B_0
\right],
\qquad B\subseteq (t^\ast,T]\times\mathcal S.
\end{equation}
By the compensator identity, \(\Lambda_\theta^z(B\mid \mathcal B_0)=\bbE_\theta[\Nproc^z(B)\mid \mathcal B_0]\).

\begin{proposition}[Model-based identification of off-support regimes]\label{prop:id-offsupport}
Suppose Assumptions~\ref{ass:id-consistency}, \ref{ass:id-exchange}, \ref{ass:id-structural}, and \ref{ass:id-theta} hold, and that \(\theta^\star\) is identified from the observed law on \(\mathcal Z_{\mathrm{supp}}\). Then for every \(z\in\mathcal Z_{\mathrm{tar}}\) and every Borel set \(B\subseteq (t^\ast,T]\times\mathcal S\),
\[
\bbE\!\left[\Nproc^z(B)\mid \mathcal B_0\right]
=
\Lambda_{\theta^\star}^z(B\mid \mathcal B_0),
\qquad
\bbE\!\left[\Nproc^z(B)\right]
=
\bbE\!\left[\Lambda_{\theta^\star}^z(B\mid \mathcal B_0)\right].
\]
Hence every causal estimand in \cref{sec:causalEstimands} that can be written as a finite linear combination of \(\{\bbE[\Nproc^z(B)]: z\in\mathcal Z_{\mathrm{tar}},\ B\subseteq (t^\ast,T]\times\mathcal S\}\) is identified by \(\theta^\star\).
\end{proposition}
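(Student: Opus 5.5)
The argument has two stages: first pin down \(\theta^\star\) from the observed law on the design support, then push it through the common structural model to every regime \(z\in\mathcal Z_{\mathrm{tar}}\) via the compensator identity.

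\textbf{Stage 1: \(\theta^\star\) is a functional of the observed law.} Fix \(z\in\mathcal Z_{\mathrm{supp}}\). On the event \(\{Z=z\}\), Assumption~\ref{ass:id-consistency} gives \(\Nproc=\Nproc^z\) as random measures on \((t^\ast,T]\times\mathcal S\). Hence the conditional law \(\mathcal L(\Nproc\mid Z=z,\mathcal B_0)\) equals \(\mathcal L(\Nproc^z\mid Z=z,\mathcal B_0)\). By Assumption~\ref{ass:id-exchange}, this in turn equals \(\mathcal L(\Nproc^z\mid\mathcal B_0)\). Positivity, or the stated hypothesis that \(\theta^\star\) is identified on \(\mathcal Z_{\mathrm{supp}}\), makes the conditioning on \(\{Z=z\}\) well defined.

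By Assumption~\ref{ass:id-structural}, the law of \(\Nproc^z\) given \(\mathcal B_0\) is the one generated by the predictable total intensity \(\lambda^z_{\theta^\star}\). For a simple point process this intensity determines the law, by Proposition~7.2.IV of \citet{daley2003introduction}. So the observed conditional laws are exactly \(\mathcal L_{\theta^\star}(\Nproc\mid Z=z,\mathcal B_0)\), \(z\in\mathcal Z_{\mathrm{supp}}\). Assumption~\ref{ass:id-theta} then says that any \(\theta\) reproducing these laws equals \(\theta^\star\). Thus \(\theta^\star\) is a functional of the observed law.

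\textbf{Stage 2: extension to every \(z\in\mathcal Z_{\mathrm{tar}}\).} Fix \(z\in\mathcal Z_{\mathrm{tar}}\), which may be off-support, and a Borel set \(B\). Under Assumption~\ref{ass:id-structural}, \(\Nproc^z\) has \((\mathcal F^z_t)\)-compensator \(\int\lambda^z_{\theta^\star}\,\dd\tau\), and \(\mathcal B_0\subseteq\mathcal F^z_{t^\ast}\). The compensator identity for predictable integrands, applied to \(\1_B\) conditionally on \(\mathcal B_0\), gives
\[
\bbE[\Nproc^z(B)\mid\mathcal B_0]=\Lambda^z_{\theta^\star}(B\mid\mathcal B_0),
\]
which is \eqref{eq:id-mean-measure}. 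Taking expectations yields the unconditional statement. The outer expectation is over the law of \(\mathcal B_0\); this is observed, since \(\mathcal T\), \(\CovX\) and the pre-treatment history are common across regimes by Assumption~\ref{ass:id-consistency}.

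\textbf{Main obstacle: integrability.} The compensator identity needs \(\bbE[\Nproc^z(B)]<\infty\), or at least a localisation argument. I would use the stopping times \(\sigma_n=\inf\{t:\Nproc^z((t^\ast,t]\times\mathcal S)\ge n\}\), for which the identity holds up to \(\sigma_n\). Monotone convergence on both sides then extends it, possibly with the value \(+\infty\) on both sides. Under subcritical Hawkes-type specifications the means are finite anyway.

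\textbf{Consequences for the estimands.} Since \(\Lambda^z_{\theta^\star}\) is a deterministic map of \(\theta^\star\), \(z\), \(B\) and the observed law of \(\mathcal B_0\), every \(\bbE[\Nproc^z(B)]\) is identified. The estimands of \cref{sec:causalEstimands} (ITE, AITE, DAITE, TAITE, DTAITE) are finite linear combinations of such terms with \(B\in\{\mathcal I_j\}\cup\{D\}\). Identification of each estimand therefore follows by linearity.
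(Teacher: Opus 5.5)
Your proposal is correct and follows the paper's route. The paper's justification is essentially the one-line remark after \eqref{eq:id-mean-measure}: \(\theta^\star\) is pinned down from the on-support laws via consistency, exchangeability and Assumption~\ref{ass:id-theta}, and the compensator identity under the regime-stable model then gives \(\bbE[\Nproc^z(B)\mid\mathcal B_0]=\Lambda^z_{\theta^\star}(B\mid\mathcal B_0)\) for every \(z\in\mathcal Z_{\mathrm{tar}}\), with the tower property and linearity delivering the estimands. Your localisation argument for integrability is a detail the paper leaves implicit.
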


The distinction between Propositions~\ref{prop:id-onsupport} and \ref{prop:id-offsupport} is central. For allocations \(z\in\mathcal Z_{\mathrm{supp}}\), interventional mean measures are identified nonparametrically under Assumptions~\ref{ass:id-consistency}--\ref{ass:id-positivity}. For allocations outside the design support, identification is available only relative to the regime-stable structural model in Assumption~\ref{ass:id-structural}, so such targets should be interpreted as model-based causal extrapolations. As discussed in \cref{sec:notation}, for a simple spatiotemporal point process the conditional intensity determines the law of the process, which means these off-support effects can be operationalised by evaluating the regime-indexed intensity family under the target allocation.

In randomised experiments, Assumptions~\ref{ass:id-exchange} and \ref{ass:id-positivity} are properties of the design. In observational settings, Assumption~\ref{ass:id-exchange} requires that \(\mathcal B_0\) include all common causes of \(Z\) and \(\{\Nproc^z\}\); otherwise causal identification fails. The identification results above are superpopulation statements for \((\mathcal B_0,Z,\{\Nproc^z\}_{z\in\mathcal Z_{\mathrm{tar}}})\). 
The estimation theory below instead conditions on the realised allocation and studies recovery of the common structural parameter \(\theta^\star\) from one expanding realisation of the observed regime.  This distinction matters because model-based identification of off-support interventional mean measures does not, by itself, guarantee that \(\theta^\star\) is estimable from a single realisation; for history dependent models, that requires the additional compatibility conditions developed in \cref{sec:finite-sample-theory}.  Once \(\theta^\star\) is recovered, the mean measures \(\Lambda_{\theta^\star}^z\) and the causal estimands built from them are obtained by plug-in evaluation.

\section{Estimation}\label{sec:estimation}

The causal estimands in \cref{sec:causalEstimands} are defined in terms of the underlying point process intensities under different treatment assignments. In practice, these intensities depend on unknown parameters $\theta$ that must be learned from the observed, unlabelled point pattern. The main challenge is that after time $t^\ast$ we do not know, for each event, whether it arose from the treatment process $N_1$ or the control process $N_0$. 

We emphasise that cell treatment and event labels are distinct objects. The treatment vector \(Z\) (or fixed allocation \(z\)) specifies the intervention regime for cells, whereas the latent variable \(r_i\in\{0,1\}\) records which component process generated event \(\gamma_i\). Because spillover is allowed, an event in a treated cell need not have label \(1\), and an event in an untreated cell need not have label \(0\).

Our strategy is to treat the control and treatment point labels as latent variables and to work in a parametric family $\{\lambda_k(\cdot\mid\mathcal H_{t-};\theta):\theta\in\Theta\}$ for $k\in\{0,1\}$. If we knew the labels, standard point process likelihood theory would let us estimate $\theta$ efficiently by maximising the complete-data log-likelihood \citep{ogata1978asymptotic}. We therefore employ a type of EM algorithm that alternates between (i) updating the labels to better match the current fitted intensities, and (ii) updating the parameters given the current labels. This section explains how the complete-data likelihood is constructed and how, in principle, the EM updates operate; computational refinements are deferred to \cref{sec:stochEM}. Throughout Sections~\ref{sec:estimation}--\ref{sec:finite-sample-theory}, we condition on the realised post-treatment allocation and suppress that dependence in the notation, writing \(\lambda_k(\cdot\mid\mathcal H_{t-};\theta)\) for the observed-regime component intensities. The full regime-indexed family \(\{\lambda_{k,\theta}^z\}\) reappears only when we evaluate counterfactual mean measures \(\Lambda_\theta^z\).

Throughout, we maintain a parametric specification for the conditional intensities, primarily for feasibility of estimation and theory. This is in line with recent causal work in complex spatial and networked settings, such as \citet{papadogeorgou2022causal}. Extending our framework to fully nonparametric or Bayesian nonparametric intensity models, for example along the lines of \citet{rousseau2025estimation}, is an important avenue for future research. 
In the parametric examples below, the observed-regime intensities \(\lambda_k(\cdot\mid \mathcal H_{t-};\theta)\) should be understood as the specialization of a regime-indexed family \(\lambda_{k,\theta}^z\), where the cellwise allocation \(z\) enters through the post-treatment treatment field and hence through the component intensities and their induced histories.

Assuming a parametric framework, estimation of the causal estimands in \cref{sec:causalEstimands} relies on consistently estimating model parameters. Our approach tackles this by attempting to identify spillover across cells of different treatment assignment. Post-treatment, the observed process \(\Nproc=\Nzero+\None\) is unlabelled; let \(r=(r_i)_{i=1}^{\Nproc(D)}\in\{0,1\}^{\Nproc(D)}\) be the latent labels of events \(\gamma_i\). Here we take \(D=(t^\ast,T]\times\mathcal S\) as the post-treatment analysis window and interpret likelihoods conditional on the observed history up to \(t^\ast\); see \cref{sec:setup} for the predictability conventions. Given a candidate labelling \(r\), let \(\widehat\lambda_k^r(\tau;\theta)\) denote the component intensities obtained by feeding only past labels \(r_{1:\Nproc(t-)}\) into the predictable integrands. The complete-data maximiser
\begin{equation}\label{eq:cdll-main}
\tilde\theta_r:=\arg\max_{\theta\in\Theta}\ell_r(\theta)
\end{equation}
is well defined under standard regularity \citep{ogata1978asymptotic,RATHBUN199655}.

The complete-data log-likelihood \eqref{eq:cdll-main} induces the marginal
(observed-data) likelihood
\begin{equation}\label{eq:obsLike}
L(\theta)\equiv p(\Nproc\mid \theta)
=\sum_{r\in\{0,1\}^{\Nproc(D)}} p(\Nproc,r\mid \theta)
=\sum_{r}\exp\{\ell_r(\theta)\}.
\end{equation}
The corresponding maximum likelihood estimator (MLE) is the asymptotically optimal benchmark under standard regularity for parametric point process likelihoods \citep{ogata1978asymptotic}. The difficulty is computational: \eqref{eq:obsLike} sums over \(2^{\Nproc(D)}\) labellings. In Poisson-like (independent-increments) regimes, \(\ell_r(\theta)\) factorises over events and \(f(r\mid \Nproc,\theta)\) becomes a product of independent Bernoulli terms, recovering the familiar mixture-model structure \citep{mclachlan2019finite}. Under history dependence, however, a single post-\(t^\ast\) flip perturbs future intensities, so neither \(\ell_r(\theta)\) nor \(f(r\mid \Nproc,\theta)\) factorises, and exact evaluation/optimization of \(\log L(\theta)\) is intractable at finite computing budgets.

Before \(t^\ast\), only \(\Nzero\) contributes, so the control parameters of the kernel are estimable directly from the observed pre-treatment history if \(\int_0^{t^\ast} \mathrm dN_0(t)\) is sufficiently large. Spillover arises only after \(t^\ast\): events from \(\Nzero\) can occur inside treated cells and vice-versa. Our strategy is to search over plausible labellings \(r\) of post-\(t^\ast\) events and pick those that maximise a principled objective such as a likelihood criterion, using the fitted \(\widehat\lambda_k^r\).

The computational dichotomy above also has a simple interpretation in the DAG picture of \cref{fig:dag_2}.  In an independent-increments (Poisson) specification for the component processes, the ``spillover'' arrows that cross between rows in \cref{fig:dag_2} (e.g., $Y_1^{t-1}\to Y_2^t$ and $Y_2^{t-1}\to Y_1^t$) are absent: conditional on the treatment vector $Z$ (and any exogenous covariates driving inhomogeneity), counts in disjoint spatiotemporal regions do not causally propagate through the process history.  In that case, the post-$t^\ast$ latent-label problem is best viewed as classification in a superposition (mixture) of Poisson components, and many targets of interest depend only on recovering the component intensities in expectation (i.e., getting the compensators / expected counts right), rather than on perfectly resolving each individual event label.  In this sense, spillover in the Poisson case is primarily descriptive (how intensity mass is apportioned), not a generative event-chain mechanism.

By contrast, under genuine history dependence (as in Hawkes-type models) the cross cell arrows in \cref{fig:dag_2} represent real causal propagation: a single early label flip changes future conditional intensities and hence couples the entire label configuration. This is precisely the regime in which replacing integration over ``unobservables'' by hard maximisation can result in inconsistent estimates (cf.\ \citet{meng2010s}). 

Because \eqref{eq:obsLike} is intractable under history dependence, we work within the expectation--maximisation (EM) framework to connect estimation to the MLE target. Exact EM requires expectations with respect to the label posterior \(f(r\mid \Nproc,\theta)\), which is itself intractable when a single label flip perturbs future intensities. We therefore use a stochastic approximation to EM (SEM), described in \cref{sec:stochEM}, which estimates the E-step using Monte Carlo samples of labellings. For our finite-sample analysis, we study an idealised penalised classification (hard-EM) update as a proof device; this deterministic operator makes explicit the conditions under which label updates contract and is analysed in \cref{sec:finite-sample-theory}.

\subsection{Stochastic Expectation maximisation}\label{sec:stochEM}

In theory, we could traverse the space of labellings in an informed manner using a Metropolis--Hastings style algorithm \citep{hastings1970monte}. However, in practice, as each likelihood must be recalculated at each step and the initial labelling is typically far from high-likelihood labellings, we have found this to be impractical. We therefore turn to SEM as a computationally tractable alternative.

A standard EM algorithm to compute maximum likelihood estimates in this context iterates between expectation and maximisation steps as follows: the expectation step given current parameter vector $\theta'$ can be computed as
	\begin{align}
	Q(\theta,\theta') 
	= \bbE_{r \mid N, \theta'}\!\left[ \log p(N,r \mid \theta)\right]
	= \sum_{r} \log p(N,r \mid \theta)\, p(r \mid N, \theta')  \label{eq:Q}
	\end{align}
	where the expectation is with respect to the conditional distribution of the labels $r$ given the observed data $N$ under parameter value $\theta'$.
	Next the maximisation step is defined via the updated estimate
	\(
	\hat\theta = \arg\max_\theta Q(\theta,\theta').
	\)
    
Under mild conditions, the EM algorithm converges monotonically to the MLE of $\theta$. Approximate standard errors for \(\hat\theta\) can be obtained from the observed information, estimated via Louis' identity \citep{louis1982finding} using the same importance-sampling label draws as in the E-step. This yields an estimated covariance matrix for \(\hat\theta\) that may be used for Wald-type uncertainty summaries and for propagating uncertainty to causal estimands below. Formal asymptotic justification for these approximations in the present growing-window, history dependent setting is left for future work.

However, $Q$ cannot be computed from \eqref{eq:Q} directly, as (i) we do not have $p(r \mid N, \theta')$ in closed form; and (ii) the sum over $r$ runs over an enormous space of labellings. Conceptually, if exact draws from \(p(r\mid N,\theta')\) were available, we would approximate the E-step by
\begin{equation}
Q^*(\theta,\theta') = \frac{1}{M}\sum_{i=1}^M \log p(N,r^{(i)}\mid \theta), \label{eq:Qstar}
\end{equation}
where $\{r^{(i)}\colon i =1,\ldots,M\}$ is a random sample from $p(r \mid N, \theta')$.

In our setting, we cannot sample directly from \(p(r\mid N,\theta')\), so we introduce a proposal distribution \(q(r\mid N)\) from which sampling is easy. Draw \(r^{(1)},\ldots,r^{(M)}\) independently from \(q(r\mid N)\). We then approximate the EM objective by
\begin{equation}
Q^\dagger(\theta,\theta') = \sum_{i=1}^M w_i \log p(N,r^{(i)}\mid \theta), \label{eq:Qdagger}
\end{equation}
where
\begin{align}
w_i \propto \frac{p(r^{(i)}\mid N,\theta')}{q(r^{(i)}\mid N)} 
    \propto \frac{p(N,r^{(i)}\mid \theta')}{q(r^{(i)}\mid N)}, \label{eq:w}
\end{align}
and the constant of proportionality is chosen so that \(\sum_{i=1}^M w_i=1\). The importance weights \eqref{eq:w} scale like
\(
w_i \propto \exp\{\ell_{r^{(i)}}(\theta')\}/q(r^{(i)}\mid \Nproc).
\)
Thus \eqref{eq:Qdagger} is accurate only if the proposal \(q(r\mid \Nproc)\) places non-negligible mass on the high-posterior region of \(p(r\mid \Nproc,\theta')\). In independent-increments (factorizable) settings this overlap requirement is mild, because posterior mass decomposes across events. Under strong history dependence design of a suitable proposal distribution $q$ becomes the central algorithmic constraint. Typically only a tiny fraction of the possible \(2^{\Nproc(D)}\) label configurations will carry non-negligible probability mass (for $\theta$ in the vicinity of the maximum likelihood estimate), so that a naive choice like a uniform proposal distribution will fail entirely in practice. Exact use of \eqref{eq:Qdagger} requires the proposal mass \(q(r\mid N)\) to be fully specified and evaluable. The discrepancy-guided proposal used below is best viewed as a stochastic search proposal; when its exact mass is not computed, the resulting algorithm is an approximate SEM / stochastic hill-climb scheme rather than exact self-normalised importance sampling.

Conceptually, one can view SEM as requiring a proposal \(q(r\mid N)\) that concentrates on high-posterior label configurations under the current iterate \(\theta'\). This can be achieved by refining \(q\) adaptively during a preliminary stage of the algorithm. The simplest limiting case is to concentrate on a single labelling at each iteration; i.e.\ set $M=1$ in (\ref{eq:Qstar}). This leads to a hard (or classification) EM algorithm \citep{celeux1992classification}, updates for which are analysed in \cref{sec:finite-sample-theory}. In practice, the discrepancy-guided proposal in \cref{sec:pracImpSEM} is designed to mimic a greedy hard-EM relabelling step by focusing Monte Carlo effort on regions where the fitted components disagree most.

\section{Finite-sample theory}\label{sec:finite-sample-theory}

The identification results in \cref{sec:identification} are population-level statements about interventional mean measures indexed by a common structural parameter \(\theta^\star\). The question here is narrower: from one growing realisation of the observed process, when can the latent-label likelihood procedure recover that parameter well enough for plug-in causal estimation? Much of the material in this section is technical, and many of the supporting results and underlying assumptions are described in full detail in the Supplementary Material. Our aim here is to present the critical results and comment on their practical significance.

Our practical estimator is stochastic EM. For theory, however, we analyse the predictable blockwise hard-EM surrogate developed in \cref{sec:perflip}. \cref{sec:setup}--\cref{sec:functional-transfer} develops the proof machinery; here we state the resulting recursion and interpret its ingredients. At a high level, the assumptions ensure predictable blockwise updates, control the effect of local label mistakes and martingale fluctuation, require enough signal away from ambiguous regions, and give the M-step enough local curvature to turn smaller label error into smaller parameter error.

We begin by introducing the blockwise notation entering \cref{thm:sec-contraction}.  Fix the deterministic update schedule
\(
t^\ast=u_0<u_1<\cdots<u_{\Tmax}=T\),
\(
D_m:=(u_{m-1},u_m]\times\mathcal S\),
\(
D^{(m)}:=(t^\ast,u_m]\times\mathcal S.
\)
After the first \(m\) blocks have been processed, the hard-EM proof device has produced a label vector \(r^{(m)}\) for the events in \(D^{(m)}\) and a parameter iterate \(\theta^{(m)}\). Let \(r^\star\) denote the true latent post-treatment labelling. The restricted Hamming distance
\[
d_H^{[u_m]}(r^{(m)},r^\star)
=
\sum_{i:\,t_i\le u_m}\1\{r_i^{(m)}\neq r_i^\star\}
\]
counts label errors among events observed up to time \(u_m\), and the corresponding prefix mislabelling rate is
\[
e_m:=\frac{d_H^{[u_m]}(r^{(m)},r^\star)}{|D^{(m)}|},
\qquad e_0:=0.
\]
We also write
\[
\omega_m:=\frac{|D^{(m)}|}{|D^{(m+1)}|},
\]
so \(\omega_m\) is the fraction of the next prefix window coming from already-processed blocks.

Define the oracle log-likelihood ratio
\[
s^{\rm or}_{\theta^\star}(\tau)=\log\frac{\lambda_1^\star(\tau)}{\lambda_0^\star(\tau)},
\]
the decisive sets
\(
S_{\theta^\star}^+(b):=\{\tau:\ s^{\rm or}_{\theta^\star}(\tau)\ge b\}\),
\(
S_{\theta^\star}^-(b):=\{\tau:\ s^{\rm or}_{\theta^\star}(\tau)\le -b\},
\)
and the ambiguous band
\(
A_{\theta^\star}(2b):=\{\tau:\ |s^{\rm or}_{\theta^\star}(\tau)|\le 2b\}.
\)
On \(S_{\theta^\star}^+(b)\) the oracle favours label \(1\), and on \(S_{\theta^\star}^-(b)\) it favours label \(0\); a \emph{minority label} means the opposite choice. More generally, \(S_\theta^\pm(b)\) denotes the same construction at a generic parameter \(\theta\). Thus the hard E-step condition used below says that, on the newly processed block, the greedy rule chooses label \(1\) on \(S_{\theta^{(m)}}^+(b)\) and label \(0\) on \(S_{\theta^{(m)}}^-(b)\).

Two constants appear repeatedly. The first is the score-alignment slack \(\Delta_s\) from Assumption~\ref{ass:G4}, which controls the gap between the oracle log-ratio and the label-induced log-ratio. The second is the local log likelihood ratio (LLR) Lipschitz constant \(L_s\) from Assumption~\ref{ass:G5}, which controls how quickly decisive sets move as \(\theta\) changes. The penalty level \(\alpha\) used in the hard E-step is chosen so that sufficient condition \eqref{eq:EsufficientB} holds.

The blockwise floor term is
\[
\varepsilon_{b,m}
:=
\eta_+\bigl(|D_{m+1}|;b\bigr)
+
\eta_-\bigl(|D_{m+1}|;b\bigr)
+
\frac{2}{|D_{m+1}|}\,
N\!\big(A_{\theta^\star}(2b)\cap D_{m+1}\big),
\]
where \(\eta_\pm(\cdot;b)\) are the small oracle-minority rates from Assumption~\ref{ass:G8}. The E-step and M-step fluctuation terms are
\[
\delta_m
:=
C_{\rm H,1}\sqrt{\frac{K_{\rm win}z_{\rm Fr}}{|D_{m+1}|}},
\qquad
\xi_m
:=
\frac{C_{\rm sc}}{m_{\rm sc}}
\left(
\sqrt{\frac{K_{\rm win}z_{\rm Fr}}{|D^{(m)}|}}
+
\frac{z_{\rm Fr}}{|D^{(m)}|}
\right),
\]
where \(K_{\rm win}\) is the window-envelope constant from Assumption~\ref{ass:G2prime}, \(z_{\rm Fr}\) is the Freedman inequality/union-bound level from the Supplement, and \(m_{\rm sc}\) is the local strong-concavity modulus from Assumption~\ref{ass:G6}. In the Hawkes process verification \(K_{\rm win}=O(\log|D|)\) and \(z_{\rm Fr}=\Theta(\log|D|)\), so
\[
\delta_m\asymp \sqrt{\frac{K_{\rm win}\log |D|}{|D_{m+1}|}},
\qquad
\xi_m\asymp \sqrt{\frac{K_{\rm win}\log |D|}{|D^{(m)}|}}+\frac{\log|D|}{|D^{(m)}|}.
\]

\begin{theorem}[Contraction to a statistical floor]\label{thm:sec-contraction}
Assume that Assumptions~\ref{ass:G0},~\ref{ass:G0pred},~\ref{ass:G1}--\ref{ass:G9}, and~\ref{ass:warmstart} hold, and choose \(b\) and \(\alpha\) so that the sufficient condition \eqref{eq:EsufficientB} holds. In particular, on each newly processed block the greedy E-step assigns no minority labels on decisive regions relative to the current iterate \(\theta^{(m)}\). Let
\[
A:=\frac{L_s}{b},
\]
where \(L_s\) is the LLR Lipschitz constant from Assumption~\ref{ass:G5}. Then, on the same high-probability event as \cref{thm:contract}, there exists a constant \(B_{\rm main}>0\) (the M-step sensitivity constant derived in the Supplement) such that, for every block \(m\),
\[
e_{m+1}
\le
\omega_m e_m
+
(1-\omega_m)\Big(A\|\theta^{(m)}-\theta^\star\|+\varepsilon_{b,m}+\delta_m\Big),
\]
and
\(
\|\theta^{(m+1)}-\theta^\star\|
\le
B_{\rm main}\,e_{m+1}+\xi_{m+1}.
\)
If \(\omega:=\max_m\omega_m<1\) and
\(
\rho:=\omega+(1-\omega)AB_{\rm main}<1,
\)
then the label error decays geometrically up to a floor of order
\[
\frac{1}{1-\rho}\max_m\big(\varepsilon_{b,m}+\delta_m+A\xi_m\big),
\]
and the parameter error decays geometrically up to the induced floor
\[
B_{\rm main}\frac{1}{1-\rho}\max_m\big(\varepsilon_{b,m}+\delta_m+A\xi_m\big)+\max_m\xi_m.
\]
\end{theorem}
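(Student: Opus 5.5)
The argument combines a one-block label recursion (E-step), a curvature argument (M-step), and a scalar linear recursion. Everything is done on the high-probability event of \cref{thm:contract}, where the Freedman-type martingale bounds hold uniformly over blocks at level \(z_{\rm Fr}\).

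\textbf{Label recursion.} Split the restricted Hamming distance at \(u_{m+1}\) into errors on \(D^{(m)}\) and errors on the new block \(D_{m+1}\). The blockwise update is predictable, so earlier labels are frozen and the first part equals \(d_H^{[u_m]}(r^{(m)},r^\star)=|D^{(m)}|e_m\). Dividing by \(|D^{(m+1)}|\) turns this into \(\omega_m e_m\), and the new-block errors receive weight \(1-\omega_m=|D_{m+1}|/|D^{(m+1)}|\). For the new block, partition \(D_{m+1}\) into three regions:
\[
S^+_{\theta^\star}(2b)\cap S^+_{\theta^{(m)}}(b),\qquad S^-_{\theta^\star}(2b)\cap S^-_{\theta^{(m)}}(b),\qquad \text{the rest}.
\]
\begin{enumerate}[label=(\roman*), nosep]
\item On the two decisive regions, \eqref{eq:EsufficientB} says the greedy step assigns the majority label. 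An error there therefore happens only when the true label is an oracle-minority event. Assumption~\ref{ass:G8} bounds these counts in expectation by \(\eta_\pm(|D_{m+1}|;b)\), and the martingale deviation of the count from its compensator contributes \(\delta_m\), using the window envelope \(K_{\rm win}\) of Assumption~\ref{ass:G2prime}.
\item The remainder lies in \(A_{\theta^\star}(2b)\), or in the set where the sign at level \(b\) flips between \(\theta^\star\) and \(\theta^{(m)}\). On the first, bound errors crudely by \(N(A_{\theta^\star}(2b)\cap D_{m+1})\); the factor \(2\) in \(\varepsilon_{b,m}\) absorbs double counting with the score-alignment slack \(\Delta_s\) of Assumption~\ref{ass:G4}. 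On the second, Assumption~\ref{ass:G5} gives \(|s^{\rm or}_{\theta^{(m)}}-s^{\rm or}_{\theta^\star}|\le L_s\|\theta^{(m)}-\theta^\star\|\). A point misclassified relative to the \(2b\) threshold must then have shifted by at least \(b\), so the normalised mass of this transition set is at most \((L_s/b)\|\theta^{(m)}-\theta^\star\|=A\|\theta^{(m)}-\theta^\star\|\).
\end{enumerate}
Converting the transition-set mass into an event count is the main obstacle. It needs a Markov/Chebyshev-type argument on the LLR perturbation, with intensities bounded via G2\('\), rather than a direct measure bound. I would first attempt it by integrating \(\1\{|s^{\rm or}_{\theta^{(m)}}-s^{\rm or}_{\theta^\star}|\ge b\}\le |s^{\rm or}_{\theta^{(m)}}-s^{\rm or}_{\theta^\star}|/b\) against \(N\), then controlling the martingale part with the same \(\delta_m\) bound. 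Summing the three contributions gives the first displayed inequality.

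\textbf{M-step.} Apply the supplementary M-step sensitivity lemma. By strong concavity (Assumption~\ref{ass:G6}, modulus \(m_{\rm sc}\)), \(\|\tilde\theta_r-\theta^\star\|\le m_{\rm sc}^{-1}\|\nabla\ell_r(\theta^\star)\|/|D^{(m+1)}|\). The gradient splits into the oracle score at \(r^\star\), which is a martingale with Freedman bound \(\xi_{m+1}\), and the difference \(\nabla\ell_r-\nabla\ell_{r^\star}\). Each label flip perturbs current and future intensities by a bounded amount, by the local-error control of G3/G7, so this difference is at most \(C\,d_H^{[u_{m+1}]}\). Hence the difference contributes \(B_{\rm main}e_{m+1}\).

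\textbf{Iteration.} Substitute the M-step bound at block \(m\) into the label recursion. Using \(\omega_m\le\omega\), this gives
\[
e_{m+1}\le \rho\, e_m+(1-\omega)\bigl(\varepsilon_{b,m}+\delta_m+A\xi_m\bigr).
\]
Use the monotonicity \(\omega_m e_m+(1-\omega_m)x\le \omega e_m+(1-\omega)x\), which holds when \(x\ge e_m\); in the other case, bound directly by \(\max\{e_m,x\}\). Unrolling the recursion and summing the geometric series gives \(e_m\le \rho^m e_0+(1-\rho)^{-1}\max_m(\cdot)\), up to the factor \(1-\omega\le1\). The warm start of Assumption~\ref{ass:warmstart} keeps \(\theta^{(m)}\) inside the neighbourhood where G5 and G6 apply at every step; this follows by induction, since the floor lies below that radius. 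Plugging the label bound into the M-step bound yields the stated parameter floor.
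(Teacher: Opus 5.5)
Your overall architecture matches the paper's: the prefix averaging identity, a blockwise E-step count, an M-step curvature bound and a scalar recursion. However, the M-step inequality as you derive it uses a hypothesis you do not have, and that is the genuine gap.

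\textbf{The M-step gap.} The bound $\|\hat\theta_r-\theta^\star\|\le m_{\rm sc}^{-1}\|\nabla f_r(\theta^\star)\|$ is the strong-concavity estimate for the maximiser of the \emph{iterate-labelled} objective $f_r$. Assumption~\ref{ass:G6} supplies curvature only for the oracle-labelled objective $f_{r^\star}$, not for $f_r$. Assumption~\ref{ass:G7prime} controls $\nabla f_r-\nabla f_{r^\star}$ only at the restricted maximiser $\hat\theta_r^{[u]}$, not at $\theta^\star$. Because that maximiser is taken over the constrained set $\Theta_{\rm sc}$, you also need the variational inequality of Assumption~\ref{ass:G9}, which you never invoke.

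The missing idea is to put the curvature on $f_{r^\star}$ and the optimality on $f_r$. Apply strong monotonicity of $\nabla f^{[u]}_{r^\star}$ to the pair $(\theta^\star,\hat\theta)$, add and subtract $\nabla f^{[u]}_r(\hat\theta)$, and discard $-(\hat\theta-\theta^\star)^\top\nabla f^{[u]}_r(\hat\theta)\le 0$ by the variational inequality. This gives
\[
m_{\rm sc}\|\hat\theta-\theta^\star\|\le\big\|\nabla f_r^{[u]}(\hat\theta)-\nabla f_{r^\star}^{[u]}(\hat\theta)\big\|+\big\|\nabla f_{r^\star}^{[u]}(\theta^\star)\big\|,
\]
so Assumption~\ref{ass:G7prime} is used exactly where it holds, and Assumption~\ref{ass:G6}(b) gives $\xi_{m+1}$. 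This is Proposition~\ref{prop:Mstep}. Note that $B_{\rm main}$ then carries the growing factor $\bar C_0K_{\rm win}+\bar C_1\sqrt{K_{\rm win}z_{\rm Fr}}+\bar C_2z_{\rm Fr}$, not a fixed $C$.

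\textbf{Other steps that need repair.}
\begin{itemize}
\item For the transition set you flagged, integrating $|\Delta_{\rm LLR}|/b$ against $dN$ gives a martingale whose integrand is bounded only by a multiple of $\|\theta^{(m)}-\theta^\star\|/b$, not by $1$. Freedman's bound therefore picks up $b$-dependent factors and does not reduce to $\delta_m$. Also, Assumption~\ref{ass:G5} is an averaged bound, not the pointwise one you state.
\item The paper instead applies Markov at the compensator level,
\[
\int_{B_{\theta^{(m)}}\cap D_{m+1}}\lambda^\star\le\int_{A_{\theta^\star}(2b)\cap D_{m+1}}\lambda^\star+b^{-1}\int_{D_{m+1}}|\Delta_{\rm LLR}|\lambda^\star.
\]
It then converts compensators to counts using the deviation bounds postulated in Assumption~\ref{ass:G8}(ii) for $B_{\theta^{(m)}}\cap D_{m+1}$ and $A_{\theta^\star}(2b)\cap D_{m+1}$ (Lemma~\ref{lem:set-stab}). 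Your remainder set lies in $A_{\theta^\star}(2b)\cup B_{\theta^{(m)}}$, so you can use this directly.
\item The factor $2$ in $\varepsilon_{b,m}$ comes from this step: the band is counted once directly and once inside the set-stability bound. It has nothing to do with $\Delta_s$, which enters only through \eqref{eq:EsufficientB}.
\item The oracle-minority counts reach $\eta_\pm$ through concentration around the mean ($E^{\rm mean}_{|D|}$), not around the random compensator.
\item In the iteration, your monotonicity claim is reversed, since $\omega_me_m+(1-\omega_m)x-\omega e_m-(1-\omega)x=(\omega-\omega_m)(x-e_m)$. Replacing $1-\omega_m$ by the smaller $1-\omega$ in the forcing term is also unjustified. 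Keep $1-\omega_m\le 1$ and use $\rho_m\le\rho$, which holds because $\rho<1$ forces $AB_{\rm main}<1$.
\item The M-step bound cannot be substituted at $m=0$: $\theta^{(0)}$ is the warm start and $\xi_0$ is undefined. The recursion therefore starts at $m=1$, and the transient is $\rho^{m-1}e_1$, where $e_1$ carries $A\|\theta^{(0)}-\theta^\star\|$. Unrolling from $e_0=0$ silently drops the warm-start error.
\item No invariant-set induction is needed, since the analysed M-step maximises over $\Theta_{\rm sc}$ by construction.
\end{itemize}
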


Because past blocks are frozen, the first inequality is an averaging step: old prefix error is carried forward with weight \(\omega_m\), and the new block contributes fresh error of size
\(
A\|\theta^{(m)}-\theta^\star\|+\varepsilon_{b,m}+\delta_m.
\)
Here \(\varepsilon_{b,m}\) is the intrinsic statistical obstruction: small oracle-minority mass plus the mass of the ambiguous band on the new block. The term \(\delta_m\) is the E-step fluctuation. The second inequality is the M-step: once labels are mostly right, local strong concavity pulls the likelihood maximiser toward the truth, up to the fluctuation term \(\xi_{m+1}\). Thus the residual finite-sample floor is not a generic optimisation pathology; it is the combination of genuinely low-signal mass and the fluctuation terms that remain even when the iterate is already close to \(\theta^\star\). The role of \(A=L_s/b\) is especially transparent. Larger \(b\) makes the decisive regions more stable, hence improves the contraction factor, but it also widens the region treated as ambiguous through \(A_{\theta^\star}(2b)\). 

Although the data are observed offline, this section works with a predictable blockwise operator so that the martingale arguments are valid. This should be read as a proof device: once the schedule and tie-breaking rules are fixed, the same iterates can be replayed pathwise from a stored event list. The point is to capture the mechanism that a successful SEM implementation must approximate, without trying to analyse the Monte Carlo part of the practical algorithm itself.

\subsection{When does the floor vanish?}\label{subsec:regimes}

Finite-sample contraction does not by itself imply consistency.  All consistency statements in this subsection are understood along a triangular-array increasing-domain regime \(D_n\), with the threshold \(b_n\) and, in the Hawkes process case, the associated alignment, curvature, and band-mass quantities allowed to vary with \(n\). The point is that consistency is an array-level consequence of vanishing floor terms, not a generic fixed-parameter corollary of \cref{thm:sec-contraction}. To pass from \cref{thm:sec-contraction} to consistency, consider a triangular array of growing windows \(D_n\) with \(|D_n|\to\infty\), and let all quantities above carry an \(n\)-index. In particular, \(b_n\) is the decisiveness threshold, \(\Delta_{s,n}\) is the score-alignment slack from Assumption~\ref{ass:G4}, \(L_{s,n}\) is the LLR Lipschitz constant from Assumption~\ref{ass:G5}, \(m_{{\rm sc},n}\) is the local strong-concavity modulus from Assumption~\ref{ass:G6}, and \(B_{{\rm main},n}\) is the corresponding M-step sensitivity constant.

The same tradeoff described in the Supplement remains: \(b_n\) should shrink so that the ambiguous band becomes negligible, but not so fast that score alignment and contraction break down.

\begin{proposition}[Compatibility conditions for consistency]\label{prop:compatibility}
Let the quantities in \cref{thm:sec-contraction} carry an \(n\)-index, with \(|D_n|\to\infty\). If
\(
b_n\downarrow 0\), \(
\Delta_{s,n}=o(b_n)\), \(
L_{s,n}B_{{\rm main},n}=o(b_n),
\)
and
\(
\max_m \varepsilon_{b_n,m}\xrightarrow{\mathbb P}0\), \(
\max_m \delta_{m,n}\xrightarrow{\mathbb P}0\), \(
\max_m \xi_{m,n}\xrightarrow{\mathbb P}0\),
with \(\sup_n\max_m \omega_{m,n}<1\), then
\(
\|\theta_n^{(\Tmax_n)}-\theta_n^\star\|\xrightarrow{\mathbb P} 0.
\)
For Hawkes models, a sufficient route is that
\[
\varepsilon_{{\rm col},n}\log |D_n|=o(b_n),
\qquad
\frac{L_{s,n}\log|D_n|}{m_{{\rm sc},n}}=o(b_n),
\]
where \(\varepsilon_{{\rm col},n}\) is the column-similarity error from the Hawkes process alignment verification, together with an array-level vanishing-band condition derived from \cref{lem:regA,lem:regB}. Without such array-level scaling, the Hawkes process theory generally gives contraction to a non-zero floor rather than automatic consistency. Note that along the triangular array we allow the true data-generating parameter to vary with \(n\), writing \(\theta_n^\star\). In the fixed-parameter case, \(\theta_n^\star\equiv\theta^\star\).
\end{proposition}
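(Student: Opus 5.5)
The plan is to apply \cref{thm:sec-contraction} row by row along the triangular array, unroll the two coupled recursions into a single scalar recursion for \(e_m\), and show that both the contraction factor and the floor behave well as \(n\to\infty\).

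\textbf{Step 1: uniform contraction factor.} Set \(A_n:=L_{s,n}/b_n\). The hypothesis \(L_{s,n}B_{{\rm main},n}=o(b_n)\) gives \(A_nB_{{\rm main},n}\to 0\). Together with \(\bar\omega:=\sup_n\max_m\omega_{m,n}<1\), this yields
\[
\rho_n=\omega_n+(1-\omega_n)A_nB_{{\rm main},n}\le \bar\omega+A_nB_{{\rm main},n},
\]
so \(\rho_n\le(1+\bar\omega)/2<1\) for all large \(n\). The condition \(\Delta_{s,n}=o(b_n)\) is used only to guarantee that, for large \(n\), the sufficient condition \eqref{eq:EsufficientB} can be met with threshold \(b_n\) and a suitable \(\alpha_n\). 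The idea is that score-alignment slack small relative to \(b_n\) keeps the label-induced log-ratio on the correct side on \(S^\pm_{\theta^{(m)}}(b_n)\). Hence the hypotheses of \cref{thm:sec-contraction} hold on its high-probability event \(\mathcal E_n\). I will also need \(\bbP(\mathcal E_n)\to1\), which should follow from \(z_{\rm Fr}\) being of order \(\log|D_n|\) in the union bound.

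\textbf{Step 2: unroll the recursion.} Substituting the M-step bound into the E-step bound, on \(\mathcal E_n\) and using \(\omega_{m,n}\le\omega_n\),
\[
e_{m+1}\le\rho_n e_m+(1-\omega_n)\bigl(\varepsilon_{b_n,m}+\delta_{m,n}+A_n\xi_{m,n}\bigr).
\]
Starting from \(e_0=0\) and iterating gives
\[
e_{\Tmax_n}\le\frac{1}{1-\rho_n}F_n,\qquad F_n:=\max_m\bigl(\varepsilon_{b_n,m}+\delta_{m,n}+A_n\xi_{m,n}\bigr).
\]
The warm start (Assumption~\ref{ass:warmstart}) covers \(\theta^{(0)}\). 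I will check the m=0 term of the M-step bound, since \(e_0=0\) means \(\theta^{(0)}\) is handled only through the warm-start assumption. Then
\[
\|\theta_n^{(\Tmax_n)}-\theta_n^\star\|\le B_{{\rm main},n}\frac{F_n}{1-\rho_n}+\max_m\xi_{m,n}.
\]

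\textbf{Step 3: the floor vanishes (main obstacle).} The difficulty is that \(B_{{\rm main},n}\) and \(A_n\) need not be bounded, so knowing the individual floor terms vanish is not enough by itself. I would split the bound as
\[
B_{{\rm main},n}F_n\le B_{{\rm main},n}\max_m(\varepsilon_{b_n,m}+\delta_{m,n})+(A_nB_{{\rm main},n})\,B_{{\rm main},n}\max_m\xi_{m,n}.
\]
The second product is controlled by \(A_nB_{{\rm main},n}\to0\). For the first, the natural reading of \cref{thm:sec-contraction} is that \(B_{\rm main}\) is bounded, since it is the M-step sensitivity \(C/m_{\rm sc}\) with \(m_{\rm sc}\) bounded below in the fixed-parameter case. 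So I would first try to show \(\sup_nB_{{\rm main},n}<\infty\) from the Supplement's construction. If that fails, I would add the array condition \(B_{{\rm main},n}\max_m(\varepsilon_{b_n,m}+\delta_{m,n})\to_{\mathbb P}0\) as the interpretation of "floor \(\to0\)". With that in hand, each term tends to zero in probability, and intersecting with \(\mathcal E_n\) gives the consistency claim.

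\textbf{Step 4: Hawkes route.} In the Hawkes case I would verify the general conditions as follows. The alignment verification gives \(\Delta_{s,n}\lesssim\varepsilon_{{\rm col},n}\log|D_n|\), so the first displayed condition gives \(\Delta_{s,n}=o(b_n)\). Taking \(B_{{\rm main},n}\asymp\log|D_n|/m_{{\rm sc},n}\) gives \(L_{s,n}B_{{\rm main},n}=o(b_n)\) from the second. The fluctuation terms \(\delta,\xi\) vanish from \(K_{\rm win},z_{\rm Fr}=O(\log|D_n|)\) and block volumes growing faster than \(\log^2|D_n|\). Finally, \(\max_m\varepsilon_{b_n,m}\to0\) follows from \cref{lem:regA,lem:regB} together with the assumed vanishing-band condition.
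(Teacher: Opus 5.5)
Your route is the paper's. You apply \cref{thm:contract} row by row, use \(\sup_n\max_m\omega_{m,n}<1\) and \(A_nB_{{\rm main},n}=L_{s,n}B_{{\rm main},n}/b_n\to0\) to keep \(\rho_n\) bounded away from \(1\), use \(\Delta_{s,n}=o(b_n)\) only to meet \eqref{eq:EsufficientB}, and read consistency off the parameter floor.

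The step that fails is the first branch of your Step~3. By \cref{prop:Mstep}, \(B_{\rm main}=(\bar C_0K_{\rm win}+\bar C_1\sqrt{K_{\rm win}z_{\rm Fr}}+\bar C_2z_{\rm Fr})/m_{\rm sc}\) with \(K_{\rm win},z_{\rm Fr}\asymp\log|D_n|\); you use this scaling yourself in Step~4. So the numerator is not a constant, and a lower bound on \(m_{\rm sc}\) gives only \(B_{{\rm main},n}\lesssim\log|D_n|\), not boundedness. Nor can \(m_{{\rm sc},n}\) grow to compensate while \(S_\infty\), \(H_\infty\), the positivity floors and \(C_{\rm act}\) stay uniform, since on \(E_{\rm act}\) they bound the Hessian of \(f_{r^\star}\). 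Generically, then, \(B_{{\rm main},n}\to\infty\). The hypothesis \(L_{s,n}B_{{\rm main},n}=o(b_n)\) does not help, because it only controls terms carrying a factor \(A_n\).

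Consequently, the term \(B_{{\rm main},n}\max_m(\varepsilon_{b_n,m}+\delta_{m,n})/(1-\rho_n)\) is not driven to zero by \(\max_m\varepsilon_{b_n,m},\max_m\delta_{m,n}\xrightarrow{\mathbb P}0\) alone. Your fallback condition \(B_{{\rm main},n}\max_m(\varepsilon_{b_n,m}+\delta_{m,n})\xrightarrow{\mathbb P}0\) is required, not optional. This is what the paper's own justification uses: the closing paragraph of the proof of \cref{thm:contract} and \cref{rem:array-rho} assume \(B_{\rm main}\max_j(A\xi_j+\varepsilon_{b,j}+\delta_j)/(1-\rho)\xrightarrow{\mathbb P}0\) and \(B_{\rm main}\rho^{\Tmax-1}e_1\xrightarrow{\mathbb P}0\). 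That is stronger than the hypothesis list displayed in the proposition. With the condition stated explicitly, your argument coincides with the paper's. In the Hawkes route it becomes a rate condition, roughly \(\max_m\varepsilon_{b_n,m}=o(m_{{\rm sc},n}/\log|D_n|)\) and smallest block volume \(\gg\log^4|D_n|/m_{{\rm sc},n}^2\).

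Several smaller repairs are needed.
\begin{itemize}
\item In Step~3 the cross term is \(B_{{\rm main},n}A_n\max_m\xi_{m,n}=(A_nB_{{\rm main},n})\max_m\xi_{m,n}\). Your extra factor \(B_{{\rm main},n}\) is an algebra slip.
\item In Step~2 you may not replace \(1-\omega_{m,n}\) by the smaller \(1-\omega_n\). Bound it by \(1\) instead, which still yields \(F_n/(1-\rho_n)\).
\item The substituted recursion is valid only for \(m\ge1\): there is no \(\xi_0\), since \(|D^{(0)}|=0\). The first block gives \(e_1\le A_nr_{\rm sc}+\varepsilon_{b_n,0}+\delta_{0,n}\) via \cref{ass:warmstart}. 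The final bound therefore carries an extra term \(B_{{\rm main},n}\rho_n^{\Tmax_n-1}e_1\), which vanishes by \(A_nB_{{\rm main},n}\to0\) together with the strengthened condition above.
\item In Step~4, \cref{lem:regA,lem:regB} control only the ambiguous-band part of \(\varepsilon_{b_n,m}\). The oracle-minority rates \(\eta_\pm(|D_{m+1}|;b_n)\) must come separately from \cref{ass:G8}(i) at the shrinking threshold \(b_n\). That is a substantive high-signal requirement, not a consequence of those lemmas.
\end{itemize}
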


Thus \(b_n\) has two jobs at once: it must be small enough to make the ambiguous band negligible, but large enough relative to the alignment error \(\Delta_{s,n}\) and the contraction scale \(L_{s,n}B_{{\rm main},n}\) to preserve decisive-set stability.

Two cases are worth separating. In Poisson models there is no feedback, so a label mistake does not alter future intensities. In that regime the hard-EM floor is largely a classification issue, and it should not be read as a necessity result for SEM: because mislabelling does not propagate, averaging over labels can still recover the right parameter even when hard assignments plateau. In Hawkes models the situation is sharper: one wrong early label perturbs future intensities, so small ambiguous-band mass is not a technical afterthought but part of what makes consistent recovery plausible from a single realisation. 

The Supplement gives two transparent regimes under which the ambiguous-band contribution is small. In a weak-excitation regime, the oracle log-odds remain close to the baseline log-odds, so low-signal points occupy little \(\lambda^\star\)-mass. In a dominant-column-bias regime, provided the baseline is aligned with that bias, feedback systematically favours one component strongly enough that the log-odds inherit a stable sign except near the baseline decision boundary. These regimes control the ambiguous-band part of the floor. In Hawkes models, the full floor vanishes only when that control is combined with the compatibility conditions in \cref{prop:compatibility}.

\subsection{Consequences for causal estimands}\label{subsec:estimands}

Under Assumption~\ref{ass:Lambda-Lip}, every estimand in \cref{sec:causalEstimands} is a finite linear functional of the interventional mean measures \(\Lambda_\theta^z\), so parameter error transfers directly to plug-in causal error.

\begin{proposition}[Transfer to plug-in causal estimands]\label{thm:sec-estimands}
For any finite linear functional
\[
F(\theta)=\sum_{\ell=1}^M c_\ell \Lambda_\theta^{z_\ell}(B_\ell),
\]
on the high-probability event of \cref{thm:sec-contraction},
\[
|F(\theta^{(m)})-F(\theta^\star)|
\le
L_\lambda\Big(\sum_{\ell=1}^M |c_\ell|\,|B_\ell|\Big)\,\|\theta^{(m)}-\theta^\star\|.
\]
\end{proposition}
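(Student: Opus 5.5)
The argument is a direct Lipschitz transfer, so I will keep it short. The one ingredient not stated in the main text is Assumption~\ref{ass:Lambda-Lip}. I read it as the statement that, for every target allocation \(z\) and every Borel \(B\subseteq(t^\ast,T]\times\mathcal S\), the interventional mean measure satisfies
\[
|\Lambda_\theta^z(B)-\Lambda_{\theta'}^z(B)|\le L_\lambda |B|\,\|\theta-\theta'\|
\]
for \(\theta,\theta'\) in the local neighbourhood of \(\theta^\star\) where the iterates live. A natural way to obtain this is a pointwise-in-\(\tau\) Lipschitz bound on \(\bbE_\theta[\lambda_\theta^z(\tau\mid\mathcal F_{t-}^z,\CovX)\mid\mathcal B_0]\), integrated over \(B\) using the definition \eqref{eq:id-mean-measure}. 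If the assumption is phrased pointwise in this way, the first step is to integrate that bound over \(B\) to get the display above.

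Next I will expand the difference by linearity and apply the triangle inequality:
\[
|F(\theta^{(m)})-F(\theta^\star)|
\le\sum_{\ell=1}^M |c_\ell|\,\bigl|\Lambda_{\theta^{(m)}}^{z_\ell}(B_\ell)-\Lambda_{\theta^\star}^{z_\ell}(B_\ell)\bigr|.
\]
Applying the Lipschitz bound term by term with \(\theta=\theta^{(m)}\) and \(\theta'=\theta^\star\) gives the stated inequality immediately. Because \(M\) is finite, no uniformity over infinitely many regions or allocations is needed.

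The only point needing care is that the Lipschitz bound may hold only on a neighbourhood \(\Theta_{\rm loc}\) of \(\theta^\star\), for instance the ball on which Assumption~\ref{ass:G6} gives strong concavity. For this I will invoke \cref{thm:sec-contraction}. On its high-probability event, the parameter error is controlled by
\[
B_{\rm main}e_m+\xi_m,
\]
and under the warm start (Assumption~\ref{ass:warmstart}) together with the contraction recursion, the iterates \(\theta^{(m)}\) remain in that neighbourhood for every \(m\). That event is therefore exactly the one on which the bound is asserted. It is also the event on which \(\|\theta^{(m)}-\theta^\star\|\) is controlled, so the bound inherits the geometric decay and floor of \cref{thm:sec-contraction}.

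Finally, I will note that each estimand of \cref{sec:causalEstimands} is such an \(F\), which justifies the remark preceding the proposition:
\begin{itemize}[nosep]
\item the ITE uses two terms with \(B=\mathcal I_j\) and coefficients \(\pm1\);
\item the AITE uses weights \(\pm 1/J\);
\item the DAITE uses \(B=D\) under \(z_a\) and \(z_b\);
\item the TAITE and DTAITE are averages of these over the finite sets \(\mathcal Z\), \(\mathcal Z_a\), \(\mathcal Z_b\).
\end{itemize}
In every case \(\sum_\ell|c_\ell||B_\ell|\) is explicit. For example, it equals \(2|\mathcal I_j|\) for the ITE and \(2|D|\) for the DAITE.
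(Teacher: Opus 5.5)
Your argument is correct and is essentially the paper's proof: you use linearity and the triangle inequality, then apply Assumption~\ref{ass:Lambda-Lip} term by term with \(\theta=\theta^{(m)}\) and \(\theta'=\theta^\star\). Your localisation worry is moot, because the paper states the assumption for all \(\theta,\theta'\in\Theta_\circ\), and the iterates lie in \(\Theta_{\rm sc}\subseteq\Theta_\circ\) by construction of the localised M-step (with the warm start covering \(m=0\)), so you need no appeal to the contraction recursion --- which by itself would not keep the iterates in the ball without the extra condition of \cref{lem:inv-sc}.
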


\begin{corollary}[Consistency of plug-in causal estimands]\label{cor:estimands-compat}
Under \cref{prop:compatibility}, plug-in estimators are consistent for fixed-volume causal functionals, including cell-wise contrasts with uniformly bounded cell size and volume-normalised global contrasts. For unnormalised growing-window totals, consistency additionally requires
\[
\Big(\sum_{\ell=1}^{M_n}|c_{\ell,n}|\,|B_{\ell,n}|\Big)\,
\|\theta_n^{(\Tmax_n)}-\theta_n^\star\|
\xrightarrow{\mathbb P}0.
\]
\end{corollary}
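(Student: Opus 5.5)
The plan is to combine \cref{thm:sec-estimands} with the parameter consistency in \cref{prop:compatibility}, using a simple product-of-sequences argument. Along the triangular array, write $F_n(\theta)=\sum_{\ell=1}^{M_n}c_{\ell,n}\Lambda_\theta^{z_{\ell,n}}(B_{\ell,n})$ and $\hat\theta_n:=\theta_n^{(\Tmax_n)}$. By \cref{thm:sec-estimands}, on the high-probability event $E_n$ of \cref{thm:sec-contraction} we have
\[
|F_n(\hat\theta_n)-F_n(\theta_n^\star)|\le L_\lambda\, V_n\,\|\hat\theta_n-\theta_n^\star\|,\qquad V_n:=\sum_{\ell=1}^{M_n}|c_{\ell,n}|\,|B_{\ell,n}|.
\]
I will assume, as in the Supplement's construction, that $\bbP(E_n)\to1$ (the Freedman/union-bound level $z_{\rm Fr}=\Theta(\log|D_n|)$ is chosen to make the failure probability vanish) and that $L_\lambda$ from Assumption~\ref{ass:Lambda-Lip} is uniform in $n$. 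For any $\epsilon>0$ this gives
\[
\bbP\big(|F_n(\hat\theta_n)-F_n(\theta_n^\star)|>\epsilon\big)\le \bbP(E_n^c)+\bbP\big(L_\lambda V_n\|\hat\theta_n-\theta_n^\star\|>\epsilon\big).
\]

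\textbf{Fixed-volume functionals.} Here $\sup_n V_n<\infty$. For cell-wise contrasts such as $\psi_j(z)$, the functional is a difference of two terms with $|c_\ell|=1$ and $|B_\ell|=|\mathcal I_j|$, so $V_n=2|\mathcal I_j|$, which is uniformly bounded when cell sizes are. For the AITE $\psi(z)$ we have $V_n=\frac{2}{J}\sum_j|\mathcal I_j|$, bounded by twice the maximal cell volume. TAITE and DTAITE are convex combinations of these (or differences of two), so they inherit the same bound. For volume-normalised global contrasts $|D_n|^{-1}\psi(z_a,z_b)$ we get $V_n=2$. In each case the last probability is at most $\bbP(\|\hat\theta_n-\theta_n^\star\|>\epsilon/(L_\lambda\sup_nV_n))\to0$ by \cref{prop:compatibility}.

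\textbf{Unnormalised growing-window totals.} Here $V_n$ may diverge, for example $V_n=2|D_n|$ for the DAITE. The stated extra condition, that $V_n\|\hat\theta_n-\theta_n^\star\|\to0$ in probability, is exactly what makes the second term vanish, and the conclusion again follows.

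\textbf{Main obstacle.} The only delicate point is interpretive rather than computational. We must confirm that the Lipschitz constant $L_\lambda$ and the event $E_n$ behave uniformly along the array, and that the estimands are expressed through $\Lambda^z_\theta$ evaluated at the plug-in. If $L_\lambda$ were allowed to grow with $n$, I would absorb it into $V_n$ and require the product condition accordingly. I would also note that the bound is applied to the conditional mean measures $\Lambda_\theta^z(\cdot\mid\mathcal B_0)$; averaging over $\mathcal B_0$ preserves the Lipschitz bound by Jensen's inequality.
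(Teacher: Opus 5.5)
Your proposal is correct and follows the paper's route: the Lipschitz transfer of \cref{thm:sec-estimands} (through \cref{ass:Lambda-Lip}) bounds the plug-in error by $L_\lambda V_n\|\theta_n^{(\Tmax_n)}-\theta_n^\star\|$. The supplement's \cref{cor:compact-estimands} then reads off consistency either from bounded $V_n$ (cell-wise contrasts, AITE/TAITE/DTAITE with $|D_n|/J=O(1)$, normalised DAITE) or from the extra product condition for unnormalised totals. One simplification: \cref{ass:Lambda-Lip} holds for all $\theta,\theta'$, so the transfer inequality is deterministic and you do not need the event $E_n$ or $\bbP(E_n)\to 1$ at all.
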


In particular, fixed-scale contrasts track parameter consistency directly, whereas unnormalised growing-window totals require an additional scale-rate condition.

\section{Simulation Study}\label{sec:simulation_study}

We illustrate finite-sample performance on simulated spatiotemporal Hawkes  process data with outcome spillover. The goal is to assess recovery of treatment and control parameters and the associated causal estimands from \cref{sec:causalEstimands}. We consider two spatiotemporal Hawkes processes, one for control and one for treatment, with exponential temporal decay and Gaussian spatial kernel. For $k\in\{c,t\}$ the conditional intensity is
\begin{equation*}
\lambda_k(t,x,y \mid \mathcal{H}_{t-}) = \mu_k + K_k \sum_{t_i < t} \exp\!\left\{-\beta_k (t-t_i)
- \alpha_k  \big\|(x,y) - (x_i,y_i)\big\|_{2}^{2} \right\}. 
\end{equation*}

We simulate 100 realisations on \([0,100]\times[0,100]\), with treatment at \(t^\star=10\), end time \(T=110\), a \(10\times10\) spatial grid tessellation, and 50\% of cells treated (see \cref{fig:pp_realiz}). Control and treated processes share \((\mu,\alpha,\beta)=(8,0.01,10)\) and differ only in branching ratio, with \(K_c=0.80\) and \(K_t=0.20\). Additional design details and supplementary simulation results are reported in \cref{app:additional_simulation}.

Figure~\ref{fig:all_nothing} reports the all-or-nothing DAITE (per unit time) estimates under oracle, naive, and SEM labelling. The naive method understates the treatment effect, consistent with spillover from high-productivity control regions inflating the fitted treated process. SEM substantially reduces this bias and tracks the oracle benchmark much more closely. When the control component is estimated using only post-treatment data, performance degrades markedly. A robustness check is reported in \cref{app:additional_simulation} and underscores the value of pre-treatment information.

\begin{figure}
\centering
\includegraphics[width=.5\textwidth]{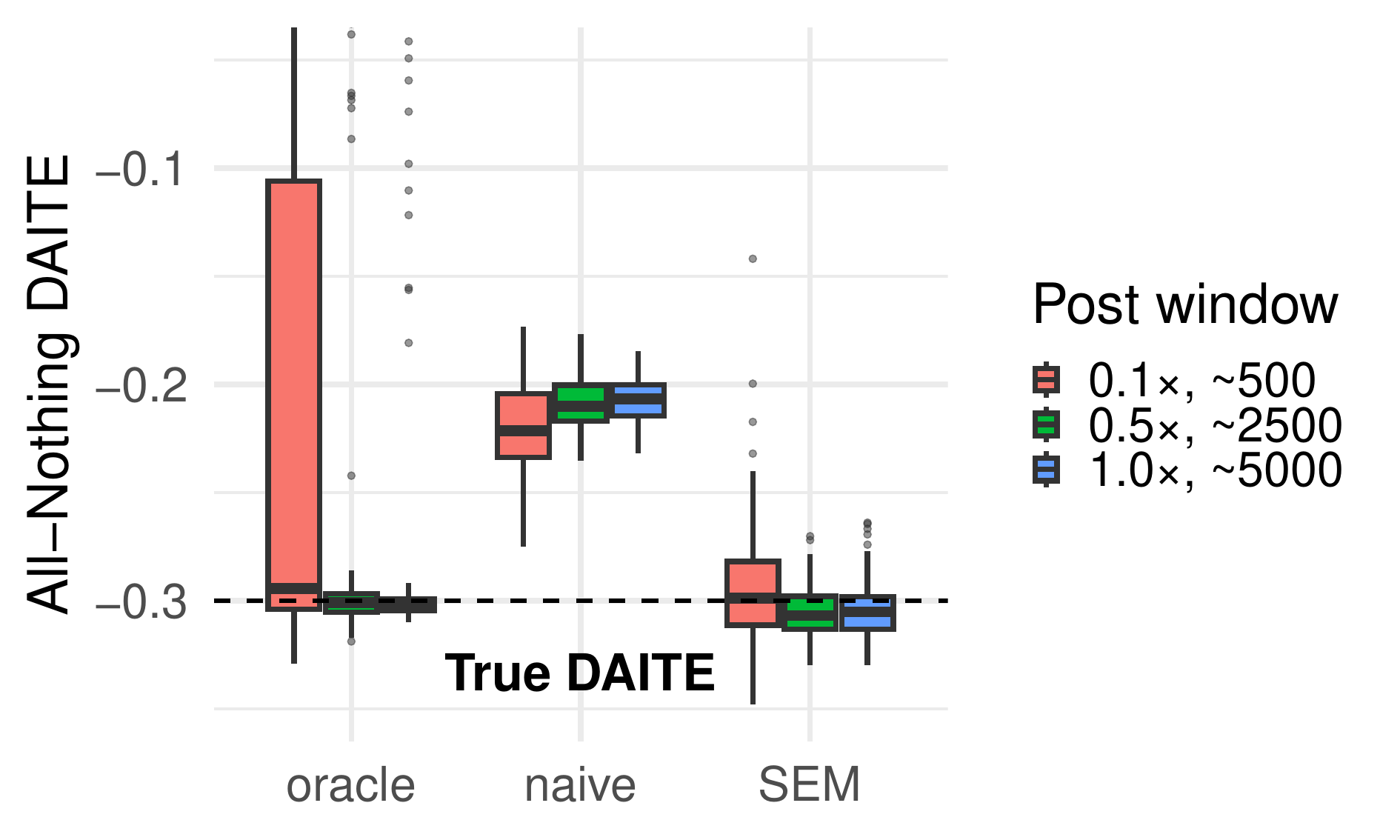}
\caption{Estimated observed-regime vs no-directive DAITE under oracle and naive labelling and the full SEM procedure. The dashed line marks the true value.}
\label{fig:all_nothing}
\end{figure}

\section{Seismic Activity in Oklahoma}\label{sec:application}

On 18 March 2015 the Oklahoma Corporation Commission (OCC) issued a directive requiring disposal-well operators within a designated Area of Interest (AOI) in central Oklahoma to reduce wastewater disposal volumes. We study the causal effect of this directive on Oklahoma seismicity. Although seismicity declined after the directive, causal attribution is difficult because treatment assignment was not random and earthquake occurrence exhibits substantial temporal carryover and spatial spillover. Our target is a DAITE comparing the observed regime with the counterfactual of no directive anywhere in Oklahoma, interpreted as the expected number of earthquakes prevented by the policy. Additional scientific background is given in Supplement~\ref{sec:ok_background}.

We restrict to earthquakes of magnitude 2.5 or greater and consider bivariate epidemic-type aftershock sequence (ETAS) models \citep{ogata} for latent control and treated components with a shared triggering kernel and KDE-driven inhomogeneous background \citep{nandan2017objective}. We use the first half of the pre-treatment data to estimate the background rate and proceed with our SEM approach comparing it to the naive approach setting labels equal to the location treatment status.

Our estimand of interest is the 100-day DAITE, comparing the observed regime with the counterfactual of no directive. \Cref{fig:oklahoma-combined}a  shows that SEM materially changes the estimated policy contrast relative to naive cell-wise labelling: over the same 100-day horizon the naive fit implies 341 prevented events, whereas SEM gives \(-118\). This reversal is consistent with \cref{fig:oklahoma-combined}b, which shows no immediate post-directive decline in cumulative earthquake counts. This illustrates how failure to account for outcome spillover can substantially distort short-run policy contrasts when carryover and spillover are long-range.

\begin{figure}
\centering

\begin{minipage}[t]{0.48\linewidth}
    \centering
    \includegraphics[width=\linewidth]{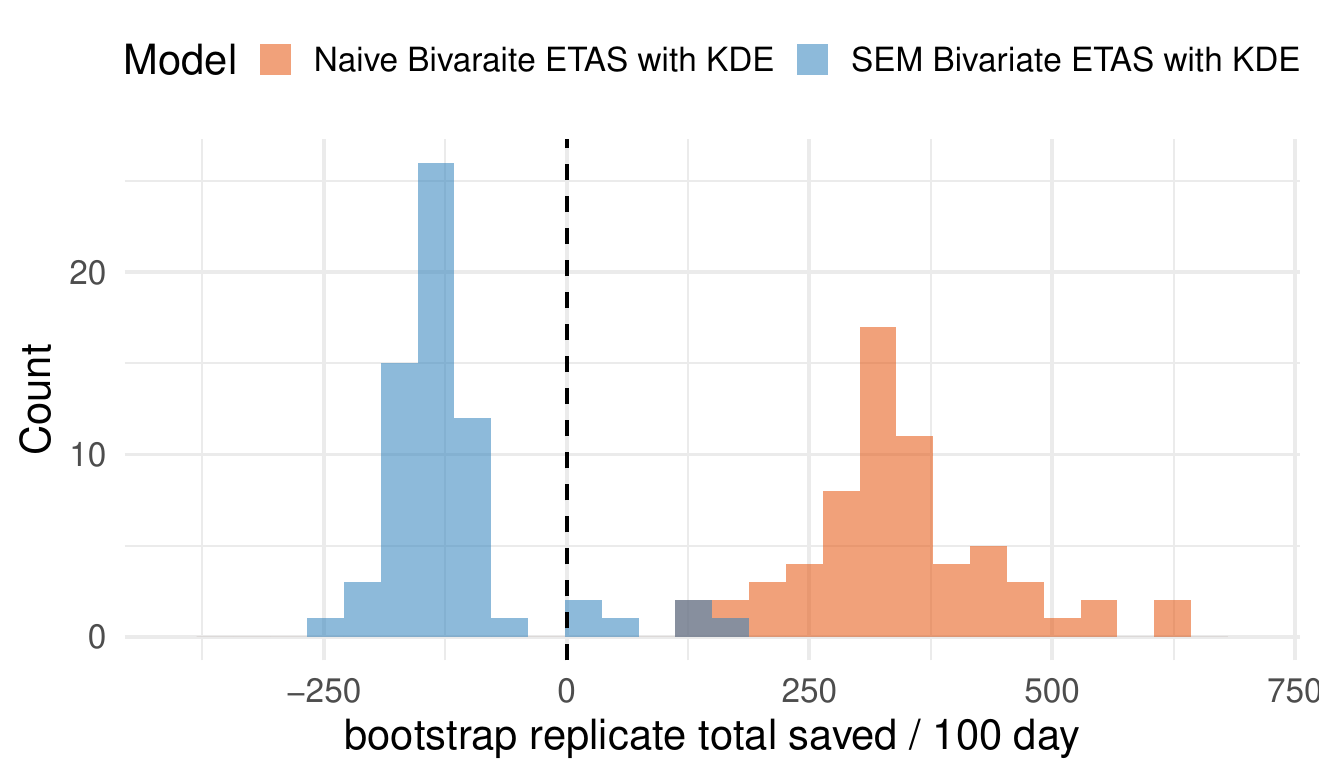}
    \small
    (a) Bootstrap replicate distribution of all-or-nothing total saved for naive vs.\ SEM. The dashed vertical line marks zero savings.
\end{minipage}
\hfill
\begin{minipage}[t]{0.48\linewidth}
    \centering
    \includegraphics[width=\linewidth]{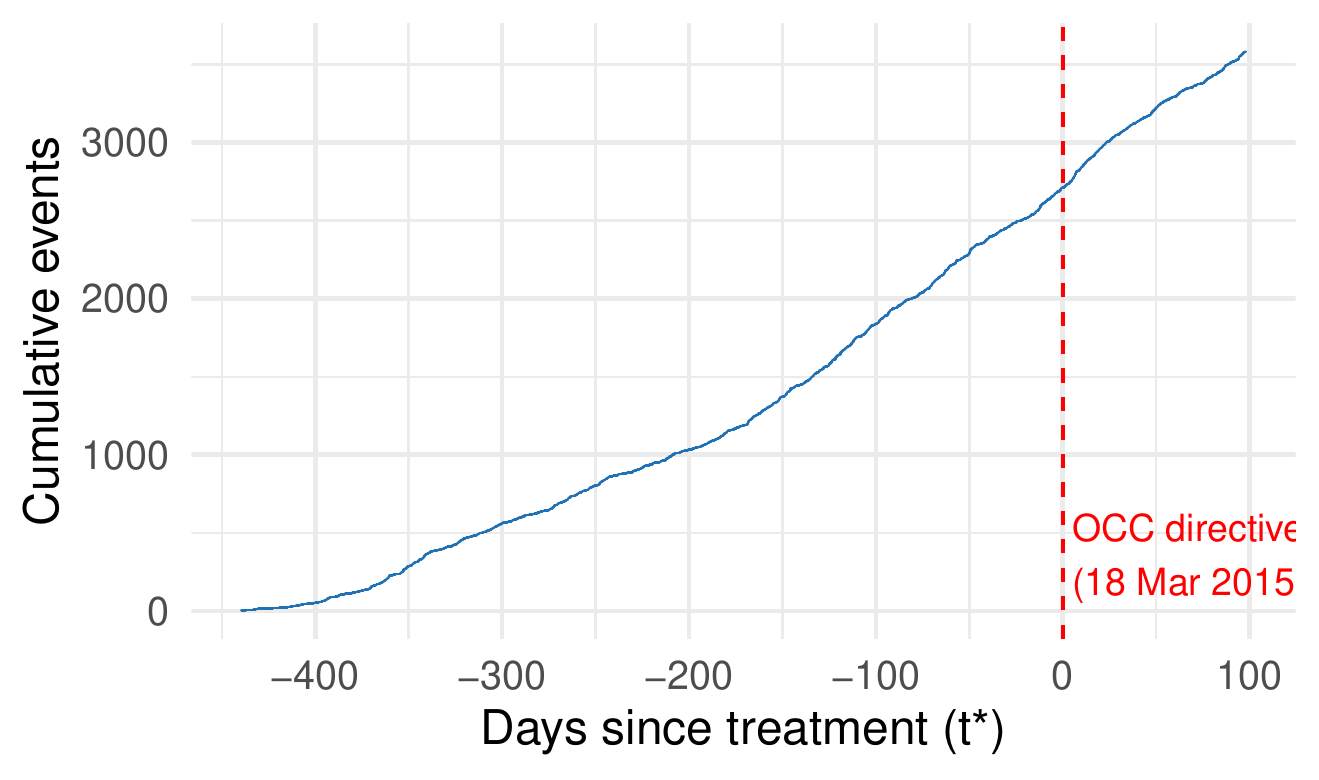}
    
    \small
    (b) Cumulative counts of earthquakes in Oklahoma greater than 2.5 in magnitude.
\end{minipage}

\caption{Oklahoma results.}
\label{fig:oklahoma-combined}
\end{figure}

This above-described result initially appears contradictory to existing literature that claims disposal reductions eventually decrease seismicity (cf. \citet{LangenbruchZoback2016}). Our analysis suggests caution about drawing that conclusion too early: when carryover and spillover are long ranged, pre-treatment dynamics and untreated-area activity cannot be ignored in short-run causal comparisons. The fully self-exciting naive fit is flexible enough to absorb that misspecification and still imply substantial earthquake savings, whereas SEM, after plausibly relabelling points to their latent process, suggests little or no short-run saving.

\section{Discussion}\label{sec:conclusion}

We have developed a causal framework for continuous spatiotemporal point-process outcomes under cell-level interventions with spillover and carryover. By modelling the post-treatment process as an unlabelled superposition of latent control and treatment components, the framework retains exact event times and locations while yielding finite-dimensional causal estimands through an exogenous tessellation.

Important extensions include multivalued or time-varying treatments, richer marked-process formulations, weaker modelling assumptions, and finite-sample guarantees for the practical SEM algorithm itself. More broadly, the paper shows that causal inference for continuous spatiotemporal event data is possible without collapsing outcomes to coarse counts, provided that the treatment design, model structure, and estimand scale are aligned.

\section*{Acknowledgements}
This work was supported by the Marsden Fund under grant 25-UOO-086. During the preparation of this work, the authors used GPT-5.2 Pro and Opus 4.6 to assist with code development, calculations, and identification of relevant references. All AI-assisted output was subsequently reviewed, verified, and edited by the authors, who take full responsibility for the content of the publication.

\section{Data and Code Availability}

The methods described are implemented in the R package \texttt{PPDisentangle} \cite{PPDisentangle}. Data and replication scripts will be archived in a Zenodo repository upon acceptance.  Earthquake records for Oklahoma were sourced from the U.S. Geological Survey (USGS) Earthquake Hazards Program (\url{https://earthquake.usgs.gov/}). Regulatory Area-of-Interest (AOI) geometries were obtained from the Oklahoma Corporation Commission GIS service (\url{https://gis.occ.ok.gov/}). County boundaries were retrieved from the U.S. Census Bureau via the \texttt{tigris} R package.

\bibliographystyle{rss}      
\bibliography{bibliography} 
\newpage

\setcounter{page}{1}
\renewcommand{\thepage}{S\arabic{page}} 
\section*{Supplement}
\setcounter{section}{0}
\renewcommand{\thesection}{S\arabic{section}}

\section{Guidance for practitioners}

\subsection{Spatiotemporal scale of effect}\label{sec:scale-of-effect}
Although \(\mathcal T\) partitions only space, the practical importance of spillover is inherently spatiotemporal. In many applications, the tessellation is not chosen to suit the analysis, but is fixed \textit{a priori} by substantive, administrative, or design considerations, or generated by an exogenous mechanism independent of the realised point pattern. Examples include neighbourhood boundaries in a vaccine roll-out, school catchments, administrative regions, or regulatory zones. The relevant question is therefore not how to choose \(\mathcal T\) so that spillover appears, but whether spillover operates at a scale for which the resulting cell-level estimands are scientifically meaningful and practically estimable.

In space, let \(L_{\text{process}}\) denote the characteristic spatial interaction or decay length of the process and \(L_{\text{cell}}\) a characteristic cell size, for example the square root of the average cell area. In time, let \(H_{\text{process}}\) denote the characteristic memory or temporal decay scale of the process, to be compared with the post-treatment analysis horizon \(T-t^\ast\). For Hawkes models, these scales are determined by the spatial and temporal decay of the triggering kernel; for dispersal models, they are given by the corresponding dispersion scales.

Our framework is most useful when the spatial spillover scale \(L_{\text{process}}\) is of the same order as \(L_{\text{cell}}\), or at least not much smaller. In that regime, spillover is large enough to matter for the cell-level causal question, but not so diffuse that cell-level contrasts are washed out. The temporal analogue enters through the analysis horizon \(T-t^\ast\): the temporal decay should not be so slow, relative to \(T-t^\ast\), that the induced spillover cannot be meaningfully observed or identified within the available post-treatment window.

If influence is very local in space relative to \(L_{\text{cell}}\), then spillover is largely confined to neighbourhoods of spatial cell boundaries, making the problem closer to classical edge correction \citep{cronie2011some}.
Conversely, if the spillover range is much larger than the cell size, or if the temporal memory is long relative to \(T-t^\ast\), each cell is affected by many other cells and by a long history, making separation of local treatment effects and cumulative spillover substantially harder and, in some settings, not practically estimable without stronger structure. In the language of \cref{sec:finite-sample-theory}, this is the regime in which the ambiguous band can carry substantial mass and the resulting statistical floor becomes less favourable; see also \cref{thm:sec-contraction,prop:compatibility}.

Accordingly, the choice of \(\mathcal T\) should be viewed first as part of the substantive definition of the intervention and outcome, and only then as a statistical design choice. Before defining cell-level effects, one should ask whether the chosen units are natural for the policy question, whether spillover operates over a comparable spatial range, and whether the post-treatment horizon is long enough, relative to the temporal decay of the process, for those effects to be observed.

\subsection{Practical implementation of stochastic EM}\label{sec:pracImpSEM}

Choosing a proposal distribution \(q(r\mid N)\) is the central design choice in stochastic EM. We propose a simple discrepancy-driven proposal that focuses sampling on regions where the current control fit under-explains or over-explains the observed counts.

Let \(C_j^{\text{obs}}:=N(\mathcal I_j)\) denote the observed post-treatment count in cell \(\mathcal I_j\). Using the current iterate \(\theta'\), compute the component-wise predicted post-treatment counts
\[
C_{j,0}:=\int_{\mathcal I_j}\widehat\lambda_0(\tau;\theta')\,\dd\tau,
\qquad
C_{j,1}:=\int_{\mathcal I_j}\widehat\lambda_1(\tau;\theta')\,\dd\tau,
\]
or their simulation-based analogues when these integrals are not available in closed form. Using the current estimates \(\lambda_0(\cdot\mid\theta'_0)\) and \(\lambda_1(\cdot\mid\theta'_1)\), compute predicted post-treatment counts \(C_{j,0}\) and \(C_{j,1}\) (via compensator integrals or fast simulation), and define
\[
D_j^{+}=\bigl(C_j^{\text{obs}}-C_{j,0}\bigr)_+,
\qquad
D_j^{-}=\bigl(C_{j,0}-C_j^{\text{obs}}\bigr)_+.
\]
Large \(D_j^{+}\) flags excess mass the control model cannot explain and suggests flipping control labels to treatment in \(\mathcal I_j\); \(D_j^{-}\) suggests the reverse. A single proposal is generated by (i) initializing from a baseline labelling \(r_{\text{base}}\) (for example, \(r_i=Z_j\) for each event \(\gamma_i\in\mathcal I_j\)); (ii) drawing flip budgets \(P^{+}\sim\mathrm{Poisson}\!\bigl(\sum_j D_j^{+}\bigr)\) and \(P^{-}\sim\mathrm{Poisson}\!\bigl(\sum_j D_j^{-}\bigr)\); (iii) forming cell weights \(W_j^{+}\propto D_j^{+}\) and \(W_j^{-}\propto D_j^{-}\); (iv) sampling \((L_1^{+},\dots,L_J^{+})\sim\mathrm{Multinomial}(P^{+};W^{+})\) and \((L_1^{-},\dots,L_J^{-})\sim\mathrm{Multinomial}(P^{-};W^{-})\); and (v) within each cell \(j\), uniformly relabelling \(L_j^{+}\) currently control-labelled points to treatment and \(L_j^{-}\) currently treatment-labelled points to control. Within each selected cell, one may optionally prioritise points using the current label-induced score \(\tilde s_{\theta'}(\gamma_i;r_{\text{base}})\), for example by ranking \(0\to1\) candidates by larger values and \(1\to0\) candidates by smaller values.

The discrepancy-guided proposal above is motivated by, but not itself covered by, the analysis in \cref{sec:finite-sample-theory} wherein we study an idealised predictable blockwise hard-EM surrogate rather than the adaptive SEM implementation.  The causal estimands defined in \cref{sec:causalEstimands} are smooth functionals $\Psi(\theta)$ of the interventional mean measure, so uncertainty in $\hat\Psi=\Psi(\hat\theta)$ may be propagated from $\hat\theta$ using a first-order delta-method approximation. When derivatives of $\Psi$ are difficult to evaluate or the linear approximation is unreliable, a parametric bootstrap under the fitted model provides a convenient alternative \citep{davis2025multivariate}. 

\section{Additional simulation results}\label{app:additional_simulation}
\cref{tab:sim-true-values} shows the model parameters. Figure~\ref{fig:pp_realiz} shows a single realisation of the superposed Hawkes processes. The treatment reduces productivity by lowering the branching ratio $K$.  Tables \ref{tab:treated_params_summary} and \ref{tab:control_params_summary} summarise the treated and control parameter estimates across simulations respectively.

\begin{table}
\caption{True parameter values used in the Hawkes process simulation study.}
\label{tab:sim-true-values}
\centering
\begin{tabular}{lcc}
\toprule
Parameter & Control process & Treated process \\
\midrule
$\mu$     & 8.0  & 8.0  \\
$\alpha$  & 0.01 & 0.01 \\
$\beta$   & 10.0 & 10.0 \\
$K$       & 0.80 & 0.20 \\
\midrule
\textbf{Interpretation} & high triggering & low triggering \\
\bottomrule
\end{tabular}
\end{table}

\begin{figure}[!ht]
\centering
\includegraphics[width=\textwidth]{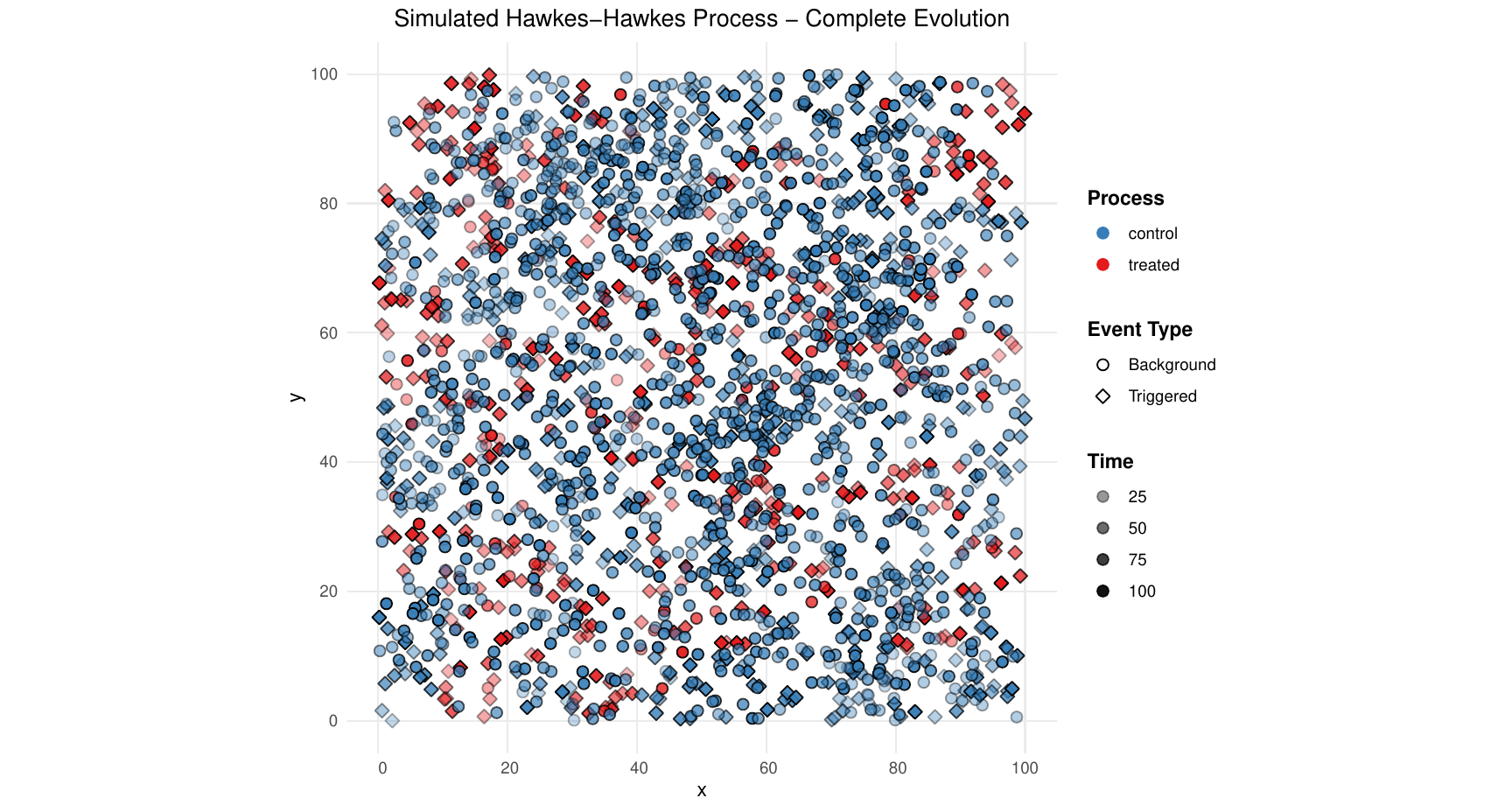}
\caption{One realisation of control and treatment process points superposed}
\label{fig:pp_realiz}
\end{figure}

Figures~\ref{fig:all_nothing} shows the estimated all-or-nothing DAITE, for 2 labelling methods, the oracle (true labelling known), the naive (labelling equals locations' treatment) and the full SEM estimation method. Due to the presence of pre-treatment data, the control parameters are used at their true values in the calculation of the DAITE. The naive labelling underestimates the effect of treating all regions because spillover from high-productivity control regions inflates the estimated $K$ for the treated process. In contrast, the SEM method yields DAITE estimates that are both close to the oracle labelling's estimate and the true DAITE.

In addition, the oracle labelling produces significantly more outliers than the SEM and for a low number of points has much more variability than the SEM. As the number of points increases the oracle improves, with the SEM improving less so, this reflects that adding more points reduces estimation error under a known labelling, but makes the labelling task harder, so adding more data does not necessarily always improve the feasibility of the SEM method.

\begin{table}
\caption{Treated parameter mean estimates across simulations for largest post-treatment time window, with standard errors in parentheses.}
\label{tab:treated_params_summary}
\centering
\begin{minipage}{\textwidth}
\centering
\begin{tabular}{lcccc}
\toprule
Method & $\mu$ mean (SE) & $\alpha$ mean (SE) & $\beta$ mean (SE) & $K$ mean (SE) \\
\midrule
Oracle & 8.073 (0.037) & 0.013 (0.000) & 9.202 (0.472) & 0.236 (0.019) \\
Naive  & 15.030 (0.061) & 0.037 (0.000) & 9.763 (0.100) & 0.218 (0.003) \\
SEM    & 8.326 (0.092) & 0.051 (0.001) & 14.514 (0.352) & 0.127 (0.004) \\
\bottomrule
\end{tabular}
\vspace{0.5em}
\end{minipage}
\end{table}

\begin{table}
\caption{Control parameter mean estimates across simulations for largest post-treatment time window, with standard errors in parentheses.}
\label{tab:control_params_summary}
\centering
\begin{minipage}{\textwidth}
\centering
\begin{tabular}{lcccc}
\toprule
Method & $\mu$ mean (SE) & $\alpha$ mean (SE) & $\beta$ mean (SE) & $K$ mean (SE) \\
\midrule
Oracle & 8.378 (0.037) & 0.011 (0.000) & 9.967 (0.034) & 0.730 (0.002) \\
Naive  & 13.577 (0.082) & 0.029 (0.000) & 9.842 (0.098) & 0.359 (0.003) \\
SEM    & 8.949 (0.053) & 0.011 (0.000) & 10.120 (0.041) & 0.715 (0.001) \\
\bottomrule
\end{tabular}
\vspace{0.5em}
\end{minipage}
\end{table}

Figure \ref{fig:all_nothing_bad} shows the same DAITE estimates, but uses control parameters estimated from the post-treatment data. This is clearly worse, with both approaches struggling to recover the true DAITE. This further underscores the need for pre-treatment data.

\begin{figure}[!ht]
\centering
\includegraphics[width=\textwidth]{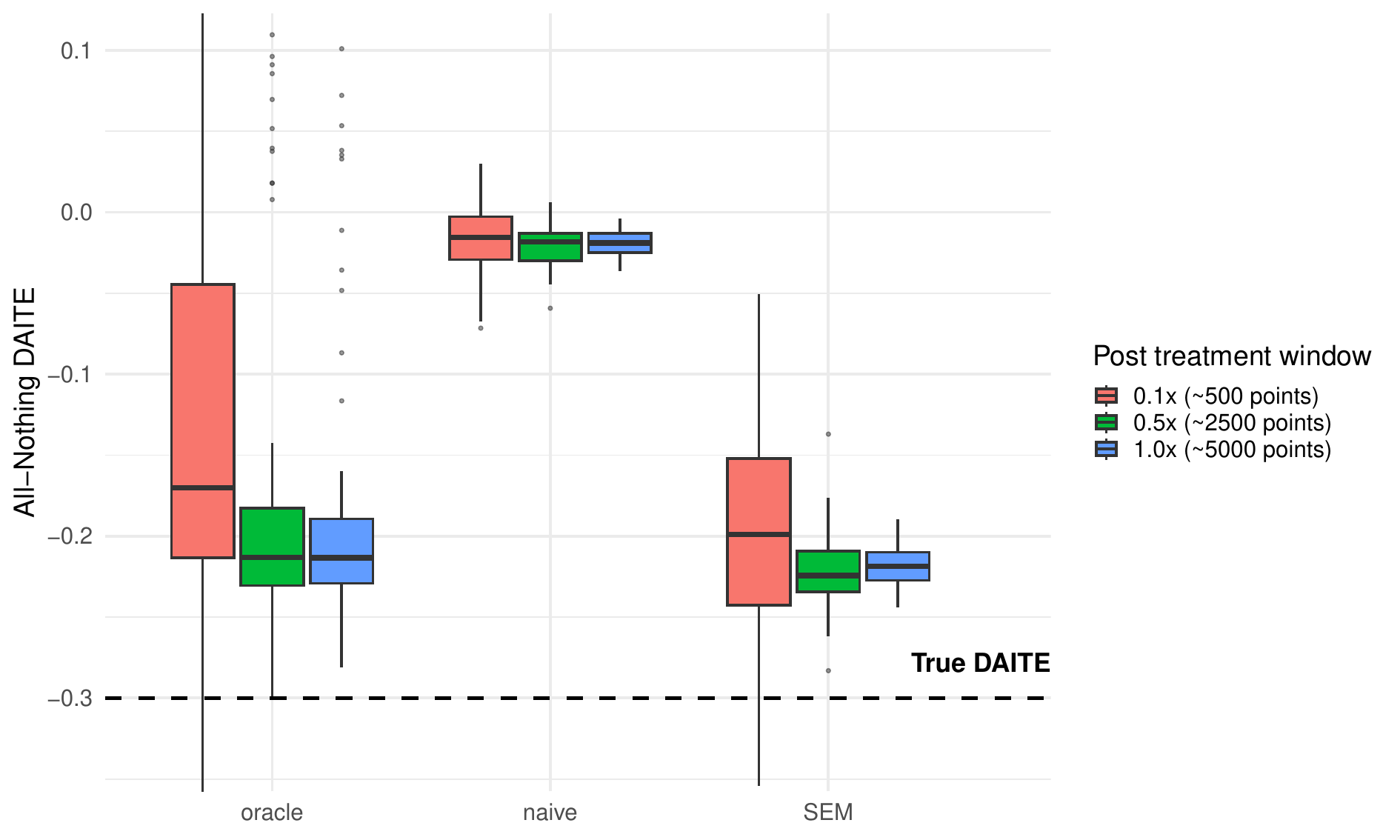}
\caption{Estimated all-or-nothing DAITE under oracle and naive labelling and the full SEM procedure. The dashed line marks the true value.}
\label{fig:all_nothing_bad}
\end{figure}

\section{Additional results for Oklahoma study}\label{sec:additional_application}

\subsection{Scientific background and policy context}\label{sec:ok_background}

A sharp rise in seismicity was observed in Oklahoma after 2008. Existing research attributes this increase primarily to high-volume wastewater (produced-water) disposal, especially disposal into the deep Arbuckle formation near basement-connected faults, rather than to hydraulic fracturing \citep{WeingartenEtAl2015}. 
Earthquake activity clustered near large-volume disposal wells and closely tracked rapid increases in disposal volumes in central Oklahoma \citep{WalshZoback2015}. Later work emphasized injection depth relative to crystalline basement and delayed pressure diffusion, which help explain both the regional spread of seismicity and the lagged decline in earthquake rates after disposal volumes were reduced \citep{Hincks2018}.
Although Oklahoma has also experienced earthquakes directly associated with hydraulic-fracturing stages, that literature generally treats them as a distinct process and a smaller contributor to the statewide surge in the 2010s \citep{RubinsteinMahani2015,SkoumalEtAl2018}.

On 18 March 2015 the Oklahoma Corporation Commission (OCC) issued a directive requiring disposal-well operators within a designated Area of Interest (AOI) in central Oklahoma to reduce wastewater disposal volumes . The AOI is seismically active, and the Arbuckle formation is well suited to wastewater disposal.

\subsection{Study Design}\label{sec:ok_design}

Counties whose centroids lie inside the published AOI polygon are treated; the remainder are control. Only events with magnitude at least $2.5$ are retained. Pre-$t^\ast$ events are split by time order into two halves: the first half is reserved to estimate the spatial background (KDE); the second half is used as pre-treatment data (pre-treatment points are all labelled control).  \cref{fig:AOI} shows the AOI and \cref{fig:OK_points} show the points pre and post-treatment.

\begin{figure}[!ht]
\centering
\includegraphics[width=\textwidth]{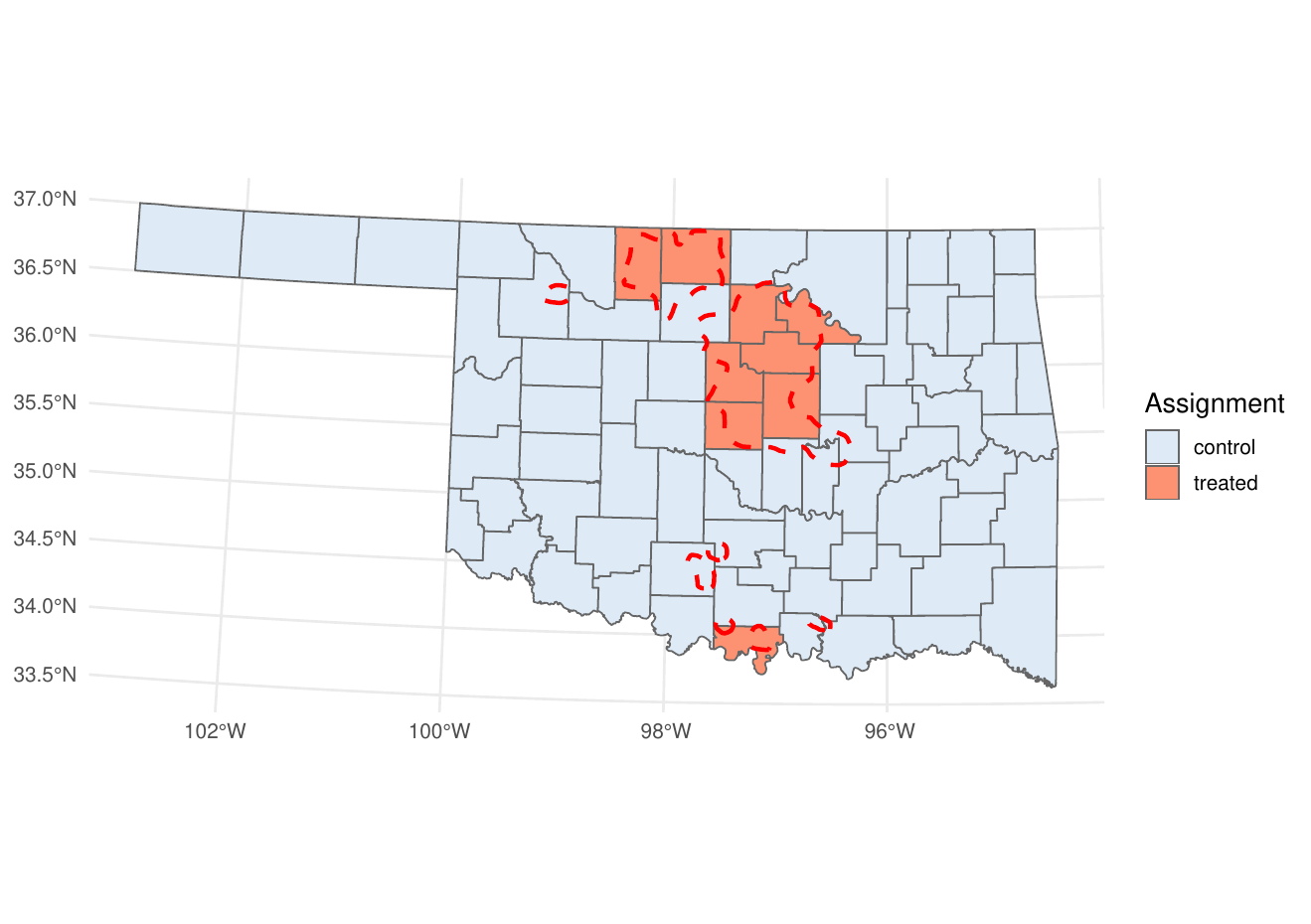}
\caption{Oklahoma and the AOI marked with dashed lines, counties are considered treated if their centroid is within the AOI}
\label{fig:AOI}
\end{figure}

\begin{figure}[!ht]
\centering
\includegraphics[width=\textwidth]{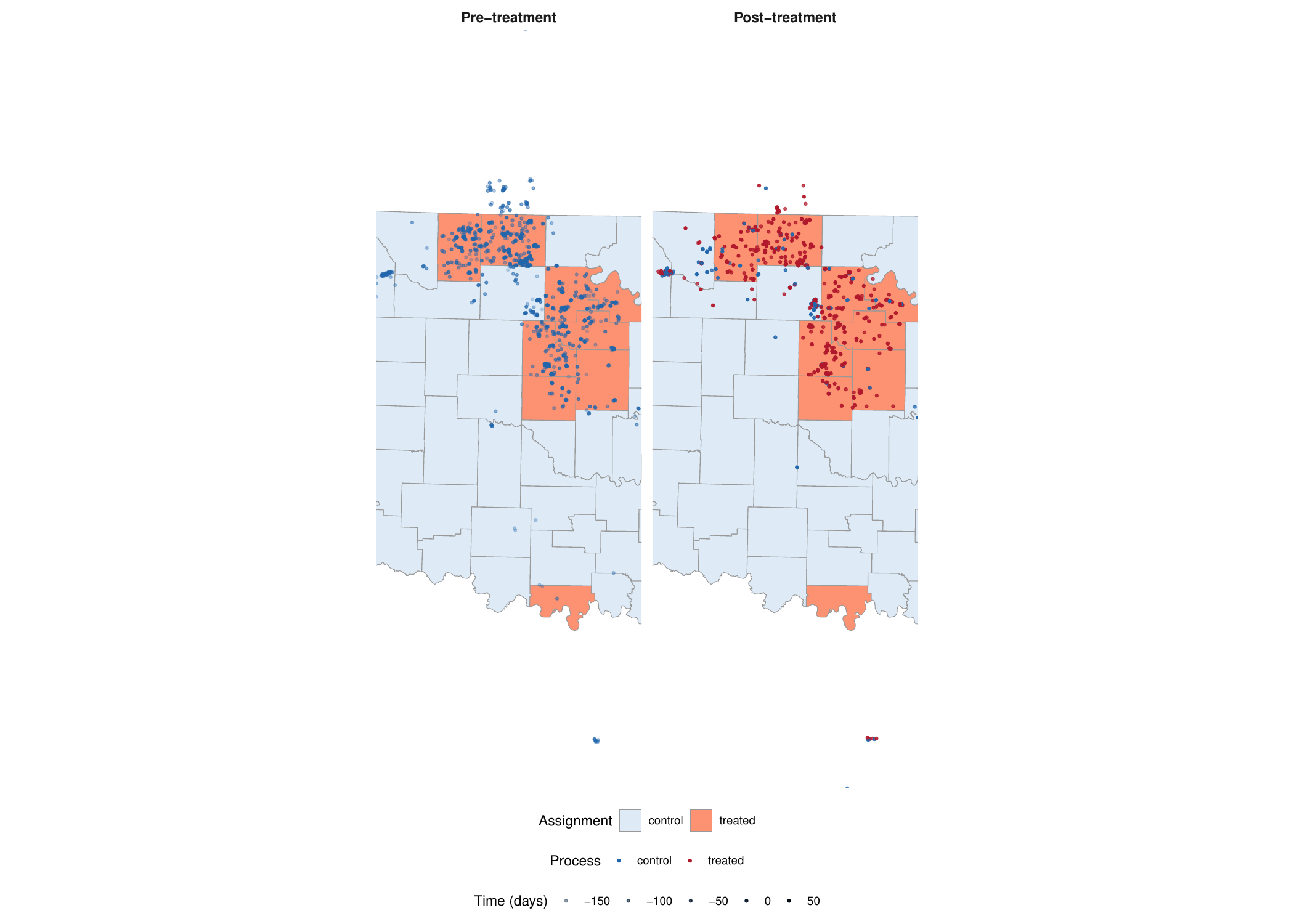}
\caption{Pre and post-treatment point pattern with colour denoting location inferred process}
\label{fig:OK_points}
\end{figure}

\subsection{Primary specifications}\label{sec:ok_primary_models}

Section~\ref{sec:application} reports contrasts from two fitted bivariate ETAS specifications that share the same nonparametric spatial background and the same county tessellation; for brevity we refer to them as the naive KDE fit (joint maximum likelihood with event labels fixed to the cell's treatment status) and the SEM KDE fit (stochastic EM with data-driven relabelling).

Both specifications are bivariate marked ETAS models on Oklahoma counties (US Census 2022 boundaries, EPSG:5070), with separate control and treated components, Omori-Utsu temporal triggering and an isotropic power-law spatial kernel with magnitude-dependent scale, and cross-excitation between components. The Naive KDE fit assigns each earthquake to control or treated according to the county's regulatory status (centroid inside the OCC AOI) and maximizes the resulting bivariate likelihood. The SEM KDE fit applied our algorithm with stochastic relabelling proposals with re-estimation under the same intensity family; configuration details are summarized below.

Let $z\in\{0,1\}$ index latent process. The conditional intensity for component $k\in\{0,1\}$ takes the form
\[
\lambda_k(t,x,y)
  = \frac{\mu_k}{|S_k|}\, W_k(x,y)
  + \sum_{\ell \in \{0,1\}} \sum_{\substack{j : t_j < t \\ r_j = \ell}}
    A_{k\ell}\, e^{\alpha_{m,k\ell}(m_j - m_0)}\,
    g(t - t_j)\, f(x-x_j, y-y_j \mid m_j),
\]
with Omori-Utsu $g(\Delta t)=\frac{p-1}{c}(1+\Delta t/c)^{-p}$ on $\Delta t>0$ and power-law spatial factor $f \propto (1+r^2/d(m_j))^{-q}$ with $d(m_j)=D\exp\{\gamma(m_j-m_0)\}$. Likelihood evaluation applies a finite temporal history cutoff $t_{\mathrm{trunc}}$ (days): parents farther back than $t_{\mathrm{trunc}}$ do not contribute to the triggering sum, and the temporal normalization of the Omori kernel is adjusted accordingly. The number of days truncated was determined using the same data reserved for the KDE estimation, and was chosen so that the triggering density had decayed $95\%$ from its initial value.

\subsection{Nonparametric background}\label{sec:ok_kde}
On the county partition, $W_k$ is obtained by smoothing the spatial locations of held-out pre-$t^\ast$ events with a Gaussian kernel: bandwidth follows Diggle's cross-validation (\verb|bw.diggle| in \texttt{spatstat}), implemented as $\sigma = 2\,\hat\sigma_{\mathrm{diggle}}$ in the code, then density values are normalized cell-wise so that background mass matches each cell's area.

\subsection{Stochastic EM configuration and Diagnostics}\label{sec:ok_sem_config}

The SEM KDE fit requires tuning and uses the hyper parameters described in \cref{tab:ok_sem_config}, full details can be found in the PPDisentangle \cite{PPDisentangle} \texttt{R} package. 

\begin{table}
\caption{\label{tab:ok_sem_config}Stochastic SEM configuration for the Oklahoma \emph{SEM KDE fit}}
\centering
\begin{minipage}{\textwidth}
\centering
\small
\begin{tabular}{@{}l@{\quad}l@{}}
\toprule
Setting & Value\\
\midrule
SEM outer rounds ($N_{\mathrm{iter}}$) & 100\\
Inner iterations (per outer round) & 2000\\
Inner proposals (per inner iteration) & 20\\
Retained labellings (per outer round) & 20\\
Outer optim cap (adaptive SEM, uni.~path) & 220\\
Outer optim cap (bivariate ETAS path) & 1000\\
Warm-start fixed full-parameter step & no\\
Relabelling selection method & sample\_weighted\\
Relabelling selection temperature & 0.08\\
Proposal change factor & 0.01\\
Change-factor min.~multiplier & 0.2\\
Change-factor max.~multiplier & 2\\
Max.~relabel step (fraction of points) & 0.05\\
Forced param.~update flip fraction & 0.05\\
Temporal relabel weight & 0\\
Temporal relabel scale (days) & 15\\
\bottomrule
\end{tabular}
\end{minipage}
\end{table}

It is crucial to track the convergence of the SEM algorithm. \cref{fig:sem_convergence} shows the progression of the likelihood together with the number of proposed and accepted label flips. The flattening of the likelihood together with the stabilisation of the number of flips proposed and accepted suggest that the labelling sampler has reached an area of high likelihood of labellings.

\begin{figure}
\begin{minipage}[t]{0.48\linewidth}
    \centering
    \includegraphics[width=\linewidth]{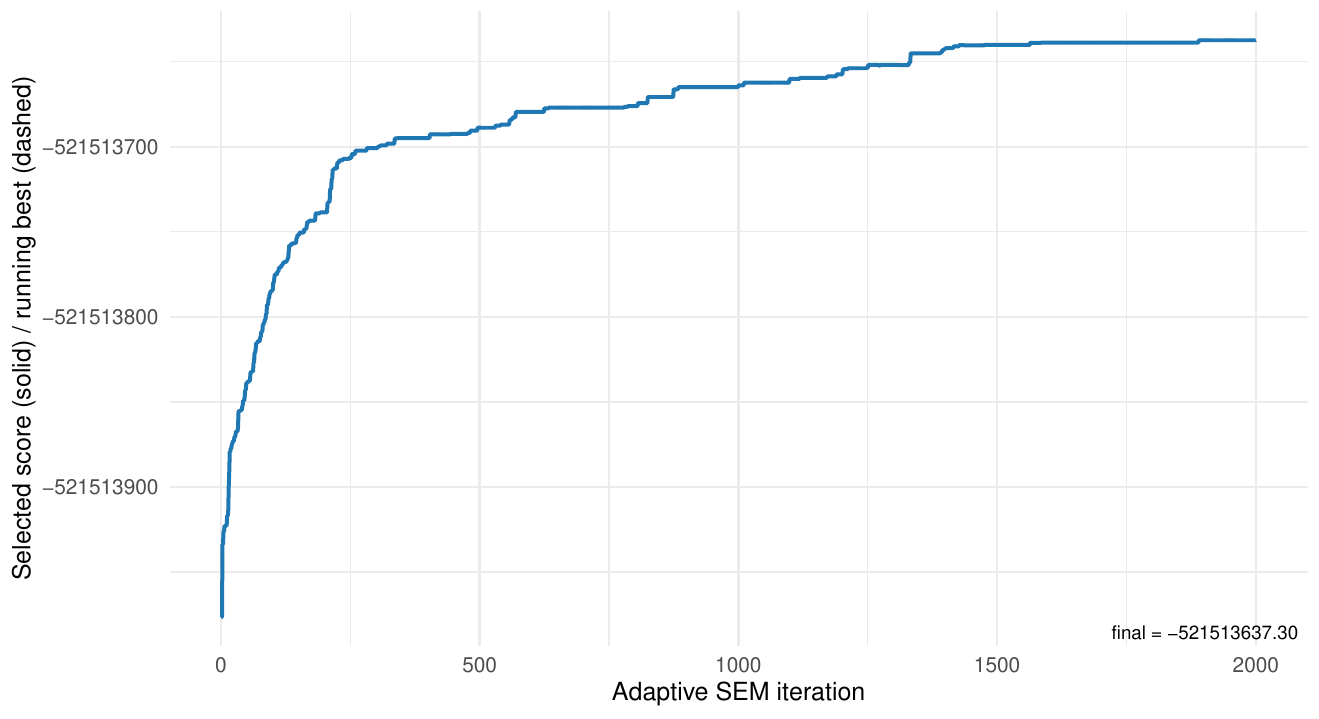}
    \small
    (a) Likelihood trace during adaptive step of relabelled points.
\end{minipage}
\hfill
\begin{minipage}[t]{0.48\linewidth}
    \centering
    \includegraphics[width=\linewidth]{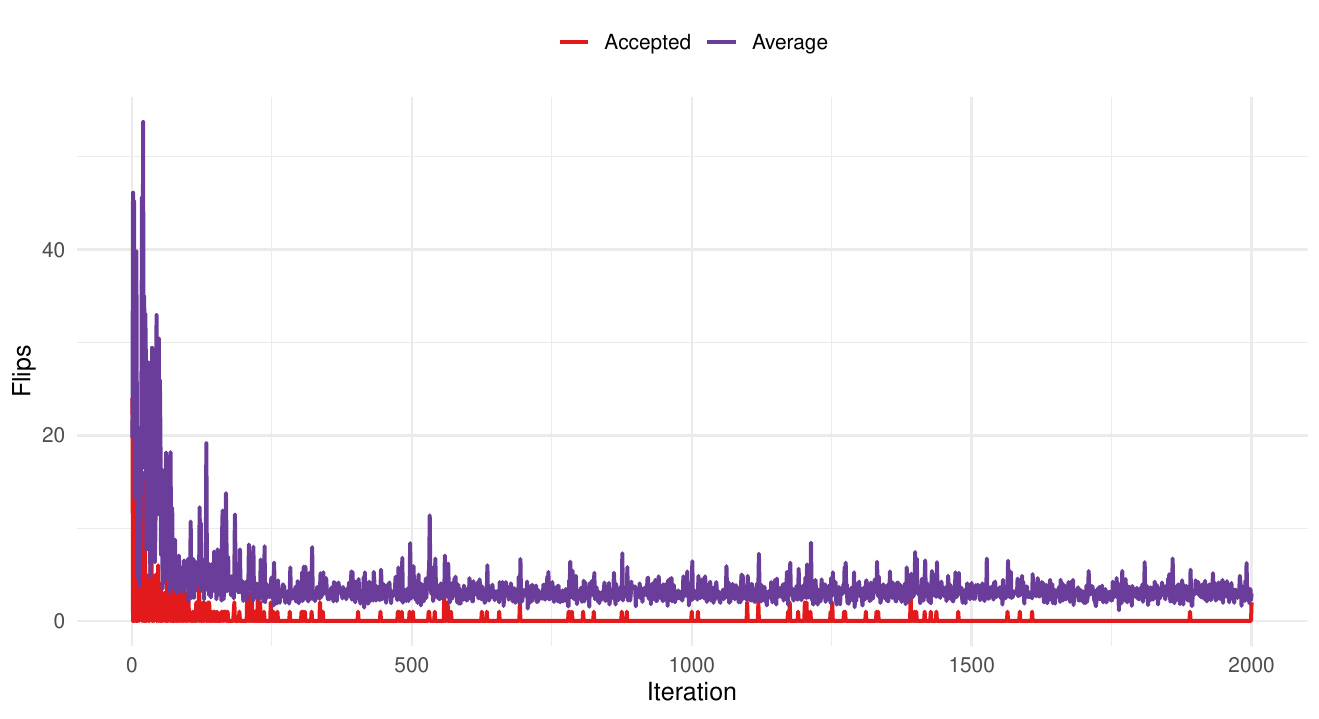}
    \small
    (b) Trace of number of proposed and accepted flips in the adaptive step.
\end{minipage}
\caption{SEM Diagnostics for Oklahoma KDE SEM fit}
\label{fig:sem_convergence}
\end{figure}

\subsection{Results}

\cref{tab:fitted_params} shows the fitted parameters for the Naive KDE and SEM KDE fits. 
We show simulated estimates in \cref{tab:ok_counts}. Distributions of parametric bootstrapped results are shown in \cref{fig:oklahoma-combined}a.

\begin{table}
\caption{Full bivariate ETAS parameter estimates: naive KDE vs SEM KDE fit}
\label{tab:fitted_params}
\centering
\begin{minipage}{\textwidth}
\centering
\begin{tabular}{@{}lcc@{}}
\toprule
Parameter & Naive KDE & SEM KDE\\
\midrule
$\mu_0$ & 0.55464 & 0.37216\\
$\mu_1$ & 0.66415 & 0.82775\\
$A_{00}$ & 0.64514 & 0.70944\\
$\alpha_{m00}$ & 0.8105 & 0.12455\\
$A_{11}$ & 0.52016 & 0.48404\\
$\alpha_{m11}$ & 0.96478 & 0.86313\\
$A_{01}$ & 0.02944 & 0.12425\\
$\alpha_{m01}$ & 1.22685 & 0.65214\\
$A_{10}$ & 0.01362 & 0.00055\\
$\alpha_{m10}$ & 1.07839 & 2.30358\\
c & 4.39229 & 4.81292\\
p & 2.06049 & 2.00183\\
D & 0.18758 & 0.31547\\
$\gamma$ & 1.39907 & 0.49213\\
q & 1.50194 & 1.50113\\
\bottomrule
\end{tabular}
\vspace{0.5em}
\end{minipage}
\end{table}

\begin{table}
\caption{Expected post-treatment earthquake counts under the no-directive regime $\mathbf 0$ and the observed regime $z_{\mathrm{obs}}$, for naive and SEM fits.}
\label{tab:ok_counts}
\centering
\begin{minipage}{\textwidth}
\centering
\begin{tabular}{lrrr}
\toprule
Method 
& $\bbE[\Nproc^{\mathbf 0}(D)]$ 
& $\bbE[\Nproc^{z_{\mathrm{obs}}}(D)]$ 
& $\Delta = \bbE[\Nproc^{\mathbf 0}(D)] - \bbE[\Nproc^{z_{\mathrm{obs}}}(D)]$ \\
\midrule
Naive & 946 & 605 & 341 \\
SEM   & 334 & 453 & $-118$ \\
\bottomrule
\end{tabular}
\vspace{0.5em}
\end{minipage}
\end{table}

\subsection{Sensitivity: alternative spatial partitions}\label{sec:ok_partition_sens}
To assess dependence on the administrative tessellation, the analysis refits the \emph{same two KDE specifications}, the naive KDE fit and the SEM KDE fit, on alternative partitions. We tested square-grid tessellations whose cell diameters are $1$, $2$, and $5$ times an estimated spatial triggering range from the KDE training pattern, and a two-cell ``AOI region'' partition (AOI interior treated, remainder of the state control). Details are found in the PPDisentangle \cite{PPDisentangle} \texttt{R} package. We did not find qualitative differences between the fits.

\section{Proofs}
\subsection{Setup and complete-data likelihood}\label{sec:setup}

We first collect the basic notation, conventions, and the complete-data likelihood used throughout.

All vector norms $\|\cdot\|$ are Euclidean. Throughout we work with predictable versions of the component intensities $\tau\mapsto \lambda_k(\tau\mid \mathcal F_{t-};\theta)$ that are left continuous in $t$ for $\mu$-a.e.\ $x\in\mathcal S$, jointly measurable in $(t,x)$, and continuous in $\theta$. Left continuity is used whenever pre\-jump values are evaluated at event times; when we write $\lambda(\gamma_i)$ we mean $\lambda(t_i-,x_i)$.

We fix $D=(t^\ast,T]\times\mathcal S$ with product base measure $\dd\tau=\dd t\otimes\dd\mu(x)$ and volume
$|D|=(T-t^\ast) \mu(\mathcal S)$.  Throughout, the deterministic observation and schedule windows are open on the left and closed on the right (e.g.\ $D=(t^\ast,T]\times\mathcal S$). When defining local history neighbourhoods intended to enter an intensity at time \(t\), we take them to be open at the right endpoint (i.e.\ they exclude \(t\)), e.g.\ \((t-L,t)\times B_R(x)\), to preserve predictability.

We assume $\mu(\mathcal S)<\infty$, hence $|D|<\infty$.  All statements that refer to ``large $|D|$'' are in a deterministic triangular-array sense: we consider a sequence of windows $D_n=(t^\ast_n,T_n]\times\mathcal S_n$ with $|D_n|\to\infty$ and (unless explicitly stated) all constants in assumptions are uniform along the sequence. For readability we drop the subscript $n$. 

When discussing high-signal regimes (for example when verifying the concentration Assumption~\ref{ass:G8}), we also allow the data-generating parameter to vary along the triangular array:
$\theta^\star=\theta^\star_n$ (hence $P_{\theta^\star}=P_{\theta^\star_n}$), with $\theta^\star_n\in\Theta_\circ$ for all $n$, and with the constants in our assumptions chosen uniformly in $n$. We continue to suppress the subscript unless the dependence matters (e.g.\ for stationary laws, we write $\pi=\pi_{|D|}$). The parameter dimension $p$ is fixed (does not grow with $|D|$). Although the point pattern on $D=(t^\ast,T]\times\mathcal S$ is observed offline, the finite-sample analysis will work with a time-respecting update schedule to ensure predictability of the integrands used in the martingale bounds.

Fix a deterministic partition
\begin{align*}
t^\ast=&u_0<u_1<\cdots<u_{\Tmax}=T,\\
D_m:=&(u_{m-1},u_m]\times\mathcal S,\\
D^{[u]}:=(&t^\ast,u]\times\mathcal S,\\
D^{(m)}:=&D^{[u_m]}.
\end{align*}
Write $N^{[u]}:=N(D^{[u]})$ and $N^{(m)}:=N(D^{(m)})$.
We continue to denote the events on $D$ by $\gamma_i=(t_i,x_i)$,
$i=1,\dots,N(D)$, ordered by increasing $t_i$; then $N^{[u]}=\max\{i:\ t_i\le u\}$. With the convention $\max\emptyset=0$, we have $N^{[t^\ast]}=0$ and $D^{(0)}=D^{[u_0]}=\emptyset$.
Moreover, for $u\in[t^\ast,T]$,
\[
|D^{[u]}|=(u-t^\ast)\,\mu(\mathcal S),
\qquad
|D_m|=(u_m-u_{m-1})\,\mu(\mathcal S),
\qquad
|D^{(m)}|=|D^{[u_m]}|.
\]
Throughout the remainder of this section we work on the inference window $D=(t^\ast,T]\times\mathcal S$ and use the shifted ground counting process
\[
N^D(t):=N\big((t^\ast,t]\times\mathcal S\big),\qquad t\in[t^\ast,T],
\]
with the convention $N^D(t^\ast)=0$ (hence also $N^D(t^\ast-)=0$). For notational simplicity we suppress the superscript $D$ and write $N(t)$ and $N(t-)$ in place of $N^D(t)$ and $N^D(t-)$. Note that for $u\in[t^\ast,T]$,
\[
N^{[u]}=N(D^{[u]})=N\big((t^\ast,u]\times\mathcal S\big)=N^D(u).
\]

We view $(N_0,N_1)$ (and hence $N=N_0+N_1$) as defined on $(-\infty,T]\times\mathcal S$.
In addition to the inference window $D=(t^\ast,T]\times\mathcal S$, we assume that the restriction of $N$ to a history interval of length $L$ immediately preceding $t^\ast$ is observed and treated as fixed conditioning information. Concretely, throughout we assume that $N$ is observed on $(t^\ast-L,t^\ast]\times\mathcal S$ and we condition on this observed history. (If one prefers to work with an observation start time $0$, one may shift time so that $t^\ast-L\ge 0$, equivalently $t^\ast\ge L$.) All likelihoods and intensities on $D$ are understood as conditional on this observed history. In particular, whenever a neighbourhood $(t-L,t)\times B_R(x)$ reaches below $t^\ast$, the count is evaluated using the observed pre-$t^\ast$ history.

\begin{remark}[Pre-\(t^\ast\) history in component-dependent models]\label{rem:prehistory-marked}
The preceding conditioning convention is sufficient for the model-agnostic setup, but for specializations in which the component intensities depend on component-specific pre-\(t^\ast\) history (notably Hawkes models), conditioning only on the superposed history \(N\) before \(t^\ast\) is not enough to identify the oracle intensities \(\lambda_k(\cdot\mid\mathcal G_{t-};\theta)\). In those specializations we therefore work either under empty pre-history or conditional on the marked pre-\(t^\ast\) history \((N_0,N_1)\) over the memory window; see \S\ref{app:hawkes}.
\end{remark}

We observe an unlabelled simple point process $N=N_0+N_1$ on $D$ with simple components $N_0,N_1$. The observed filtration $\{\mathcal H_t\}$ is generated by $N$. The oracle filtration $\{\mathcal G_t\}$ augments $\{\mathcal H_t\}$ with the true component labels $r^\star=(r^\star_i)_{i=1}^{N(D)}$, so that $N_0,N_1$ are adapted to $\{\mathcal G_t\}$. We work with predictable versions with respect to the filtration in force and we always assume
\[
\mathcal H_t\ \subset\ \mathcal G_t\qquad\text{for every }t,
\]
so $\mathcal H$-predictable processes are also $\mathcal G$-predictable. We work under the probability law $P_{\theta^\star}$ on which $N$ has $\mathcal G$-predictable conditional intensity $\lambda^\star$.

\begin{remark}[Model vs.\ observed vs.\ algorithmic objects]\label{rem:objects}
Before proceeding, we note three related objects. (i) Model: the marked process $(N_0,N_1)$, the oracle filtration $(\mathcal G_t)$, and the oracle intensities
$\lambda_k(\cdot\mid \mathcal G_{t-};\theta)$. (ii) Observed process: the superposed process $N=N_0+N_1$ and the observed filtration $(\mathcal H_t)$. (iii) Algorithm/analysis: label sequences $r$, induced counts $N_k^r$, and label-induced intensities
$\widehat\lambda_k^r(\cdot;\theta)$, which are constructed from $(\mathcal H_t)$ and past labels only
(Assumption~\ref{ass:G0pred}).
\end{remark}

Write $\lambda_k^\star(\tau):=\lambda_k(\tau\mid \mathcal G_{t-};\theta^\star)$ and $\lambda^\star:=\lambda_0^\star+\lambda_1^\star$. For any predictable set $S\subset D$, define the component compensated martingales
\[
M_k^\star(S)\ :=\ N_k(S) - \int_S \lambda_k^\star(\tau) \dd\tau,\qquad k\in\{0,1\},
\]
and the superposed compensated martingale
\[
M^\star(S)\ :=\ N(S) - \int_S \lambda^\star(\tau) \dd\tau\ =\ M_0^\star(S)+M_1^\star(S).
\]

Throughout the analysis, we index complete-data likelihoods and intensities by a labelling
\[
r=(r_i)_{i\ge 1}\in\{0,1\}^{\bbN}.
\]
Only the first $N(D)$ entries affect likelihood terms. For the per-flip algebra we may treat $r$ as an arbitrary fixed sequence. For the algorithmic update path $(r^{(m)})_{m\le\Tmax}$ we allow $r^{(m)}$ to be random but require it to satisfy the blockwise non-anticipation condition in Assumption~\ref{ass:G0pred} below, which ensures that all label-induced intensities used inside martingale integrals on the relevant cumulative windows $D^{(m)}$ are predictable (see the definition of $\widehat\lambda_k^r$ below).

For the analysed update path $(r^{(m)})_{m=0}^{\Tmax}$ we denote the (random) set of visited labellings by
\[
\mathcal R_{\rm path}
:=\big\{r^{(m)}:\ m=0,1,\dots,\Tmax\big\}.
\]
When single-event perturbations of an iterate labelling are needed (e.g.\ in per-flip arguments),
we write the (optional) single-flip closure
\[
\mathcal R_{\rm path}^{\rm flip}
:=\mathcal R_{\rm path}\ \cup\
\big\{\big(r^{(m)}\big)^{(\gamma_j)}:\ m=0,1,\dots,\Tmax,\ j\le N^{(m)}\big\},
\]
where $(r^{(m)})^{(\gamma_j)}$ denotes the labelling obtained by flipping only the $j$th coordinate of $r^{(m)}$
(as in Lemma~\ref{lem:singleflip-nonant}).

Throughout we assume the parameter space $\Theta\subset\bbR^p$ is nonempty, and we fix a nonempty, compact and convex subset $\Theta_\circ\subset\Theta$. Unless explicitly stated otherwise, maximisations are over $\Theta_\circ$.

For the M-step it is convenient to introduce an explicit maximiser correspondence.
For any nonempty compact convex set $K\subseteq\Theta_\circ$ and any labelling $r$, define
\[
\operatorname{ArgMax}_K(r)\ :=\ \arg\max_{\theta\in K}\ell_r(\theta),
\qquad
\hat\theta_K(r)\ \in\ \operatorname{ArgMax}_K(r),
\]
where $\hat\theta_K(r)$ denotes an arbitrary (fixed) measurable selection from the nonempty set
$\operatorname{ArgMax}_K(r)$.

For $u\in[t^\ast,T]$ we also use the restricted correspondence
\[
\operatorname{ArgMax}_K^{[u]}(r)\ :=\ \arg\max_{\theta\in K}\ell_r^{[u]}(\theta),
\qquad
\hat\theta_K^{[u]}(r)\ \in\ \operatorname{ArgMax}_K^{[u]}(r),
\]
where $\hat\theta_K^{[u]}(r)$ denotes an arbitrary (fixed) measurable selection.
When the feasible set is $K=\Theta_{\rm sc}$ we write $\hat\theta_r^{[u]}:=\hat\theta_{\Theta_{\rm sc}}^{[u]}(r)$.

In the local-curvature regime of Assumption~\ref{ass:G6} we will also use the localised neighbourhood
\[
\Theta_{\rm sc}:=\Theta_\circ\cap B(\theta^\star,r_{\rm sc}).
\]
When we write $\hat\theta_r$ without specifying $K$, we mean the localised choice
$\hat\theta_r:=\hat\theta_{\Theta_{\rm sc}}(r)$. This localization is imposed only for the analysis (to remain inside the curvature region); see Lemma~\ref{lem:Mstep-coincide} for conditions under which maximising over $\Theta_\circ$ and over $\Theta_{\rm sc}$ coincide.

We assume throughout that the true parameter lies in the constrained set, i.e.\ $\theta^\star\in\Theta_\circ$. When we invoke an unconstrained first-order condition at an M-step maximiser, we will do so by requiring the maximiser $\hat\theta_r$ to lie in the interior of the relevant feasible set $K\in\{\Theta_\circ,\Theta_{\rm sc}\}$ (cf.\ Assumption~\ref{ass:G9}); otherwise we use the variational inequality form.

For $k\in\{0,1\}$ and for a given history $\mathcal F_{t-}\in\{\mathcal H_{t-},\mathcal G_{t-}\}$ possibly augmented by past labels (defined below), denote the predictable component intensity by
\[
\lambda_k(\tau\mid \mathcal F_{t-};\theta)\qquad(\tau=(t,x)\in D),
\]
and set $\lambda=\lambda_0+\lambda_1$. We assume $\theta\mapsto \lambda_k(\tau\mid \mathcal F_{t-};\theta)$ is continuous for each $\tau$.

A labelling is a sequence $r=(r_i)_{i\ge 1}\in\{0,1\}^{\bbN}$. Define the selected-component process on $[t^\ast,T)$ by
\begin{equation}\label{eq:sel-def}
\mathrm{sel}_r(t):=r_{N(t-)+1}\qquad (t\in[t^\ast,T)).
\end{equation}
This agrees with the pre-event label at each event time. For two labellings $r,r'\in\{0,1\}^{\bbN}$, define the (window-restricted) Hamming distance
\[
d_H(r,r'):=\sum_{i=1}^{N(D)} \mathbf 1\{r_i\neq r_i'\}.
\]
Only the first $N(D)$ entries matter throughout. For $u\in[t^\ast,T]$ we also define the
$u$-restricted Hamming distance
\[
d_H^{[u]}(r,r'):=\sum_{i:\,t_i\le u} \mathbf 1\{r_i\neq r_i'\},
\qquad\text{so that}\qquad
d_H(r,r')=d_H^{[T]}(r,r').
\]
In what follows, all predictable integrands are constructed using only past labels $r_{1:N(t-)}$ on the relevant (cumulative) window; see the definition of $\widehat\lambda_k^r$ below.

\begin{lemma}[Predictability of the selected component]\label{lem:sel-predictable}
Fix a deterministic labelling $r$. Then $t\mapsto \mathrm{sel}_r(t)$ is $\{\mathcal H_{t-}\}$- and $\{\mathcal G_{t-}\}$-predictable on $[t^\ast,T)$ and, at event times $t_i<T$, $\mathrm{sel}_r(t_i-)=r_i$.
\end{lemma}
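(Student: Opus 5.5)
The plan is to write $\mathrm{sel}_r$ as a deterministic function of the left-limit counting process $N(t-)$ and then use standard facts about predictability of left-continuous adapted processes. Since $r$ is a fixed deterministic sequence, define $f_r:\bbN_0\to\{0,1\}$ by $f_r(n):=r_{n+1}$. By \eqref{eq:sel-def}, $\mathrm{sel}_r(t)=f_r(N(t-))$. Every function on the discrete space $\bbN_0$ is Borel, so predictability of $\mathrm{sel}_r$ reduces to predictability of $t\mapsto N(t-)$.

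First, I will show that $N(t-)$ is $\{\mathcal H_t\}$-predictable. The shifted ground process $N(t)=N((t^\ast,t]\times\mathcal S)$ is $\mathcal H_t$-adapted by definition of the observed filtration. It is càdlàg and integer-valued, since $N$ is a simple counting measure and $|D|<\infty$ forces finitely many points a.s. Its left-limit process $t\mapsto N(t-)$ is therefore adapted and left-continuous, with the convention $N(t^\ast-)=0$, which is deterministic. Left-continuous adapted processes generate the predictable $\sigma$-field, so $N(t-)$ is $\{\mathcal H_t\}$-predictable on $[t^\ast,T)$.

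Composition with the measurable map $f_r$ preserves predictability. Alternatively, one can write directly
\[
\mathrm{sel}_r(t)=\sum_{n\ge0} r_{n+1}\1\{N(t-)=n\},
\]
where each indicator is predictable and, because $N(D)<\infty$ a.s., the sum is pathwise finite. Since $\mathcal H_t\subset\mathcal G_t$, every $\mathcal H$-predictable process is also $\mathcal G$-predictable, which gives the second claim.

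For the event-time identity, take $t_i<T$. Simplicity of $N$ and the increasing ordering of the events give $N(t_i-)=i-1$: exactly the events $\gamma_1,\dots,\gamma_{i-1}$ lie strictly before $t_i$, because no other event shares the time $t_i$. Hence the value of $\mathrm{sel}_r$ at $t_i$ is $r_{N(t_i-)+1}=r_i$. The left limit $\mathrm{sel}_r(t_i-)$ is the same quantity, since $N(s-)=i-1$ for all $s\in(t_{i-1},t_i]$, so $\mathrm{sel}_r$ is constant on that interval.

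The only delicate point is simplicity in time. Simplicity of the spatiotemporal process guarantees distinct points $(t_i,x_i)$, but not distinct times. I will therefore use the standing convention that the ground process is simple, which holds a.s. when an intensity exists relative to $\dd t\otimes\dd\mu$ with $\mu$ non-atomic in the relevant sense. Under that convention the ordering is unambiguous and $N(t_i-)=i-1$ holds a.s.
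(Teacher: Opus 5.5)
Your proof is correct and follows essentially the same route as the paper, which writes $\mathrm{sel}_r(t)=\sum_{m\ge 0} r_{m+1}\1\{N(t-)=m\}$ as a sum of predictable indicators of the left-continuous adapted process $N(t-)$; your composition $f_r(N(t-))$ is the same observation. Like you, it then uses simplicity to get $N(t_i-)=i-1$. Your extra remarks are legitimate refinements of points the paper's one-line argument glosses over: that $\mathrm{sel}_r$ is constant on $(t_{i-1},t_i]$, so the left limit at $t_i$ really equals $r_i$, and that the argument needs simplicity of the ground process in time, not just distinctness of the spatiotemporal points.
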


\begin{proof}
Since $N$ is adapted, c\`adl\`ag, and simple, $t\mapsto N(t-)$ is predictable and takes integer values with unit jumps. For each $m\in\bbN\cup\{0\}$, the set $\{(t,\omega):N(t-,\omega)=m\}$ is predictable, hence the indicator $t\mapsto \1\{N(t-)=m\}$ is predictable. On $[t^\ast,T)$ we have the pointwise identity with pairwise disjoint predictable supports
\[
\mathrm{sel}_r(t)=\sum_{m=0}^\infty r_{m+1} \1\{N(t-)=m\}.
\]
Because the supports are disjoint and for each $(\omega,t)$ exactly one indicator equals $1$, this is a countable pointwise sum of predictable indicators; hence it is predictable. Simplicity implies $N(t_i-)=i-1$, whence $\mathrm{sel}_r(t_i-)=r_i$.
\end{proof}

\begin{remark}[On predictability of $\mathrm{sel}_r$ for random iterate labels]\label{rem:sel-random}
Lemma~\ref{lem:sel-predictable} treats deterministic labellings $r$. For random iterate labellings $r^{(m)}$, the process $t\mapsto \mathrm{sel}_{r^{(m)}}(t)=r^{(m)}_{N(t-)+1}$ need not be $\mathcal G$-predictable. In what follows, $\mathrm{sel}_r$ is only used inside pathwise algebraic identities, and any resulting selection-remainder term is dominated by stochastic integrals with $\mathcal G$-predictable integrands built from past labels $r_{1:N(t-)}$.
\end{remark}

For $\tau=(t,x)$ and a labelling $r$, define the label-induced intensities using only past labels by
\[
\widehat\lambda_k^r(\tau;\theta):=\lambda_k\!\big(\tau\mid \mathcal H_{t-}\vee\sigma(r_{1:N(t-)});\theta\big),\qquad
\widehat\lambda^r:=\widehat\lambda_0^r+\widehat\lambda_1^r.
\]
If $r$ is deterministic then $\sigma(r_{1:N(t-)})$ is contained in $\sigma(N(t-))\subset \mathcal H_{t-}$ (it is not generally trivial because the random length $N(t-)$ is $\mathcal H_{t-}$-measurable); hence adjoining $\sigma(r_{1:N(t-)})$ does not change measurability, and $\widehat\lambda_k^r(\cdot;\theta)$ are $\mathcal H_{t-}$- and $\mathcal G_{t-}$-predictable. More generally, along the update path we assume the blockwise non-anticipation condition of Assumption~\ref{ass:G0pred}; in that case for each update $m$ and each $t\in[t^\ast,u_m)$ the vector of past labels $(r^{(m)}_1,\ldots,r^{(m)}_{N(t-)})$ is $\mathcal G_{t-}$-measurable and $\mathcal H_{t-}\vee\sigma(r^{(m)}_{1:N(t-)})\subseteq\mathcal G_{t-}$, so $\widehat\lambda_k^{r^{(m)}}(\cdot;\theta)$ are $\mathcal G_{t-}$-predictable on $D^{(m)}$ (and, more generally, on any $D^{[u]}$ with $u\le u_m$).
All intensities are left-continuous at event times, and every stochastic integral below is evaluated at pre-jump times.

For a labelling $r$ and measurable $A\subset \mathcal S$,
\[
\widehat N_k^r\big((t_1,t_2]\times A\big)
:=\sum_{i: t_1<t_i\le t_2, x_i\in A}\mathbf 1\{r_i=k\},\qquad k\in\{0,1\}.
\]
By convention, only past labels $r_{1:N(t-)}$ enter any predictable integrand at time $t$. For notational convenience, in sections where the true component counts $N_k$ are not in play we sometimes write $N_k^r:=\widehat N_k^r$ for the counting measure induced by the labelling $r$; we always reserve $N_k$ (without a
superscript) for the true component process.

For a candidate labelling $r$, the complete-data log-likelihood is
\begin{equation}\label{eq:cdll}
\ell_r(\theta)
=\sum_{i=1}^{N(D)} \log \widehat\lambda_{r_i}^r\!\big(\gamma_i;\theta\big)
-\int_D \sum_{k=0}^1 \widehat\lambda_k^r(\tau;\theta) \dd\tau,\qquad
f_r(\theta):=|D|^{-1}\ell_r(\theta).
\end{equation}

For $u\in[t^\ast,T]$ define the restricted complete-data log-likelihood and its normalised version by
\begin{align}
\ell_r^{[u]}(\theta)
&:=
\sum_{i:\,t_i\le u} \log \widehat\lambda_{r_i}^r\!\big(\gamma_i;\theta\big)
-\int_{D^{[u]}} \sum_{k=0}^1 \widehat\lambda_k^r(\tau;\theta) \dd\tau,
\label{eq:cdll-restricted}\\
f_r^{[u]}(\theta)
&:=|D^{[u]}|^{-1}\,\ell_r^{[u]}(\theta).
\nonumber
\end{align}
When $u=T$ these coincide with \eqref{eq:cdll}. When $u=t^\ast$ we have $D^{[u]}=\emptyset$ and $\ell_r^{[u]}(\theta)=0$ (a.s.); the normalised quantity
$f_r^{[u]}(\theta)=|D^{[u]}|^{-1}\ell_r^{[u]}(\theta)$ is only used for $u>t^\ast$.

For $\theta\in\Theta$ define the oracle and label-induced log-likelihood ratios
\[
s^{\rm or}_\theta(\tau):=\log\frac{\lambda_1(\tau\mid \mathcal G_{t-};\theta)}{\lambda_0(\tau\mid \mathcal G_{t-};\theta)},\qquad
\tilde s_\theta(\tau;r):=\log\frac{\widehat\lambda_1^r(\tau;\theta)}{\widehat\lambda_0^r(\tau;\theta)}.
\]
Given $b>0$, the decisive sets and ambiguous band are
\[
S_\theta^+(b)=\{ \tau:\ s^{\rm or}_\theta(\tau)\ge b \},\quad
S_\theta^-(b)=\{ \tau:\ s^{\rm or}_\theta(\tau)\le -b \},\quad
A_\theta(b)=\{ \tau:\ |s^{\rm or}_\theta(\tau)|\le b \}.
\]
Since $s_\theta^{\rm or}$ is $\mathcal G$-predictable, the indicator processes $(\omega,\tau)\mapsto \1_{S_\theta^\pm(b)}(\omega,\tau)$ and $(\omega,\tau)\mapsto \1_{A_\theta(b)}(\omega,\tau)$ are $\mathcal G$-predictable.

\begin{definition}[Minority label on decisive sets]\label{def:minority}
Fix $\theta$ and $b>0$. For an event $\gamma_i$:
\[
\text{minority}(\gamma_i;\theta,b)=
\begin{cases}
0,& \gamma_i\in S_\theta^+(b),\\
1,& \gamma_i\in S_\theta^-(b),\\
\text{undefined},& \gamma_i\notin S_\theta^+(b)\cup S_\theta^-(b).
\end{cases}
\]
\end{definition}

\begin{remark}[Oracle-aided viewpoint]\label{rem:oracle}
The penalty below uses the oracle decisive sets $S_\theta^\pm(b)$; it is therefore not implementable from observed data. We analyse the resulting oracle-aided hard--EM purely to derive sufficient structural conditions for contraction and consistency.
\end{remark}

\subsection{Model-agnostic structural assumptions}\label{sec:assumptions}

For each window $D$, fix a radius $\varepsilon_{\rm net}\le |D|^{-1}$ and an $\varepsilon_{\rm net}$-net $\mathcal N_{\rm net}\subset\Theta_\circ$; let $N_{\rm net}:=|\mathcal N_{\rm net}|$.
Since the parameter dimension $p$ is fixed and $\Theta_\circ$ is compact, $N_{\rm net}$ grows at most polynomially in $|D|$; in particular $\log N_{\rm net}=O(\log|D|)$. Throughout we write
\begin{equation}\label{eq:z-def}
z_{\rm Fr}:=\log\!\Big(c_0 (1+\eta_{\rm act}) C_{\rm act} |D| \Tmax (p+2) N_{\rm net}\Big) + \eta_x\log|D|,
\end{equation}
and for concreteness we take $c_0=10$ and $\eta_x=2$. This choice absorbs two-sided Freedman inequality factors, the covering number $N_{\rm net}$, and small constant multipliers in the master union bound over updates, components, coordinates, and net points, as well as similar auxiliary Freedman inequality calls used in the appendices, yielding probabilities at least $1-|D|^{-\eta''}$ for some $\eta''>0$. Since $p$ is fixed, $z_{\rm Fr}=\Theta(\log|D|)$.

On $E_{\rm act}$ (defined in \cref{ass:G0}), the blockwise operator of Algorithm~\ref{alg:blockwise-hardem} assigns labels progressively and never revises past events. Hence the total number of accepted eventwise label assignments over the full update path is at most $N(D)$, and our union bounds and flip-count accounting refer to those progressive assignments. In the martingale analysis we apply Freedman's inequality to a finite collection of scalar martingales (e.g.\ per-event terms and gradient-coordinate terms at M-step anchors) indexed by update/block indices, event indices, components, coordinates, and net points. The factor $\Tmax$ in \eqref{eq:z-def} is a conservative envelope that absorbs finite unions over update indices and ancillary Freedman calls (e.g.\ those used in appendices), in addition to the $O(|D|)$ eventwise decisions controlled on $E_{\rm act}$.

To make the Freedman inequality bounds uniform over $\theta\in\Theta_\circ$, we first prove them at each net point $\theta\in\mathcal N_{\rm net}$ and then extend to all of $\Theta_\circ$ using deterministic Lipschitz bounds in $\theta$ (coming from the bounded gradients in \cref{ass:G3} and the uniform lower margins in \cref{ass:G1}). Since $\varepsilon_{\rm net}\le |D|^{-1}$, the resulting deterministic discretization error is $O(|D|^{-1})$ at the $f_r$ scale and can be absorbed into the residual terms.

Thus the total number of Freedman inequality tail events we union bound over is at most a constant multiple of
\[
(p+2)\,N_{\rm net}\,\sum_{m=1}^{\Tmax} N^{(m)}.
\]
Since $N^{(m)}\le N(D)$ for every $m$, we have $\sum_{m=1}^{\Tmax}N^{(m)}\le \Tmax\,N(D)$, and on $E_{\rm act}$ this is
bounded by $(1+\eta_{\rm act})C_{\rm act}|D|\,\Tmax$. Consequently, choosing $z_{\rm Fr}$ as in \eqref{eq:z-def} (which already
contains the factor $|D|\Tmax$) makes the overall union bound failure probability polynomially small in $|D|$.

\begin{assumption}[Activity and tails]\label{ass:G0}
There exists $C_{\rm act}<\infty$ and $\eta_{\rm act}\in(0,1)$ such that $\bbE N(D)\le C_{\rm act}|D|$ and
\[
\bbP\!\big(N(D)\ge (1+\eta_{\rm act})C_{\rm act} |D|\big)\le e^{-c|D|}
\]
for some $c>0$ and all sufficiently large $|D|$. Denote
\[
E_{\rm act}:=\big\{N(D)\le (1+\eta_{\rm act})C_{\rm act} |D|\big\}.
\]
\end{assumption}

\begin{assumption}[Predictable blockwise updates]\label{ass:G0pred}
Fix a deterministic partition $t^\ast=u_0<u_1<\cdots<u_{\Tmax}=T$ as in \S\ref{sec:setup}.
Along the update path $\{(r^{(m)},\theta^{(m)}): m=0,\dots,\Tmax\}$ the following hold:

\begin{enumerate}[nosep,label=(\alph*)]
\item \textbf{(Parameter timing)} For each $m\in\{0,\dots,\Tmax\}$, the update parameter
$\theta^{(m)}$ is $\mathcal G_{u_m}$-measurable.

\item \textbf{(Progressive, non-revisiting labels)} For each $m\in\{1,\dots,\Tmax\}$ and each event index $i$ with
$t_i\le u_m$, the label $r_i^{(m)}$ is $\mathcal G_{t_i}$-measurable, and labels are never revised:
for all $\ell\ge m$, $r_i^{(\ell)}=r_i^{(m)}$ for every $i$ with $t_i\le u_m$.
In particular, for each $m$ the vector $(r_1^{(m)},\dots,r_{N^{(m)}}^{(m)})$ is $\mathcal G_{u_m}$-measurable.
\end{enumerate}

In particular, for every $m$ and every $t\in[t^\ast,u_m)$ the vector of past labels
$(r^{(m)}_1,\ldots,r^{(m)}_{N(t-)})$ is $\mathcal G_{t-}$-measurable and hence
\[
\mathcal H_{t-}\vee\sigma(r^{(m)}_{1:N(t-)})\ \subseteq\ \mathcal G_{t-}.
\]
Consequently, for every $m$, every $u\le u_m$, and every $\theta\in\Theta_\circ$,
the label-induced intensities $\widehat\lambda_k^{r^{(m)}}(\cdot;\theta)$ are $\mathcal G_{t-}$-predictable on $D^{[u]}$.
\end{assumption}

\begin{assumption}[Uniform margins over labellings]\label{ass:G1}
There exist $\underline\mu_\Sigma>0$ and $\underline\mu_k>0$ ($k=0,1$) such that, for a.e.\ $\tau\in D$, all $\theta\in\Theta_\circ$, and all deterministic labellings $r$,
\[
\widehat\lambda_0^r(\tau;\theta)+\widehat\lambda_1^r(\tau;\theta)\ \ge\ \underline\mu_\Sigma,\qquad
\widehat\lambda_k^r(\tau;\theta)\ \ge\ \underline\mu_k.
\]
Write $\underline\mu_{\min}:=\min\{\underline\mu_0,\underline\mu_1\}$.
\end{assumption}

\begin{assumption}[Oracle positivity]\label{ass:oracle-positivity}
There exist $\underline\lambda_k>0$ such that $\lambda_k(\tau\mid\mathcal G_{t-};\theta)\ge \underline\lambda_k$ for all $\tau\in D$, $k\in\{0,1\}$, and $\theta\in\Theta_\circ$. Hence $s_\theta^{\rm or}=\log(\lambda_1/\lambda_0)$ is well-defined.
\end{assumption}

\begin{assumption}[Window envelope along the iterate path]\label{ass:G2prime}
There exist deterministic model constants $L>0$ (temporal memory), $R\ge 0$ (spatial radius; $R=0$ if not applicable), and $C_{\rm base},C_{\rm loc}<\infty$ (depending only on the model class and $\Theta_\circ$). For $n\in\bbN$, define the window-count event
\[
\Omega_{n}(L,R):=\left\{\ \sup_{\tau=(t,x)\in D}\ N\Big( (t-L,t)\times B_R(x) \Big)\ \le\ n\ \right\}.
\]
There exist constants $c_1,c_2>0$ and an exponent $\eta_{\rm win}>0$ such that, with $n_*(|D|):=\lceil c_1\log|D|\rceil$,
\begin{equation}\label{eqn:etaWin}
    \bbP\big(\Omega_{n_*}(L,R)^c\big)\ \le\ c_2 |D|^{-\eta_{\rm win}}. 
\end{equation}

On the event $\Omega_{n_*}(L,R)$ we have, simultaneously for all $\theta\in\Theta_\circ$, for the oracle intensities $\lambda^\star$ and for all labellings $r\in \mathcal R_{\rm path}^{\rm flip}$,
\[
\sup_{\tau\in D}\ \big(\lambda_0^\star+\lambda_1^\star\big)(\tau)\ \le\ K_{\rm win},\qquad
\sup_{\tau\in D}\ \big(\widehat\lambda_0^r+\widehat\lambda_1^r\big)(\tau;\theta)\ \le\ K_{\rm win},
\]
where
\[
K_{\rm win}:=C_{\rm base}+C_{\rm loc} n_*(|D|).
\]
In particular, on $\Omega_{n_*}(L,R)$ we have $K_{\rm win}=O(\log|D|)$.
\end{assumption}

\begin{remark}
In Assumption~\ref{ass:G2prime} we only require the existence of $\eta_{\rm win}>0$ such that \cref{eqn:etaWin} holds, thereby providing a polynomially small tail for the window event. This can then be combined with the martingale errors that are controlled  via Freedman's inequality. In particular, the exponent $\eta_x>1$ that appears in the definition of the Freedman inequality level $z_{\rm Fr}$ is not constrained by $\eta_{\rm win}$.
\end{remark}

\begin{assumption}[Single-flip locality and influence]\label{ass:G3}
For any single flip at an event location $\zeta$, with $\Delta\widehat\lambda^r=\widehat\lambda^{r'}-\widehat\lambda^{r}$ and $\Delta\widehat\lambda_{k}^r=\widehat\lambda_{k}^{r'}-\widehat\lambda_{k}^{r}$, there exist finite constants
\[
B_\infty,\ B_1,\ B_2,\qquad B_\infty^{\rm comp},\ B_1^{\rm comp},\ B_2^{\rm comp},\qquad B_1^\Sigma,\ B_2^\Sigma
\]
such that
\[
\|\Delta\widehat\lambda^r\|_\infty\le B_\infty,\quad \int_D |\Delta\widehat\lambda^r| \dd\tau\le B_1,\quad \int_D \frac{(\Delta\widehat\lambda^r)^2}{\underline\mu_\Sigma} \dd\tau\le B_2,
\]
and, for $k\in\{0,1\}$,
\[
\|\Delta\widehat\lambda_k^r\|_\infty\le B_\infty^{\rm comp},\quad \int_D |\Delta\widehat\lambda_k^r| \dd\tau\le B_1^{\rm comp},\quad \int_D \frac{(\Delta\widehat\lambda_k^r)^2}{\underline\mu_\Sigma} \dd\tau\le B_2^{\rm comp},
\]
as well as $\sum_k \int_D |\Delta\widehat\lambda_k^r| \dd\tau \le B_1^\Sigma$ and $\sum_k \int_D \frac{(\Delta\widehat\lambda_k^r)^2}{\underline\mu_\Sigma} \dd\tau \le B_2^\Sigma$.

In addition (and used when $\nabla\log\widehat\lambda_k$ denominators appear), there exist finite constants $\widetilde B_2^{\rm comp}<\infty$ and $\widetilde B_2^\Sigma<\infty$ such that, for each $k$,
\[
\int_D \frac{(\Delta\widehat\lambda_k^r)^2}{\underline\mu_k}\,\dd\tau \ \le\ \widetilde B_2^{\rm comp},
\qquad
\sum_{k=0}^1 \int_D \frac{(\Delta\widehat\lambda_k^r)^2}{\underline\mu_k}\,\dd\tau \ \le\ \widetilde B_2^\Sigma.
\]

In terms of the gradient-level analogues, there exist finite constants
\[
\overline B_\infty^{\rm comp},\ \overline B_1^{\rm comp},\ \overline B_2^{\rm comp},\qquad \overline B_1^\Sigma
\]
such that, for each component $k$,
\[
\|\Delta(\nabla_\theta \widehat\lambda_k^r)\|_\infty\le \overline B_\infty^{\rm comp},\quad
\int_D \|\Delta(\nabla_\theta \widehat\lambda_k^r)\| \dd\tau\le \overline B_1^{\rm comp},\quad
\int_D \frac{\|\Delta(\nabla_\theta \widehat\lambda_k^r)\|^2}{\underline\mu_\Sigma} \dd\tau\le \overline B_2^{\rm comp},
\]
and $\sum_k\int_D \|\Delta(\nabla_\theta \widehat\lambda_k^r)\| \dd\tau\le \overline B_1^\Sigma$. 

Define the total-gradient difference $\Delta(\nabla_\theta \widehat\lambda^{r}):=\sum_{k=0}^1 \Delta(\nabla_\theta \widehat\lambda_k^{r})$. Then, for notational convenience, introduce finite constants $\overline B_\infty^\Sigma$ and $\overline B_2^\Sigma$ such that
\[
\|\Delta(\nabla_\theta \widehat\lambda^{r})\|_\infty\le \overline B_\infty^\Sigma,
\qquad
\int_D \frac{\|\Delta(\nabla_\theta \widehat\lambda^{r})\|^2}{\underline\mu_\Sigma}\,\dd\tau\le \overline B_2^\Sigma.
\]
For instance, one may take $\overline B_\infty^\Sigma:=2\overline B_\infty^{\rm comp}$ and $\overline B_2^\Sigma:=4\overline B_2^{\rm comp}$.

Furthermore, there exists $S_\infty<\infty$ and $H_\infty < \infty$ such that
\[
\operatorname*{ess sup}_{\tau,k,\theta,r}\ \|\nabla_\theta \widehat\lambda_k^r(\tau;\theta)\|\ \le\ S_\infty,
\qquad
\operatorname*{ess sup}_{\tau,k,\theta,r}\ \|\nabla^2_{\theta\theta} \widehat\lambda_k^r(\tau;\theta)\|\ \le\ H_\infty.
\]
The same bounds hold when $\widehat\lambda_k^r(\tau;\theta)$ is replaced by the oracle intensity
$\lambda_k(\tau\mid\mathcal G_{t-};\theta)$.

Finally, there exist finite constants $\widetilde{\overline B}_2^{\rm comp}<\infty$ such that, for each $k$,
\[
\int_D \frac{\|\Delta(\nabla_\theta \widehat\lambda_k^r)\|^2}{\underline\mu_k}\,\dd\tau \ \le\ \widetilde{\overline B}_2^{\rm comp}.
\]
\end{assumption}

\begin{assumption}[Uniform score alignment]\label{ass:G4}
There exist a deterministic sequence \(\Delta_s(|D|)\ge 0\), an exponent \(\eta>0\), and events
\(E^{\rm align}_{|D|}\) with \(\mathbb P(E^{\rm align}_{|D|})\ge 1-|D|^{-\eta}\) such that, on \(E^{\rm align}_{|D|}\), uniformly over \(\theta\in\Theta_\circ\), all deterministic labellings \(r\), and a.e.\ \(\tau\in D\),
\[
\big|\tilde s_\theta(\tau;r)-s^{\rm or}_\theta(\tau)\big|\ \le\ \Delta_s(|D|).
\]
For brevity, on \(E^{\rm align}_{|D|}\) we write \(\Delta_s:=\Delta_s(|D|)\).
\end{assumption}

\begin{remark}[Consequence on decisive sets]
On the event \(E^{\rm align}_{|D|}\), Assumption~\ref{ass:G4} implies that, for any \(b>0\),
\[
\tau\in S_\theta^+(b)\Rightarrow \tilde s_\theta(\tau;r)\ge b-\Delta_s,
\qquad
\tau\in S_\theta^-(b)\Rightarrow \tilde s_\theta(\tau;r)\le -b+\Delta_s.
\]
In particular, if \(b\ge \Delta_s\) then \(\tilde s_\theta(\cdot;r)\) and \(s_\theta^{\rm or}(\cdot)\) have the same sign
on \(S_\theta^+(b)\cup S_\theta^-(b)\).
\end{remark}

\begin{assumption}[LLR Lipschitz in $\theta$ (average-case)]\label{ass:G5}
There exist a deterministic sequence $L_s(|D|)<\infty$, an $\eta>0$, and events $E^{\rm Lip}_{|D|}$ with $\mathbb{P}(E^{\rm Lip}_{|D|})\ge 1-|D|^{-\eta}$ such that, for all $\theta,\theta'\in\Theta_{\rm sc}$,
\[
\int_D |s^{\rm or}_\theta(\tau)-s^{\rm or}_{\theta'}(\tau)|\,\lambda^\star(\tau)\,d\tau
\le L_s(|D|)\,|D|\,\|\theta-\theta'\|.
\]

Moreover, on the same event $E^{\rm Lip}_{|D|}$ the bound holds uniformly on the deterministic schedule subwindows:
for every $m\in\{1,\dots,\Tmax\}$ and all $\theta,\theta'\in\Theta_{\rm sc}$,
\begin{align*}
\int_{D^{(m)}} \big| s^{\rm or}_\theta(\tau)-s^{\rm or}_{\theta'}(\tau) \big| \lambda^\star(\tau) \dd\tau
\ \le&\ L_s(|D|)\, |D^{(m)}| \|\theta-\theta'\|,\\
\int_{D_m} \big| s^{\rm or}_\theta(\tau)-s^{\rm or}_{\theta'}(\tau) \big| \lambda^\star(\tau) \dd\tau
\ \le&\ L_s(|D|)\, |D_m| \|\theta-\theta'\|.
\end{align*}

\end{assumption}

On the event $E^{\rm Lip}_{|D|}$ in Assumption~\ref{ass:G5}, we write $L_s:=L_s(|D|)$ for brevity.

\begin{assumption}[Local strong concavity \& near-zero score at $r^\star$]\label{ass:G6}
There exist constants $r_{\rm sc}>0$, $m_{\rm sc}>0$, $C_{\rm sc}<\infty$, and $\eta>0$ such that, for all sufficiently large $|D|$, there is an event $E^{\rm sc}_{|D|}$ with $\bbP(E^{\rm sc}_{|D|})\ge 1-|D|^{-\eta}$ on which:

\begin{enumerate}[nosep,label=(\alph*)]
\item \textbf{(Uniform local curvature)}
$f_{r^\star}$ is differentiable on
\[
\Theta_{\rm sc}:=\Theta_\circ\cap B(\theta^\star,r_{\rm sc}),
\qquad
B(\theta^\star,r_{\rm sc}):=\{\theta\in\bbR^p:\ \|\theta-\theta^\star\|\le r_{\rm sc}\},
\]
and satisfies the gradient monotonicity inequality
\begin{equation}\label{eq:RSC-grad}
\big(\nabla f_{r^\star}(\theta')-\nabla f_{r^\star}(\theta)\big)^\top(\theta'-\theta)
\ \le\ -m_{\rm sc}\|\theta'-\theta\|^2
\qquad\forall\theta,\theta'\in\Theta_{\rm sc}.
\end{equation}
If $f_{r^\star}$ is twice differentiable, a sufficient condition for \eqref{eq:RSC-grad} is the uniform Hessian bound
$-\nabla^2 f_{r^\star}(\theta)\succeq m_{\rm sc}I_p$ for all $\theta\in\Theta_{\rm sc}$.

\item \textbf{(Score at the truth)} 
\[
\|\nabla f_{r^\star}(\theta^\star)\|\ \le\ C_{\rm sc}\!\left(\sqrt{\frac{K_{\rm win} z_{\rm Fr}}{|D|}} + \frac{z_{\rm Fr}}{|D|}\right)\!,
\]
where $K_{\rm win}$ is as in \cref{ass:G2prime}. In bounded-intensity settings, the score bound in (b) follows from Lemma~\ref{lem:score-nearzero} on $\Omega_{n_*}(L,R)$ by choosing $z_{\rm Fr}\simeq \log|D|$ (and using $\bbP(\Omega_{n_*}(L,R)^c)\le c_2|D|^{-\eta_{\rm win}}$ from Assumption~\ref{ass:G2prime}).

\item \textbf{(Uniformity over deterministic prefixes)} 
For every $m\in\{1,\dots,\Tmax\}$, the conclusions in (a) and (b) also hold with the full-window objective
$f_{r^\star}$ replaced by the prefix objective $f_{r^\star}^{[u_m]}$, with the same constants
$r_{\rm sc},m_{\rm sc},C_{\rm sc}$, and with $|D|$ replaced by $|D^{(m)}|=|D^{[u_m]}|$ in the score bound.

\end{enumerate}

Further, assume $\theta^\star$ lies in a compact subset of $\Theta_\circ$ bounded away from any non-identifiability/collision boundary. Equivalently, there exists $\delta_{\rm sep}>0$ such that $\mathrm{dist}(\theta^\star,\partial\Theta_{\rm id})\ge \delta_{\rm sep}$,
and choose $r_{\rm sc}\le \delta_{\rm sep}/2$.
\end{assumption}

\begin{remark}[On concavity restrictions]\label{rem:concavity-beta}
In Hawkes models, global log-likelihood concavity typically holds for parameters that enter linearly (e.g.\ baselines and excitation amplitudes) but can fail for nonlinear parameters such as the decay rate $\beta$. Accordingly, rather than imposing global concavity on $\Theta_\circ$, we use the local condition in Assumption~\ref{ass:G6}: it suffices that $f_{r^\star}$ is uniformly strongly concave on a deterministic-radius neighborhood $\Theta_{\rm sc}=\Theta_\circ\cap B(\theta^\star,r_{\rm sc})$, and we then keep all iterates inside $\Theta_{\rm sc}$ via an invariant-set argument.
\end{remark}

\begin{assumption}[Warm start]\label{ass:warmstart}
The initialization satisfies $\|\theta^{(0)}-\theta^\star\|\le r_{\rm sc}$ (equivalently $\theta^{(0)}\in\Theta_{\rm sc}$).
\end{assumption}

\begin{remark}[Why Assumption~\ref{ass:warmstart} is reasonable]\label{rem:warmstart}
This warm-start condition motivates the requirement that the process/history up to \(t^\star\) is observed: if one can construct a pilot estimator \(\tilde\theta_0\) from the data on \([0,t^\star]\times\mathcal S\) such that $\tilde\theta_0$ is consistent for the initial parameter $\theta_0^\star$ (and in our setting $\theta_0^\star=\theta^\star$), then for sufficiently large sample size (or amount of data) we have $\|\tilde\theta_0-\theta^\star\|\le r_{\rm sc}$ with high probability. Thus, taking $\theta^{(0)}:=\tilde\theta_0$ makes Assumption~\ref{ass:warmstart} operational.
\end{remark}

\begin{definition}[Canonical telescoping path]\label{def:telescoping-path}
Given two labellings $r,r^\star\in\{0,1\}^{N(D)}$, let
$I:=\{ i\in\{1,\dots,N(D)\}:\ r_i\neq r_i^\star \}$ and $\mathfrak{I}:=|I|$.
Fix a deterministic ordering of $I$ (e.g., increasing index), write
$I=(i_1,\ldots,i_\mathfrak{I})$, and define a sequence $(r^{(j)})_{j=0}^{\mathfrak I}$ by $r^{(0)}:=r$,

\[r^{(j)}:=\text{ the labelling obtained from } r^{(j-1)} \text{ by flipping only coordinate } i_j
\quad (j=1,\ldots,\mathfrak{I}).
\]
We call $(r^{(0)},r^{(1)},\ldots,r^{(\mathfrak{I})})$ the canonical telescoping path
from $r$ to $r^\star$. It has length $\mathfrak{I}=d_H(r,r^\star)$ and ends at $r^{(\mathfrak{I})}=r^\star$. For any functional $F$ on labellings,
\[
F(r)-F(r^\star)=\sum_{j=1}^\mathfrak{I} \big(F(r^{(j-1)})-F(r^{(j)})\big).
\]
\end{definition}

\begin{assumption}[Label-to-score Lipschitz]\label{ass:G7prime}
There exist constants $\bar C_0,\bar C_1,\bar C_2<\infty$ and $\eta>0$ such that, for all large $|D|$, on an event
$E^{\nabla}_{|D|}$ with $\bbP(E^{\nabla}_{|D|})\ge 1-|D|^{-\eta}$ the following holds simultaneously:

for every update index $m\in\{0,\dots,\Tmax-1\}$, letting $u:=u_{m+1}$ and $r:=r^{(m+1)}$, for every choice of
restricted M-step maximiser
\[
\hat\theta_r^{[u]}\ \in\ \arg\max_{\theta\in\Theta_{\rm sc}}\ell_r^{[u]}(\theta),
\]
we have
\[
\left\Vert \nabla f_{r}^{[u]}(\hat\theta_r^{[u]}) - \nabla f_{r^\star}^{[u]}(\hat\theta_r^{[u]})\right\Vert
\ \le\ 
\frac{d_H^{[u]}(r,r^\star)}{|D^{[u]}|}
\Big(\bar C_0 K_{\rm win} + \bar C_1\sqrt{K_{\rm win} z_{\rm Fr}} + \bar C_2 z_{\rm Fr}\Big).
\]
\end{assumption}

\begin{definition}[Decisive-set disagreement set]\label{def:Btheta}
For $b>0$ and any $\theta\in\Theta_\circ$, define
\[
B_\theta\ :=\ \big(S_{\theta^\star}^+(b) \Delta S_{\theta}^+(b)\big)\ \cup\ \big(S_{\theta^\star}^-(b) \Delta S_{\theta}^-(b)\big).
\]
\end{definition}

\begin{assumption}[Mass on decisive sets and oracle-set concentration]\label{ass:G8}
All expectations and probabilities in this assumption are with respect to $P_{\theta^\star}$.
There exist \(\eta>0\), \(c>0\), \(C_{\rm it}>0\), and, for each threshold \(b=b_{|D|}>0\) used in the theorem under consideration, nonnegative sequences
\[
\eta_+(|D|;b),\qquad \eta_-(|D|;b),
\]
such that \(\Tmax=\lceil C_{\rm it}\log|D|\rceil\), and the following hold.
\begin{enumerate}[nosep,label=(\roman*)]

\item \textbf{(Minority mass on oracle decisive sets at threshold \(b\)).}
\[
\bbE N_0\!\big(S_{\theta^\star}^+(b)\big)\ \le\ \eta_+(|D|;b)\,|D|,
\qquad
\bbE N_1\!\big(S_{\theta^\star}^-(b)\big)\ \le\ \eta_-(|D|;b)\,|D|.
\] Since $S_{\theta^\star}^\pm(b)\subseteq S_{\theta^\star}^\pm(b_{\rm th})$
for $b\ge b_{\rm th}$, the bounds for $b_{\rm th}$ imply the same bounds for $b$.

\item \textbf{(Concentration for fixed oracle sets and iterate-dependent disagreement sets).}
Moreover, for all $\delta\in(0,1)$ and for $S$ ranging over the oracle sets
\[
\mathcal F_{\rm fixed}(b):=\big\{S_{\theta^\star}^+(b), S_{\theta^\star}^-(b), A_{\theta^\star}(b), A_{\theta^\star}(2b)\big\},
\]
we have oracle-set concentration simultaneously for total and component counts around their means:
\begin{align*}
    \mathbb{P}\!\left(\left|N(S)-\bbE N(S)\right|\ge \delta|D|\right)\le& \exp\left(-c \frac{\delta^2|D|}{K_{\rm win}}\right),\\
\mathbb{P}\!\left(\left|N_k(S)-\bbE N_k(S)\right|\ge \delta|D|\right)\le& \exp\left(-c \frac{\delta^2|D|}{K_{\rm win}}\right)
\end{align*}

for $k\in\{0,1\}$.

Define the mean-deviation event
\begin{align*}
E^{\rm mean}_{|D|}
&:=
\bigcap_{S\in\mathcal F_{\rm fixed}(b)}
\left\{
\left|N(S)-\mathbb{E} N(S)\right|\le C_{\rm it}\sqrt{K_{\rm win}|D|\log|D|}
\right\} \\
&\quad\cap
\bigcap_{k=0}^1\bigcap_{S\in\mathcal F_{\rm fixed}(b)}
\left\{
\left|N_k(S)-\mathbb{E} N_k(S)\right|\le C_{\rm it}\sqrt{K_{\rm win}|D|\log|D|}
\right\}.
\end{align*}
Assume $\bbP(E^{\rm mean}_{|D|})\ge 1-|D|^{-\eta}$ for some $\eta>0$ (finite union over $\mathcal F_{\rm fixed}(b)$).

We also require concentration around compensators for the same fixed oracle sets and for the iterate-dependent disagreement sets.
Define $E^{\rm comp}_{|D|}$ to be the event on which the following hold simultaneously:
\begin{enumerate}[nosep,label=(\alph*)]
\item for all $S\in\mathcal F_{\rm fixed}(b)$ and all $k\in\{0,1\}$,
\begin{align*}
    \Big|N_k(S)-\int_S \lambda_k^\star(\tau)\dd\tau\Big|
\le& C_{\rm it}\sqrt{K_{\rm win}|D|\log|D|},\\
\Big|N(S)-\int_S\lambda^\star(\tau)\dd\tau\Big|
\le& C_{\rm it}\sqrt{K_{\rm win}|D|\log|D|};
\end{align*}

\item for all iterates $m\in\{0,\dots,\Tmax-1\}$,
\[
\Big|N(B_{\theta^{(m)}}\cap D_{m+1})
-\int_{B_{\theta^{(m)}}\cap D_{m+1}}\lambda^\star(\tau)\dd\tau\Big|
\le C_{\rm it}\sqrt{K_{\rm win}|D_{m+1}|\log|D|}.
\]

\end{enumerate}
Assume $\bbP(E^{\rm comp}_{|D|})\ge 1-|D|^{-\eta}$ for some (possibly different) $\eta>0$. Finally set
\[
E^{\rm mass}_{|D|}\ :=\ E^{\rm mean}_{|D|}\cap E^{\rm comp}_{|D|},
\]
and assume $\bbP(E^{\rm mass}_{|D|})\ge 1-|D|^{-\eta}$ for some $\eta>0$.

\end{enumerate}

All expectation and concentration statements above are also assumed to hold uniformly when the observation
window is replaced by any deterministic prefix window $D^{(m)}$ or deterministic block $D_m$ from the schedule.
Equivalently, we may replace any set $S\subset D$ appearing above by $S\cap W$ and replace $|D|$ by $|W|$ for any
$W\in\{D^{(m)},D_m\}$, with the same constants (and with the corresponding $N(W)$-type counts).
In particular, we include in $E^{\rm comp}_{|D|}$ the compensator deviations for the restricted fixed oracle sets $S\cap D_m$ (for $S\in\mathcal F_{\rm fixed}(b)$) and, for iterate-dependent disagreement sets, the blockwise deviations on $B_{\theta^{(m)}}\cap D_{m+1}$ as in (b). Since the schedule is deterministic and has size $O(\Tmax)$, these additional requirements are absorbed by a finite union.

\end{assumption}

\begin{lemma}[Existence of M-step maximisers]\label{lem:existence-M}
Work on the window-envelope event $\Omega_{n_*}(L,R)$ from Assumption~\ref{ass:G2prime}.
Let $r\in\mathcal R_{\rm path}$ be any labelling on the iterate path and let $u\in[t^\ast,T]$.

\begin{enumerate}[nosep,label=(\alph*)]
\item The map $\theta\mapsto \ell_r^{[u]}(\theta)$ is continuous on $\Theta_\circ$ and hence attains a maximum on $\Theta_\circ$.
Consequently $\operatorname{ArgMax}_{\Theta_\circ}^{[u]}(r)$ is nonempty.

\item Let $\Theta_{\rm sc}:=\Theta_\circ\cap B(\theta^\star,r_{\rm sc})$ as in Assumption~\ref{ass:G6} and
suppose $\Theta_{\rm sc}\neq\emptyset$.
Then $\theta\mapsto \ell_r^{[u]}(\theta)$ also attains a maximum on $\Theta_{\rm sc}$, so
$\operatorname{ArgMax}_{\Theta_{\rm sc}}^{[u]}(r)$ is nonempty.

\item Taking $u=T$ yields the same conclusions for the full-window objective $\ell_r=\ell_r^{[T]}$.
\end{enumerate}
\end{lemma}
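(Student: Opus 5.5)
The plan is to apply the Weierstrass extreme value theorem. Both candidate feasible sets are compact: $\Theta_\circ$ is compact by standing assumption, and $\Theta_{\rm sc}=\Theta_\circ\cap B(\theta^\star,r_{\rm sc})$ is the intersection of a compact set with a closed ball, hence compact (and, being an intersection of convex sets, convex), and nonempty by hypothesis in (b). The whole argument therefore reduces to showing that $\theta\mapsto\ell_r^{[u]}(\theta)$ is a finite, continuous real-valued function on $\Theta_\circ$. Parts (b) and (c) then follow at once: (b) by restricting this continuous function to the compact set $\Theta_{\rm sc}$, and (c) by taking $u=T$, since $\ell_r^{[T]}=\ell_r$ by \eqref{eq:cdll-restricted}.

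For continuity, I fix $\omega$ in $\Omega_{n_*}(L,R)$. I will also intersect with $E_{\rm act}$, or simply use that a simple point process on a finite-volume window has $N(D)<\infty$ a.s. The event sum in \eqref{eq:cdll-restricted} is then a finite sum of terms $\log\widehat\lambda^r_{r_i}(\gamma_i;\theta)$. Each such term is continuous in $\theta$ because $\theta\mapsto\lambda_k(\tau\mid\mathcal F_{t-};\theta)$ is continuous by the standing assumption in \cref{sec:setup}. Moreover, each term is bounded below by $\log\underline\mu_{\min}>-\infty$ by Assumption~\ref{ass:G1}, so the logarithm is continuous and finite. I will note that Assumption~\ref{ass:G1} is stated for deterministic labellings. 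Since the argument is pathwise with $r$ fixed, applying it to the realised labelling is legitimate, and this is the one point to state carefully.

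The remaining step, and the main (mild) obstacle, is the compensator term $\theta\mapsto\int_{D^{[u]}}\sum_k\widehat\lambda_k^r(\tau;\theta)\,\dd\tau$. I will prove its continuity by dominated convergence. Take $\theta_n\to\theta$ in $\Theta_\circ$. The integrand converges pointwise in $\tau$ by the continuity assumption. Because $r\in\mathcal R_{\rm path}\subseteq\mathcal R_{\rm path}^{\rm flip}$, on $\Omega_{n_*}(L,R)$ Assumption~\ref{ass:G2prime} gives the uniform envelope $0\le\sum_k\widehat\lambda_k^r(\tau;\theta)\le K_{\rm win}$ for all $\theta\in\Theta_\circ$ and $\tau\in D$. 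This constant is integrable over $D^{[u]}$ since $|D^{[u]}|\le|D|<\infty$. Joint measurability of the integrand in $\tau$ is given by the predictable-version conventions. Dominated convergence therefore yields continuity, and the integral is finite (at most $K_{\rm win}|D|$). Combining the two terms, $\ell_r^{[u]}$ is continuous on $\Theta_\circ$ and hence on $\Theta_{\rm sc}$, so the argmax sets are nonempty. The measurable selections $\hat\theta_K^{[u]}(r)$ mentioned in the setup are taken as given by the convention there and need not be proved here. For the degenerate case $u=t^\ast$, $\ell_r^{[u]}\equiv0$ and the conclusion is trivial.
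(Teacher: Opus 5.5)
Your proposal is correct and follows the paper's proof essentially step for step. The paper also gets continuity of the event sum from the lower margin in Assumption~\ref{ass:G1} and continuity of the compensator by dominated convergence with the envelope $K_{\rm win}$ on $\Omega_{n_*}(L,R)$, and then applies Weierstrass on the compact sets $\Theta_\circ$ and $\Theta_{\rm sc}$; your extra remarks (finiteness of the event sum, applying Assumption~\ref{ass:G1} pathwise to the realised labelling, and the trivial case $u=t^\ast$) are minor refinements that the paper leaves implicit.
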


\begin{proof}
Fix $u\in[t^\ast,T]$ and $r\in\mathcal R_{\rm path}$.
By Assumption~\ref{ass:G1}, $\widehat\lambda_{r_i}^r(\gamma_i;\theta)\ge \underline\mu_{\min}$ for all $i$ and $\theta$,
so $\theta\mapsto \log\widehat\lambda_{r_i}^r(\gamma_i;\theta)$ is continuous.

On $\Omega_{n_*}(L,R)$, Assumption~\ref{ass:G2prime} provides a deterministic bound $K_{\rm win}$ such that
$\sum_{k=0}^1\widehat\lambda_k^r(\tau;\theta)\le K_{\rm win}$ for all $\tau\in D$, all $\theta\in\Theta_\circ$,
and all $r\in\mathcal R_{\rm path}$. Since $\theta\mapsto \widehat\lambda_k^r(\tau;\theta)$ is continuous for each $\tau$,
dominated convergence (dominated by $K_{\rm win}$ on the finite-measure set $D^{[u]}$) implies continuity of
\[
\theta\ \longmapsto\ \int_{D^{[u]}} \sum_{k=0}^1 \widehat\lambda_k^r(\tau;\theta)\,\dd\tau .
\]
Therefore $\ell_r^{[u]}$ is continuous on $\Theta_\circ$. Since $\Theta_\circ$ is compact,
$\ell_r^{[u]}$ attains a maximum on $\Theta_\circ$.

The same argument applies on the compact set $\Theta_{\rm sc}=\Theta_\circ\cap B(\theta^\star,r_{\rm sc})$.
Finally, taking $u=T$ recovers $\ell_r=\ell_r^{[T]}$.
\end{proof}

\begin{assumption}[M-step optimality condition and differentiability]\label{ass:G9}
For every labelling $r$ on the iterate path $\mathcal R_{\rm path}$ and every $u\in[t^\ast,T]$,
the objective $f_r^{[u]}$ admits a $C^1$ extension to an open neighbourhood of $\Theta_\circ$.

For any convex constraint set $K\in\{\Theta_\circ,\Theta_{\rm sc}\}$ and any maximiser
$\hat\theta_K^{[u]}(r)\in\arg\max_{\theta\in K}\ell_r^{[u]}(\theta)$, the (global) first-order optimality condition holds:
\[
(\theta-\hat\theta_K^{[u]}(r))^\top \nabla f_r^{[u]}(\hat\theta_K^{[u]}(r))\ \le\ 0
\qquad\text{for all }\theta\in K.
\]
In particular, if $\hat\theta_K^{[u]}(r)\in\operatorname{int}(K)$ then
$\nabla f_r^{[u]}(\hat\theta_K^{[u]}(r))=0$.
Taking $u=T$ recovers the same condition for the full-window objective $f_r$.
\end{assumption}

\begin{remark}[Why the variational inequality in Assumption~\ref{ass:G9} is automatic]
The feasible set $\Theta_{\rm sc}:=\Theta_\circ\cap B(\theta^\star,r_{\rm sc})$ is convex since $\Theta_\circ$ is convex. Negative definiteness of the Hessian (equivalently, local strong concavity) is only required on \(\Theta_{\rm sc}\) (Assumption~\ref{ass:G6}); possible degeneracies elsewhere in \(\Theta_\circ\) are irrelevant provided the iterates remain in $\Theta_{\rm sc}$ (Lemma~\ref{lem:inv-sc}). For any convex constraint set $K$ and any maximiser $\hat\theta_K(r)\in\arg\max_{\theta\in K}\ell_r(\theta)$, the standard first-order necessary condition for constrained maximisation gives
\[
(\theta-\hat\theta_K(r))^\top \nabla f_r(\hat\theta_K(r))\le 0\qquad\text{for all }\theta\in K.
\]
In particular, if $\hat\theta_K(r)\in \operatorname{int}(K)$ then $\nabla f_r(\hat\theta_K(r))=0$.
\end{remark}

\begin{remark}[Optional margin assumption for the band]\label{rem:margin}
In some applications, one can assume the oracle margin: there exist $\alpha_{\rm band}>0$ and $C_{\rm band}<\infty$ such that for all small $u>0$,
\[
\frac{1}{|D|}\int_D \1\{|s_{\theta^\star}^{\rm or}(\tau)|\le u\} \lambda^\star(\tau) \dd\tau\ \le\ C_{\rm band} u^{\alpha_{\rm band}}.
\]
Under this assumption and the oracle-set concentration in \cref{ass:G8}, with high probability \[N(A_{\theta^\star}(2b))/|D|\lesssim C_{\rm band} (2b)^{\alpha_{\rm band}} + O\left(\sqrt{\frac{K_{\rm win}\log|D|}{|D|}}\right)\] for those $b$ with $2b$ in the validity range of the margin bound.
\end{remark}

\begin{definition}[Ambiguous-band mass]\label{def:C_b}
For any $b>0$ define
\[
C_b(\theta^\star)\ :=\ \frac{1}{|D|}\int_D \1\{|s_{\theta^\star}^{\rm or}(\tau)|\le 2b\} \lambda^\star(\tau) \dd\tau.
\]
By \cref{ass:G8}, we have $|D|^{-1}N\!\big(A_{\theta^\star}(2b)\big)=C_b(\theta^\star)+o_{\mathbb P}(1)$ as $|D|\to\infty$.
\end{definition}

\subsubsection{Summary}

Figure \ref{fig:assDAG} summarises the above assumptions, and how they are invoked.


\begin{figure}[!hp]
\centering
\resizebox{\linewidth}{!}{%
\begin{tikzpicture}[
    >=Latex,
    node distance=1.5cm and 1.5cm,
    font=\sffamily,
    block/.style={
        draw=black, 
        fill=white, 
        rectangle, 
        rounded corners, 
        align=center, 
        minimum width=3.5cm, 
        minimum height=1.2cm, 
        inner sep=6pt,
        line width=0.8pt
    },
    wideblock/.style={
        block,
        minimum width=5cm
    },
    groupbox/.style={
        draw=black,
        dashed,
        inner sep=16pt,
        rounded corners=8pt,
        fill=none,
        line width=0.6pt
    },
    line/.style={
        ->,
        draw=black,
        line width=1pt,
        rounded corners=5pt,
        shorten >=2pt,
        shorten <=2pt
    }
]

    
    \def\lvlA{0}      
    \def\lvlB{-3}     
    \def\lvlC{-6}     
    \def\lvlD{-9}     
    \def\lvlE{-13}    
    \def\lvlF{-17}    
    \def\lvlG{-21}    
    \def\lvlH{-24}    
    \def\lvlI{-27}    

    \def\colLL{-14}
    \def\colL{-7}
    \def\colC{0}
    \def\colR{7}
    \def\colRR{14}


    \node[block] (OraclePos) at (\colLL, \lvlA) {Oracle Positivity\\{\footnotesize\cref{ass:oracle-positivity}}};
    \node[block] (LipEnv)    at (\colL, \lvlA)  {Intervention Lipschitz\\{\footnotesize\cref{ass:Lambda-Lip}}};
    \node[block] (Warm)      at (\colRR, \lvlA) {Warm Start\\{\footnotesize\cref{ass:warmstart}}};

    \node[block] (Act)       at (\colLL, \lvlB) {Activity \& Tails\\{\footnotesize\cref{ass:G0}}};
    \node[block] (NonAnt)    at (\colL, \lvlB)  {Predictable\\Updates\\{\footnotesize\cref{ass:G0pred}}};
    \node[block] (SC)        at (\colR, \lvlB)  {Local Concavity\\{\footnotesize\cref{ass:G6}}};
   \node[block] (ExistM)    at (\colRR, \lvlB) {Existence of\\Maximizers\\{\footnotesize\cref{lem:existence-M}}};

    \node[block] (Win)       at (\colLL, \lvlC) {Window Envelope\\{\footnotesize\cref{ass:G2prime}}};
    \node[block] (Margins)   at (\colL, \lvlC)  {Uniform Margins\\{\footnotesize\cref{ass:G1}}};
    \node[block] (Infl)      at (\colC, \lvlB)  {Flip Influence\\{\footnotesize\cref{ass:G3}}};
    \node[block] (LLRLip)    at (\colR, \lvlC)  {LLR Lipschitz\\{\footnotesize\cref{ass:G5}}};
    \node[block] (Lab2Sc)    at (\colRR, \lvlC) {Label-to-Score\\Lipschitz\\{\footnotesize\cref{ass:G7prime}}};

    \node[block] (Align)     at (\colLL, \lvlD) {Score Alignment\\{\footnotesize\cref{ass:G4}}};
    \node[block] (Mass)      at (\colR, \lvlD)  {Mass \&\\Concentration\\{\footnotesize\cref{ass:G8}}};
    \node[block] (Opt)       at (\colRR, \lvlD) {M-step Optimality\\{\footnotesize\cref{ass:G9}}};

    \node[wideblock, line width=1.5pt] (Master) at (\colC, \lvlE) {Master High-Probability Event\\{\footnotesize\cref{lem:master}}};

    \node[block] (PerFlip)   at (\colLL, \lvlF) {Per-flip Bounds\\{\footnotesize\cref{lem:perflip}}};
    \node[block] (EElim)     at (\colL, \lvlF)  {Greedy\\Elimination\\{\footnotesize\cref{lem:E-elim}}};
    \node[block] (SetStab)   at (\colR, \lvlF)  {Set Stability\\{\footnotesize\cref{lem:set-stab}}};
    \node[block] (MStep)     at (\colRR, \lvlF) {M-step Bound\\{\footnotesize\cref{prop:Mstep}}};

    \node[wideblock] (EStep) at (\colL, \lvlG)  {E-step Hamming Bound\\{\footnotesize\cref{prop:Estep}}};
    \node[block]     (InvSC) at (\colRR, \lvlG) {Invariant Set\\{\footnotesize\cref{lem:inv-sc}}};

    \node[wideblock, line width=1.5pt] (Contract) at (\colC, \lvlH) {Theorem: Contraction to a Statistical Floor\\{\footnotesize\cref{thm:contract}}};

    \node[block] (Transfer)  at (\colL, \lvlI)  {Functional\\Transfer\\{\footnotesize\cref{prop:transfer-functional}}};
    \node[block] (Cor)       at (\colR, \lvlI)  {Estimand\\Bounds\\{\footnotesize\cref{cor:compact-estimands}}};

    \begin{scope}[on background layer]
        \node[groupbox, fit=(OraclePos) (Warm) (Act) (Opt) (Align) (Lab2Sc)] (GrpAss) {};
        \node[anchor=north west, font=\bfseries] at (GrpAss.north west) {Setup \& Assumptions};

        \node[groupbox, fit=(Master) (PerFlip) (MStep) (EElim)] (GrpCore) {};
        \node[anchor=north west, font=\bfseries] at (GrpCore.north west) {Core High-Probability Analysis};

        \node[groupbox, fit=(EStep) (InvSC) (Contract) (Cor) (Transfer)] (GrpCons) {};
        \node[anchor=north west, font=\bfseries] at (GrpCons.north west) {Contraction \& Consequences};
    \end{scope}


    \coordinate (MInAct)     at ($(Master.north)+(-2.2,0)$);
    \coordinate (MInNonAnt)  at ($(Master.north)+(-1.65,0)$);
    \coordinate (MInWin)     at ($(Master.north)+(-1.10,0)$);
    \coordinate (MInMargins) at ($(Master.north)+(-0.55,0)$);
    \coordinate (MInInfl)    at ($(Master.north)+( 0.00,0)$);
    \coordinate (MInMass)    at ($(Master.north)+( 0.55,0)$);
    \coordinate (MInLab2Sc)  at ($(Master.north)+( 1.10,0)$);
    \coordinate (MInLLRLip)  at ($(Master.north)+( 1.65,0)$);
    \coordinate (MInSC)      at ($(Master.north)+( 2.20,0)$);

    \coordinate (MOutPerFlip) at ($(Master.south)+(-1.8,0)$);
    \coordinate (MOutEElim)   at ($(Master.south)+(-0.6,0)$);
    \coordinate (MOutEStep)   at ($(Master.south)+( 0.0,0)$);
    \coordinate (MOutSetStab) at ($(Master.south)+( 0.6,0)$);
    \coordinate (MOutMStep)   at ($(Master.south)+( 1.8,0)$);

    \coordinate (ExistMInA) at ($(ExistM.north)+(-0.8,0)$);
    \coordinate (ExistMInB) at ($(ExistM.north)+( 0.8,0)$);

    \coordinate (SetStabInM)    at ($(SetStab.north)+(-0.9,0)$);
    \coordinate (SetStabInMass) at ($(SetStab.north)+( 0.0,0)$);
    \coordinate (SetStabInLip)  at ($(SetStab.north)+( 0.9,0)$);

    \coordinate (MStepInMaster) at ($(MStep.north)+( 0.0,0)$);
    \coordinate (MStepInOpt)    at ($(MStep.north)+(-0.9,0)$);

    \coordinate (EStepInLeft)  at ($(EStep.north)+(-1.6,0)$);
    \coordinate (EStepInMid)   at ($(EStep.north)+( 0.0,0)$);
    \coordinate (EStepInRight) at ($(EStep.north)+( 1.6,0)$);

    \coordinate (ContractInPF)  at ($(Contract.north)+(-1.8,0)$);
    \coordinate (ContractInE)   at ($(Contract.north)+(-0.6,0)$);
    \coordinate (ContractInInv) at ($(Contract.north)+( 0.6,0)$);
    \coordinate (ContractInM)   at ($(Contract.north)+( 1.8,0)$);

    \coordinate (ContractOutL) at ($(Contract.south)+(-1.4,0)$);
    \coordinate (ContractOutR) at ($(Contract.south)+( 1.4,0)$);


    \draw[line] (Act.south)      -- ++(0,-0.35) -| (MInAct);
    \draw[line] (NonAnt.south)   -- ++(0,-0.75) -| (MInNonAnt);

    \draw[line] (Win.south)      -- ++(0,-0.35) -| (MInWin);
    \draw[line] (Margins.south)  -- ++(0,-0.75) -| (MInMargins);
    \draw[line] (Infl.south)     -- ++(0,-1.15) -| (MInInfl);

    \draw[line] (Mass.south)     -- ++(0,-0.55) -| (MInMass);
    \draw[line] (Lab2Sc.south)   -- ++(0,-0.95) -| (MInLab2Sc);
    \draw[line] (LLRLip.south)   -- ++(0,-1.35) -| (MInLLRLip);
    \draw[line] (SC.south)       -- ++(0,-0.35) -| (MInSC);

    \coordinate (MarginsKick) at ($(Margins.north)+(-2.2,0)$);
    \coordinate (MarginsBus)  at ($(MarginsKick |- 0,-1.05)$);
    \draw[line] (Margins.north) -- (MarginsKick) -- (MarginsBus) -| (ExistMInA);

    \coordinate (WinKick) at ($(Win.north)+(-2.2,0)$);
    \coordinate (WinBus)  at ($(WinKick |- 0,-1.35)$);
    \draw[line] (Win.north) -- (WinKick) -- (WinBus) -| (ExistMInB);


    \draw[line] (Align.south) -- ++(0,-0.6) -| (EElim.west);

    \draw[line] (LLRLip.east) -- ++(1.2,0) |- (SetStabInLip);

    \draw[line] (Mass.south) -- (SetStabInMass);

    \draw[line] (ExistM.east) -- ++(1.2,0) |- (MStep.east);
    \draw[line] (Opt.south)   -- ++(0,-0.55) -| (MStepInOpt);

    \draw[line] (Warm.east) -- ++(2.0,0) |- (InvSC.east);

    \draw[line] (MOutPerFlip) -- ++(0,-0.55) -| (PerFlip.north);

    \draw[line] (MOutEElim)   -- ++(0,-0.85) -| (EElim.north);

    \draw[line] (MOutSetStab) -- ++(0,-0.65) -| (SetStabInM);

    \draw[line] (MOutMStep)   -- ++(0,-0.95) -| (MStepInMaster);

    \draw[line] (EElim.south)    -- ++(0,-0.45) -| (EStepInLeft);
    \draw[line] (SetStab.south)  -- ++(0,-0.85) -| (EStepInRight);

    \coordinate (EStepBusX) at (-10, \lvlE-1.6); 
    \draw[line] (MOutEStep) -- ++(0,-0.25) -| (EStepBusX) |- (EStepInMid);

    \draw[line] (EStep.south) -- ++(0,-0.55) -| (ContractInE);

    \draw[line] (PerFlip.west) -- ++(-1.4,0) |- (ContractInPF);

    \draw[line] (InvSC.south) -- ++(0,-0.45) -| (ContractInInv);

    \draw[line] (MStep.east) -- ++(1.4,0) |- (ContractInM);

    \draw[line] (ContractOutL) -- ++(0,-0.65) -| (Transfer.north);
    \draw[line] (ContractOutR) -- ++(0,-0.35) -| (Cor.north);

    \coordinate (LipLane)     at ($(LipEnv.east)+(1.2,0)$);
    \coordinate (LipLaneDown) at ($(LipLane |- Transfer.east)$);
    \draw[line] (LipEnv.east) -- (LipLane) -- (LipLaneDown) -- (Transfer.east);

    \draw[line] (Transfer.east) -- (Cor.west);

\end{tikzpicture}
}
\caption{Dependency structure of the analysis. The diagram flows from top (assumptions) to bottom (theorems). Dashed boxes indicate logical groupings.}
\label{fig:assDAG}
\end{figure}
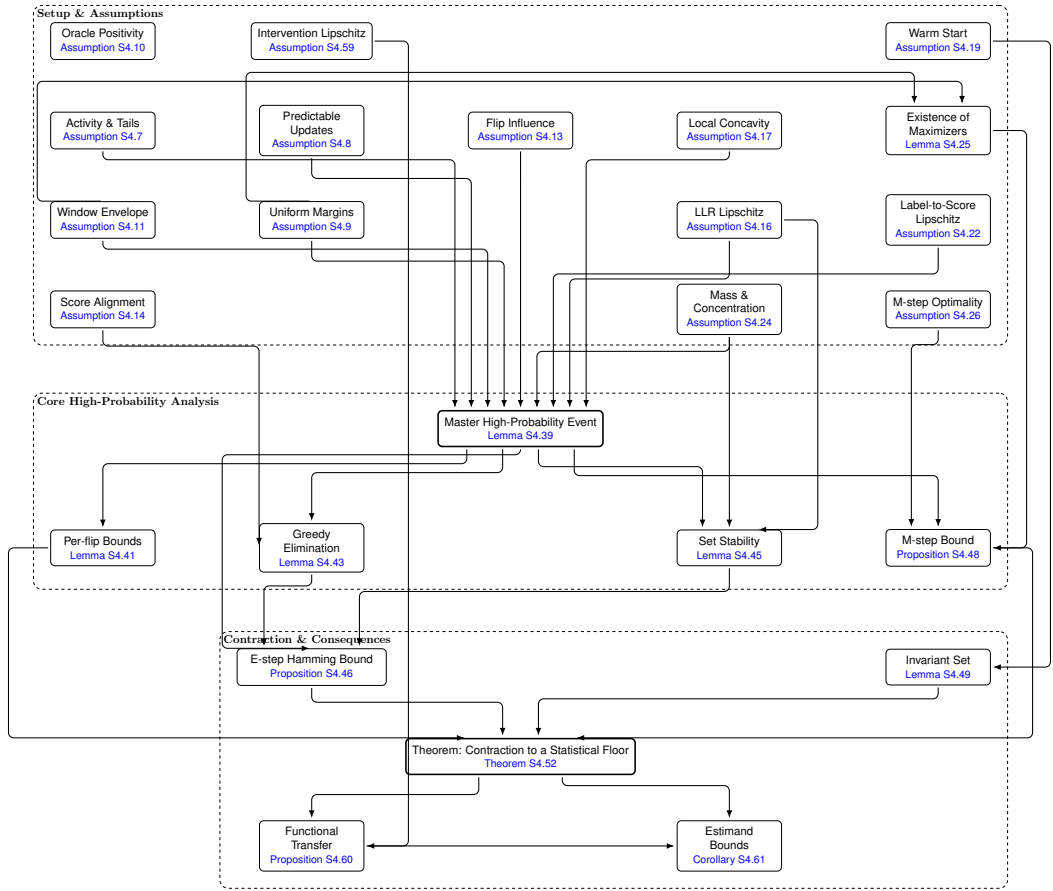
\break

\subsection{Freedman inequality and auxiliary lemmas}\label{sec:freedman}

\begin{lemma}[Indicator of a post-stopping-time interval is predictable]\label{lem:predictable-stoptime}
Let $\sigma$ be a stopping time (with respect to the filtration in force). Then the set $\{(t,\omega):t>\sigma(\omega)\}$ is predictable. Equivalently, $t\mapsto \1\{t>\sigma\}$ is predictable.
\end{lemma}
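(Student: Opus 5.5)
The plan is to show that $\{(t,\omega): t>\sigma(\omega)\}$ belongs to the predictable $\sigma$-field, i.e.\ the $\sigma$-field on $[t^\ast,T]\times\Omega$ generated by left-continuous adapted processes (equivalently by predictable rectangles $(s,u]\times F$ with $F\in\mathcal F_s$, together with $\{t^\ast\}\times F$, $F\in\mathcal F_{t^\ast}$). The most direct route is to observe that the process $X_t(\omega):=\1\{t>\sigma(\omega)\}$ is itself adapted and left-continuous. Being left-continuous and adapted, it is predictable, and its indicator set is exactly the set in question.

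The key steps, in order, are as follows. First, adaptedness: for each fixed $t$, $\{X_t=1\}=\{\sigma<t\}=\bigcup_{q\in\mathbb Q,\,q<t}\{\sigma\le q\}$. Each $\{\sigma\le q\}\in\mathcal F_q\subseteq\mathcal F_t$ because $\sigma$ is a stopping time, so $\{\sigma<t\}\in\mathcal F_t$. Second, left-continuity in $t$ for fixed $\omega$. The path is $0$ on $(-\infty,\sigma(\omega)]$ and $1$ on $(\sigma(\omega),\infty)$. At $t=\sigma(\omega)$ the value is $0$, which matches the left limit $0$. At any $t>\sigma(\omega)$ all nearby $s<t$ with $s>\sigma$ give value $1$, and at $t<\sigma(\omega)$ the path is locally $0$. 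Hence the path is left-continuous everywhere, including when $\sigma=\infty$, where it is identically zero. Third, invoke the standard fact that left-continuous adapted processes are predictable, since they generate the predictable $\sigma$-field by definition. This gives the claim, and equivalence with predictability of the indicator is immediate.

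Alternatively, and as a cross-check, I would write $\{t>\sigma\}=\bigcup_{q\in\mathbb Q}\big((q,\infty)\times\{\sigma\le q\}\big)$, restricted to the time horizon. This is a countable union of predictable rectangles with $\{\sigma\le q\}\in\mathcal F_q$. The identity holds because $t>\sigma$ iff there is a rational $q$ with $\sigma\le q<t$.

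I do not expect a genuine obstacle; the only point needing care is the boundary convention. The set uses the strict inequality $t>\sigma$, which is what makes the indicator left-continuous rather than right-continuous. The indicator $\1\{t\ge\sigma\}$ would not generally be predictable unless $\sigma$ is predictable. I would also remark that the argument works for either filtration $\mathcal H$ or $\mathcal G$ (the "filtration in force"), and that the time index is restricted to $[t^\ast,T]$ by intersecting with that interval, which preserves predictability.
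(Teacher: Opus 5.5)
Your proof is correct and is essentially the paper's argument: you show adaptedness via $\{\sigma<t\}=\bigcup_{q\in\mathbb Q,\,q<t}\{\sigma\le q\}\in\mathcal F_t$, check left-continuity of the paths, and invoke the fact that adapted left-continuous processes are predictable. Your cross-check, writing the set as $\bigcup_{q\in\mathbb Q}(q,\infty)\times\{\sigma\le q\}$, is also valid; it is the same rectangle device the paper uses later in the forward-cone predictability lemma.
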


\begin{proof}
Define $X_t:=\1\{t>\sigma\}$. Since $\sigma$ is a stopping time, $\{\sigma\le q\}\in\mathcal F_q$ for all $q$.
Moreover, for each $t$,
\[
\{X_t=1\}=\{\sigma<t\}=\bigcup_{\substack{q\in\mathbb Q\\ q<t}}\{\sigma\le q\}\in\mathcal F_t,
\]
so $(X_t)$ is adapted. Moreover $X$ is left-continuous: $X_\sigma=0$ and for $t\uparrow\sigma$ we have $X_t=0$. Every adapted left-continuous process is predictable, hence $X$ is predictable.
\end{proof}

\begin{lemma}[Freedman inequality for point-process martingales]\label{lem:Freedman}
Let $M^\star$ be the compensated martingale of $N$ and let $h$ be $\mathcal G$-predictable with $\|h\|_\infty\le H<\infty$. Set
\[
X_T=\int_D h\,\dd M^\star,\qquad
V=\int_D h^2 \lambda^\star\,\dd\tau.
\]
Then, for any $z_{\rm Fr}>0$,
\begin{align*}
\bbP\!\left(X_T \ge \frac{H}{3} z_{\rm Fr} + \sqrt{ 2Vz_{\rm Fr} + \frac{H^2}{9}z_{\rm Fr}^2 }\right)\le& e^{-z_{\rm Fr}},\\
\bbP\!\left(|X_T| \ge \frac{H}{3} z_{\rm Fr} + \sqrt{ 2Vz_{\rm Fr} + \frac{H^2}{9}z_{\rm Fr}^2 }\right)\le& 2e^{-z_{\rm Fr}}.
\end{align*}
The same bounds hold when $M^\star$ is replaced by either component martingale $M_k^\star$ and $\lambda^\star$ by $\lambda_k^\star$.
\end{lemma}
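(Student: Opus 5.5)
The plan is to run the classical exponential-supermartingale argument for purely discontinuous martingales with jumps bounded by $H$, and then remove the randomness of $V$ from the threshold. Write $X_t=\int_{D^{[t]}} h\,\dd M^\star$ and $V_t=\int_{D^{[t]}} h^2\lambda^\star\,\dd\tau$, both adapted on $[t^\ast,T]$. For $\theta\in[0,3/H)$ set $\phi(\theta):=(e^{\theta H}-1-\theta H)/H^2$, and consider
\[
\mathcal E_t(\theta):=\exp\{\theta X_t-\phi(\theta)V_t\}.
\]

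\textbf{Step 1: supermartingale property.} Since $N$ is simple, the jumps of $X$ are $h(\gamma_i)\in[-H,H]$. The exponential formula for point-process stochastic integrals gives the compensator of $e^{\theta X}$ as $\int e^{\theta X_{t-}}(e^{\theta h}-1-\theta h)\lambda^\star\,\dd\tau$. The elementary inequality $e^{\theta y}-1-\theta y\le \phi(\theta)y^2$ for $|y|\le H$ then shows that $\mathcal E_t(\theta)$ is a nonnegative local supermartingale. By Fatou's lemma it is a true supermartingale with $\bbE\,\mathcal E_T(\theta)\le 1$; boundedness of $h$ together with $\int_D\lambda^\star<\infty$ on $D$, which holds on the window event of Assumption~\ref{ass:G2prime}, makes the localisation routine. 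Using $\phi(\theta)\le \theta^2/\{2(1-H\theta/3)\}$, Markov's inequality yields, for every \emph{deterministic} $\theta$,
\[
\bbP\Big(\theta X_T-\tfrac{\theta^2}{2(1-H\theta/3)}V\ge z_{\rm Fr}\Big)\le e^{-z_{\rm Fr}}.
\]

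\textbf{Step 2: the threshold is the Bernstein optimiser.} A direct computation shows that $x(V):=\tfrac H3 z_{\rm Fr}+\sqrt{2Vz_{\rm Fr}+\tfrac{H^2}{9}z_{\rm Fr}^2}$ is the positive root of $x^2=2z_{\rm Fr}(V+Hx/3)$. Moreover, $X_T\ge x(V)$ holds exactly when $\sup_{\theta\in[0,3/H)}\{\theta X_T-\theta^2V/(2(1-H\theta/3))\}\ge z_{\rm Fr}$, and the supremum is attained at $\theta^\ast=x/(V+Hx/3)$. If $V$ were deterministic, Step 1 applied with this $\theta^\ast$ would finish the proof, and the two-sided bound would follow by applying the same argument to $-h$ and taking a union bound.

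\textbf{Step 3 (main obstacle): random $V$.} Because $\theta^\ast$ depends on $V$, Step 1 cannot be applied directly. I would first try to avoid any loss. One route is to note that $V\le H^2K_{\rm win}|D|$ on $\Omega_{n_*}(L,R)$ and that $V\ge 0$, so $\theta^\ast$ ranges over a compact interval, and then try the method of mixtures, integrating $\mathcal E_T(\theta)$ against a prior on $\theta$. The other route is peeling: partition the range of $V$ into geometric slabs $[v_j,2v_j)$, use on each slab the deterministic $\theta_j$ optimal for $v_j$, and sum over slabs.

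The honest expectation is that either route only gives the bound $C\log(H^2K_{\rm win}|D|)\,e^{-z_{\rm Fr}}$, or the stated bound with a slightly inflated constant inside the square root. The last step of the plan is therefore to absorb the factor $C\log|D|$ into the level. This can be done by replacing $z_{\rm Fr}$ with $z_{\rm Fr}+\log(C\log|D|)$, which the definition \eqref{eq:z-def} already permits through the $\eta_x\log|D|$ and $c_0$ slack, so that the displayed inequality holds at the level actually used. If an exact $e^{-z_{\rm Fr}}$ with random $V$ is insisted upon, I would restrict the claim to the event $\{V\le v\}$ with deterministic $v$; this is the form in which the lemma is actually invoked downstream, since $V$ is always dominated there by a deterministic envelope.

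The component versions for $M_k^\star$ follow verbatim, with $\lambda^\star$ replaced by $\lambda_k^\star$ throughout.
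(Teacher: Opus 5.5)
Your Steps~1--2 are the standard exponential-supermartingale derivation of the Freedman--Bernstein bound. That is all the paper's proof amounts to: a one-line appeal to Freedman (1975) and Dzhaparidze--van Zanten (2001). Two small repairs are needed there.
First, $\theta^\ast=x/(V+Hx/3)$ is not the exact maximiser of $\theta x-\theta^2V/\{2(1-H\theta/3)\}$, since the derivative there equals $-Hx^2/(6V)<0$. So ``exactly when'' should read ``implies''. Only the implication $X_T\ge x(V)\Rightarrow \theta^\ast X_T-(\theta^\ast)^2V/\{2(1-H\theta^\ast/3)\}\ge z_{\rm Fr}$ is used, and it holds.
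Second, the localisation needs $\int_D\lambda^\star\,\dd\tau<\infty$ almost surely. That follows from $\bbE N(D)<\infty$ in \cref{ass:G0}, not from the window event.

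Your Step~3 concern is correct, and it exposes a defect in the lemma as stated rather than in your plan. The cited results bound $\bbP(X_T\ge x,\ V\le v)$ by $\exp\{-x^2/(2(v+Hx/3))\}$ for a \emph{deterministic} $v$. When the random $V$ sits inside the threshold, the displayed inequality is false in general.

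Here is a counterexample. Take $N$ a unit-rate Poisson process with $\mu(\mathcal S)=1$, so $\lambda^\star\equiv 1$, and let $H=1$. Set $x(v):=\tfrac13 z_{\rm Fr}+\sqrt{2vz_{\rm Fr}+z_{\rm Fr}^2/9}$. Let $\sigma$ be the first $s\in(t^\ast,T]$ with $N(s)-(s-t^\ast)\ge x(s-t^\ast)$, with $\sigma:=T$ if there is none. Set $h(t):=\1\{t\le\sigma\}$, which is predictable by \cref{lem:predictable-stoptime}.
Then $V=\sigma-t^\ast$ and $X_T=N(\sigma)-(\sigma-t^\ast)$. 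Hence $\{X_T\ge x(V)\}$ is exactly the event that the compensated Poisson path crosses the curve $x(\cdot)$ during $(t^\ast,T]$. Since $x(v)\sim\sqrt{2z_{\rm Fr}v}$, the law of the iterated logarithm makes this probability tend to $1$ as $T\to\infty$ with $z_{\rm Fr}$ fixed. This contradicts the claimed bound $e^{-z_{\rm Fr}}$.
So neither mixtures nor peeling can deliver an exact $e^{-z_{\rm Fr}}$. A loss growing with the range of $V$, as you anticipated, is unavoidable.

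You should therefore commit to your fallback. For deterministic $v$, Step~1 with the deterministic choice $\theta^\ast(v)$ gives $\bbP(\{X_T\ge x(v)\}\cap\{V\le v\})\le e^{-z_{\rm Fr}}$. Applying the same argument to $-h$ gives $2e^{-z_{\rm Fr}}$ for $|X_T|$.
This is all the downstream arguments use, because each one immediately replaces $V$ by a deterministic envelope valid on $\Omega_{n_*}(L,R)$:
\begin{itemize}
\item $H^2K_{\rm win}|D|$ in \cref{lem:score-nearzero};
\item the $K_{\rm win}B_2/\underline\mu_\Sigma$-type bounds in \eqref{eq:mtg1-bnd}--\eqref{eq:mtgsel-bnd};
\item $V_{L,R}K_{\rm win}$ in \cref{prop:hawkes-G7prime}.
\end{itemize}
Those conclusions should then be read as bounds on the probability of the failure event intersected with $\Omega_{n_*}(L,R)$, not as probabilities conditional on it. Your peeling variant, with the logarithmic factor absorbed into $z_{\rm Fr}$ through the $\eta_x\log|D|$ slack in \eqref{eq:z-def}, would also work but is unnecessary.
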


\begin{proof}
This is a standard continuous time analogue of Freedman inequality for martingales with bounded jumps and predictable quadratic characteristic \citep{freedman1975tail}, the properties of which are discussed in depth in \citet{DZHAPARIDZE2001109}.
\end{proof}

\begin{lemma}[Near-zero oracle score at the truth via Freedman]\label{lem:score-nearzero}
Assume oracle positivity (Assumption~\ref{ass:oracle-positivity}) and the oracle gradient bound in
Assumption~\ref{ass:G3}. Fix $z_{\rm Fr}>0$ and work on the window envelope event
$\Omega_{n_*}(L,R)$ of Assumption~\ref{ass:G2prime}, so that $\sup_{\tau\in D}\lambda^\star(\tau)\le K_{\rm win}$.
Then with probability at least $1-4p\,e^{-z_{\rm Fr}}$ (conditional on $\Omega_{n_*}(L,R)$),
\[
\|\nabla f_{r^\star}(\theta^\star)\|
\ \le\ C_{\rm sc}\!\left(\sqrt{\frac{K_{\rm win} z_{\rm Fr}}{|D|}} + \frac{z_{\rm Fr}}{|D|}\right),
\]
where one may take
\[
C_{\rm sc}\ :=\ \frac{8\sqrt{2p}\,S_\infty}{\underline\lambda_{\min}},
\qquad
\underline\lambda_{\min}:=\min\{\underline\lambda_0,\underline\lambda_1\}.
\]
\end{lemma}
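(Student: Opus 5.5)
The plan is to write the score at the truth as a vector of stochastic integrals against the component martingales, control each coordinate with Freedman's inequality (Lemma~\ref{lem:Freedman}), and then combine the coordinates with a union bound.

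\textbf{Step 1: score as a martingale.} At $r=r^\star$ the label-induced intensities coincide with the oracle intensities $\lambda_k(\cdot\mid\mathcal G_{t-};\theta)$. Differentiating \eqref{eq:cdll} at $\theta^\star$ then gives
\[
|D|\,\nabla f_{r^\star}(\theta^\star)=\sum_{k=0}^1\Big(\int_D \frac{\nabla_\theta\lambda_k^\star(\tau)}{\lambda_k^\star(\tau)}\,N_k(\dd\tau)-\int_D \nabla_\theta\lambda_k^\star(\tau)\,\dd\tau\Big)=\sum_{k=0}^1\int_D h_k(\tau)\,\dd M_k^\star(\tau),
\]
where $h_k:=\nabla_\theta\lambda_k^\star/\lambda_k^\star$. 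Differentiation under the integral is justified by dominated convergence, using the gradient bound $S_\infty$ of Assumption~\ref{ass:G3} on the finite-measure set $D$. Each coordinate $h_{k,j}$ is $\mathcal G$-predictable, because $\lambda_k^\star$ and its $\theta$-derivative are predictable. By Assumption~\ref{ass:oracle-positivity} and Assumption~\ref{ass:G3},
\[
|h_{k,j}|\le H:=S_\infty/\underline\lambda_{\min}.
\]

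\textbf{Step 2: Freedman per coordinate.} Fix $k\in\{0,1\}$ and $j\le p$, and apply the two-sided bound of Lemma~\ref{lem:Freedman} with $M_k^\star$ and $\lambda_k^\star$. The predictable variation satisfies
\[
V_{k,j}=\int_D h_{k,j}^2\lambda_k^\star\,\dd\tau\le H^2 K_{\rm win}|D|
\]
on $\Omega_{n_*}(L,R)$, since $\lambda_k^\star\le\lambda^\star\le K_{\rm win}$ there. This is the main subtlety. $V$ is random, and the bound holds only on the window event, so a plain application of Freedman's inequality is not enough. I would handle it with the standard Freedman form that allows a deterministic ceiling $v$ on the event $\{V\le v\}$, namely $\bbP(X_T\ge x,\ V\le v)\le\exp(-x^2/(2(v+Hx/3)))$, which is what Freedman's original inequality actually gives. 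Alternatively, one can stop the integrand at the first time the window count exceeds $n_*$; this is predictable by Lemma~\ref{lem:predictable-stoptime}, and the stopped integral agrees with the original one on $\Omega_{n_*}$. Either route yields, outside an event of probability at most $2e^{-z_{\rm Fr}}$,
\[
|X_{k,j}|\le \tfrac{2H}{3}z_{\rm Fr}+\sqrt{2H^2K_{\rm win}|D|z_{\rm Fr}}\le 2H\big(\sqrt{K_{\rm win}|D|z_{\rm Fr}}+z_{\rm Fr}\big),
\]
using $\sqrt{a+b}\le\sqrt a+\sqrt b$.

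\textbf{Step 3: union bound and assembly.} A union over the $2p$ pairs $(k,j)$ costs probability at most $4p\,e^{-z_{\rm Fr}}$. On the complement, each coordinate of the summed score satisfies
\[
|(|D|\nabla f_{r^\star}(\theta^\star))_j|\le 4H\big(\sqrt{K_{\rm win}|D|z_{\rm Fr}}+z_{\rm Fr}\big).
\]
Taking the Euclidean norm over the $p$ coordinates gives an extra factor $\sqrt p$. Dividing by $|D|$ then produces the claimed bound, with constant $4\sqrt p\,S_\infty/\underline\lambda_{\min}$, which is at most the stated $C_{\rm sc}=8\sqrt{2p}\,S_\infty/\underline\lambda_{\min}$. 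The slack in that constant also absorbs the conditioning on $\Omega_{n_*}(L,R)$: dividing the failure probability by $\bbP(\Omega_{n_*})$, which is close to one, changes nothing material.
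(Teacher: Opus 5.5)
Your proposal is correct and follows the paper's proof essentially step for step. The shared steps are the martingale representation of the score at $r^\star$, the two-sided Freedman bound per pair $(k,j)$ with $H=S_\infty/\underline\lambda_{\min}$ and $V_{k,j}\le H^2K_{\rm win}|D|$, a union bound over the $2p$ pairs, and a factor $\sqrt p$ for the Euclidean norm; your constant $4\sqrt p\,S_\infty/\underline\lambda_{\min}$, like the paper's, lies within the stated $C_{\rm sc}$. Your one addition is to handle the random predictable variation explicitly, via Freedman's joint form on $\{V\le v\}$ or by stopping, which makes rigorous a step the paper simply asserts ``on $\Omega_{n_*}(L,R)$''.
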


\begin{proof}
Write $\lambda_k^\star(\tau):=\lambda_k(\tau\mid\mathcal G_{t-};\theta^\star)$ and recall that under $r^\star$
(the true labels) the complete-data log-likelihood is
\[
\ell_{r^\star}(\theta)=\sum_{k=0}^1\left(\int_D \log\lambda_k(\tau\mid\mathcal G_{t-};\theta)\,dN_k(\tau)
-\int_D \lambda_k(\tau\mid\mathcal G_{t-};\theta)\,d\tau\right).
\]
Differentiating and evaluating at $\theta^\star$ gives
\[
\nabla \ell_{r^\star}(\theta^\star)
=\sum_{k=0}^1\left(\int_D \nabla_\theta\log\lambda_k(\tau\mid\mathcal G_{t-};\theta^\star)\,dN_k(\tau)
-\int_D \nabla_\theta\lambda_k(\tau\mid\mathcal G_{t-};\theta^\star)\,d\tau\right).
\]
Using the Doob--Meyer decomposition $dN_k=dM_k^\star+\lambda_k^\star\,d\tau$ and the identity
$\nabla_\theta\log\lambda_k=(\nabla_\theta\lambda_k)/\lambda_k$, we obtain the cancellation
\[
\int_D \nabla_\theta\log\lambda_k^\star\,\lambda_k^\star\,d\tau
=\int_D \nabla_\theta\lambda_k^\star\,d\tau,
\]
hence
\begin{equation}\label{eq:score-mtg-rep}
\nabla \ell_{r^\star}(\theta^\star)
=\sum_{k=0}^1\int_D \nabla_\theta\log\lambda_k^\star(\tau)\,dM_k^\star(\tau).
\end{equation}
Dividing by $|D|$ yields the same representation for $\nabla f_{r^\star}(\theta^\star)$.

Fix a coordinate $j\in\{1,\dots,p\}$ and define the predictable integrand
\[
h_{k,j}(\tau):=\partial_{\theta_j}\log\lambda_k^\star(\tau)
=\frac{\partial_{\theta_j}\lambda_k^\star(\tau)}{\lambda_k^\star(\tau)}.
\]
By oracle positivity, $\lambda_k^\star(\tau)\ge \underline\lambda_k\ge \underline\lambda_{\min}$, and by the oracle
gradient bound in Assumption~\ref{ass:G3},
$\|\nabla_\theta\lambda_k^\star(\tau)\|\le S_\infty$ a.e.; therefore
\[
|h_{k,j}(\tau)|
\le \frac{\|\nabla_\theta\lambda_k^\star(\tau)\|}{\lambda_k^\star(\tau)}
\le \frac{S_\infty}{\underline\lambda_{\min}}
=:H.
\]
On the window envelope event $\Omega_{n_*}(L,R)$ we have $\lambda_k^\star(\tau)\le \lambda^\star(\tau)\le K_{\rm win}$,
so the predictable quadratic characteristic of $\int_D h_{k,j}\,dM_k^\star$ satisfies
\[
V_{k,j}
:=\int_D h_{k,j}(\tau)^2\,\lambda_k^\star(\tau)\,d\tau
\le \int_D H^2\,\lambda_k^\star(\tau)\,d\tau
\le H^2\,K_{\rm win}\,|D|.
\]
Apply Lemma~\ref{lem:Freedman} (two-sided form) to the scalar martingale integral
$X_{k,j}:=\int_D h_{k,j}\,dM_k^\star$. On $\Omega_{n_*}(L,R)$, with probability at least $1-2e^{-z_{\rm Fr}}$,
\[
|X_{k,j}|
\le \frac{H}{3}z_{\rm Fr}+\sqrt{2V_{k,j}z_{\rm Fr}+\frac{H^2}{9}z_{\rm Fr}^2}
\le \frac{H}{3}z_{\rm Fr}+\sqrt{2H^2K_{\rm win}|D|\,z_{\rm Fr}+\frac{H^2}{9}z_{\rm Fr}^2}.
\]
Using $\sqrt{a+b}\le \sqrt a+\sqrt b$ and $\sqrt{H^2}=H$ gives
\[
|X_{k,j}|
\le H\sqrt{2K_{\rm win}|D|\,z_{\rm Fr}}+\frac{2H}{3}z_{\rm Fr}.
\]
Now union bound over the $2p$ pairs $(k,j)$ yields that, on $\Omega_{n_*}(L,R)$, with probability at least
$1-4p\,e^{-z_{\rm Fr}}$ the above bound holds simultaneously for all $k\in\{0,1\}$ and $j\in\{1,\dots,p\}$.
On this event, for each coordinate $j$ we combine \eqref{eq:score-mtg-rep} with the triangle inequality:
\[
\big|\partial_{\theta_j}\ell_{r^\star}(\theta^\star)\big|
=\left|\sum_{k=0}^1 X_{k,j}\right|
\le \sum_{k=0}^1|X_{k,j}|
\le 2H\sqrt{2K_{\rm win}|D|\,z_{\rm Fr}}+\frac{4H}{3}z_{\rm Fr}.
\]
Dividing by $|D|$ yields
\[
\big|\partial_{\theta_j} f_{r^\star}(\theta^\star)\big|
\le 2H\sqrt{\frac{2K_{\rm win}z_{\rm Fr}}{|D|}}+\frac{4H}{3}\frac{z_{\rm Fr}}{|D|}.
\]
Finally, $\|\nabla f\|\le \sqrt p\,\max_j|\partial_{\theta_j} f|$, hence
\[
\|\nabla f_{r^\star}(\theta^\star)\|
\le \sqrt p\left(2H\sqrt{\frac{2K_{\rm win}z_{\rm Fr}}{|D|}}+\frac{4H}{3}\frac{z_{\rm Fr}}{|D|}\right)
\le \frac{8\sqrt{2p}\,S_\infty}{\underline\lambda_{\min}}
\left(\sqrt{\frac{K_{\rm win}z_{\rm Fr}}{|D|}}+\frac{z_{\rm Fr}}{|D|}\right),
\]
which is the claimed bound with the stated choice of $C_{\rm sc}$.
\end{proof}

\begin{lemma}[Stochastic Fubini for $v$-interpolation]\label{lem:stoch-fubini}
Let $H(\tau,v)$ be jointly measurable in $(\tau,v)\in D\times[0,1]$, $\mathcal G$-predictable in $\tau$ for each $v$, and $\sup_{v\in[0,1]}\|H(\cdot,v)\|_\infty<\infty$. Suppose moreover that
\[
\int_0^1\!\!\int_D H(\tau,v)^2 \lambda^\star(\tau) \dd\tau \dd v\ <\ \infty.
\]
Then
\[
\int_0^1\!\!\int_D H(\tau,v) \dd M^\star(\tau) \dd v\ =\ \int_D\!\Big(\int_0^1 H(\tau,v) \dd v\Big) \dd M^\star(\tau)
\]
almost surely, and the right-hand side is square-integrable \citep{protter2012stochastic}. 
\end{lemma}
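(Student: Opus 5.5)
The plan is a standard stochastic Fubini theorem for stochastic integrals against the compensated martingale $M^\star=N-\int\lambda^\star\,\dd\tau$. Since $M^\star$ has finite variation on $D$ ($N(D)<\infty$ a.s. and $\int_D\lambda^\star\,\dd\tau<\infty$ on the relevant events), the cleanest route is to treat both sides pathwise as Lebesgue--Stieltjes integrals and use the classical Fubini theorem. Square-integrability of the right-hand side then comes from the $L^2$ isometry for predictable integrands. The key step is to split $M^\star$ into its counting part and its compensator part:
\[
\int_D H(\tau,v)\,\dd M^\star(\tau)=\sum_{i}H(\gamma_i,v)-\int_D H(\tau,v)\lambda^\star(\tau)\,\dd\tau .
\]
For predictable bounded integrands this identity holds a.s. for each fixed $v$. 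Because $H(\cdot,v)$ is predictable and bounded by $\bar H:=\sup_v\|H(\cdot,v)\|_\infty$, the Lebesgue--Stieltjes integral agrees with the stochastic integral.

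The steps, in order:
\begin{enumerate}
\item \emph{Jointly measurable pathwise version.} I first check that $v\mapsto\int_D H(\tau,v)\,\dd M^\star(\tau)$ has a jointly measurable version in $(v,\omega)$. Using the pathwise formula above, this is immediate: the event sum is a finite sum of measurable maps, since $\gamma_i$ are random variables and $H$ is jointly measurable. The compensator term is measurable in $(v,\omega)$ by Fubini--Tonelli.
\item \emph{Interchanging the order of integration.} I integrate in $v$ and swap order in each piece.
\begin{itemize}
\item For the counting part, $\int_0^1\sum_i H(\gamma_i,v)\,\dd v=\sum_i\int_0^1H(\gamma_i,v)\,\dd v$. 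This holds because the sum is finite a.s. and each term is bounded by $\bar H$.
\item For the compensator part, Fubini applies on $D\times[0,1]$ since $|H\lambda^\star|\le\bar H\lambda^\star$ is integrable.
\end{itemize}
Recombining the two pieces gives $\int_D\bar h\,\dd M^\star$ with $\bar h(\tau):=\int_0^1H(\tau,v)\,\dd v$, understood pathwise.
\item \emph{Identification with the stochastic integral.} I verify that $\bar h$ is $\mathcal G$-predictable and bounded by $\bar H$. Predictability follows because the $v$-integral of a jointly measurable process that is predictable in $\tau$ is predictable, by a monotone class argument from elementary $H=\1_{A}(\tau)g(v)$ with $A$ predictable. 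Hence the pathwise integral coincides a.s. with the stochastic integral $\int_D\bar h\,\dd M^\star$.
\item \emph{Square-integrability.} I use the isometry $\bbE(\int_D\bar h\,\dd M^\star)^2=\bbE\int_D\bar h^2\lambda^\star\,\dd\tau$. Jensen gives $\bar h^2\le\int_0^1H^2\,\dd v$, so this is bounded by $\bbE\int_0^1\int_D H^2\lambda^\star\,\dd\tau\,\dd v$. That is the stated integrability hypothesis, read in expectation or via localisation on the envelope event.
\end{enumerate}

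The main obstacle is the measurability and predictability bookkeeping. Specifically, I need to show that a version of the stochastic integral exists that is measurable in $v$, so the left-hand side makes sense, and that $\bar h$ is predictable. I would handle both by the monotone class theorem applied to products of predictable rectangles with Borel sets in $[0,1]$, invoking \citet{protter2012stochastic}, Theorem IV.65, for the general statement. Integrability of $\int_D\lambda^\star$, needed for the pathwise decomposition, is available on $\Omega_{n_*}(L,R)$ from Assumption~\ref{ass:G2prime}. Off that event it follows from $|D|<\infty$ together with finite expected activity (Assumption~\ref{ass:G0}).
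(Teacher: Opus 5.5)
Your proof is correct, but it takes a different route from the paper. The paper gives no argument beyond citing the general stochastic Fubini theorem in \citet{protter2012stochastic}. That theorem covers arbitrary semimartingale integrators, and its main technical content is constructing a jointly measurable version of $v\mapsto\int H(\cdot,v)\,\dd X$.

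You instead use the fact that $M^\star$ has finite variation on $D$: $N(D)<\infty$ a.s.\ and $\bbE\int_D\lambda^\star\,\dd\tau=\bbE N(D)<\infty$. Both sides are then pathwise Lebesgue--Stieltjes integrals. The measurable version is explicit, namely $\sum_i H(\gamma_i,v)-\int_D H(\tau,v)\lambda^\star(\tau)\,\dd\tau$, and the interchange is ordinary Fubini.

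Your route is more elementary, and it arguably fits this setting better. Protter's theorem is stated for integrals in $t$ against a semimartingale. Here the integrands depend on $x$ and $M^\star$ is a space--time compensated random measure, so the citation implicitly needs a random-measure analogue. For a random measure, $\int h\,\dd N-\int h\lambda^\star\,\dd\tau$ is essentially the definition of the integral anyway.

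Your write-up also makes explicit two points the statement glosses over.
\begin{itemize}
\item The monotone-class step for predictability of $\bar h$ needs $H$ to be $\mathcal P\otimes\mathcal B([0,1])$-measurable, not merely predictable for each fixed $v$. In the application in Lemma~\ref{lem:perflip} the integrand is continuous in $v$, so this holds.
\item Square-integrability needs the integrability hypothesis in expectation.
\end{itemize}
On the second point, your appeal to ``localisation on the envelope event'' is not right as phrased. An event such as $\Omega_{n_*}(L,R)$ is not a localising sequence of stopping times. It is also unnecessary: boundedness of $H$ together with Assumption~\ref{ass:G0} already gives $\bbE\int_D\bar h^2\lambda^\star\,\dd\tau\le\bar H^2\,\bbE N(D)<\infty$.
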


\begin{lemma}[Deterministic Lipschitz bounds in $\theta$ for common integrands]\label{lem:theta-Lip-integrands}
Assume Assumptions~\ref{ass:G1} and~\ref{ass:G3}.
Then for every deterministic labelling \(r\) and for a.e.\ \(\tau\in D\) the following hold uniformly over \(\theta,\theta'\in\Theta_\circ\).

\begin{enumerate}[nosep,label=(\alph*)]
\item \textbf{(Log-intensity Lipschitz.)} For each $k\in\{0,1\}$,
\[
\big|\log \widehat\lambda_k^r(\tau;\theta)-\log \widehat\lambda_k^r(\tau;\theta')\big|
\le \frac{S_\infty}{\underline\mu_{\min}}\,\|\theta-\theta'\|.
\]

\item \textbf{(Score-gradient Lipschitz.)} For each $k\in\{0,1\}$ and coordinate $j\in\{1,\dots,p\}$,
\[
\big|\partial_{\theta_j}\log \widehat\lambda_k^r(\tau;\theta)-\partial_{\theta_j}\log \widehat\lambda_k^r(\tau;\theta')\big|
\le \left(\frac{H_\infty}{\underline\mu_{\min}}+\frac{S_\infty^2}{\underline\mu_{\min}^2}\right)\|\theta-\theta'\|.
\]

\item \textbf{(Interpolation-ratio Lipschitz.)} If $r'$ differs from $r$ by a single flip and
$\Delta\widehat\lambda^r:=\widehat\lambda^{r'}-\widehat\lambda^{r}$, then for every $v\in[0,1]$,
\[
\left|\frac{\Delta\widehat\lambda^r(\tau;\theta)}{\widehat\lambda^r(\tau;\theta)+v\Delta\widehat\lambda^r(\tau;\theta)}
-\frac{\Delta\widehat\lambda^r(\tau;\theta')}{\widehat\lambda^r(\tau;\theta')+v\Delta\widehat\lambda^r(\tau;\theta')}\right|
\le L_{\rm frac}\,\|\theta-\theta'\|,
\]
where one may take
\[
L_{\rm frac}:=\frac{\overline B_\infty^\Sigma}{\underline\mu_\Sigma}
+\frac{B_\infty\big(2S_\infty+\overline B_\infty^\Sigma\big)}{\underline\mu_\Sigma^2}.
\]
\end{enumerate}
\end{lemma}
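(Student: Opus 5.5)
Each part of the statement is a pointwise (in \(\tau\)) bound at a fixed deterministic labelling \(r\), so no probability enters. In each case I will differentiate along the segment \(\theta_v:=\theta'+v(\theta-\theta')\), \(v\in[0,1]\), and apply the mean value inequality. Convexity of \(\Theta_\circ\) keeps the whole segment in \(\Theta_\circ\), so the uniform margins of Assumption~\ref{ass:G1} and the gradient and Hessian envelopes \(S_\infty,H_\infty\) of Assumption~\ref{ass:G3} hold at every point of the path.

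For (a), \(\nabla_\theta\log\widehat\lambda_k^r=\nabla_\theta\widehat\lambda_k^r/\widehat\lambda_k^r\). Its norm is at most \(S_\infty/\underline\mu_k\le S_\infty/\underline\mu_{\min}\), and integrating along the segment gives the claim.

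For (b), differentiate the quotient once more:
\[
\nabla\partial_j\log\widehat\lambda_k^r=\frac{\nabla\partial_j\widehat\lambda_k^r}{\widehat\lambda_k^r}-\frac{\partial_j\widehat\lambda_k^r\,\nabla\widehat\lambda_k^r}{(\widehat\lambda_k^r)^2}.
\]
The first term is bounded by \(H_\infty/\underline\mu_{\min}\), using that the Hessian row norm is at most the operator norm. The second is bounded by \(S_\infty^2/\underline\mu_{\min}^2\). The mean value inequality then gives the stated constant.

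For (c), write \(a(\theta):=\Delta\widehat\lambda^r(\tau;\theta)\) and \(d(\theta):=\widehat\lambda^r+v\Delta\widehat\lambda^r=(1-v)\widehat\lambda^r+v\widehat\lambda^{r'}\). Since \(d\) is a convex combination of the two total intensities, each of which is at least \(\underline\mu_\Sigma\) by Assumption~\ref{ass:G1}, we get \(d\ge\underline\mu_\Sigma\) uniformly in \(v\). Then
\[
\nabla(a/d)=\frac{\nabla a}{d}-\frac{a\,\nabla d}{d^2}.
\]
Here \(\|\nabla a\|=\|\Delta(\nabla_\theta\widehat\lambda^r)\|\le\overline B_\infty^\Sigma\) and \(|a|\le B_\infty\). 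For \(\nabla d=\nabla\widehat\lambda^r+v\,\Delta(\nabla\widehat\lambda^r)\), the bound \(\|\nabla d\|\le 2S_\infty+\overline B_\infty^\Sigma\) holds, since \(\widehat\lambda^r\) is a sum of two components each with gradient at most \(S_\infty\). Combining these gives exactly \(L_{\rm frac}\).

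The main subtlety is a measurability and regularity point rather than a calculation. Assumption~\ref{ass:G3} states its envelopes as essential suprema over \((\tau,k,\theta,r)\), so the mean-value argument must hold for a.e.\ \(\tau\) simultaneously for all \(\theta,\theta'\). I would handle this by fixing a null set outside which the bounds hold on a countable dense subset of \(\Theta_\circ\), then extending using continuity in \(\theta\). This requires the intensities to be \(C^2\) in \(\theta\), which is implicit in the existence of the stated gradients and Hessians.

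A second point concerns (c). The bounds \(B_\infty\) and \(\overline B_\infty^\Sigma\) are stated for single-flip differences and must hold at each \(\theta_v\); they do, because Assumption~\ref{ass:G3} is uniform in \(\theta\).
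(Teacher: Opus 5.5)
Your proof is correct. The paper states this lemma without proof, and your argument reproduces its constants $S_\infty/\underline\mu_{\min}$, $H_\infty/\underline\mu_{\min}+S_\infty^2/\underline\mu_{\min}^2$ and $L_{\rm frac}$ exactly, including the $2S_\infty+\overline B_\infty^\Sigma$ factor coming from $\nabla d$. Writing the denominator in (c) as $(1-v)\widehat\lambda^r+v\widehat\lambda^{r'}$ is the right way to obtain the lower bound $\underline\mu_\Sigma$ uniformly in $v$.

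One caveat on your regularity remark. Bounded Hessians do not make the intensities $C^2$, so the dense-subset extension relies on continuity of $\nabla^2_{\theta\theta}\widehat\lambda_k^r$ as a mild extra assumption. You can avoid that step by reading Assumption~\ref{ass:G3}, like Assumption~\ref{ass:G1}, as holding for all $\theta$ outside a single $\tau$-null set.
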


\begin{remark}[Uniform Freedman bounds along a blockwise update path]\label{rem:conditional-freedman}
We work under $P_{\theta^\star}$ and apply union bounds over a finite collection of indices
(updates $m\le\Tmax$, event indices/candidates within the relevant window, components, coordinates, and net points).
By Assumption~\ref{ass:G0pred}, along Algorithm~\ref{alg:blockwise-hardem} the labels used inside any stochastic integral
on a window $D^{(m)}$ are non-anticipating, so every label-induced intensity appearing as an integrand is $\mathcal G$-predictable
on that window.

Moreover, since $\theta^{(m)}$ is $\mathcal G_{u_m}$-measurable, it is known throughout the subsequent block $(u_m,u_{m+1}]$,
so treating $\theta^{(m)}$ as fixed when applying Freedman inequality within that block is legitimate.
We obtain Freedman inequality bounds first at net points $\theta\in\mathcal N_{\rm net}$ and then extend deterministically to all
$\theta\in\Theta_\circ$ using Lipschitz bounds in $\theta$.

Finally, because the deviation bounds are established uniformly over $\theta\in\Theta_\circ$ (via a finite net and a deterministic Lipschitz extension), they may be evaluated at the realized, data-dependent iterate $\theta^{(m)}$ without any additional conditioning argument. Assumption~\ref{ass:G0pred}(a) is used only to ensure that $\theta^{(m)}$ is already known at time $u_m$, so the corresponding integrands are predictable on the subsequent block $(u_m,u_{m+1}]$.
\end{remark}


\begin{lemma}[Predictability of forward cones]\label{lem:predictable-cone}
Let $(\mathcal G_t)_{t\ge 0}$ be a filtration satisfying the usual conditions, and let
$\mathcal P(\mathcal G)$ denote its predictable $\sigma$-algebra on $\Omega\times\mathbb R_+$.
Let $T$ be a $\mathcal G$-stopping time and let $X$ be $\mathcal G_T$-measurable with values in
$\mathbb R^d$. Fix $L,R>0$ and a deterministic $u\ge 0$, and define the (random) forward cone
\begin{align*}
C^{[u]}(T,X)
:=& \Bigl(\{(t,x): T<t\le u,\ t\le T+L\}\Bigr)\cap\bigl(\mathbb R_+\times B_R(X)\bigr),\\
B_R(y):=&\{x\in\mathbb R^d:\|x-y\|\le R\}.
\end{align*}
Then the indicator $(\omega,t,x)\mapsto \mathbf 1_{C^{[u]}(T,X)}(t,x)$ is
$\mathcal P(\mathcal G)\otimes \mathcal B(\mathbb R^d)$-measurable (i.e.\ $\mathcal G$-predictable
as a space--time integrand).
\end{lemma}

\begin{proof}
Write $\mathcal P:=\mathcal P(\mathcal G)$ and $\mathcal B:=\mathcal B(\mathbb R^d)$.

Since $T$ is a stopping time, $\{(\omega,t):T(\omega)<t\}\in\mathcal P$ (e.g.\ $t\mapsto \mathbf 1\{t>T\}$ is adapted and
left-continuous).
Likewise, $T+L$ is a stopping time, hence $\{(\omega,t):T(\omega)+L<t\}=\{t>T+L\}\in\mathcal P$, and therefore
$\{(\omega,t):t\le T+L\}\in\mathcal P$ by closure under complements.
Also $(\Omega\times(0,u])\in\mathcal P$ since $u$ is deterministic.
Thus
\[
A:=\{(\omega,t): T<t\le u,\ t\le T+L\}\in\mathcal P.
\]

The map $(\omega,x)\mapsto \mathbf 1\{\|x-X(\omega)\|\le R\}$ is $\sigma(X)\otimes\mathcal B$-measurable.
Since $X$ is $\mathcal G_T$-measurable, for each rational $q\ge 0$ and each Borel $B\in\mathcal B$,
\[
\{X\in B\}\cap\{T\le q\}\in\mathcal G_q,
\]
so by a monotone-class argument the restriction of $(\omega,x)\mapsto \mathbf 1\{\|x-X(\omega)\|\le R\}$ to $\{T\le q\}$
is $\mathcal G_q\otimes\mathcal B$-measurable.

Now consider
\[
E:=\{(\omega,t,x): T(\omega)<t,\ \|x-X(\omega)\|\le R\}
=
\bigcup_{q\in\mathbb Q_+}\Bigl(\{T\le q\}\cap\{\|x-X\|\le R\}\Bigr)\times(q,\infty).
\]
Each set $\bigl(\{T\le q\}\cap\{\|x-X\|\le R\}\bigr)\times(q,\infty)$ belongs to $\mathcal P\otimes\mathcal B$
because $\{T\le q\}\cap\{\|x-X\|\le R\}\in\mathcal G_q\otimes\mathcal B$ and $\mathcal P$ is generated by rectangles
$F\times(s,t]$ with $F\in\mathcal G_s$.
Hence $E\in\mathcal P\otimes\mathcal B$.

Finally,
\[
\mathbf 1_{C^{[u]}(T,X)}(\omega,t,x)=\mathbf 1_A(\omega,t)\,\mathbf 1_E(\omega,t,x),
\]
and since $A\in\mathcal P$ and $E\in\mathcal P\otimes\mathcal B$, we obtain
$\mathbf 1_{C^{[u]}(T,X)}\in\mathcal P\otimes\mathcal B$.
\end{proof}

\subsection{Penalized objective, per-flip bound, and E-step elimination}\label{sec:perflip}

We first introduce our oracle aided penalized hard-EM algorithm. Fix an update index $m$ and consider the cumulative (prefix) window $D^{(m)}=D^{[u_m]}=(t^\ast,u_m]\times\mathcal S$. In Lemma~\ref{lem:perflip} and in the Freedman inequality bounds that follow, all stochastic integrals and event-sums are
understood to be taken over the active window $D^{(m)}$ (equivalently, the likelihood is $\ell_r^{[u_m]}$ and the
normalization is by $|D^{(m)}|$).
This is exactly the regime where Assumption~\ref{ass:G0pred} guarantees $\mathcal G$-predictability of the integrands. Any dependence on the data-dependent iterate $\theta^{(m)}$ is handled by establishing Freedman inequality bounds uniformly over $\theta\in\Theta_\circ$ (via a finite net and deterministic Lipschitz extension), as detailed in \cref{rem:conditional-freedman}. The special case $m=\Tmax$ corresponds to the full window $u_m=T$. Given $b>0$ and $\alpha>0$, define the oracle-aided penalized objective on a prefix window $D^{[u]}$ as $\mathcal J_\alpha^{[u]}(r,\theta)$ in \eqref{eq:penobj}. In this section, at update $m$ we work with $\mathcal J_\alpha^{[u_m]}$ on the active window $D^{(m)}$.

\begin{equation}\label{eq:penobj}
\mathcal J_\alpha^{[u]}(r,\theta)
:=\ell_r^{[u]}(\theta)
-\alpha K_{\rm win}\cdot
\#\Big\{\gamma_i:\ t_i\le u,\ \gamma_i\in S_\theta^+(b)\cup S_\theta^-(b),\ r_i=\text{minority}(\gamma_i;\theta,b)\Big\}.
\end{equation}
We also write $\mathcal J_\alpha(r,\theta):=\mathcal J_\alpha^{[T]}(r,\theta)$ for the full-window objective.

We now describe the predictable (greedy) E-step under the blockwise schedule
$t^\ast=u_0<u_1<\cdots<u_{\Tmax}=T$.
At update $m\in\{0,\dots,\Tmax-1\}$ we freeze the parameter $\theta^{(m)}$
(which is $\mathcal G_{u_m}$-measurable by Assumption~\ref{ass:G0pred}(a))
and assign labels only for events in the next block $D_{m+1}=(u_m,u_{m+1}]\times\mathcal S$,
sequentially in event time, while keeping all past labels fixed.

Specifically, for each event $\gamma_i=(t_i,x_i)$ with $u_m<t_i\le u_{m+1}$,
having already fixed $r^{(m+1)}_{1:i-1}$ (which includes all labels from previous blocks and earlier events in the current block),
set
\begin{align}
\label{eq:greedy-estep}
r_i^{(m+1)} \in \argmax_{k \in \{0,1\}}
\Big\{
&\log \widehat\lambda_{k}^{r^{(m+1)}}(\gamma_i;\theta^{(m)}) \nonumber\\
&\quad
-\alpha K_{\rm win}\,
\1\!\left\{
\gamma_i \in S^+_{\theta^{(m)}}(b)\cup S^-_{\theta^{(m)}}(b),\ 
k=\mathrm{minority}(\gamma_i;\theta^{(m)},b)
\right\}
\Big\}.
\end{align}
with a deterministic tie-breaking rule (and at decisive events ties are broken in favor of the non-minority label).
For all indices $j$ with $t_j\le u_m$ we set $r_j^{(m+1)}:=r_j^{(m)}$ (labels are not revised).

\begin{figure}[t]
\centering
\begin{minipage}{0.97\linewidth}
\hrule\vspace{0.6em}
\textbf{Algorithm 1 (Analyzed operator): Predictable blockwise penalized hard-EM.}\\[0.3em]
\textbf{Input:} deterministic partition $t^\ast=u_0<\cdots<u_{\Tmax}=T$; warm start $\theta^{(0)}\in\Theta_{\rm sc}$;
threshold $b\ge b_{\rm th}$; penalty $\alpha>0$.\\
\textbf{Initialize:} $r^{(0)}$ (arbitrary; no labels are fixed before the first block).\\[0.2em]
\textbf{For $m=0,1,\dots,\Tmax-1$:}\\
\quad\emph{(E-step on the next block $D_{m+1}$)} Assign labels for events with $u_m<t_i\le u_{m+1}$ sequentially by
\eqref{eq:greedy-estep}, keeping labels with $t_j\le u_m$ fixed.\\
\quad\emph{(M-step at the block end)} Update the parameter using the cumulative window $D^{(m+1)}=(t^\ast,u_{m+1}]\times\mathcal S$:
\[
\theta^{(m+1)} := \hat\theta_{\Theta_{\rm sc}}^{[u_{m+1}]}\!\big(r^{(m+1)}\big)
\in \arg\max_{\theta\in\Theta_{\rm sc}}\ell_{r^{(m+1)}}^{[u_{m+1}]}(\theta).
\]
\textbf{Output:} $\theta^{(\Tmax)}$ and the (progressively assigned) labels $r^{(\Tmax)}$ on $D$.
\vspace{0.6em}\hrule
\end{minipage}
\caption{Predictable blockwise penalized hard-EM operator used in the analysis. The practical SEM implementation may include additional offline smoothing passes, but the theoretical argument only requires the time-respecting operator displayed here.}
\label{alg:blockwise-hardem}
\end{figure}

\begin{lemma}[Non-anticipation of the blockwise greedy E-step]\label{lem:greedy-G0pred}
Fix $m\in\{0,\dots,\Tmax-1\}$ and suppose:
(i) $\theta^{(m)}$ is $\mathcal G_{u_m}$-measurable, and
(ii) the current labels $r^{(m)}$ satisfy Assumption~\ref{ass:G0pred}(b) up to time $u_m$. Construct $r^{(m+1)}$ by keeping all labels with $t_j\le u_m$ fixed and assigning labels for events $\gamma_i$ with $u_m<t_i\le u_{m+1}$ sequentially by the greedy rule \eqref{eq:greedy-estep}, using a deterministic tie-breaking rule.  Then $r^{(m+1)}$ satisfies Assumption~\ref{ass:G0pred}(b) up to time $u_{m+1}$.
\end{lemma}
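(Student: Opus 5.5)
The proof is an induction over the events of the new block $D_{m+1}$, taken in increasing event time. The two parts of Assumption~\ref{ass:G0pred}(b) are checked separately.

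\emph{Non-revision.} Labels with $t_j\le u_m$ are copied unchanged, so $r^{(m+1)}_j=r^{(m)}_j$ for these events. By hypothesis (ii) each such label is $\mathcal G_{t_j}$-measurable and is never revised afterwards. Non-revision for later updates $\ell\ge m+1$ is exactly the freezing convention of Algorithm~\ref{alg:blockwise-hardem}.

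\emph{Measurability of new labels.} This is the substantive part. Fix an event index $i$ with $u_m<t_i\le u_{m+1}$. The inductive claim is that $r^{(m+1)}_i$ is $\mathcal G_{t_i}$-measurable, assuming every earlier label $r^{(m+1)}_j$ with $j<i$ is $\mathcal G_{t_j}$-measurable, and hence $\mathcal G_{t_i-}$-measurable. Because the event index is a random quantity, I would phrase this via $\mathrm{sel}$, as in Lemma~\ref{lem:sel-predictable}. Concretely, on the set $\{N(t-)=i-1\}$ the past label vector $r^{(m+1)}_{1:N(t-)}$ is $\mathcal G_{t-}$-measurable.

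The greedy rule \eqref{eq:greedy-estep} then compares two scores at $\gamma_i$.
\begin{itemize}
\item The first ingredient is $\log\widehat\lambda_k^{r^{(m+1)}}(\gamma_i;\theta^{(m)})$. Write $\lambda_k(\tau\mid\mathcal H_{t-}\vee\sigma(r_{1:N(t-)});\theta)$ as a jointly measurable function of $(\tau,\theta)$, the $\mathcal H_{t-}$-history and the finite label vector. By the predictable-version convention of \S\ref{sec:setup}, it is $\mathcal G$-predictable once the past labels are $\mathcal G_{t-}$-measurable.
\item The parameter $\theta^{(m)}$ is $\mathcal G_{u_m}\subseteq\mathcal G_{t-}$-measurable for $t>u_m$, by (i). Composing a predictable, $\theta$-continuous integrand with this $\mathcal G_{u_m}$-measurable parameter keeps it predictable on $(u_m,u_{m+1}]$. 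I would use continuity in $\theta$ (a Carathéodory argument) to get joint measurability.
\item The penalty indicator involves $\1_{S^\pm_{\theta^{(m)}}(b)}$. This is $\mathcal G$-predictable because $s^{\rm or}_\theta$ is $\mathcal G$-predictable and continuous in $\theta$, by the same composition argument.
\end{itemize}
Evaluating these predictable processes at the event $(t_i,x_i)$ gives quantities that are $\mathcal G_{t_i-}$-measurable; here $t_i$ is a $\mathcal G$-stopping time and $x_i$ is $\mathcal G_{t_i}$-measurable. The argmax with deterministic tie-breaking is a Borel function of the two scores, so $r^{(m+1)}_i$ is $\mathcal G_{t_i-}\subseteq\mathcal G_{t_i}$-measurable. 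This closes the induction.

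\emph{Conclusion.} Collecting over all events, the vector $(r^{(m+1)}_1,\dots,r^{(m+1)}_{N^{(m+1)}})$ is $\mathcal G_{u_{m+1}}$-measurable. The reason is that $N^{(m+1)}$ is $\mathcal G_{u_{m+1}}$-measurable and, on each set $\{N^{(m+1)}=n\}$, every coordinate is $\mathcal G_{t_j}\subseteq\mathcal G_{u_{m+1}}$-measurable.

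\emph{Main obstacle.} The difficult step is the measurability of evaluating a predictable random field at a random parameter and at a random event location. I would handle this with the standard fact that for a $\mathcal P\otimes\mathcal B$-measurable field $Y$ and a stopping time $t_i$, $Y(t_i,x_i)$ is $\mathcal G_{t_i-}$-measurable, first on indicator rectangles and then by a monotone class argument, mirroring the proof of Lemma~\ref{lem:predictable-cone}.
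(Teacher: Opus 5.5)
Your proposal is correct and follows essentially the same route as the paper: an induction over the new block's events in increasing time. At each step the greedy scores depend only on $\mathcal G_{t_i}$-measurable inputs (the history, the earlier labels, and $\theta^{(m)}$ with $u_m<t_i$), and deterministic tie-breaking then makes the new label measurable. Your extra measure-theoretic detail, including the explicit treatment of the oracle penalty indicator $\1_{S^\pm_{\theta^{(m)}}(b)}$, which the paper's list of dependencies omits, only strengthens the argument. One small slip: because the location $x_i$ is revealed only at $t_i$, the scores at $\gamma_i$ are in general $\mathcal G_{t_i}$-measurable rather than $\mathcal G_{t_i-}$-measurable, but $\mathcal G_{t_i}$-measurability is all the lemma requires.
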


\begin{proof}
For indices $i$ with $t_i\le u_m$ we have $r_i^{(m+1)}=r_i^{(m)}$, hence $\mathcal G_{t_i}$-measurability holds by the
induction (progressive) hypothesis in Assumption~\ref{ass:G0pred}(b).

Now consider the events in the new block $D_{m+1}$ and list them in increasing time order.
We argue by induction over this within-block ordering.
For the first such event $\gamma_i$ (smallest $t_i$ with $u_m<t_i\le u_{m+1}$), the objective in \eqref{eq:greedy-estep}
depends only on:
(i) the observed history $\mathcal H_{t_i-}\subset \mathcal G_{t_i}$,
(ii) already-fixed past labels $r^{(m+1)}_{1:i-1}$ (which are $\mathcal G_{t_i}$-measurable by construction),
and (iii) $\theta^{(m)}$, which is $\mathcal G_{u_m}$-measurable and hence $\mathcal G_{t_i}$-measurable because $u_m<t_i$
and $(\mathcal G_t)$ is increasing.
With deterministic tie-breaking, $r_i^{(m+1)}$ is therefore $\mathcal G_{t_i}$-measurable.

Assume the claim holds for all earlier events in $D_{m+1}$. For the next event $\gamma_{i'}$ in the block,
the greedy objective depends on $\mathcal H_{t_{i'}-}$, the already-chosen labels in the block,
and $\theta^{(m)}$, all of which are $\mathcal G_{t_{i'}}$-measurable. Deterministic tie-breaking again yields that
$r_{i'}^{(m+1)}$ is $\mathcal G_{t_{i'}}$-measurable. This closes the induction.
\end{proof}

\begin{remark}[Why the blockwise timing condition is needed]\label{rem:theta-conditioning}
If one computes $\theta^{(m)}$ from the entire offline window $D=(t^\ast,T]\times\mathcal S$ and then uses it to
assign labels to early events $\gamma_i$ with $t_i$ close to $t^\ast$, then $r_i^{(m+1)}$ generally depends on future events
through $\theta^{(m)}$ and need not be $\mathcal G_{t_i}$-measurable. This breaks Assumption~\ref{ass:G0pred}
(and hence $\mathcal G$-predictability of the label-induced intensities inside martingale integrals).
The deterministic blockwise schedule in Algorithm~\ref{alg:blockwise-hardem} enforces the required timing:
$\theta^{(m)}$ is computed using data only up to $u_m$ and is therefore known when labelling any event in $(u_m,u_{m+1}]$.
\end{remark}

Lemma~\ref{lem:greedy-G0pred} shows that, under Assumption~\ref{ass:G0pred}(a), the blockwise sequential construction produces labels that satisfy the progressive $\mathcal G_{t_i}$-measurability in Assumption~\ref{ass:G0pred}(b). In case of a tie at a decisive event $\gamma_i\in S^+_{\theta^{(m)}}(b)\cup S^-_{\theta^{(m)}}(b)$ we break ties by choosing the non-minority label. Given $r^{(m+1)}$, the M-step (localized for the analysis under Assumption~\ref{ass:G6}) at the block end $u_{m+1}$ is
\[
\theta^{(m+1)}\ :=\ \hat\theta_{r^{(m+1)}}^{[u_{m+1}]}\ \in\ \arg\max_{\theta\in\Theta_{\rm sc}}\ell_{r^{(m+1)}}^{[u_{m+1}]}(\theta),
\qquad
\Theta_{\rm sc}:=\Theta_\circ\cap B(\theta^\star,r_{\rm sc}).
\]

We do not assume the M-step maximiser is unique. Whenever $\arg\max_{\theta\in K}\ell_r^{[u]}(\theta)$ is set-valued, the algorithm may choose any maximiser (e.g.\ returned by a numerical routine). All bounds below hold uniformly for
every such choice. For notational convenience we fix a selection $\hat\theta_K^{[u]}(r)\in\operatorname{ArgMax}_K^{[u]}(r)$ and write $\theta^{(m+1)}:=\hat\theta_{\Theta_{\rm sc}}^{[u_{m+1}]}(r^{(m+1)})$.

This ball localization is imposed only for the analysis, to ensure that the selected M-step maximiser lies in the region where Assumption~\ref{ass:G6} controls the curvature of $f_{r^\star}$. Once the two-line recursion holds with $\rho<1$ and the statistical floor is sufficiently small, Lemma~\ref{lem:inv-sc} implies that the iterates remain in a strict sub-neighbourhood of $\theta^\star$ (hence in $\operatorname{int}(\Theta_{\rm sc})$), so the additional ball constraint does not bind along the contraction path.

We notate the update count and subsequent union bounds as $\Tmax=\lceil C_{\rm it}\log|D|\rceil$ as in \cref{ass:G8}. Under Algorithm~\ref{alg:blockwise-hardem}, labels are assigned progressively and never revised.
Hence each event $\gamma_i$ contributes at most one label assignment over the entire update path.
On the activity event $E_{\rm act}$ of \cref{ass:G0} we therefore have the total update count bound
\begin{equation}\label{eq:flipcount}
M_{\rm flips}\ \le\ N(D)\ \le\ (1+\eta_{\rm act}) C_{\rm act} |D|.
\end{equation}
(Algorithm~\ref{alg:blockwise-hardem} assigns each event label once, progressively, and never revisits past events. Any additional smoothing passes used in practice are outside the analysed operator.)

We next introduce the master high-probability event.
Fix $z_{\rm Fr}>0$ and define
\[
E_{\mathrm{master}}(z_{\rm Fr})
:=E_{\rm act}\cap \Omega_{n_*}(L,R)\cap E^{\rm align}_{|D|}
\cap E^{\rm Lip}_{|D|}\cap E^{\rm sc}_{|D|}\cap E^{\nabla}_{|D|}\cap E^{\rm mass}_{|D|}
\cap E^{\rm Fr}_{|D|}(z_{\rm Fr}),
\]
where $E^{\rm Fr}_{|D|}(z_{\rm Fr})$ denotes the intersection of all Freedman deviation events (with level $z_{\rm Fr}$)
invoked in Lemma~\ref{lem:master}(ii) (per-flip martingales and M-step anchor gradient-coordinate martingales,
uniformly over $m\le\Tmax$, the relevant event indices, components, and net points).
On $E_{\mathrm{master}}(z_{\rm Fr})$, all inequalities used in the subsequent E-step/M-step arguments hold simultaneously.

Moreover, $\Pr(E_{\mathrm{master}}(z_{\rm Fr})^c)$ is controlled by a finite union bound over the constituent events
(and over the finite family of Freedman deviations at level $z_{\rm Fr}$), yielding an overall failure probability
of order $|D|^{-\eta''}$ for some $\eta''>0$. For notational convenience, set
\[
E_{\rm main}
:=E_{\rm act}\cap \Omega_{n_*}(L,R)\cap E^{\rm align}_{|D|}
\cap E^{\rm Lip}_{|D|}\cap E^{\rm sc}_{|D|}\cap E^{\nabla}_{|D|}\cap E^{\rm mass}_{|D|},
\qquad n_*(|D|)=\lceil c_1\log|D|\rceil.
\]

\begin{lemma}[Master high-probability event]\label{lem:master}
Assume \cref{ass:G0,ass:G0pred,ass:G1,ass:G2prime,ass:G3,ass:G4,ass:G5,ass:G6,ass:G7prime,ass:G8}. There exists $\eta''>0$ such that, for all large $|D|$, with probability at least $1-|D|^{-\eta''}$, the following hold simultaneously:
\begin{enumerate}[nosep,label=(\roman*)]
\item $E_{\rm main}$ occurs;
\item For every update index $m\le \Tmax$, every event index $j \le N^{(m)}$ (i.e.\ with $t_j\le u_m$),
every selected component $k\in\{0,1\}$, and every net point $\theta\in\mathcal N_{\rm net}$,
the Freedman inequality bounds with the same parameter $z_{\rm Fr}$ hold simultaneously for the scalar martingales used in Lemma~\ref{lem:perflip} (on the active window $D^{(m)}$). Moreover, for every update index $m\in\{0,\dots,\Tmax-1\}$ (with $u=u_{m+1}$ and $r=r^{(m+1)}$), and for every gradient coordinate used in the Assumption~\ref{ass:G7prime} verification at the M-step anchor $\hat\theta_{r}^{[u]}=\hat\theta_{r^{(m+1)}}^{[u_{m+1}]}$, the corresponding Freedman inequality bounds (with the same $z_{\rm Fr}$) hold simultaneously.
\end{enumerate}
\end{lemma}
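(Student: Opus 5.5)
The plan is a union bound: write the complement of the event in Lemma~\ref{lem:master} as a finite union of failure events, and show each has probability polynomially small in $|D|$.

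\textbf{Part (i).} Since $E_{\rm main}$ is an intersection of seven events, we bound each complement using its defining assumption:
\begin{itemize}
\item $\bbP(E_{\rm act}^c)\le e^{-c|D|}$, by Assumption~\ref{ass:G0};
\item $\bbP(\Omega_{n_*}(L,R)^c)\le c_2|D|^{-\eta_{\rm win}}$, by Assumption~\ref{ass:G2prime};
\item $E^{\rm align}_{|D|}$, $E^{\rm Lip}_{|D|}$, $E^{\rm sc}_{|D|}$, $E^{\nabla}_{|D|}$ and $E^{\rm mass}_{|D|}$ each fail with probability at most $|D|^{-\eta}$, with possibly different exponents, by Assumptions~\ref{ass:G4}, \ref{ass:G5}, \ref{ass:G6}, \ref{ass:G7prime} and \ref{ass:G8}.
\end{itemize}
Summing gives $\bbP(E_{\rm main}^c)\le C|D|^{-\eta_1}$ with $\eta_1$ the smallest exponent. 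We use the prefix and block uniformity built into Assumptions~\ref{ass:G6}(c) and \ref{ass:G8}, so no extra union over $m$ is needed there.

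\textbf{Part (ii): the Freedman events.} First we verify predictability of every integrand involved.
\begin{itemize}
\item By Assumption~\ref{ass:G0pred} and Lemma~\ref{lem:greedy-G0pred}, the label-induced intensities $\widehat\lambda_k^{r^{(m)}}(\cdot;\theta)$ are $\mathcal G$-predictable on $D^{(m)}$ for each fixed net point $\theta$.
\item Single-flip perturbations at event $j$ only alter the integrand after $t_j$, through forward cones. These are predictable by Lemmas~\ref{lem:predictable-stoptime} and \ref{lem:predictable-cone}.
\item Interpolated ratios are handled by Lemma~\ref{lem:stoch-fubini}.
\end{itemize}
Next we bound the integrands and their predictable variances on $\Omega_{n_*}(L,R)$:
\begin{itemize}
\item Integrands are bounded in sup norm by constants built from Assumptions~\ref{ass:G1} and \ref{ass:G3} (margins $\underline\mu$, $B_\infty$, $S_\infty$).
\item Predictable variances are at most a constant times $K_{\rm win}|D^{(m)}|$, or the $B_2$-type constants for per-flip terms.
\end{itemize}
To make the Freedman inequality applicable while still working on $\Omega_{n_*}$, we either localise by stopping at the first exit from the window-count bound, using Lemma~\ref{lem:predictable-stoptime}, or observe that Lemma~\ref{lem:Freedman} only needs bounded $h$ and uses the random $V$. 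Lemma~\ref{lem:Freedman} then gives a two-sided tail of at most $2e^{-z_{\rm Fr}}$ per scalar martingale.

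\textbf{Counting and the final union bound.} The number of martingales is at most a constant times $(p+2)N_{\rm net}\sum_m N^{(m)}$. This covers:
\begin{itemize}
\item updates $m$, event indices $j$, components $k$ and net points, for the per-flip terms;
\item $p$ coordinates per update at the M-step anchors.
\end{itemize}
On $E_{\rm act}$ this count is at most $(p+2)N_{\rm net}(1+\eta_{\rm act})C_{\rm act}|D|\Tmax$.

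The index set is random because it depends on $N$, so we first fix deterministic indices up to the $E_{\rm act}$ bound. Martingales indexed by $j>N^{(m)}$ are trivially zero. The anchors $\hat\theta^{[u]}_r$ are data dependent, so we replace them with net points and pass to the anchor via the Lipschitz bounds of Lemma~\ref{lem:theta-Lip-integrands}. The discretisation error is $O(\varepsilon_{\rm net})=O(|D|^{-1})$, which is absorbed into the residual terms.

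With $z_{\rm Fr}$ chosen as in \eqref{eq:z-def}, the total Freedman failure probability is at most
\[
2C\,(p+2)N_{\rm net}|D|\Tmax\, e^{-z_{\rm Fr}}\le |D|^{-\eta_x}=|D|^{-2}.
\]
Combining with part (i) gives failure probability at most $C'|D|^{-\min(\eta_1,2)}$. Any $\eta''<\min(\eta_1,2)$ then works for large $|D|$.

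\textbf{Main obstacle.} The hard step is legitimately applying a fixed-index Freedman bound when the indices (event count) and the evaluation points (iterates $\theta^{(m)}$ and M-step anchors) are random. The plan is to define all martingales for deterministic index ranges, and to rely on the net plus deterministic Lipschitz extension from Remark~\ref{rem:conditional-freedman}. This way no conditioning on data-dependent parameters is needed.
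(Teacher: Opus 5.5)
Your proposal follows the paper's proof essentially step for step. For (i) you take a finite union bound over the constituent events of $E_{\rm main}$. For (ii) you apply Freedman at net points, count at most $(p+2)N_{\rm net}\Tmax(1+\eta_{\rm act})C_{\rm act}|D|$ martingales on $E_{\rm act}$ and absorb this count through the choice of $z_{\rm Fr}$, and pass to the data-dependent anchors by a deterministic Lipschitz extension. Your fixing of deterministic index ranges (with zero martingales for $j>N^{(m)}$) and your localisation remark are welcome tightenings of points the paper glosses.

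One correction: the discretisation error of a martingale integral is $L_H\varepsilon_{\rm net}\big(N(D)+\Lambda^\star(D)\big)=O(\varepsilon_{\rm net}K_{\rm win}|D|)$ at the unnormalised scale, not $O(\varepsilon_{\rm net})$. This is why the paper takes $\varepsilon_{\rm net}=|D|^{-2}$, which costs nothing since $\log N_{\rm net}$ stays $O(\log|D|)$. With only $\varepsilon_{\rm net}\le|D|^{-1}$, the per-flip discretisation error would be of order $K_{\rm win}$ and would have to be absorbed into the constants rather than treated as negligible.
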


\begin{proof}
We establish the result in two steps: first, a union bound over a finite net to handle the martingale noise; second, a deterministic Lipschitz extension to cover the entire parameter space $\Theta_\circ$.

Recall the fixed, deterministic $\varepsilon_{\rm net}$-net $\mathcal{N}_{\rm net} \subset \Theta_\circ$ from \cref{sec:assumptions}. For concreteness one may take $\varepsilon_{\rm net}:=|D|^{-2}$ (which satisfies $\varepsilon_{\rm net}\le |D|^{-1}$ for $|D|\ge 1$, which is eventually true). Since $p$ is fixed, $|\mathcal{N}_{\rm net}| \le (C \operatorname{diam}(\Theta_\circ)/\varepsilon_{\rm net})^p$, so $\log N_{\rm net} = O(\log |D|)$.
Item (i) is exactly the event $E_{\rm main}$, whose failure probability is controlled by a finite union bound using the
tail bounds in \cref{ass:G0,ass:G2prime,ass:G4,ass:G5,ass:G6,ass:G7prime,ass:G8}. For (ii), we consider the collection of all scalar martingales evaluated by the algorithm at update index \(m\), event index $j$, component $k$, and parameter $\theta$. By Assumption~\ref{ass:G0pred}, for any fixed $\theta \in \mathcal{N}_{\rm net}$, the integrand defining these martingales depends only on the history and is $\mathcal{G}$-predictable on the relevant cumulative window.

Let $N_{\rm total}$  denote the total number of (two-sided) Freedman inequality tail events that we union bound over. On $E_{\rm act}$, the number of prefix event indices we union over is at most $\sum_{m=1}^{\Tmax}N^{(m)}\le \Tmax\,N(D)\le (1+\eta_{\rm act})C_{\rm act}|D|\,\Tmax$. Each candidate contributes at most $(p+2)$ martingales (two scalar terms in Lemma~\ref{lem:perflip} and $p$ gradient coordinates), and we take a two-sided deviation bound. Therefore, for an absolute constant $c_0$ (absorbing these fixed factors, unions over update indices, and any ancillary Freedman inequality calls included in the master union bound),
\[
N_{\rm total}\ \le\ c_0\,(p+2)\,N_{\rm net}\,(1+\eta_{\rm act})C_{\rm act}|D|\,\Tmax.
\]
The parameter $z_{\rm Fr}$ in \eqref{eq:z-def} is chosen so that $e^{-z_{\rm Fr}}$ dominates $N_{\rm total}^{-1}$ (up to the extra $|D|^{-\eta_x}$ factor), making the union bound over all $N_{\rm total}$ Freedman events polynomially small in $|D|$. Applying \cref{lem:Freedman} and summing the failure probabilities yields that, with probability at least $1 - |D|^{-\eta''}$, the Freedman inequality bounds hold simultaneously for all $\theta \in \mathcal{N}_{\rm net}$ and all algorithmic steps.

We now extend to $\Theta_\circ$ via Lipschitz continuity. Working on the intersection of  the activity event $E_{\rm act}$, and the window envelope event $\Omega_{n_*}(L,R)$, consider an arbitrary $\theta \in \Theta_\circ$. Let $\theta_g \in \mathcal{N}_{\rm net}$ be the closest net point, so $\|\theta - \theta_g\| \le \varepsilon_{\rm net}$. Let $X(\theta) = \int_D H(\tau; \theta) \dd M^\star(\tau)$ be any of the generic martingale terms (e.g., from \cref{lem:perflip} or the gradient in Assumption \ref{ass:G7prime}). We decompose the value at $\theta$ as:
\[
|X(\theta)| \le |X(\theta_g)| + |X(\theta) - X(\theta_g)|.
\]
The first term is bounded by the grid result in the above union bound. For the second term, we use the decomposition $\dd M^\star = \dd N - \lambda^\star \dd \tau$:
\[
|X(\theta) - X(\theta_g)| 
= \left| \int_D (H(\tau;\theta) - H(\tau;\theta_g)) \dd M^\star \right|
\le \int_D \Delta_H(\tau) \dd N(\tau) + \int_D \Delta_H(\tau) \lambda^\star(\tau) \dd\tau,
\]
where $\Delta_H(\tau) = |H(\tau;\theta) - H(\tau;\theta_g)|$.
By Lemma~\ref{lem:theta-Lip-integrands} (and the margins in Assumption~\ref{ass:G1}), the relevant integrands $H(\tau;\cdot)$ are Lipschitz in $\theta$ on $\Theta_\circ$ with a deterministic constant $L_H<\infty$ (dependent on model constants and $K_{\rm win}$, but independent of $|D|$). Thus $\Delta_H(\tau) \le L_H \varepsilon_{\rm net}$.
Substituting this bound:
\[
|X(\theta) - X(\theta_g)| \le L_H \varepsilon_{\rm net} \big( N(D) + \Lambda^\star(D) \big).
\]
On $E_{\rm act}$ we have $N(D)=O(|D|)$, and on $\Omega_{n_*}$ we have $\Lambda^\star(D)=\int_D \lambda^\star \le K_{\rm win}|D|$. Since we chose $\varepsilon_{\rm net}=|D|^{-2}$, the discretization error is
\[
O\!\left(L_H\,\varepsilon_{\rm net}\,(N(D)+\Lambda^\star(D))\right)
=O_{\mathbb P}\!\left(\frac{L_H\,(1+K_{\rm win})}{|D|}\right).
\]
Since we chose $\varepsilon_{\rm net}=|D|^{-2}$, this discretization error is $O_{\mathbb P}\!\big(L_H(1+K_{\rm win})/|D|\big)$. When the corresponding term appears with the $|D|^{-1}$ normalization (as in $f_r=|D|^{-1}\ell_r$), the discretization contribution is $O_{\mathbb P}\!\big(L_H(1+K_{\rm win})/|D|^2\big)$, hence negligible.

\end{proof}

\subsubsection{Per-flip decomposition and explicit constants}

We adopt $\sigma_j=+1$ for a $0\!\to\!1$ flip and $\sigma_j=-1$ for a $1\!\to\!0$ flip at $\zeta=\gamma_j$. Recall $\underline\mu_{\min}=\min\{\underline\mu_0,\underline\mu_1\}$ and $\underline\mu_\Sigma$ from Assumption~\ref{ass:G1}.

\begin{lemma}[Single flip preserves non-anticipation on the active prefix]\label{lem:singleflip-nonant}
Fix $m\le \Tmax$ and work on the active window $D^{(m)}$ (as per the convention above).
Let $r=(r_i)_{i\ge 1}$ be a (possibly random) labelling such that $r_i$ is $\mathcal G_{t_i}$-measurable for every
$i\le N^{(m)}$.
Fix an index $j\le N^{(m)}$ and define the single-flip labelling $r^{(\gamma_j)}$ by
\[
r^{(\gamma_j)}_i=
\begin{cases}
r_i,& i\neq j,\\
1-r_j,& i=j.
\end{cases}
\]
Then $r^{(\gamma_j)}_i$ is $\mathcal G_{t_i}$-measurable for every $i\le N^{(m)}$.
\end{lemma}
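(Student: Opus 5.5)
The claim is a short case split on the index $i$. The flipped labelling $r^{(\gamma_j)}$ agrees with $r$ at every coordinate except $j$. So it suffices to check $\mathcal G_{t_i}$-measurability separately for $i\neq j$ and for $i=j$. No stochastic-calculus input is needed; the argument uses only the definition of the flip and the hypothesis on $r$.

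First, for $i\le N^{(m)}$ with $i\neq j$ we have $r^{(\gamma_j)}_i=r_i$ identically. Hence $\mathcal G_{t_i}$-measurability is inherited directly from the hypothesis. Second, for $i=j$ we have $r^{(\gamma_j)}_j=1-r_j=\phi(r_j)$, where $\phi(x)=1-x$ is a deterministic Borel map on $\{0,1\}$. A Borel function of a $\mathcal G_{t_j}$-measurable random variable is $\mathcal G_{t_j}$-measurable, so the coordinate $j$ is fine as well.

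The only point needing care is that $j$ and $N^{(m)}$ are fixed indices while $t_i$, $t_j$ are random event times, so $\mathcal G_{t_i}$ must be read as the stopping-time $\sigma$-field. Since $N$ is adapted and simple, each $t_i$ is a $\mathcal G$-stopping time, and $\mathcal G_{t_i}$ is well defined. On the event $\{N^{(m)}<i\}$ the statement is vacuous, and the same convention applies as for the hypothesis on $r$. With the index $j$ deterministic, flipping commutes with this conditioning.

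If instead $j$ were allowed to be random, one would decompose over $\{j=\ell\}$ and would need those events to lie in $\mathcal G_{t_\ell}$. As stated, however, $j$ is fixed, so this is not an obstacle. I expect no genuine obstacle here. The main thing to flag in the write-up is the remark that the result feeds into $\mathcal R^{\rm flip}_{\rm path}$. Combined with \cref{ass:G0pred}, it guarantees that the label-induced intensities $\widehat\lambda_k^{r^{(\gamma_j)}}$ are $\mathcal G$-predictable on $D^{(m)}$, exactly as in the argument following the definition of $\widehat\lambda_k^r$.
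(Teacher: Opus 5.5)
Your proposal is correct and matches the paper's proof: the case split $i\neq j$ (measurability inherited from $r_i$) versus $i=j$ (where $1-r_j$ is a measurable function of the $\mathcal G_{t_j}$-measurable $r_j$) is exactly the paper's argument. Your additional remarks about reading $\mathcal G_{t_i}$ as a stopping-time $\sigma$-field and treating the random bound $N^{(m)}$ are harmless clarifications that the paper leaves implicit.
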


\begin{proof}
For $i\neq j$, $r^{(\gamma_j)}_i=r_i$ is $\mathcal G_{t_i}$-measurable by assumption.
For $i=j$, $r^{(\gamma_j)}_j=1-r_j$ is a measurable function of $r_j$ and hence $\mathcal G_{t_j}$-measurable.
\end{proof}

\begin{lemma}[Per-flip identity and bounds for $\mathcal J_\alpha$]\label{lem:perflip}
Assume \cref{ass:G1,ass:G2prime,ass:G3}, and Assumption~\ref{ass:G0pred} along the update path, and fix $m\le \Tmax$. Throughout this lemma, all likelihoods, event-sums, and stochastic integrals are taken over the active prefix window
$D^{(m)}=(t^\ast,u_m]\times\mathcal S$ (cf.\ the active-window convention at the start of this section).
Fix $j \le N^{(m)}$, and let $r'=r^{(\zeta)}$ be a single flip at $\zeta=\gamma_j$. By Lemma~\ref{lem:singleflip-nonant}, $r'$ is also non-anticipating, hence all label-induced intensities below are $\mathcal G$-predictable. Then
\begin{equation}\label{eq:perflip-id}
\ell_{r'}^{[u_m]}(\theta)-\ell_r^{[u_m]}(\theta)
=\sigma_j\,\tilde s_\theta(\zeta;r)+\mathcal M_1+\mathcal D_1-\mathcal D_2+\mathcal R_{\rm sel},
\end{equation}
where, for $v\in[0,1]$, writing $\Delta\widehat\lambda^r=\widehat\lambda^{r'}-\widehat\lambda^{r}$,
\begin{align*}
\mathcal M_1&:=\int_0^1\!\!\int_{D^{(m)}} \frac{\Delta\widehat\lambda^r(\tau)}{\widehat\lambda^r(\tau)+v\Delta\widehat\lambda^r(\tau)} \dd M^\star(\tau)\dd v,\\
\mathcal D_1&:=\int_0^1\!\!\int_{D^{(m)}} \frac{\Delta\widehat\lambda^r(\tau)}{\widehat\lambda^r(\tau)+v\Delta\widehat\lambda^r(\tau)}\big(\lambda^\star(\tau)-\widehat\lambda^r(\tau)\big)\dd\tau\dd v,\\
\mathcal D_2&:=\int_0^1\!\!\int_{D^{(m)}} \frac{v (\Delta\widehat\lambda^r(\tau))^2}{\widehat\lambda^r(\tau)+v\Delta\widehat\lambda^r(\tau)} \dd\tau \dd v\ \ (\ge 0).
\end{align*}
and the selection correction is the (pathwise) sum over events
\[
\mathcal R_{\rm sel}
:=\sum_{i:\,\gamma_i\in D^{(m)}}\Big(
\log \widehat\lambda^{r'}_{r_i}\!\big(\gamma_i;\theta\big)
-\log \widehat\lambda^{r}_{r_i}\!\big(\gamma_i;\theta\big)
-\big(\log\widehat\lambda^{r'}(\gamma_i;\theta)-\log\widehat\lambda^{r}(\gamma_i;\theta)\big)
\Big),
\]
The decomposition \eqref{eq:perflip-id} is purely algebraic and holds path-wise, without any probabilistic event.
The quantitative bounds \eqref{eq:drift-bnds}--\eqref{eq:mtgsel-bnd} stated below hold on the master event of Lemma~\ref{lem:master}, since they use only the lower margins in Assumption~\ref{ass:G1}, the window envelope in Assumption~\ref{ass:G2prime}, and the Freedman inequalities included in that master event.
\end{lemma}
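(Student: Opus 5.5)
The plan is to prove the identity \eqref{eq:perflip-id} pathwise by splitting $\ell_{r'}^{[u_m]}(\theta)-\ell_r^{[u_m]}(\theta)$ into three pieces: the flipped event $\gamma_j$ itself, the event-sum over the remaining events, and the compensator term. Write $\widehat\lambda^r=\widehat\lambda^r_0+\widehat\lambda^r_1$ and $\Delta\widehat\lambda^r=\widehat\lambda^{r'}-\widehat\lambda^r$. By \eqref{eq:cdll-restricted},
\[
\ell_{r'}^{[u_m]}-\ell_r^{[u_m]}
=\sum_{i:\,t_i\le u_m}\Big(\log\widehat\lambda^{r'}_{r'_i}(\gamma_i;\theta)-\log\widehat\lambda^{r}_{r_i}(\gamma_i;\theta)\Big)
-\int_{D^{(m)}}\Delta\widehat\lambda^r(\tau;\theta)\,\dd\tau .
\]

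\emph{The flipped event.} The label-induced intensities at time $t$ use only the past labels $r_{1:N(t-)}$. Since $N(t_j-)=j-1$ by simplicity (Lemma~\ref{lem:sel-predictable}), the intensities at $\gamma_j$ see only the first $j-1$ labels. These agree for $r$ and $r'$, so $\widehat\lambda^{r'}_k(\gamma_j)=\widehat\lambda^{r}_k(\gamma_j)$ for both $k$. The $i=j$ summand is therefore $\log\widehat\lambda^r_{1-r_j}(\gamma_j)-\log\widehat\lambda^r_{r_j}(\gamma_j)=\sigma_j\tilde s_\theta(\zeta;r)$. By the same equality, the $i=j$ contribution to $\mathcal R_{\rm sel}$ vanishes, and so does the $i=j$ atom of $\log(\widehat\lambda^{r'}/\widehat\lambda^r)$ integrated against $\dd N$.

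\emph{The remaining events.} For $i\ne j$ we have $r'_i=r_i$. I would add and subtract the total-intensity log-ratio:
\[
\log\widehat\lambda^{r'}_{r_i}-\log\widehat\lambda^r_{r_i}
=\big(\log\widehat\lambda^{r'}-\log\widehat\lambda^{r}\big)(\gamma_i)+[\text{summand of }\mathcal R_{\rm sel}].
\]
Summing over $i$, the event part becomes $\sigma_j\tilde s_\theta(\zeta;r)+\mathcal R_{\rm sel}+\int_{D^{(m)}}\log(\widehat\lambda^{r'}/\widehat\lambda^r)\,\dd N$.

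\emph{Interpolation.} Next I would use the interpolation identity
\[
\log(a+\Delta)-\log a=\int_0^1\frac{\Delta}{a+vΔ}\,\dd v .
\]
Its denominator is $(1-v)\widehat\lambda^r+v\widehat\lambda^{r'}\ge\underline\mu_\Sigma$ by Assumption~\ref{ass:G1}, so the integrand is bounded by $B_\infty/\underline\mu_\Sigma$ using Assumption~\ref{ass:G3}. Substituting $\dd N=\dd M^\star+\lambda^\star\dd\tau$ gives a martingale part and a drift part. The drift part combined with the compensator is
\[
\int_0^1\!\!\int\frac{\Delta}{a+v\Delta}\lambda^\star\,\dd\tau\,\dd v-\int\Delta\,\dd\tau
=\int_0^1\!\!\int\frac{\Delta}{a+v\Delta}(\lambda^\star-a)\,\dd\tau\,\dd v+\int_0^1\!\!\int\Delta\Big(\frac{a}{a+v\Delta}-1\Big)\dd\tau\,\dd v .
\]
This equals $\mathcal D_1-\mathcal D_2$, because $a/(a+v\Delta)-1=-v\Delta/(a+v\Delta)$. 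The martingale part is $\int_{D^{(m)}}\big(\int_0^1\Delta/(a+v\Delta)\,\dd v\big)\,\dd M^\star$. Exchanging the order of integration via Lemma~\ref{lem:stoch-fubini} turns it into $\mathcal M_1$. The inner $\dd N$ and $\dd\tau$ integrals can also be exchanged with $\int_0^1\dd v$ by ordinary Fubini, since the integrand is bounded on a finite window with finitely many events.

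\emph{Main obstacle.} The algebra is purely deterministic and pathwise, so the delicate step is justifying Lemma~\ref{lem:stoch-fubini}. This needs $H(\tau,v)=\Delta\widehat\lambda^r/(\widehat\lambda^r+v\Delta\widehat\lambda^r)$ to be $\mathcal G$-predictable on $D^{(m)}$ for each $v$, jointly measurable in $(\tau,v)$, and uniformly bounded. Predictability follows from Assumption~\ref{ass:G0pred} for $r$ and from Lemma~\ref{lem:singleflip-nonant} for $r'$. Square-integrability follows from the bound $B_\infty/\underline\mu_\Sigma$ together with $\int\lambda^\star\le K_{\rm win}|D|$ on $\Omega_{n_*}(L,R)$. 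Throughout I must respect the pre-jump convention $\lambda(\gamma_i)=\lambda(t_i-,x_i)$ so that the equality at $\gamma_j$ holds exactly.

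The quantitative bounds stated after the identity would then follow by applying Lemma~\ref{lem:Freedman} to $\mathcal M_1$ and to the martingale piece of $\mathcal R_{\rm sel}$ on the master event. The drift terms $\mathcal D_1$ and $\mathcal D_2$ would be bounded by $B_1$, $B_2$ and $K_{\rm win}$.
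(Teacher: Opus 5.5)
Your proof of the identity is correct and is essentially the paper's proof. Your eventwise add-and-subtract of $\log\widehat\lambda^{r'}-\log\widehat\lambda^{r}$ is the paper's split of $\ell_r^{[u_m]}$ into a total-intensity term and a selection term; it is then followed by the same interpolation identity, the substitution $\dd N=\dd M^\star+\lambda^\star\,\dd\tau$ giving $\mathcal M_1+\mathcal D_1-\mathcal D_2$, and the same predictability argument that $\widehat\lambda^{r'}_k(\gamma_j)=\widehat\lambda^{r}_k(\gamma_j)$, with the $v$-interchange simply run in the other direction. One small point for the follow-up bounds: the paper does not split $\mathcal R_{\rm sel}$ itself into martingale plus drift, because for random iterate labellings $\mathrm{sel}_r$ need not be $\mathcal G$-predictable; it first dominates $|\mathcal R_{\rm sel}|\le\int h\,\dd N$, with a predictable $h$ that does not involve the current label, and then applies Freedman's inequality to $\int h\,\dd M^\star$.
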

\begin{proof}
Fix two labellings $r,r'$ that differ at exactly one index $j$ (the ``flipped'' event), and write
\[
\Delta\widehat\lambda_k^r(\tau):=\widehat\lambda_k^{r'}(\tau)-\widehat\lambda_k^{r}(\tau),\qquad
\Delta\widehat\lambda^{r}(\tau):=\widehat\lambda^{r'}(\tau)-\widehat\lambda^{r}(\tau),
\]
where $\widehat\lambda^{r}=\widehat\lambda_0^{r}+\widehat\lambda_1^{r}$. For the remainder of this proof, we write $D:=D^{(m)}$ for brevity. For $v\in[0,1]$ also define the interpolated intensity
\[
\widehat\lambda^{r,v}(\tau):=\widehat\lambda^{r}(\tau)+v\,\Delta\widehat\lambda^{r}(\tau).
\]

Using the decomposition
\[
\sum_{k\in\{0,1\}}\int_D \log\widehat\lambda_k^r(\tau)\,dN_k^r(\tau)
=
\int_D \log\widehat\lambda^{r}(\tau)\,dN(\tau)
+
\sum_{k\in\{0,1\}}\int_D \log\!\Big(\frac{\widehat\lambda_k^r(\tau)}{\widehat\lambda^{r}(\tau)}\Big)\,dN_k^r(\tau),
\]
we may write the objective (restricted to $D$) as a sum of a total-intensity log-likelihood term and a selection term. Hence
\[
\ell_{r'}^{[u_m]}(\theta)-\ell_r^{[u_m]}(\theta)
=
\Big[\mathcal L_{\mathrm{tot}}(r')-\mathcal L_{\mathrm{tot}}(r)\Big]
+
\Big[\mathcal L_{\mathrm{sel}}(r')-\mathcal L_{\mathrm{sel}}(r)\Big],
\]
where $\mathcal L_{\mathrm{tot}}(r):=\int_D \log\widehat\lambda^{r}\,dN-\int_D \widehat\lambda^{r}\,d\tau$.

For the log term, apply the identity
\[
\log a-\log b
=
(a-b)\int_0^1\frac{1}{b+v(a-b)}\,dv
\qquad(a,b>0),
\]
pointwise with $a=\widehat\lambda^{r'}(\tau)$ and $b=\widehat\lambda^{r}(\tau)$ to obtain
\[
\log\widehat\lambda^{r'}(\tau)-\log\widehat\lambda^{r}(\tau)
=
\int_0^1 \frac{\Delta\widehat\lambda^{r}(\tau)}{\widehat\lambda^{r,v}(\tau)}\,dv.
\]
Therefore,
\begin{align*}
\mathcal L_{\mathrm{tot}}(r')-\mathcal L_{\mathrm{tot}}(r)
&=
\int_D\big(\log\widehat\lambda^{r'}-\log\widehat\lambda^{r}\big)\,dN
-
\int_D \Delta\widehat\lambda^{r}\,d\tau\\
&=
\int_0^1\int_D \frac{\Delta\widehat\lambda^{r}(\tau)}{\widehat\lambda^{r,v}(\tau)}\,dN(\tau)\,dv
-
\int_D \Delta\widehat\lambda^{r}(\tau)\,d\tau.
\end{align*}
Now decompose $dN=dM^\star+\lambda^\star(\tau)\,d\tau$ to get
\begin{align*}
\mathcal L_{\mathrm{tot}}(r')-\mathcal L_{\mathrm{tot}}(r)
&=
\underbrace{\int_0^1\int_D \frac{\Delta\widehat\lambda^{r}(\tau)}{\widehat\lambda^{r,v}(\tau)}\,dM^\star(\tau)\,dv}_{:=\,\mathcal M_1}
\\&\quad+
\int_0^1\int_D \frac{\Delta\widehat\lambda^{r}(\tau)}{\widehat\lambda^{r,v}(\tau)}\,\lambda^\star(\tau)\,d\tau\,dv
-
\int_D \Delta\widehat\lambda^{r}(\tau)\,d\tau\\
&=
\mathcal M_1
+
\int_0^1\int_D \frac{\Delta\widehat\lambda^{r}(\tau)}{\widehat\lambda^{r,v}(\tau)}\big(\lambda^\star(\tau)-\widehat\lambda^{r,v}(\tau)\big)\,d\tau\,dv.
\end{align*}
By Lemma~\ref{lem:stoch-fubini} (stochastic Fubini), we may interchange the $v$-integration and the martingale integral and write
\[
\mathcal M_1=\int_D\left(\int_0^1 \frac{\Delta\widehat\lambda^{r}(\tau)}{\widehat\lambda^{r,v}(\tau)}\,dv\right)\,dM^\star(\tau),
\]
so $\mathcal M_1$ is a well-defined martingale integral.

Using $\lambda^\star-\widehat\lambda^{r,v}=(\lambda^\star-\widehat\lambda^{r})-v\,\Delta\widehat\lambda^{r}$, the last integral becomes
\begin{align*}
&\int_0^1\int_D \frac{\Delta\widehat\lambda^{r}(\tau)\big(\lambda^\star(\tau)-\widehat\lambda^{r}(\tau)\big)}{\widehat\lambda^{r,v}(\tau)}\,d\tau\,dv
-
\int_0^1\int_D \frac{v\big(\Delta\widehat\lambda^{r}(\tau)\big)^2}{\widehat\lambda^{r,v}(\tau)}\,d\tau\,dv\\
&\qquad=: \mathcal D_1-\mathcal D_2.
\end{align*}
Thus $\mathcal L_{\mathrm{tot}}(r')-\mathcal L_{\mathrm{tot}}(r)=\mathcal M_1+\mathcal D_1-\mathcal D_2$, matching the lemma's definitions.

\medskip
We now examine the selection part and the single-flip term. Write the selection term as an event-sum:
\[
\mathcal L_{\mathrm{sel}}(r)
 =
 \sum_{k\in\{0,1\}}\int_D \log\!\Big(\frac{\widehat\lambda_k^{r}(\tau)}{\widehat\lambda^{r}(\tau)}\Big)\,dN_k^r(\tau)
 =
 \sum_{i:\,\gamma_i\in D} \log\!\Big(\frac{\widehat\lambda_{r_i}^{r}(\gamma_i)}{\widehat\lambda^{r}(\gamma_i)}\Big).
\]
Since $r$ and $r'$ differ only at $j$, split the difference into the contribution from $i\neq j$ and the contribution at index $j$:
\begin{align*}
\mathcal L_{\mathrm{sel}}(r')-\mathcal L_{\mathrm{sel}}(r)
&=
\sum_{\substack{i:\,\gamma_i\in D\\ i\neq j}}
\Bigg[
\log\!\Big(\frac{\widehat\lambda_{r_i}^{r'}(\gamma_i)}{\widehat\lambda^{r'}(\gamma_i)}\Big)
-
\log\!\Big(\frac{\widehat\lambda_{r_i}^{r}(\gamma_i)}{\widehat\lambda^{r}(\gamma_i)}\Big)
\Bigg]\\
&\quad+
\Bigg[
\log\!\Big(\frac{\widehat\lambda_{r'_{j}}^{r'}(\gamma_{j})}{\widehat\lambda^{r'}(\gamma_{j})}\Big)
-
\log\!\Big(\frac{\widehat\lambda_{r_{j}}^{r}(\gamma_{j})}{\widehat\lambda^{r}(\gamma_{j})}\Big)
\Bigg].
\end{align*}
By definition, the first sum is exactly the remainder term $\mathcal R_{\mathrm{sel}}$ in the lemma.

For the single flipped event, use predictability: at time $\gamma_{j}$, the pre-$t_j$ histories under $r$ and $r'$ coincide, so $\widehat\lambda_k^{r'}(\gamma_{j})=\widehat\lambda_k^{r}(\gamma_{j})$ for $k\in\{0,1\}$, and likewise $\widehat\lambda^{r'}(\gamma_{j})=\widehat\lambda^{r}(\gamma_{j})$. Therefore the bracket simplifies to
\[
\log\!\Big(\frac{\widehat\lambda_{r'_j}^{\,r}(\gamma_j;\theta)}{\widehat\lambda_{r_j}^{\,r}(\gamma_j;\theta)}\Big)
\;=\;
\sigma_j\,\log\!\Big(\frac{\widehat\lambda_1^{\,r}(\gamma_j;\theta)}{\widehat\lambda_0^{\,r}(\gamma_j;\theta)}\Big)
\;=\;
\sigma_j\,\tilde s_\theta(\zeta;r),
\]
where $\sigma_j=+1$ for a $0\to1$ flip and $\sigma_j=-1$ for a $1\to0$ flip (as defined in the lemma), and $\tilde s_\theta(\zeta;r)$ is the log-ratio score at event $\zeta$.

Combining Step 1 and Step 2 yields the claimed per-flip identity.
\end{proof}

On the window envelope event $\Omega_{n_*}(L,R)$ from Assumption~\ref{ass:G2prime} (and hence on the master event of Lemma~\ref{lem:master}), the denominators in $\mathcal D_1,\mathcal D_2$ are bounded below by $\underline\mu_\Sigma$ by Assumption~\ref{ass:G1}, while the numerators are controlled by the single flip locality bounds in Assumption~\ref{ass:G3}. A direct computation gives
\begin{align}
|\mathcal D_1|
&\le\frac{2B_1}{\underline\mu_\Sigma} K_{\rm win}, \notag\\
\mathcal D_2
&\le\frac{B_2}{2}. \notag
\end{align}
Therefore, on $\Omega_{n_*}(L,R)$ (and hence on the master event of Lemma~\ref{lem:master}), the drift terms satisfy
\begin{align}
|\mathcal D_1| \le \frac{2B_1}{\underline\mu_\Sigma} K_{\rm win},\qquad
\mathcal D_2 \le \frac{B_2}{2}.
\label{eq:drift-bnds}
\end{align}

Using $| \log a-\log b|\le |a-b|/\min\{a,b\}$, Assumption~\ref{ass:G1}, and $|\Delta\widehat\lambda_{\mathrm{sel}_r}|\le |\Delta\widehat\lambda_0^r|+|\Delta\widehat\lambda_1^r|$ at event times, we have the pathwise domination
\[
|\mathcal R_{\rm sel}|
\ \le\ \int_D h(\tau)\,\dd N(\tau),
\qquad
h(\tau):=\frac{|\Delta\widehat\lambda^r(\tau)|}{\underline\mu_\Sigma}+\frac{|\Delta\widehat\lambda_0^r(\tau)|+|\Delta\widehat\lambda_1^r(\tau)|}{\underline\mu_{\min}}.
\]
Since $h$ depends only on past labels through $\Delta\widehat\lambda_k^r$, it is $\mathcal G$-predictable under Assumption~\ref{ass:G0pred}. Decompose
\[
\int_D h\,\dd N=\underbrace{\int_D h\,\dd M^\star}_{=:\ \mathcal M_{\rm sel}}+\underbrace{\int_D h(\tau)\lambda^\star(\tau)\dd\tau}_{=:\ \mathcal D_{\rm sel}}.
\]
On $\Omega_{n_*}(L,R)$, $\|\lambda^\star\|_\infty\le K_{\rm win}$, hence
\[
|\mathcal D_{\rm sel}|
\ \le\ K_{\rm win}\int_D h
\ \le\ \left(\frac{B_1}{\underline\mu_\Sigma}+\frac{B_1^\Sigma}{\underline\mu_{\min}}\right)K_{\rm win}.
\]
Moreover, $\|h\|_\infty\le H_{\rm sel}$ with
\[
H_{\rm sel}:=\frac{B_\infty}{\underline\mu_\Sigma}+\frac{2B_\infty^{\rm comp}}{\underline\mu_{\min}}.
\]

Applying \cref{lem:Freedman} to $\mathcal M_{\rm sel}$ gives
\begin{align}
|\mathcal M_1|
&\le \sqrt{ \frac{2B_2}{\underline\mu_\Sigma} K_{\rm win} z_{\rm Fr}\ +\ \frac{B_\infty^2}{9\underline\mu_\Sigma^2} z_{\rm Fr}^2 }\ +\ \frac{B_\infty}{3\underline\mu_\Sigma} z_{\rm Fr}
\ \le\ \sqrt{\frac{2B_2}{\underline\mu_\Sigma} K_{\rm win} z_{\rm Fr}}\ +\ \frac{2B_\infty}{3\underline\mu_\Sigma} z_{\rm Fr},\label{eq:mtg1-bnd}
\end{align}
where the final inequality uses $\sqrt{a+b}\le \sqrt a+\sqrt b$, producing an additional $(B_\infty/(3\underline\mu_\Sigma))z_{\rm Fr}$ term and hence the coefficient $2/3$ on the linear-in-$z_{\rm Fr}$ part.

Similarly,
\begin{align}
|\mathcal M_{\rm sel}|
&\le \sqrt{ 2K_{\rm win}\!\left(2\frac{B_2}{\underline\mu_\Sigma} + 4\frac{B_2^\Sigma\,\underline\mu_\Sigma}{\underline\mu_{\min}^2}\right) z_{\rm Fr}\ +\ \frac{H_{\rm sel}^2}{9}z_{\rm Fr}^2 }\ +\ \frac{H_{\rm sel}}{3}z_{\rm Fr}\notag\\
&\le 2\sqrt{\left(\frac{B_2}{\underline\mu_\Sigma} + 2\frac{B_2^\Sigma\,\underline\mu_\Sigma}{\underline\mu_{\min}^2}\right)K_{\rm win} z_{\rm Fr}}\ +\ \frac{2}{3}\left(\frac{B_\infty}{\underline\mu_\Sigma} + \frac{2B_\infty^{\rm comp}}{\underline\mu_{\min}}\right)z_{\rm Fr}.
\label{eq:mtgsel-bnd}
\end{align}
Consequently, setting
\[
C_{\rm dr,0}:=\frac{B_2}{2},\qquad
C_{\rm dr,1}:=\frac{3B_1}{\underline\mu_\Sigma}+\frac{B_1^{\Sigma}}{\underline\mu_{\min}},
\]
and
\[
C_{\rm mtg,1}:=\sqrt{\frac{2B_2}{\underline\mu_\Sigma}} + 2\sqrt{\frac{B_2}{\underline\mu_\Sigma}+2\frac{B_2^{\Sigma}\,\underline\mu_\Sigma}{\underline\mu_{\min}^2}},\qquad
C_{\rm mtg,2}:=\frac{4}{3}\!\left(\frac{B_\infty}{\underline\mu_\Sigma}+\frac{B_\infty^{\rm comp}}{\underline\mu_{\min}}\right),
\]
we have the uniform per-flip bound
\begin{equation}\label{eq:perflip-bound}
\Big|\ell_{r'}^{[u_m]}(\theta)-\ell_r^{[u_m]}(\theta)-\sigma_j\,\tilde s_\theta(\zeta;r)\Big|
\ \le\ C_{\rm dr,0}\ +\ C_{\rm dr,1} K_{\rm win}\ +\ C_{\rm mtg,1}\sqrt{K_{\rm win} z_{\rm Fr}}\ +\ C_{\rm mtg,2} z_{\rm Fr}.
\end{equation}
Hence, if the flip at $\zeta\in S_\theta^\pm(b)$ creates a minority, then
\[
\mathcal J_\alpha^{[u_m]}(r',\theta)-\mathcal J_\alpha^{[u_m]}(r,\theta)\ \le\ (C_{\rm dr,1}-\alpha) K_{\rm win}\ +\ \Big(C_{\rm dr,0}+C_{\rm mtg,1}\sqrt{K_{\rm win}z_{\rm Fr}}+C_{\rm mtg,2}z_{\rm Fr}-(b-\Delta_s)\Big).
\]
On \(E^{\rm align}_{|D|}\), Assumption~\ref{ass:G4} implies that for \(\zeta\in S_\theta^+(b)\) we have \(\tilde s_\theta(\zeta;r)\ge b-\Delta_s\), and for \(\zeta\in S_\theta^-(b)\) we have \(\tilde s_\theta(\zeta;r)\le -b+\Delta_s\).If it eliminates a minority, then
\[
\mathcal J_\alpha^{[u_m]}(r',\theta)-\mathcal J_\alpha^{[u_m]}(r,\theta)\ \ge\ (\alpha-C_{\rm dr,1}) K_{\rm win}\ -\ \big(C_{\rm dr,0}+C_{\rm mtg,1}\sqrt{K_{\rm win}z_{\rm Fr}}+C_{\rm mtg,2}z_{\rm Fr}\big)\ +\ (b-\Delta_s).
\]
Therefore the following sufficient condition ensures elimination and non-creation of minorities on $S_\theta^\pm(b)$:
\begin{align}
&(\alpha-C_{\rm dr,1}) K_{\rm win}\ \ge\ C_{\rm mtg,1}\sqrt{K_{\rm win}z_{\rm Fr}}+C_{\rm mtg,2}z_{\rm Fr}+\kappa_b,\qquad
b \ge \Delta_s + C_{\rm dr,0} - \kappa_b,\label{eq:EsufficientB}
\end{align}
for some \(\kappa_b\in[0,C_{\rm dr,0}]\), where \(\Delta_s=\Delta_s(|D|)\) is the alignment slack from Assumption~\ref{ass:G4}, written in shorthand on \(E^{\rm align}_{|D|}\); it is independent of \(b\). Note that the greedy scan \eqref{eq:greedy-estep} eliminates minorities on decisive sets under the weaker condition \(b+\alpha K_{\rm win}\ge \Delta_s\) (Lemma~\ref{lem:E-elim}); \eqref{eq:EsufficientB} is tailored to per-flip comparisons of \(\mathcal J_\alpha^{[u_m]}\) that control future effects.

\begin{remark}[Penalty scaling]\label{rem:alpha-scaling}
On the master event, $K_{\rm win}=O(\log|D|)$ by Assumption~\ref{ass:G2prime}, and with fixed parameter dimension we have $z_{\rm Fr}=\Theta(\log|D|)$ by \eqref{eq:z-def}. Hence the right-hand side of \eqref{eq:EsufficientB} is of order $\log|D|$ (up to polylog factors through $K_{\rm win}$). The effective penalty in \eqref{eq:penobj} is $(\alpha-C_{\rm dr,1})K_{\rm win}$, so whenever $K_{\rm win}$ grows at least on the order of $\log|D|$ (the typical feedback regime captured by $C_{\rm loc}>0$ in \cref{ass:G2prime}), there exists a finite $\alpha_0>C_{\rm dr,1}$ (depending only on the model constants and on the asymptotic ratio $z_{\rm Fr}/K_{\rm win}$) such that any fixed $\alpha\ge\alpha_0$ makes
\[
(\alpha-C_{\rm dr,1})K_{\rm win}
\ge C_{\rm mtg,1}\sqrt{K_{\rm win} z_{\rm Fr}}+C_{\rm mtg,2}z_{\rm Fr}+\kappa_b
\]
for all sufficiently large $|D|$. In other words, in this common regime $\alpha$ does not need to grow with $|D|$.
\end{remark}

\begin{lemma}[Greedy E-step eliminates minorities]\label{lem:E-elim}
Work on the event \(E^{\rm align}_{|D|}\) of Assumption~\ref{ass:G4}. Suppose moreover that
\begin{equation}\label{eq:Eelim-cond}
b+\alpha K_{\rm win}\ \ge\ \Delta_s
\qquad\text{(equivalently, }\alpha K_{\rm win}\ge \Delta_s-b\text{)}.
\end{equation}
If ties in \eqref{eq:greedy-estep} at decisive events are broken in favor of the non-minority label, then at update $m$ the greedy E-step rule \eqref{eq:greedy-estep} assigns no minority labels among events in the processed block $D_{m+1}$:
for every event index $i$ with $u_m<t_i\le u_{m+1}$,
\[
\gamma_i\in S_{\theta^{(m)}}^+(b)\ \Rightarrow\ r_i^{(m+1)}=1,
\qquad
\gamma_i\in S_{\theta^{(m)}}^-(b)\ \Rightarrow\ r_i^{(m+1)}=0.
\]
\end{lemma}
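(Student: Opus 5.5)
The plan is a direct case analysis of the greedy rule \eqref{eq:greedy-estep} at a single decisive event, repeated sequentially over the block. Fix $m$ and an event $\gamma_i$ with $u_m<t_i\le u_{m+1}$. When $\gamma_i$ is processed, the labels $r^{(m+1)}_{1:i-1}$ are already fixed. By predictability, the label-induced intensities $\widehat\lambda_k^{r^{(m+1)}}(\gamma_i;\theta^{(m)})$ depend only on these past labels, never on $r_i$ itself. So the comparison between $k=0$ and $k=1$ is a comparison of two explicit numbers, and label $1$ is chosen strictly (or by the tie rule) whenever
\[
\tilde s_{\theta^{(m)}}(\gamma_i;r^{(m+1)})+\alpha K_{\rm win}\bigl(\1\{0=\mathrm{minority}\}-\1\{1=\mathrm{minority}\}\bigr)\ \ge\ 0,
\]
with equality resolved by the tie-breaking convention. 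The objects being compared are, for $k=1$, the log-intensity $\log\widehat\lambda_1$ minus the penalty if $1$ is the minority, and for $k=0$ the analogous expression.

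The first case is $\gamma_i\in S^+_{\theta^{(m)}}(b)$. Here the minority label is $0$, so the $k=0$ option incurs the penalty $\alpha K_{\rm win}$ and the $k=1$ option does not. Label $1$ is weakly preferred iff $\tilde s_{\theta^{(m)}}(\gamma_i;r^{(m+1)})+\alpha K_{\rm win}\ge 0$.

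I then apply Assumption~\ref{ass:G4} on $E^{\rm align}_{|D|}$, via the remark following it, at $\theta=\theta^{(m)}$ and $r=r^{(m+1)}$. This gives $\tilde s\ge b-\Delta_s$, so the left side is at least $b+\alpha K_{\rm win}-\Delta_s\ge 0$ by \eqref{eq:Eelim-cond}. If the inequality is strict, $r_i^{(m+1)}=1$. If there is equality, the tie rule selects the non-minority label, which is again $1$.

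The case $\gamma_i\in S^-_{\theta^{(m)}}(b)$ is symmetric. The bound $\tilde s\le -b+\Delta_s$ gives $\log\widehat\lambda_0-\log\widehat\lambda_1+\alpha K_{\rm win}\ge 0$, so $r_i^{(m+1)}=0$.

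The main obstacle is a technical point about applying Assumption~\ref{ass:G4}. It is stated uniformly over $\theta\in\Theta_\circ$ but only over \emph{deterministic} labellings, whereas here both $\theta^{(m)}$ and $r^{(m+1)}$ are random. I would handle this by noting three facts:
\begin{itemize}
\item[] the bound holds on $E^{\rm align}_{|D|}$ simultaneously for every deterministic $r$ and every $\theta$;
\item[] the value $\tilde s_\theta(\gamma_i;r)$ depends on $r$ only through the finite prefix $r_{1:i-1}$;
\item[] for each fixed $\omega$ in the event, one may therefore substitute the realised prefix $r^{(m+1)}_{1:i-1}(\omega)$ as a deterministic sequence and the realised $\theta^{(m)}(\omega)\in\Theta_{\rm sc}\subseteq\Theta_\circ$.
\end{itemize}
The "a.e.\ $\tau$" qualifier is the remaining subtlety, since the bound is needed at event points. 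I would treat it as the paper does, reading the intensities as left-continuous predictable versions evaluated at $\gamma_i$, so that the alignment bound holds at event locations. Finally, no induction over the block is required beyond this: each event's decision uses only the alignment bound, and the result holds regardless of the earlier labels in the block.
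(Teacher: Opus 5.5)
Your proposal is correct and matches the paper's proof. Both use the same two-case comparison of the greedy scores, the alignment bound \(\tilde s_{\theta^{(m)}}(\gamma_i;r^{(m+1)})\ge b-\Delta_s\) (resp.\ \(\le -b+\Delta_s\)) combined with \eqref{eq:Eelim-cond}, and the tie-breaking rule at equality. Your extra remarks on applying Assumption~\ref{ass:G4} pathwise to the random prefix \(r^{(m+1)}_{1:i-1}\) and random \(\theta^{(m)}\), and on the a.e.-\(\tau\) qualifier at the measure-zero event points, cover technicalities the paper's proof passes over silently.
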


\begin{proof}
Fix an event index $i$ with $u_m<t_i\le u_{m+1}$ and $\gamma_i\in S^+_{\theta^{(m)}}(b)$. The minority label is $0$, hence the greedy score difference between choosing $1$ and choosing $0$ equals
\[
\Big(\log \widehat\lambda_{1}^{r^{(m+1)}}(\gamma_i;\theta^{(m)})\Big)
-\Big(\log \widehat\lambda_{0}^{r^{(m+1)}}(\gamma_i;\theta^{(m)})-\alpha K_{\rm win}\Big)
=\tilde s_{\theta^{(m)}}(\gamma_i;r^{(m+1)})+\alpha K_{\rm win}.
\]

By Assumption~\ref{ass:G4}, $\tilde s_{\theta^{(m)}}(\gamma_i;r^{(m+1)})\ge b-\Delta_s$, hence
\[
\tilde s_{\theta^{(m)}}(\gamma_i;r^{(m+1)})+\alpha K_{\rm win}\ \ge\ b-\Delta_s+\alpha K_{\rm win}\ \ge\ 0
\]
by \eqref{eq:Eelim-cond}. Therefore the non-minority label is chosen (ties broken toward the non-minority label). If $\gamma_i\in S^-_{\theta^{(m)}}(b)$ then $\tilde s_{\theta^{(m)}}(\gamma_i;r^{(m+1)})\le -b+\Delta_s$, so
\[
(-\tilde s_{\theta^{(m)}}(\gamma_i;r^{(m+1)}))+\alpha K_{\rm win}
\ \ge\ b-\Delta_s+\alpha K_{\rm win}\ \ge\ 0,
\]
and the greedy rule chooses label $0$ (ties broken toward the non-minority label). Hence no minorities are assigned on
$S^+_{\theta^{(m)}}(b)\cup S^-_{\theta^{(m)}}(b)$ within the processed block.
\end{proof}

\begin{remark}[Interpretation under non-revisiting labels]\label{rem:no-revisit-meaning}
Under Assumption~\ref{ass:G0pred}(b) labels are never revised, so Lemma~\ref{lem:E-elim} only asserts that the greedy rule assigns no minority labels  among the newly processed events in $D_{m+1}$  relative to the current parameter $\theta^{(m)}$. Previously assigned labels (in earlier blocks) may become minority relative to later iterates $\theta^{(\ell)}$, but the analysis does not require re-elimination: it controls the blockwise mislabelling increments and propagates them to prefix error via the averaging identity in Lemma~\ref{lem:avg-hamming}.
\end{remark}

\subsection{Decisive-set stability and E-step Hamming bound}\label{sec:Estep}

\begin{lemma}[Decisive-set stability on the next processed block]\label{lem:set-stab}
Let $B_\theta$ be as in \cref{def:Btheta}. Fix $m\in\{0,\dots,\Tmax-1\}$ and consider the next block
$D_{m+1}=(u_m,u_{m+1}]\times\mathcal S$.
On $E^{\rm Lip}_{|D|}\cap E^{\rm mass}_{|D|}$ (with $E^{\rm mass}_{|D|}$ including the compensator deviation for
$B_{\theta^{(m)}}\cap D_{m+1}$ as in Assumption~\ref{ass:G8}) we have
\[
N(B_{\theta^{(m)}}\cap D_{m+1})\ \le\ \frac{L_s}{b}\,|D_{m+1}|\,\|\theta^{(m)}-\theta^\star\|
\ +\ N\!\big(A_{\theta^\star}(2b)\cap D_{m+1}\big)\ +\ 2 C_{\rm it}\sqrt{K_{\rm win}|D_{m+1}|\log|D|}.
\]
\end{lemma}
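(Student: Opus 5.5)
We split $B_{\theta}\cap D_{m+1}$, with $\theta:=\theta^{(m)}$, into the part inside the oracle ambiguous band and the part outside it. Write
\[
N(B_\theta\cap D_{m+1})\le N\big(A_{\theta^\star}(2b)\cap D_{m+1}\big)+N\big((B_\theta\setminus A_{\theta^\star}(2b))\cap D_{m+1}\big).
\]
The first term already appears in the claimed bound.

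For the second term, the key observation is pointwise. Suppose $\tau\in B_\theta$ with $|s^{\rm or}_{\theta^\star}(\tau)|>2b$. Then $\tau$ lies in exactly one of $S^\pm_{\theta^\star}(b)$, and it lies in $S^+_{\theta^\star}(b)\Delta S^+_\theta(b)$ or $S^-_{\theta^\star}(b)\Delta S^-_\theta(b)$. We check the case $s^{\rm or}_{\theta^\star}(\tau)>2b$. Membership in the symmetric difference then forces $\tau\notin S^+_\theta(b)$, so $s^{\rm or}_\theta(\tau)<b$. Alternatively it forces $\tau\in S^-_\theta(b)$, so $s^{\rm or}_\theta(\tau)\le -b$. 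Either way $|s^{\rm or}_\theta(\tau)-s^{\rm or}_{\theta^\star}(\tau)|\ge b$, and the other sign case is symmetric. Hence
\[
\1\{\tau\in B_\theta\setminus A_{\theta^\star}(2b)\}\le b^{-1}|s^{\rm or}_\theta(\tau)-s^{\rm or}_{\theta^\star}(\tau)|.
\]
This is a Markov-type indicator bound. Integrating it against $\lambda^\star$ over $D_{m+1}$ and applying the blockwise form of Assumption~\ref{ass:G5} gives
\[
\int_{(B_\theta\setminus A_{\theta^\star}(2b))\cap D_{m+1}}\lambda^\star\,\dd\tau\le \frac{L_s}{b}|D_{m+1}|\,\|\theta-\theta^\star\|.
\]
This step needs $\theta^{(m)}\in\Theta_{\rm sc}$, which holds because the M-step is localised and the warm start holds for $m=0$. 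The block in Assumption~\ref{ass:G5} is indexed as $D_m$ for $m\ge1$, so here we use it for block $m+1$.

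It remains to convert this compensator bound into a count bound. We decompose $N(S)=\int_S\lambda^\star\,\dd\tau+M^\star(S)$. For the full disagreement set, Assumption~\ref{ass:G8}(ii)(b) on $E^{\rm comp}_{|D|}$ gives
\[
\Big|N(B_\theta\cap D_{m+1})-\int_{B_\theta\cap D_{m+1}}\lambda^\star\Big|\le C_{\rm it}\sqrt{K_{\rm win}|D_{m+1}|\log|D|}.
\]
We then bound $\int_{B_\theta\cap D_{m+1}}\lambda^\star$ by the outside-band compensator plus $\int_{A_{\theta^\star}(2b)\cap D_{m+1}}\lambda^\star$. This last integral is converted back to $N(A_{\theta^\star}(2b)\cap D_{m+1})$ using the fixed-oracle-set compensator deviation restricted to $D_{m+1}$, which is included in $E^{\rm comp}_{|D|}$ by the restriction clause of Assumption~\ref{ass:G8}. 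That conversion costs a second $C_{\rm it}\sqrt{K_{\rm win}|D_{m+1}|\log|D|}$. The two deviations together give the factor $2C_{\rm it}$.

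The main obstacle is legitimacy of the concentration step, because $B_{\theta^{(m)}}$ is data-dependent. This is why we route everything through the compensator deviation that Assumption~\ref{ass:G8}(b) explicitly posits for $B_{\theta^{(m)}}\cap D_{m+1}$. That deviation is justified by predictability: $\theta^{(m)}$ is $\mathcal G_{u_m}$-measurable, so $\1_{B_{\theta^{(m)}}}$ is $\mathcal G$-predictable on $(u_m,u_{m+1}]$. We do not attempt a separate concentration argument for the outside-band piece.
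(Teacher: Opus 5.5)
Your proposal is correct and follows essentially the same route as the paper. Both arguments use the pointwise inclusion $\1_{B_{\theta^{(m)}}}\le \1_{A_{\theta^\star}(2b)}+\1_{\{|s^{\rm or}_{\theta^\star}-s^{\rm or}_{\theta^{(m)}}|\ge b\}}$, integrate the resulting Markov bound against $\lambda^\star$ using the blockwise form of Assumption~\ref{ass:G5}, and then apply two compensator deviations, an upper one for $B_{\theta^{(m)}}\cap D_{m+1}$ and a lower one for $A_{\theta^\star}(2b)\cap D_{m+1}$, which produce the factor $2C_{\rm it}$.

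Your opening count split $N(B_\theta\cap D_{m+1})\le N(A_{\theta^\star}(2b)\cap D_{m+1})+N((B_\theta\setminus A_{\theta^\star}(2b))\cap D_{m+1})$ is never used and can be dropped. Your explicit check that $\theta^{(m)}\in\Theta_{\rm sc}$, which Assumption~\ref{ass:G5} requires, is a point the paper leaves implicit.
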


\begin{proof}
Let $\Delta_{\rm LLR}:=s_{\theta^\star}^{\rm or}-s_{\theta^{(m)}}^{\rm or}$.
The symmetric-difference inclusions yield the pointwise bound on $D$:
\[
\1_{B_{\theta^{(m)}}}
\ \le\ 
\1_{A_{\theta^\star}(2b)} + \1_{\{|\Delta_{\rm LLR}|\ge b\}}.
\]
Intersect with $W$ and multiply by $\lambda^\star$ to obtain
\[
\int_{B_{\theta^{(m)}}\cap W}\lambda^\star(\tau)\dd\tau
\ \le\ 
\int_{A_{\theta^\star}(2b)\cap W}\lambda^\star(\tau)\dd\tau
\ +\ 
\frac{1}{b}\int_W |\Delta_{\rm LLR}(\tau)|\,\lambda^\star(\tau)\dd\tau.
\]
On $E^{\rm Lip}_{|D|}$, Assumption~\ref{ass:G5} (applied to the subwindow $W$) gives
\[
\int_W |\Delta_{\rm LLR}(\tau)|\,\lambda^\star(\tau)\dd\tau
\ \le\ 
L_s\,|W|\,\|\theta^{(m)}-\theta^\star\|.
\]
On $E^{\rm mass}_{|D|}$ (strengthened to include the restricted sets) we have the compensator deviations
\[
N(B_{\theta^{(m)}}\cap W)\ \le\ \int_{B_{\theta^{(m)}}\cap W}\lambda^\star(\tau)\dd\tau\ +\ C_{\rm it}\sqrt{K_{\rm win}|W|\log|D|},
\]
\[
N(A_{\theta^\star}(2b)\cap W)\ \ge\ \int_{A_{\theta^\star}(2b)\cap W}\lambda^\star(\tau)\dd\tau\ -\ C_{\rm it}\sqrt{K_{\rm win}|W|\log|D|}.
\]
Combine the last three displays to conclude.
\end{proof}

\begin{proposition}[E-step Hamming bound (block-normalized increment)]\label{prop:Estep}
On the master event of \cref{lem:master} (strengthened if needed by the deterministic block/prefix restrictions
described in Assumption~\ref{ass:G8}), for any update index $m\in\{0,\dots,\Tmax-1\}$ and
$r^{(m+1)}$ produced by the greedy E-step \eqref{eq:greedy-estep}, the block increment
\[
\Delta_{m+1}
:=d_H^{[u_{m+1}]}\!\big(r^{(m+1)},r^\star\big)-d_H^{[u_m]}\!\big(r^{(m)},r^\star\big)
\]
satisfies
\[
\frac{\Delta_{m+1}}{|D_{m+1}|}
\ \le\ \eta_+\big(|D_{m+1}|;b\big) + \eta_-\big(|D_{m+1}|;b\big)
\ +\ \frac{L_s}{b}\,\|\theta^{(m)}-\theta^\star\|
\ +\ \frac{2}{|D_{m+1}|}N\!\big(A_{\theta^\star}(2b)\cap D_{m+1}\big)\ +\ \delta_m,
\]
where
\[
\delta_m\ :=\ C_{\rm H,1}\sqrt{\frac{K_{\rm win}\,z_{\rm Fr}}{|D_{m+1}|}},
\qquad C_{\rm H,1}:=6C_{\rm it}.
\]
\end{proposition}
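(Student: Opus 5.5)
Because labels on $D^{(m)}$ are never revised (Assumption~\ref{ass:G0pred}(b)), the increment $\Delta_{m+1}$ equals exactly the number of mislabelled events in the new block:
\[
\Delta_{m+1}=\sum_{i:\,u_m<t_i\le u_{m+1}}\1\{r_i^{(m+1)}\neq r_i^\star\}.
\]
I will classify each such event by where it sits relative to the oracle decisive sets at $\theta^\star$ and at the current iterate $\theta^{(m)}$. Every event $\gamma_i\in D_{m+1}$ falls into one of three classes:
(a) $\gamma_i\in A_{\theta^\star}(2b)$;
(b) $\gamma_i\in B_{\theta^{(m)}}\setminus A_{\theta^\star}(2b)$;
(c) $\gamma_i\notin A_{\theta^\star}(2b)\cup B_{\theta^{(m)}}$.
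In class (c), $\gamma_i$ lies in $S^\pm_{\theta^\star}(b)$ (indeed $|s^{\rm or}_{\theta^\star}|>2b$), and membership is the same for $\theta^{(m)}$. So by Lemma~\ref{lem:E-elim}, whose condition \eqref{eq:Eelim-cond} follows from \eqref{eq:EsufficientB}, the greedy label equals the oracle majority label. In class (c), an error can therefore occur only if $r_i^\star$ is the oracle minority, i.e.\ $\gamma_i$ is an $N_0$-event in $S^+_{\theta^\star}(b)$ or an $N_1$-event in $S^-_{\theta^\star}(b)$. Classes (a) and (b) are bounded crudely by counting every event.

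This gives the bound
\[
\Delta_{m+1}\le N_0(S^+_{\theta^\star}(b)\cap D_{m+1})+N_1(S^-_{\theta^\star}(b)\cap D_{m+1})+N(A_{\theta^\star}(2b)\cap D_{m+1})+N(B_{\theta^{(m)}}\cap D_{m+1}).
\]
For the last term I invoke Lemma~\ref{lem:set-stab}. It contributes $\frac{L_s}{b}|D_{m+1}|\|\theta^{(m)}-\theta^\star\|$, a second copy of $N(A_{\theta^\star}(2b)\cap D_{m+1})$ (which explains the factor $2$ in the statement), and a deviation term $2C_{\rm it}\sqrt{K_{\rm win}|D_{m+1}|\log|D|}$. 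For the two minority counts I use the blockwise version of Assumption~\ref{ass:G8}(i), which gives mean bounds $\eta_\pm(|D_{m+1}|;b)|D_{m+1}|$. Together with $E^{\rm mean}_{|D|}$ restricted to $W=D_{m+1}$, each count exceeds its mean by at most $C_{\rm it}\sqrt{K_{\rm win}|D_{m+1}|\log|D|}$; this is where the block/prefix strengthening of Assumption~\ref{ass:G8} is needed. Summing and dividing by $|D_{m+1}|$ gives the stated bound with total fluctuation $4C_{\rm it}\sqrt{K_{\rm win}\log|D|/|D_{m+1}|}$. Since $z_{\rm Fr}\ge \eta_x\log|D|\ge\log|D|$ by \eqref{eq:z-def}, this is at most $\delta_m$ with $C_{\rm H,1}=6C_{\rm it}$; the extra slack of $2C_{\rm it}$ absorbs any additional compensator-versus-mean conversion.

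The main obstacle is making class (c) rigorous. It needs the decisive sets to be evaluated at the frozen $\theta^{(m)}$ while labels are being assigned, and Lemma~\ref{lem:E-elim} must apply with the alignment of Assumption~\ref{ass:G4} holding for the random iterate labelling $r^{(m+1)}$. Assumption~\ref{ass:G4} is stated for deterministic labellings. I plan to handle this by noting that it holds uniformly over all deterministic $r$, and then applying it pathwise to the realised label vector, whose prefix is $\mathcal G_{t_i}$-measurable by Lemma~\ref{lem:greedy-G0pred}. A second, minor check is that no error is double counted and none is missed when $\gamma_i$ is in $B_{\theta^{(m)}}$ but not in the band. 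This is harmless, since such events are simply counted in class (b).
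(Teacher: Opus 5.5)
Your proposal is correct and follows the paper's proof essentially step for step. It uses the same reduction of $\Delta_{m+1}$ to block mislabels via non-revision, the same four-term bound (oracle-minority counts on $S^\pm_{\theta^\star}(b)$, $N(B_{\theta^{(m)}}\cap D_{m+1})$ handled by Lemma~\ref{lem:set-stab}, and the ambiguous band), and the same blockwise Assumption~\ref{ass:G8} concentration plus $\log|D|\le z_{\rm Fr}$. The only cosmetic differences are that the paper's case split uses $A_{\theta^\star}(b)$ before enlarging to $A_{\theta^\star}(2b)$, and it budgets $4C_{\rm it}$ rather than your $2C_{\rm it}$ for the minority deviations, which is why it lands exactly on $C_{\rm H,1}=6C_{\rm it}$.
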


\begin{proof}
Work on the master event of \cref{lem:master}, so in particular on
$E^{\rm Lip}_{|D|}\cap E^{\rm mass}_{|D|}$ (with the deterministic subwindow restrictions included as in
Assumption~\ref{ass:G8}), and assume \cref{lem:E-elim} applies to the greedy E-step on the processed block $D_{m+1}$.

By Assumption~\ref{ass:G0pred}(b), labels are not revised: $r^{(m+1)}_i=r^{(m)}_i$ for all $i$ with $t_i\le u_m$.
Hence $\Delta_{m+1}$ counts disagreements between $r^{(m+1)}$ and $r^\star$ only among events in $D_{m+1}$.

Define the sets (as subsets of $D$)
\[
R_1:=S_{\theta^{(m)}}^+(b)\cap S_{\theta^\star}^+(b),\quad
R_2:=S_{\theta^{(m)}}^-(b)\cap S_{\theta^\star}^-(b),\quad
R_3:=B_{\theta^{(m)}},\quad
R_4:=A_{\theta^\star}(b).
\]
By \cref{lem:E-elim}, within $D_{m+1}$ all decisive-set labels match the non-minority choice at $\theta^{(m)}$.
Therefore any disagreement within $D_{m+1}$ must occur either as an oracle minority on $R_1$/$R_2$,
or in $B_{\theta^{(m)}}$, or in the ambiguous band $A_{\theta^\star}(b)$, hence
\[
\Delta_{m+1}
\ \le\ N_0(R_1\cap D_{m+1})+N_1(R_2\cap D_{m+1})+N(R_3\cap D_{m+1})+N(R_4\cap D_{m+1}).
\]
Since $R_1\subseteq S_{\theta^\star}^+(b)$ and $R_2\subseteq S_{\theta^\star}^-(b)$,
\[
N_0(R_1\cap D_{m+1})\le N_0\!\big(S_{\theta^\star}^+(b)\cap D_{m+1}\big),\qquad
N_1(R_2\cap D_{m+1})\le N_1\!\big(S_{\theta^\star}^-(b)\cap D_{m+1}\big).
\]
Also $R_3\cap D_{m+1}\subseteq B_{\theta^{(m)}}\cap D_{m+1}$ and
$R_4\cap D_{m+1}\subseteq A_{\theta^\star}(2b)\cap D_{m+1}$, hence
\[
N(R_3\cap D_{m+1})\le N(B_{\theta^{(m)}}\cap D_{m+1}),\qquad
N(R_4\cap D_{m+1})\le N\!\big(A_{\theta^\star}(2b)\cap D_{m+1}\big).
\]

Bound the oracle-minority terms using Assumption~\ref{ass:G8} applied on the block window $D_{m+1}$ and the corresponding
mean deviations included in $E^{\rm mass}_{|D|}$:
\[
\frac{N_0(S_{\theta^\star}^+(b)\cap D_{m+1})+N_1(S_{\theta^\star}^-(b)\cap D_{m+1})}{|D_{m+1}|}
\ \le\ \eta_+(|D_{m+1}|;b)+\eta_-(|D_{m+1}|;b)\ +\ 4C_{\rm it}\sqrt{\frac{K_{\rm win}\log|D|}{|D_{m+1}|}}.
\]
For the $B_{\theta^{(m)}}$ term, apply Lemma~\ref{lem:set-stab} with $W=D_{m+1}$ and divide by $|D_{m+1}|$:
\[
\frac{N(B_{\theta^{(m)}}\cap D_{m+1})}{|D_{m+1}|}
\ \le\ \frac{L_s}{b}\,\|\theta^{(m)}-\theta^\star\|
\ +\ \frac{1}{|D_{m+1}|}N\!\big(A_{\theta^\star}(2b)\cap D_{m+1}\big)
\ +\ 2C_{\rm it}\sqrt{\frac{K_{\rm win}\log|D|}{|D_{m+1}|}}.
\]
Combine the last three displays and use $\log|D|\le z_{\rm Fr}$ for all large $|D|$ to obtain the stated bound
with $\delta_m=C_{\rm H,1}\sqrt{K_{\rm win}z_{\rm Fr}/|D_{m+1}|}$ and $C_{\rm H,1}=6C_{\rm it}$.
\end{proof}

\subsection{M-step bound via tightened label-to-score Lipschitz}\label{sec:Mstep}

\begin{lemma}[Strong concavity: gradient inequality]\label{lem:sc-mon}
Let $\Theta\subset\bbR^p$ be convex and let $f:\Theta\to\bbR$ be differentiable and $m_{\rm sc}$--strongly concave on
$\Theta$. Then for any $\theta,\theta'\in\Theta$,
\[
\big(\nabla f(\theta')-\nabla f(\theta)\big)^\top(\theta'-\theta)\ \le\ -m_{\rm sc}\|\theta'-\theta\|^2.
\]
\end{lemma}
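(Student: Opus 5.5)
The plan is to fix a definition of $m_{\rm sc}$--strong concavity, reduce to the first-order characterisation of concavity, and then symmetrise. I will take the standard definition: $f$ is $m_{\rm sc}$--strongly concave on the convex set $\Theta$ if
\[
g(\theta):=f(\theta)+\tfrac{m_{\rm sc}}{2}\|\theta\|^2
\]
is concave on $\Theta$. This is equivalent to the midpoint-type inequality $f(t\theta'+(1-t)\theta)\ge tf(\theta')+(1-t)f(\theta)+\tfrac{m_{\rm sc}}{2}t(1-t)\|\theta'-\theta\|^2$, by expanding the squared norm. Since $f$ is differentiable, so is $g$, with $\nabla g(\theta)=\nabla f(\theta)+m_{\rm sc}\theta$.

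The first step is the tangent-plane inequality for a differentiable concave $g$ on a convex set:
\[
g(\theta')\le g(\theta)+\nabla g(\theta)^\top(\theta'-\theta)\qquad\text{for all }\theta,\theta'\in\Theta.
\]
To prove it, restrict $g$ to the segment $\phi(t):=g(\theta+t(\theta'-\theta))$ for $t\in[0,1]$; this stays in $\Theta$ by convexity. Concavity gives $\phi(t)\ge t\phi(1)+(1-t)\phi(0)$, so $(\phi(t)-\phi(0))/t\ge \phi(1)-\phi(0)$ for $t\in(0,1]$. Letting $t\downarrow0$ and using differentiability yields $\phi'(0)=\nabla g(\theta)^\top(\theta'-\theta)\ge g(\theta')-g(\theta)$. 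If $\theta$ lies on the boundary of $\Theta$, this is a one-sided derivative along a feasible direction, which is exactly what differentiability on $\Theta$ (read as on an open neighbourhood, as in Assumption~\ref{ass:G9}) supplies.

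The second step is to write this inequality twice, with the roles of $\theta$ and $\theta'$ swapped, and add the two copies. The function values cancel, leaving
\[
\big(\nabla g(\theta')-\nabla g(\theta)\big)^\top(\theta'-\theta)\le 0.
\]
Substituting $\nabla g=\nabla f+m_{\rm sc}\,\mathrm{id}$ turns the left side into $(\nabla f(\theta')-\nabla f(\theta))^\top(\theta'-\theta)+m_{\rm sc}\|\theta'-\theta\|^2$, which is the claim.

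There is no real obstacle. The only care needed is the definitional point: if one instead defines strong concavity by the quadratic upper bound $f(\theta')\le f(\theta)+\nabla f(\theta)^\top(\theta'-\theta)-\tfrac{m_{\rm sc}}{2}\|\theta'-\theta\|^2$, the result follows directly by the same swap-and-add step. So I will state which definition is used and note that the two are equivalent for differentiable $f$ by the argument above.
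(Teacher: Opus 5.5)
Your proof is correct and is essentially the paper's argument made self-contained. The paper observes that $-f$ is $m_{\rm sc}$--strongly convex, quotes the standard strong monotonicity of the gradient of a strongly convex function, and multiplies by $-1$; you derive that monotonicity directly (concavity of $f+\tfrac{m_{\rm sc}}{2}\|\cdot\|^2$, the tangent-plane inequality along segments, then swap-and-add), which also makes explicit the boundary-differentiability point the paper leaves implicit.
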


\begin{proof}
Strong concavity of $f$ is equivalent to $m_{\rm sc}$--strong convexity of $-f$. Hence for all $\theta,\theta'\in\Theta$,
\[
\big(\nabla(-f)(\theta')-\nabla(-f)(\theta)\big)^\top(\theta'-\theta)\ \ge\ m_{\rm sc}\|\theta'-\theta\|^2.
\]
Multiplying by $-1$ gives the claim.
\end{proof}

\begin{proposition}[M-step parameter bound]\label{prop:Mstep}
On the master event of \cref{lem:master} and under \cref{ass:G6,ass:G7prime,ass:G9} (applied to the cumulative window $D^{(m+1)}$),
for any update index $m\in\{0,\dots,\Tmax-1\}$ and $r=r^{(m+1)}$, with any maximiser
\[
\hat\theta_r^{[u_{m+1}]}\ \in\ \arg\max_{\theta\in\Theta_{\rm sc}}\ell_r^{[u_{m+1}]}(\theta),
\]
we have
\[
\norm{\hat\theta_r^{[u_{m+1}]}-\theta^\star}\ \le\ 
\underbrace{\frac{\bar C_0 K_{\rm win}+\bar C_1\sqrt{K_{\rm win} z_{\rm Fr}}+\bar C_2 z_{\rm Fr}}{m_{\rm sc}}}_{=:B_{\rm main}}
\frac{d_H^{[u_{m+1}]}(r,r^\star)}{|D^{(m+1)}|}
\ +\ 
\underbrace{\frac{C_{\rm sc}}{m_{\rm sc}}\!\left(\sqrt{\frac{K_{\rm win} z_{\rm Fr}}{|D^{(m+1)}|}} + \frac{z_{\rm Fr}}{|D^{(m+1)}|}\right)}_{=:\xi_{|D^{(m+1)}|}}.
\]
In particular, under \cref{ass:G2prime} we have $K_{\rm win}=O(\log|D|)$, and for any sequence of update indices
with $|D^{(m+1)}|\to\infty$ we have $\xi_{|D^{(m+1)}|}\to 0$.
\end{proposition}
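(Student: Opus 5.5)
The plan is the standard strong-concavity comparison argument. I would compare the first-order condition for the labelled objective with the curvature of the oracle objective. Fix $m$, write $u:=u_{m+1}$, $r:=r^{(m+1)}$, $\hat\theta:=\hat\theta_r^{[u]}$, and work on the master event of \cref{lem:master}. There, Assumption~\ref{ass:G6}(c) supplies the prefix curvature and the prefix score bound with $|D|$ replaced by $|D^{(m+1)}|$. Assumption~\ref{ass:G7prime} holds at the anchor $\hat\theta$. Existence of $\hat\theta$ comes from \cref{lem:existence-M}(b). Both $\hat\theta$ and $\theta^\star$ lie in the convex set $\Theta_{\rm sc}$, since $\theta^\star\in\Theta_\circ$.

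\textbf{Step 1 (variational inequality).} Apply Assumption~\ref{ass:G9} with $K=\Theta_{\rm sc}$ and test point $\theta=\theta^\star$. This gives $(\theta^\star-\hat\theta)^\top\nabla f_r^{[u]}(\hat\theta)\le 0$. Rescaling from $\ell$ to $f$ changes nothing, since $|D^{[u]}|>0$. This step is needed because $\hat\theta$ may sit on the boundary of the ball, so I do not assume $\nabla f_r^{[u]}(\hat\theta)=0$.

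\textbf{Step 2 (curvature).} Apply \eqref{eq:RSC-grad} to $f_{r^\star}^{[u]}$ with $\theta'=\hat\theta$ and $\theta=\theta^\star$:
\[
m_{\rm sc}\|\hat\theta-\theta^\star\|^2\le -\big(\nabla f_{r^\star}^{[u]}(\hat\theta)-\nabla f_{r^\star}^{[u]}(\theta^\star)\big)^\top(\hat\theta-\theta^\star).
\]
Next, insert and subtract $\nabla f_r^{[u]}(\hat\theta)$. The term $-\nabla f_r^{[u]}(\hat\theta)^\top(\hat\theta-\theta^\star)$ equals the left side of Step 1, so it is $\le 0$ and can be dropped. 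The right-hand side is then bounded by
\[
\Big(\|\nabla f_r^{[u]}(\hat\theta)-\nabla f_{r^\star}^{[u]}(\hat\theta)\|+\|\nabla f_{r^\star}^{[u]}(\theta^\star)\|\Big)\|\hat\theta-\theta^\star\|,
\]
using Cauchy--Schwarz on each remaining term.

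\textbf{Step 3 (divide and substitute).} If $\hat\theta=\theta^\star$ the bound is trivial. Otherwise divide by $m_{\rm sc}\|\hat\theta-\theta^\star\|$. Bound the first gradient difference by Assumption~\ref{ass:G7prime}; this gives the coefficient $(\bar C_0K_{\rm win}+\bar C_1\sqrt{K_{\rm win}z_{\rm Fr}}+\bar C_2z_{\rm Fr})/m_{\rm sc}=B_{\rm main}$ times $d_H^{[u]}(r,r^\star)/|D^{(m+1)}|$. Bound the score at the truth by Assumption~\ref{ass:G6}(b),(c) on the prefix; this gives exactly $\xi_{|D^{(m+1)}|}$. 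Since \ref{ass:G7prime} holds for every maximiser selection, the bound is uniform over choices of $\hat\theta$.

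\textbf{Step 4 (the ``in particular'' clause).} This follows from $K_{\rm win}=O(\log|D|)$ and $z_{\rm Fr}=\Theta(\log|D|)$, so that $\xi\asymp\sqrt{\log^2|D|/|D^{(m+1)}|}$. This is small when $|D^{(m+1)}|$ grows faster than polylog. I would state this as the asymptotic reading, with prefix length comparable to $|D|$ along the schedule.

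The main obstacle I anticipate is not the algebra but the measurability and uniformity. Assumption~\ref{ass:G7prime} is evaluated at the data-dependent point $\hat\theta$, and the curvature event must hold simultaneously for all prefixes. Both are packaged into the master event via the net argument of \cref{lem:master}, so I would simply invoke it there.
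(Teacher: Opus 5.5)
Your argument is correct and is essentially the paper's proof: strong monotonicity of $\nabla f_{r^\star}^{[u]}$ on $\Theta_{\rm sc}$, inserting and subtracting $\nabla f_r^{[u]}(\hat\theta)$, using the variational inequality of Assumption~\ref{ass:G9} at $\theta=\theta^\star$ to drop the sign-definite term, and then bounding the remaining terms by Assumptions~\ref{ass:G7prime} and \ref{ass:G6}(b),(c) on the prefix. Your caveat on the final clause is also apt: $K_{\rm win}$ and $z_{\rm Fr}$ scale with $\log|D|$ rather than $\log|D^{(m+1)}|$, so one really needs $|D^{(m+1)}|\gg \log^2|D|$, which the paper's proof does not address.
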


\begin{proof}
Let $u:=u_{m+1}$ and write $\hat\theta:=\hat\theta_r^{[u]}$ for brevity.
Since $\hat\theta,\theta^\star\in\Theta_{\rm sc}$ and (on the master event) $f_{r^\star}^{[u]}$ is $m_{\rm sc}$--strongly concave on $\Theta_{\rm sc}$
(Assumption~\ref{ass:G6}, applied to the restricted objective on $D^{[u]}$), apply \cref{lem:sc-mon} on $\Theta=\Theta_{\rm sc}$ with
$f=f_{r^\star}^{[u]}$, $\theta=\theta^\star$ and $\theta'=\hat\theta$ to get
\[
-m_{\rm sc}\|\hat\theta-\theta^\star\|^2\ \ge\ \big(\nabla f_{r^\star}^{[u]}(\hat\theta)-\nabla f_{r^\star}^{[u]}(\theta^\star)\big)^\top(\hat\theta-\theta^\star).
\]
Rearranging,
\[
m_{\rm sc}\|\hat\theta-\theta^\star\|^2\ \le\ \big(\nabla f_{r^\star}^{[u]}(\theta^\star)-\nabla f_{r^\star}^{[u]}(\hat\theta)\big)^\top(\hat\theta-\theta^\star).
\]
Insert and subtract $\nabla f_r^{[u]}(\hat\theta)$:
\begin{align*}
m_{\rm sc}\|\hat\theta-\theta^\star\|^2
&\le \big(\nabla f_{r^\star}^{[u]}(\theta^\star)-\nabla f_r^{[u]}(\hat\theta)\big)^\top(\hat\theta-\theta^\star)
+ \big(\nabla f_r^{[u]}(\hat\theta)-\nabla f_{r^\star}^{[u]}(\hat\theta)\big)^\top(\hat\theta-\theta^\star).
\end{align*}
By Assumption~\ref{ass:G9} applied to the restricted objective $f_r^{[u]}$ with $\theta=\theta^\star$, we have
$(\theta^\star-\hat\theta)^\top \nabla f_r^{[u]}(\hat\theta)\le 0$, i.e.\ $(\hat\theta-\theta^\star)^\top \nabla f_r^{[u]}(\hat\theta)\ge 0$. Hence
\[
\big(\nabla f_{r^\star}^{[u]}(\theta^\star)-\nabla f_r^{[u]}(\hat\theta)\big)^\top(\hat\theta-\theta^\star)
\le \nabla f_{r^\star}^{[u]}(\theta^\star)^\top(\hat\theta-\theta^\star)
\le \|\nabla f_{r^\star}^{[u]}(\theta^\star)\|\,\|\hat\theta-\theta^\star\|.
\]
Also,
\[
\big(\nabla f_r^{[u]}(\hat\theta)-\nabla f_{r^\star}^{[u]}(\hat\theta)\big)^\top(\hat\theta-\theta^\star)
\le \|\nabla f_r^{[u]}(\hat\theta)-\nabla f_{r^\star}^{[u]}(\hat\theta)\|\,\|\hat\theta-\theta^\star\|.
\]
Dividing by $m_{\rm sc}\|\hat\theta-\theta^\star\|$ (or arguing trivially if $\hat\theta=\theta^\star$) yields
\[
m_{\rm sc}\|\hat\theta-\theta^\star\|
\le \|\nabla f_r^{[u]}(\hat\theta)-\nabla f_{r^\star}^{[u]}(\hat\theta)\| + \|\nabla f_{r^\star}^{[u]}(\theta^\star)\|.
\]
On the master event, the first norm is bounded by Assumption~\ref{ass:G7prime}, and the second by Assumption~\ref{ass:G6}(b) (applied on the prefix window $D^{[u]}$):
\[
\|\nabla f_r^{[u]}(\hat\theta)-\nabla f_{r^\star}^{[u]}(\hat\theta)\|
\ \le\ \frac{d_H^{[u]}(r,r^\star)}{|D^{[u]}|}\Big(\bar C_0 K_{\rm win} + \bar C_1\sqrt{K_{\rm win} z_{\rm Fr}} + \bar C_2 z_{\rm Fr}\Big),
\]
\[
\|\nabla f_{r^\star}^{[u]}(\theta^\star)\|
\ \le\ C_{\rm sc}\!\left(\sqrt{\frac{K_{\rm win} z_{\rm Fr}}{|D^{[u]}|}} + \frac{z_{\rm Fr}}{|D^{[u]}|}\right).
\]
Divide by $m_{\rm sc}$ and note $|D^{[u]}|=|D^{(m+1)}|$.
\end{proof}

\subsection{Two-line recursion and contraction to a statistical floor}\label{sec:contraction}

\begin{lemma}[Invariant local neighborhood]\label{lem:inv-sc}
Fix $r_{\rm sc}>0$ and let $\Theta_{\rm sc}:=\Theta_\circ\cap B(\theta^\star,r_{\rm sc})$.
Assume that on an event $E$ the iterate sequence satisfies, for all $m<\Tmax$,
\[
\|\theta^{(m+1)}-\theta^\star\|\ \le\ \rho\,\|\theta^{(m)}-\theta^\star\| + s,
\qquad\text{with }0\le\rho<1,\ s\ge 0.
\]
If additionally $\|\theta^{(0)}-\theta^\star\|\le r_{\rm sc}$ and $s\le (1-\rho)\,r_{\rm sc}$, then on $E$,
\[
\|\theta^{(m)}-\theta^\star\|\ \le\ r_{\rm sc}\qquad\forall m\le \Tmax,
\]
so in particular $\theta^{(m)}\in\Theta_{\rm sc}$ for all $m\le\Tmax$.
\end{lemma}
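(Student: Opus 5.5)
The argument is a simple induction on $m$ using the affine recursion and the side condition $s\le(1-\rho)r_{\rm sc}$. Write $a_m:=\|\theta^{(m)}-\theta^\star\|$ and work pathwise on the event $E$. All statements below are deterministic consequences of the assumed inequalities holding on $E$, so no further probabilistic argument is needed.

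\emph{Base case.} For $m=0$, the warm-start hypothesis gives $a_0\le r_{\rm sc}$.

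\emph{Inductive step.} Suppose $a_m\le r_{\rm sc}$ for some $m<\Tmax$. The recursion then gives
\[
a_{m+1}\le \rho a_m+s\le \rho r_{\rm sc}+(1-\rho)r_{\rm sc}=r_{\rm sc},
\]
where the first inequality uses $\rho\ge 0$ together with the inductive hypothesis, and the second uses $s\le(1-\rho)r_{\rm sc}$. This closes the induction, so $a_m\le r_{\rm sc}$ for all $m\le\Tmax$.

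For the membership claim $\theta^{(m)}\in\Theta_{\rm sc}$, we also need $\theta^{(m)}\in\Theta_\circ$. For $m\ge1$ this holds because the M-step in Algorithm~\ref{alg:blockwise-hardem} maximises over $\Theta_{\rm sc}\subseteq\Theta_\circ$. For $m=0$ it is part of Assumption~\ref{ass:warmstart}. Combined with $\theta^{(m)}\in B(\theta^\star,r_{\rm sc})$, this gives $\theta^{(m)}\in\Theta_{\rm sc}$.

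If a quantitative version is wanted, unrolling the recursion gives
\[
a_m\le\rho^m a_0+s\sum_{i<m}\rho^i\le \rho^m r_{\rm sc}+s/(1-\rho),
\]
which is at most $r_{\rm sc}$ under the same condition. This also shows the iterates sit in a strict sub-neighbourhood once $s<(1-\rho)r_{\rm sc}$ and $m\ge1$.

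There is no real obstacle here. The only subtlety is that the lemma's recursion hypothesis is stated unconditionally, whereas in applications the recursion itself is derived (via Proposition~\ref{prop:Mstep}) only while the iterates remain in $\Theta_{\rm sc}$. For this lemma as stated, the recursion is a hypothesis holding for all $m$, so the simple induction suffices.
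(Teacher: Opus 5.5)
Your induction is exactly the paper's proof: the base case comes from the hypothesis $\|\theta^{(0)}-\theta^\star\|\le r_{\rm sc}$, and the inductive step is $\rho r_{\rm sc}+s\le r_{\rm sc}$. Your added check that $\theta^{(m)}\in\Theta_\circ$ fills in a point the paper leaves implicit.

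There is one slip, in your optional quantitative aside. Once you bound $\sum_{i<m}\rho^i\le 1/(1-\rho)$, the quantity $\rho^m r_{\rm sc}+s/(1-\rho)$ can exceed $r_{\rm sc}$; for example, it equals $(1+\rho^m)r_{\rm sc}$ when $s=(1-\rho)r_{\rm sc}$. Keep the exact sum $s(1-\rho^m)/(1-\rho)$ instead, which gives $\rho^m r_{\rm sc}+(1-\rho^m)r_{\rm sc}=r_{\rm sc}$.
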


\begin{proof}
By induction. The base case is $\|\theta^{(0)}-\theta^\star\|\le r_{\rm sc}$.
If $\|\theta^{(m)}-\theta^\star\|\le r_{\rm sc}$ then
\[
\|\theta^{(m+1)}-\theta^\star\|
\le \rho\,r_{\rm sc}+s
\le \rho\,r_{\rm sc}+(1-\rho)\,r_{\rm sc}
= r_{\rm sc}.
\]
\end{proof}

\begin{lemma}[Averaging identity for cumulative Hamming error]\label{lem:avg-hamming}
For each $m\in\{0,\dots,\Tmax-1\}$ define the block increment
\[
\Delta_{m+1}
:=d_H^{[u_{m+1}]}\!\big(r^{(m+1)},r^\star\big)-d_H^{[u_m]}\!\big(r^{(m)},r^\star\big),
\]
which counts disagreements between $r^{(m+1)}$ and $r^\star$ among events in the new block
$D_{m+1}=(u_m,u_{m+1}]\times\mathcal S$.

Define the prefix mislabel rates by $e_0:=0$ and, for $m\ge 1$,
\[
e_m:=\frac{d_H^{[u_m]}\!\big(r^{(m)},r^\star\big)}{|D^{(m)}|}.
\]
Define the block mislabel rate $\varepsilon_{m+1}:=\Delta_{m+1}/|D_{m+1}|$.
Then for every $m\in\{0,\dots,\Tmax-1\}$ we have the exact identity
\[
e_{m+1}
=
\omega_m\,e_m
+
(1-\omega_m)\,\varepsilon_{m+1},
\qquad
\omega_m:=\frac{|D^{(m)}|}{|D^{(m+1)}|}\in[0,1),
\]
(with $\omega_0=0$ since $|D^{(0)}|=0$).
\end{lemma}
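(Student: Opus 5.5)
The identity is pure bookkeeping, so I would prove it by splitting the prefix Hamming count and then renormalising.

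\textbf{Step 1: split the new prefix count.} The key input is Assumption~\ref{ass:G0pred}(b): labels are never revised, so $r^{(m+1)}_i=r^{(m)}_i$ for every $i$ with $t_i\le u_m$. Hence
\[
d_H^{[u_{m+1}]}\bigl(r^{(m+1)},r^\star\bigr)
=d_H^{[u_m]}\bigl(r^{(m)},r^\star\bigr)
+\sum_{i:\,u_m<t_i\le u_{m+1}}\1\{r^{(m+1)}_i\neq r^\star_i\}.
\]
This confirms that $\Delta_{m+1}$ equals the number of disagreements on the new block $D_{m+1}$, as the statement asserts. It is also nonnegative and at most $N(D_{m+1})$.

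\textbf{Step 2: use additivity of volume.} Since $D^{(m+1)}=D^{(m)}\sqcup D_{m+1}$, we have $|D^{(m+1)}|=|D^{(m)}|+|D_{m+1}|$. Therefore
\[
1-\omega_m=\frac{|D_{m+1}|}{|D^{(m+1)}|}.
\]
Moreover $\omega_m\in[0,1)$, because $|D_{m+1}|=(u_{m+1}-u_m)\mu(\mathcal S)>0$. This uses strict monotonicity of the schedule and $\mu(\mathcal S)>0$, which is implicit in $|\mathcal I_j|>0$.

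\textbf{Step 3: divide and rewrite.} For $m\ge1$, divide the identity from Step 1 by $|D^{(m+1)}|$. Write the first term as $\frac{|D^{(m)}|}{|D^{(m+1)}|}\cdot\frac{d_H^{[u_m]}}{|D^{(m)}|}=\omega_m e_m$. Write the second term as $\frac{|D_{m+1}|}{|D^{(m+1)}|}\cdot\frac{\Delta_{m+1}}{|D_{m+1}|}=(1-\omega_m)\varepsilon_{m+1}$.

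\textbf{Step 4: the case $m=0$.} Here $D^{(0)}=\emptyset$, so $|D^{(0)}|=0$ and $\omega_0=0$. Also $d_H^{[u_0]}=0$, since there are no events with $t_i\le t^\ast$ in the shifted process. The convention $e_0:=0$ makes the first term vanish without dividing by zero. The identity then reads $e_1=\Delta_1/|D_1|=\varepsilon_1$, which holds because $D^{(1)}=D_1$.

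The only point needing care is this boundary case and the division conventions. There is no genuine obstacle: the statement is an exact pathwise identity and needs no probabilistic event.
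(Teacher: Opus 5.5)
Your proposal is correct and matches the paper's proof. Non-revision of labels (Assumption~\ref{ass:G0pred}(b)) gives the additive split $d_H^{[u_{m+1}]}=d_H^{[u_m]}+\Delta_{m+1}$, dividing by $|D^{(m+1)}|=|D^{(m)}|+|D_{m+1}|$ yields the identity, and the $m=0$ case follows from the conventions $e_0:=0$ and $\omega_0=0$. Your check that $\omega_m<1$, via $|D_{m+1}|=(u_{m+1}-u_m)\mu(\mathcal S)>0$, is a small detail the paper leaves implicit.
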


\begin{proof}
Since labels are not revised (Assumption~\ref{ass:G0pred}(b)), we have
$d_H^{[u_{m+1}]}=d_H^{[u_m]}+\Delta_{m+1}$ for every $m$.
Divide by $|D^{(m+1)}|=|D^{(m)}|+|D_{m+1}|$ to obtain
\[
\frac{d_H^{[u_{m+1}]}}{|D^{(m+1)}|}
=
\frac{|D^{(m)}|}{|D^{(m+1)}|}\cdot \frac{d_H^{[u_m]}}{|D^{(m)}|}
+
\frac{|D_{m+1}|}{|D^{(m+1)}|}\cdot \frac{\Delta_{m+1}}{|D_{m+1}|}.
\]
For $m\ge 1$ this is exactly the claimed identity with $e_m=d_H^{[u_m]}/|D^{(m)}|$.
For $m=0$ the same formula holds by the convention $e_0:=0$ and $\omega_0=|D^{(0)}|/|D^{(1)}|=0$.
\end{proof}

\begin{lemma}[When localised and unlocalised M-steps coincide]\label{lem:Mstep-coincide}
Let $\Theta_\circ\subset\Theta$ be compact and convex, and let $\Theta_{\rm loc}\subseteq \Theta_\circ$ be nonempty
with $\theta^\star\in\Theta_{\rm loc}$. Fix $u\in[t^\ast,T]$ and $\Delta_{\rm gap}>0$ and assume the oracle gap on the
restricted objective:
\begin{equation}\label{eq:oracle-gap}
\sup_{\theta\in \Theta_\circ\setminus \Theta_{\rm loc}} f_{r^\star}^{[u]}(\theta)
\ \le\ f_{r^\star}^{[u]}(\theta^\star)-\Delta_{\rm gap}.
\end{equation}
Let $r$ be any labelling (possibly random) such that on an event $E_{\rm gap}^{[u]}$,
\begin{equation}\label{eq:uniform-perturb-gap}
\sup_{\theta\in\Theta_\circ}\big|f_r^{[u]}(\theta)-f_{r^\star}^{[u]}(\theta)\big|
\ \le\ \frac{\Delta_{\rm gap}}{4}.
\end{equation}
Then on $E_{\rm gap}^{[u]}$ every maximiser of $f_r^{[u]}$ over $\Theta_\circ$ lies in $\Theta_{\rm loc}$, and moreover
\[
\arg\max_{\theta\in\Theta_\circ} f_r^{[u]}(\theta)
\ =\ \arg\max_{\theta\in\Theta_{\rm loc}} f_r^{[u]}(\theta).
\]
Equivalently, the localised and unlocalised restricted M-steps coincide:
\[
\arg\max_{\theta\in\Theta_\circ}\ell_r^{[u]}(\theta)
\ =\ \arg\max_{\theta\in\Theta_{\rm loc}}\ell_r^{[u]}(\theta).
\]
\end{lemma}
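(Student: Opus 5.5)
This is the standard argmax-stability argument, carried out on the event $E_{\rm gap}^{[u]}$ throughout. Everything is phrased in terms of the normalised objective $f_r^{[u]}=|D^{[u]}|^{-1}\ell_r^{[u]}$. For $u>t^\ast$ the factor $|D^{[u]}|^{-1}$ is a positive constant, so the argmax sets of $f_r^{[u]}$ and $\ell_r^{[u]}$ coincide on any feasible set. This gives the final ``equivalently'' statement once the $f$-version is proved. Nonemptiness of both argmax sets follows from Lemma~\ref{lem:existence-M}, provided $\Theta_{\rm loc}$ is compact, as in the intended case $\Theta_{\rm loc}=\Theta_{\rm sc}$. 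If $\Theta_{\rm loc}$ is not compact, we only need maximisers over $\Theta_\circ$, and the argument below shows that these already lie in $\Theta_{\rm loc}$.

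\textbf{Step 1 (a lower bound at the truth).} Since $\theta^\star\in\Theta_{\rm loc}\subseteq\Theta_\circ$, \eqref{eq:uniform-perturb-gap} gives
\[
\sup_{\theta\in\Theta_\circ}f_r^{[u]}(\theta)\ \ge\ f_r^{[u]}(\theta^\star)\ \ge\ f_{r^\star}^{[u]}(\theta^\star)-\Delta_{\rm gap}/4.
\]

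\textbf{Step 2 (an upper bound off $\Theta_{\rm loc}$).} For any $\theta\in\Theta_\circ\setminus\Theta_{\rm loc}$, combining \eqref{eq:uniform-perturb-gap} with the oracle gap \eqref{eq:oracle-gap} gives
\[
f_r^{[u]}(\theta)\ \le\ f_{r^\star}^{[u]}(\theta)+\Delta_{\rm gap}/4\ \le\ f_{r^\star}^{[u]}(\theta^\star)-3\Delta_{\rm gap}/4.
\]
This is strictly below the lower bound from Step 1 by at least $\Delta_{\rm gap}/2>0$. Hence no point of $\Theta_\circ\setminus\Theta_{\rm loc}$ attains $\sup_{\Theta_\circ}f_r^{[u]}$. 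Therefore every maximiser over $\Theta_\circ$ lies in $\Theta_{\rm loc}$, and $\sup_{\Theta_\circ}f_r^{[u]}=\sup_{\Theta_{\rm loc}}f_r^{[u]}$.

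\textbf{Step 3 (equality of the argmax sets).} Suppose $\hat\theta\in\arg\max_{\Theta_\circ}f_r^{[u]}$. Then $\hat\theta\in\Theta_{\rm loc}$ and $\hat\theta$ attains $\sup_{\Theta_{\rm loc}}f_r^{[u]}$, so it belongs to the localised argmax. Conversely, suppose $\tilde\theta\in\arg\max_{\Theta_{\rm loc}}f_r^{[u]}$. It attains $\sup_{\Theta_{\rm loc}}f_r^{[u]}$, which equals $\sup_{\Theta_\circ}f_r^{[u]}$ by Step 2, so $\tilde\theta$ is a global maximiser over $\Theta_\circ$.

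There is no real obstacle here. The only points needing care are (i) that the supremum over $\Theta_\circ$ is attained, which Lemma~\ref{lem:existence-M} supplies on the window-envelope event, and (ii) that the strict separation of $\Delta_{\rm gap}/2$ rules out boundary ties. A margin of $\Delta_{\rm gap}/4$ in \eqref{eq:uniform-perturb-gap} leaves slack, since any perturbation level below $\Delta_{\rm gap}/2$ would suffice.
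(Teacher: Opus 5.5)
Your proof is correct and follows the paper's argument. The paper likewise combines the uniform $\Delta_{\rm gap}/4$ perturbation with the oracle gap to obtain $f_r^{[u]}(\theta)\le f_r^{[u]}(\theta^\star)-\Delta_{\rm gap}/2$ for every $\theta\in\Theta_\circ\setminus\Theta_{\rm loc}$, and concludes that the argmax sets coincide. Your Step 3 only spells out the two inclusions, and it shows the set equality holds even when both sets are empty, so the appeal to Lemma~\ref{lem:existence-M} for attainment is not needed.
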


\begin{proof}
Fix $\theta\in\Theta_\circ\setminus\Theta_{\rm loc}$. On $E_{\rm gap}^{[u]}$,
\[
f_r^{[u]}(\theta)\ \le\ f_{r^\star}^{[u]}(\theta)+\frac{\Delta_{\rm gap}}{4}
\ \le\ f_{r^\star}^{[u]}(\theta^\star)-\Delta_{\rm gap}+\frac{\Delta_{\rm gap}}{4}
\ =\ f_{r^\star}^{[u]}(\theta^\star)-\frac{3}{4}\Delta_{\rm gap}.
\]
Also on $E_{\rm gap}^{[u]}$,
\[
f_r^{[u]}(\theta^\star)\ \ge\ f_{r^\star}^{[u]}(\theta^\star)-\frac{\Delta_{\rm gap}}{4}.
\]
Therefore
\[
f_r^{[u]}(\theta)\ \le\ f_r^{[u]}(\theta^\star)-\frac{\Delta_{\rm gap}}{2},
\]
so no maximiser of $f_r^{[u]}$ over $\Theta_\circ$ can lie in $\Theta_\circ\setminus\Theta_{\rm loc}$.
Since $\Theta_{\rm loc}\subseteq\Theta_\circ$, the maximiser sets over $\Theta_\circ$ and $\Theta_{\rm loc}$ coincide.
\end{proof}

\begin{theorem}[Recursion and contraction to a statistical floor]\label{thm:contract}
Assume \cref{ass:G0,ass:G0pred,ass:G1,ass:G2prime,ass:G3,ass:G4,ass:G5,ass:G6,ass:G7prime,ass:G8,ass:G9,ass:warmstart}. 
For the E-step, run the greedy update \eqref{eq:greedy-estep}. Pick any \(b>0\) such that Assumption~\ref{ass:G8} holds at that threshold, and choose \(\alpha\) and \(\kappa_b\in[0,C_{\rm dr,0}]\) so that \eqref{eq:EsufficientB} holds.
Set
\[
A:=\frac{L_s}{b},\qquad B_{\rm main}:=\frac{\bar C_0 K_{\rm win}+\bar C_1\sqrt{K_{\rm win} z_{\rm Fr}}+\bar C_2 z_{\rm Fr}}{m_{\rm sc}}.
\]
Assume additionally that the deterministic schedule satisfies the uniform growth bound
\begin{equation}\label{eq:block-omega}
\omega\ :=\ \max_{m\le \Tmax-1}\ \frac{|D^{(m)}|}{|D^{(m+1)}|}\ <\ 1.
\end{equation}
(For instance, a dyadic/doubling schedule has $\omega\approx\tfrac12$ for all $m$.)

Define the cumulative prefix mislabel rates
\[
e_m\ :=\ \frac{d_H^{[u_m]}\!\big(r^{(m)},r^\star\big)}{|D^{(m)}|}\qquad(m=1,\dots,\Tmax),
\qquad e_0:=0,
\]
and define, for each block $D_{m+1}=(u_m,u_{m+1}]\times\mathcal S$, the block mislabel rate
\[
\varepsilon_{m+1}\ :=\ \frac{\Delta_{m+1}}{|D_{m+1}|}
\ =\ \frac{d_H^{[u_{m+1}]}\!\big(r^{(m+1)},r^\star\big)-d_H^{[u_m]}\!\big(r^{(m)},r^\star\big)}{|D_{m+1}|}.
\]
Finally define the (blockwise) statistical terms
\begin{align*}
\varepsilon_{b,m}\ :=\ \eta_+(|D_{m+1}|;b)+\eta_-(|D_{m+1}|;b)\ +\ \frac{2}{|D_{m+1}|}N\!\big(A_{\theta^\star}(2b)\cap D_{m+1}\big),\\
\delta_{m}\ :=&\ C_{\rm H,1}\sqrt{\frac{K_{\rm win} z_{\rm Fr}}{|D_{m+1}|}},
\end{align*}
and the (prefix) M-step noise level
\[
\xi_m\ :=\ \frac{C_{\rm sc}}{m_{\rm sc}}\!\left(\sqrt{\frac{K_{\rm win} z_{\rm Fr}}{|D^{(m)}|}} + \frac{z_{\rm Fr}}{|D^{(m)}|}\right)
\qquad(m=1,\dots,\Tmax),
\]
with $C_{\rm H,1}=6C_{\rm it}$.

On the same high-probability event as in \cref{lem:master}, strengthened if needed by a finite union over the
deterministic family of windows $\{D^{(m)}\}_{m\le\Tmax}$ and blocks $\{D_m\}_{m\le\Tmax}$ for the fixed-set concentration
events invoked in \cref{ass:G8}, the following hold simultaneously:
\begin{enumerate}[nosep,label=(\roman*)]
\item \textbf{(Block error bound.)} For every $m\in\{0,\dots,\Tmax-1\}$,
\begin{equation}\label{eq:block-mislabel-rate}
\varepsilon_{m+1}\ \le\ A\,\|\theta^{(m)}-\theta^\star\|\ +\ \varepsilon_{b,m}\ +\ \delta_m.
\end{equation}

\item \textbf{(Prefix averaging recursion.)} For every $m\in\{0,\dots,\Tmax-1\}$,
\begin{equation}\label{eq:prefix-avg-rec}
e_{m+1}
\ \le\ \omega_m\,e_m\ +\ (1-\omega_m)\,\varepsilon_{m+1},
\qquad
\omega_m:=\frac{|D^{(m)}|}{|D^{(m+1)}|}\ \le\ \omega,
\end{equation}
and hence, combining with \eqref{eq:block-mislabel-rate},
\begin{equation}\label{eq:e-recursion}
e_{m+1}
\ \le\ \rho_m\,e_m\ +\ (1-\omega_m)\big(A\,\xi_m+\varepsilon_{b,m}+\delta_m\big)
\qquad(m\ge 1),
\end{equation}
where
\[
\rho_m:=\omega_m+(1-\omega_m)AB_{\rm main},
\qquad
\rho:=\max_{0\le j\le \Tmax-1}\rho_j.
\]
Note that if $AB_{\rm main}\le 1$ (in particular whenever $\rho<1$), then $\rho=\omega+(1-\omega)AB_{\rm main}$ and $\rho_m\le\rho$. (The base block $m=0$ is initialized by \eqref{eq:block-mislabel-rate} with $\theta^{(0)}$.)

\item \textbf{(M-step bound on prefixes.)} For every $m\in\{0,\dots,\Tmax-1\}$,
\begin{equation}\label{eq:theta-from-e}
\|\theta^{(m+1)}-\theta^\star\|
\ \le\ B_{\rm main}\,e_{m+1}\ +\ \xi_{m+1}.
\end{equation}
\end{enumerate}

In particular, if $\rho<1$ then for every $m\in\{1,\dots,\Tmax\}$,
\[
e_m\ \le\ \rho^{m-1}e_1\ +\ \frac{\max_{1\le j\le m-1}\big(A\,\xi_j+\varepsilon_{b,j}+\delta_j\big)}{1-\rho},
\qquad
\|\theta^{(m)}-\theta^\star\|\ \le\ B_{\rm main}\,e_m+\xi_m.
\]
Thus \(\theta^{(\Tmax)}\) contracts to a data-dependent statistical floor. Passing from this finite-sample statement to consistency requires the additional array-level compatibility conditions isolated in Proposition~\ref{prop:compatibility}.
\end{theorem}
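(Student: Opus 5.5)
The proof will assemble the already-established one-step results on the event of \cref{lem:master}, enlarged by the finite union over the deterministic windows $\{D^{(m)}\}$ and blocks $\{D_m\}$ required by \cref{ass:G8}. It then unrolls the resulting scalar recursion. Throughout, the iterates must remain in $\Theta_{\rm sc}$, because \cref{prop:Mstep} and the Lipschitz bound of \cref{ass:G5} are only available there. I would therefore run a joint induction on $m$ whose hypothesis is that $\theta^{(m)}\in\Theta_{\rm sc}$ and that (i)--(iii) hold up to index $m$. The base case is \cref{ass:warmstart}. The M-step is maximised over $\Theta_{\rm sc}$ by construction, so $\theta^{(m+1)}\in\Theta_{\rm sc}$ is automatic and the induction closes without a separate invariance step. \cref{lem:inv-sc} is needed only to argue that the ball constraint is non-binding.

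For (i), I take the E-step output $r^{(m+1)}$ from \eqref{eq:greedy-estep}. The condition \eqref{eq:EsufficientB} implies \eqref{eq:Eelim-cond}: we have $\alpha K_{\rm win}\ge C_{\rm dr,1}K_{\rm win}\ge0$ and $b\ge\Delta_s+C_{\rm dr,0}-\kappa_b\ge\Delta_s$. Hence \cref{lem:E-elim} applies on the block $D_{m+1}$. \cref{prop:Estep} then gives exactly $\varepsilon_{m+1}=\Delta_{m+1}/|D_{m+1}|\le A\|\theta^{(m)}-\theta^\star\|+\varepsilon_{b,m}+\delta_m$, using that $\theta^{(m)}$ is $\mathcal G_{u_m}$-measurable (\cref{lem:greedy-G0pred}). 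This is valid because the Freedman and mass bounds are uniform over the net.

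For (ii), the averaging identity \cref{lem:avg-hamming} holds with equality. Substituting (i) gives $e_{m+1}\le\omega_m e_m+(1-\omega_m)(A\|\theta^{(m)}-\theta^\star\|+\varepsilon_{b,m}+\delta_m)$. For $m\ge1$ I then insert (iii) at index $m$, namely $\|\theta^{(m)}-\theta^\star\|\le B_{\rm main}e_m+\xi_m$, which yields \eqref{eq:e-recursion} with $\rho_m=\omega_m+(1-\omega_m)AB_{\rm main}$. The function $x\mapsto x+(1-x)c$ is nondecreasing when $c=AB_{\rm main}\le1$, so $\rho_m\le\omega+(1-\omega)AB_{\rm main}=\rho$. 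For (iii), I apply \cref{prop:Mstep} with $r=r^{(m+1)}$ and $u=u_{m+1}$, noting that $d_H^{[u]}/|D^{[u]}|=e_{m+1}$ and $\xi_{|D^{(m+1)}|}=\xi_{m+1}$.

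The closed form follows by iterating $e_{j+1}\le\rho e_j+(1-\omega_j)c_j$ from $j=1$, where $c_j=A\xi_j+\varepsilon_{b,j}+\delta_j$. Bounding $(1-\omega_j)\le1$ and summing the geometric series gives $e_m\le\rho^{m-1}e_1+\max_j c_j/(1-\rho)$, and the parameter bound then follows from (iii).

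The main obstacle I expect is simultaneity. Every one-step bound must hold on a single event even though $\theta^{(m)}$ and $r^{(m)}$ are data dependent. I would handle this as in \cref{rem:conditional-freedman}: all bounds are established uniformly over $\theta\in\Theta_\circ$ via the net, and the sets $B_{\theta^{(m)}}$ are covered by \cref{ass:G8}(b). The union over $\Tmax=O(\log|D|)$ windows keeps the failure probability polynomially small.
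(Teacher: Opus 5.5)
Your proposal is correct and follows essentially the same route as the paper. Item (i) is read off from Proposition~\ref{prop:Estep}, and item (ii) from the exact averaging identity of Lemma~\ref{lem:avg-hamming}. The recursion \eqref{eq:e-recursion} comes from inserting Proposition~\ref{prop:Mstep} at the previous index, item (iii) comes directly from Proposition~\ref{prop:Mstep}, and the closed form follows by unrolling the recursion.

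Your extra checks are details the paper leaves implicit, and they are correct: \eqref{eq:EsufficientB} implies \eqref{eq:Eelim-cond}; the localised M-step keeps every iterate in $\Theta_{\rm sc}$, so \cref{ass:G5} and Proposition~\ref{prop:Mstep} apply; and $\rho_m\le\rho$ by monotonicity in $\omega_m$ when $AB_{\rm main}\le 1$. The only slip is a citation: the $\mathcal G_{u_m}$-measurability of $\theta^{(m)}$ comes from \cref{ass:G0pred}(a), not from Lemma~\ref{lem:greedy-G0pred}.
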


\begin{proof}[Proof of \cref{thm:contract}]
Work on the stated high-probability event. Fix $m\in\{0,\dots,\Tmax-1\}$. The quantity $\Delta_{m+1}$ counts mislabels in the block $D_{m+1}$ produced by the greedy
rule \eqref{eq:greedy-estep} at parameter $\theta^{(m)}$.
Proposition~\ref{prop:Estep} gives \eqref{eq:block-mislabel-rate} directly (with $\varepsilon_{m+1}=\Delta_{m+1}/|D_{m+1}|$). Identity \eqref{eq:prefix-avg-rec} is exactly Lemma~\ref{lem:avg-hamming} written in terms of $(e_m)$ and $(\varepsilon_{m+1})$.
Combining \eqref{eq:prefix-avg-rec} with \eqref{eq:block-mislabel-rate} gives
\[
e_{m+1}\ \le\ \omega_m e_m+(1-\omega_m)\Big(A\|\theta^{(m)}-\theta^\star\|+\varepsilon_{b,m}+\delta_m\Big).
\]
For $m\ge 1$, apply Proposition~\ref{prop:Mstep} at index $m-1$ to bound
$\|\theta^{(m)}-\theta^\star\|\le B_{\rm main}e_m+\xi_m$, yielding \eqref{eq:e-recursion}.

Equation \eqref{eq:theta-from-e} is exactly Proposition~\ref{prop:Mstep} rewritten with $e_{m+1}=d_H^{[u_{m+1}]}/|D^{(m+1)}|$. Finally, if $\rho<1$, iterating \eqref{eq:e-recursion} (with $\rho_m\le\rho$) gives the displayed geometric bound on $e_m$,
and plugging into \eqref{eq:theta-from-e} yields the bound on $\|\theta^{(m)}-\theta^\star\|$.
The final consistency statement follows by combining the displayed bound on \(e_m\) with \eqref{eq:theta-from-e}: under
\[
B_{\rm main}\,\rho^{\Tmax-1} e_1 \xrightarrow{\mathbb P}0,\qquad
B_{\rm main}\,\frac{\max_{1\le j\le \Tmax-1}\big(A\,\xi_j+\varepsilon_{b,j}+\delta_j\big)}{1-\rho}\xrightarrow{\mathbb P}0,\qquad
\xi_{\Tmax}\xrightarrow{\mathbb P}0,
\]
we obtain \(\|\theta^{(\Tmax)}-\theta^\star\|\xrightarrow{\mathbb P}0\).
\end{proof}

\begin{remark}[Array-level sufficient conditions]\label{rem:array-rho}
In the terminal conclusion of \cref{thm:contract}, the quantities that must vanish are exactly
\[
B_{\rm main}\,\rho^{\Tmax-1} e_1,\qquad
B_{\rm main}\,\frac{\max_{1\le j\le \Tmax-1}\big(A\,\xi_j+\varepsilon_{b,j}+\delta_j\big)}{1-\rho},
\qquad
\xi_{\Tmax}.
\]
A convenient sufficient regime along the triangular array is
\[
\sup_{|D|\ \text{large}}\rho\le \bar\rho<1,\qquad
B_{\rm main}\,e_1=O_{\mathbb P}(1),\qquad
B_{\rm main}\,\max_{1\le j\le \Tmax-1}\big(A\,\xi_j+\varepsilon_{b,j}+\delta_j\big)\xrightarrow{\mathbb P}0,\qquad
\xi_{\Tmax}\xrightarrow{\mathbb P}0,
\]
because then \(1-\rho\ge 1-\bar\rho>0\) and, since \(\Tmax\asymp\log|D|\), also
\(\rho^{\Tmax-1}\le \bar\rho^{\Tmax-1}\to 0\).

Using the definitions of \(\xi_j\) and \(\delta_j\), the middle condition is implied by
\[
B_{\rm main}\,\max_{1\le j\le \Tmax-1}\left[
A\left(\sqrt{\frac{K_{\rm win} z_{\rm Fr}}{|D^{(j)}|}}+\frac{z_{\rm Fr}}{|D^{(j)}|}\right)
+\varepsilon_{b,j}
+\sqrt{\frac{K_{\rm win} z_{\rm Fr}}{|D_{j+1}|}}
\right]\xrightarrow{\mathbb P}0.
\]
In particular, a bound of the form \(\max_m \omega_m<1\) alone does not guarantee the terminal consistency claim.
\end{remark}

\begin{remark}[Localized versus unlocalized M-step]\label{rem:localized-theorem}
The recursion in \cref{thm:contract} is proved for Algorithm~\ref{alg:blockwise-hardem}, i.e.\ with the localized M-step
\[
\theta^{(m+1)}\in \arg\max_{\theta\in\Theta_{\rm sc}}\ell_{r^{(m+1)}}^{[u_{m+1}]}(\theta).
\]
To identify this with the ordinary unconstrained M-step over \(\Theta_\circ\), one additionally needs the gap condition of \cref{lem:Mstep-coincide}.
\end{remark}

\begin{remark}[How \(b\) enters the contraction factor]\label{rem:rho-choice}
The contraction factor satisfies
\[
\rho=\omega+(1-\omega)\frac{L_s}{b}\,B_{\rm main}.
\]
Thus taking \(b\) larger improves contraction. This is only a contraction-oriented choice, however. A consistency choice must simultaneously control the ambiguous-band term and, in Hawkes models, dominate the alignment slack \(\Delta_s\). Proposition~\ref{prop:compatibility} makes this joint requirement explicit. In particular, choosing \(b\asymp K_{\rm win}\) may help ensure \(\rho<1\), but by itself it does not imply a vanishing-floor regime.
\end{remark}

\subsection{When does the floor vanish? Parameter regimes}\label{sec:param-regimes}

We give two sufficient regimes for the truncated kernel Hawkes model of \S\ref{app:hawkes} that do not require strict separation. In \cref{thm:contract}, the dominant statistical floor enters through the blockwise ambiguous-band term $|D_{m+1}|^{-1}N(A_{\theta^\star}(2b)\cap D_{m+1})$. By Assumption~\ref{ass:G8} this concentrates around its compensator, so it suffices to control the $\lambda^\star$-mass of the ambiguous band on deterministic blocks. Accordingly, all bounds below hold verbatim with $D$ replaced by any deterministic block $D_{m+1}$ (or prefix $D^{(m)}$). Throughout, $C_b(\theta^\star)$ is as in \cref{def:C_b} and probabilities/expectations are under $P_{\theta^\star}$.

Write $\mu_\Sigma^{\rm base}(\tau):=\mu_0(\tau;\theta^\star)+\mu_1(\tau;\theta^\star)$ and
\[
U(\tau)\ :=\ \sum_{\ell=0}^1\big(G_{\bullet\ell}*N_\ell\big)(\tau)\ \ (\ge 0),\qquad
\kappa\ :=\ \operatorname*{ess sup}_{\tau\in D}\ \frac{U(\tau)}{\mu_\Sigma^{\rm base}(\tau)}.
\]
Note that $\kappa$ is random through $U(\tau)$. In the truncated Hawkes setting with memory $h$ and radius $R$,
on the window-count event $\Omega_n(h,R)$ we have
\[
U(\tau)=\sum_{\ell=0}^1 (G_{\bullet\ell}*N_\ell)(\tau)
\le G_{\max}\,N\big((t-h,t)\times B_R(x)\big)\ \le\ G_{\max}\,n,
\qquad \tau=(t,x).
\]
Hence, if $\underline\mu^{\rm base}_\Sigma:=\inf_{\tau\in D}\mu_\Sigma^{\rm base}(\tau)>0$, then
\[
\kappa\ \le\ \frac{G_{\max}\,n}{\underline\mu^{\rm base}_\Sigma}\qquad\text{on }\Omega_n(h,R).
\]
Under Assumption~\ref{ass:CSC}, $a_{\rm csc}u_k U\ \le\ \sum_{\ell} (g_{k\ell}*N_\ell)\ \le\ b_{\rm csc}u_k U$ for $k\in\{0,1\}$.

We first define the $\lambda^\star$-weighted law on $D$. Let
\[
\Lambda^\star(D)\ :=\ \int_D \lambda^\star(\tau)\,\dd\tau.
\]
Define the probability measure on $D$ with density proportional to $\lambda^\star$ by
\[
\mathbb P_{\lambda^\star}(A)
\ :=\ \frac{1}{\Lambda^\star(D)}\int_D \mathbf 1\{\tau\in A\}\,\lambda^\star(\tau)\,\dd\tau,
\qquad A\subseteq D\ \text{measurable}.
\]
Equivalently, for any measurable $g$,
\[
\frac{1}{|D|}\int_D g(\tau)\,\lambda^\star(\tau)\,\dd\tau
\ =\ \frac{\Lambda^\star(D)}{|D|}\,\mathbb E_{\lambda^\star}[g(\tau)],
\qquad \tau\sim \mathbb P_{\lambda^\star}.
\]
In particular,
\[
C_b(\theta^\star)=\frac{\Lambda^\star(D)}{|D|}\,
\mathbb P_{\lambda^\star}\!\big(|s^{\rm or}_{\theta^\star}(\tau)|\le 2b\big).
\]

\begin{lemma}[Regime A: weak excitation]\label{lem:regA}
Assume \cref{ass:CSC}. Suppose there exists $\epsilon_0>0$ and $L_0<\infty$ such that the $\lambda^\star$-weighted distribution of the baseline log-odds
\[
S_0(\tau):=\log\frac{\mu_1(\tau;\theta^\star)}{\mu_0(\tau;\theta^\star)}
\]
has a density bounded by $L_0$ on $[-\epsilon_0,\epsilon_0]$. In particular, under $\tau\sim\mathbb P_{\lambda^\star}$,
the random variable $S_0(\tau)$ admits a density $f_{S_0,\lambda^\star}$ on $[-\epsilon_0,\epsilon_0]$ satisfying
\[
\sup_{|x|\le \epsilon_0} f_{S_0,\lambda^\star}(x)\ \le\ L_0.
\]

Define
\[
C_A'\ :=\ \left(\frac{b_{\rm csc}u_1}{\underline\mu_1} + \frac{b_{\rm csc}u_0}{\underline\mu_0}\right)\cdot \sup_{\tau}\mu_\Sigma^{\rm base}(\tau).
\]
Then for any $b>0$ with $2b+C_A'\kappa\le \epsilon_0$,
\[
C_b(\theta^\star)
\ \le\ 2L_0\,(2b+C_A'\kappa)\,\frac{\Lambda^\star(D)}{|D|}.
\]
The bound is most informative when $\Lambda^\star(D)/|D|=O_{\mathbb P}(1)$ (e.g.\ under stationary/ergodic regimes where $\Lambda^\star(D)/|D|\to \bar\lambda$ in probability). In our finite-sample analysis we always have the envelope $\Lambda^\star(D)/|D|\le K_{\rm win}$ on $\Omega_{n_*}(h,R)$, while in typical stationary regimes $K_{\rm win}$ is only a worst-case upper bound.

The trivial envelope $\Lambda^\star(D)/|D|\le K_{\rm win}$ on the window event (hence $K_{\rm win}=O(\log|D|)$)is a worst-case bound and should be viewed only as an envelope, not as the typical scaling. It follows that
\[
C_b(\theta^\star)\ \le\ 2L_0\,(2b+C_A'\kappa)\,K_{\rm win}\qquad\text{on }\Omega_{n_*}(h,R).
\]
Thus $C_b(\theta^\star)$ can be made small by taking $b=b(|D|)\downarrow 0$ and ensuring $\kappa$ is small (e.g.\ on a high-probability envelope event), trading off against the band-mass term.

\end{lemma}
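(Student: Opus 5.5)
The plan is to compare the oracle log-odds $s^{\rm or}_{\theta^\star}$ pointwise with the baseline log-odds $S_0$, and to show the two differ by at most $C_A'\kappa$. Once this holds, the ambiguous band $\{|s^{\rm or}_{\theta^\star}|\le 2b\}$ is contained in $\{|S_0|\le 2b+C_A'\kappa\}$. The bounded-density hypothesis then controls the $\mathbb P_{\lambda^\star}$-probability of that set.

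Step 1 (pointwise log-odds perturbation). Write $\lambda_k^\star=\mu_k+E_k$ with $E_k:=\sum_\ell(g_{k\ell}*N_\ell)\ge 0$. Then
\[
s^{\rm or}_{\theta^\star}-S_0=\log\Big(1+\frac{E_1}{\mu_1}\Big)-\log\Big(1+\frac{E_0}{\mu_0}\Big),
\qquad\text{so}\qquad
|s^{\rm or}_{\theta^\star}-S_0|\le \frac{E_1}{\mu_1}+\frac{E_0}{\mu_0},
\]
using $0\le\log(1+x)\le x$. By Assumption~\ref{ass:CSC}, $E_k\le b_{\rm csc}u_kU$, and by the definition of $\kappa$, $U\le\kappa\,\mu_\Sigma^{\rm base}\le\kappa\sup_\tau\mu_\Sigma^{\rm base}$. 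It remains to lower-bound $\mu_k$ by $\underline\mu_k$. This is the step I expect to need care: the constants $\underline\mu_k$ are stated in Assumption~\ref{ass:G1} for label-induced intensities, whereas here they are needed for the baseline $\mu_k(\tau;\theta^\star)$. I would either read them as baseline margins (which is their role in the Hawkes verification) or use oracle positivity applied to the baseline, i.e.\ empty history. With this bound we get $|s^{\rm or}_{\theta^\star}-S_0|\le C_A'\kappa$ for a.e.\ $\tau$, exactly matching the definition of $C_A'$.

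Step 2 (set inclusion and density bound). If $|s^{\rm or}_{\theta^\star}(\tau)|\le 2b$, then $|S_0(\tau)|\le 2b+C_A'\kappa=:\epsilon\le\epsilon_0$. Hence
\[
\mathbb P_{\lambda^\star}\big(|s^{\rm or}_{\theta^\star}|\le 2b\big)\le\mathbb P_{\lambda^\star}(|S_0|\le\epsilon)=\int_{-\epsilon}^{\epsilon}f_{S_0,\lambda^\star}\le 2L_0\epsilon.
\]
Multiplying by $\Lambda^\star(D)/|D|$ and using the identity $C_b(\theta^\star)=\frac{\Lambda^\star(D)}{|D|}\mathbb P_{\lambda^\star}(|s^{\rm or}_{\theta^\star}|\le 2b)$ from the preamble gives the main bound. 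Note that $\kappa$ is random, so this argument runs pathwise; the condition $2b+C_A'\kappa\le\epsilon_0$ is imposed on the realisation, and the density hypothesis is understood for the realised $\lambda^\star$-weighted law.

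Step 3 (envelope). On $\Omega_{n_*}(h,R)$, Assumption~\ref{ass:G2prime} gives $\lambda^\star\le K_{\rm win}$. This yields $\Lambda^\star(D)/|D|\le K_{\rm win}$, and the second displayed bound follows. The remaining commentary in the statement (choosing $b\downarrow0$ and keeping $\kappa$ small, for example via $\kappa\le G_{\max}n/\underline\mu^{\rm base}_\Sigma$ on the window event) is interpretive and requires no further proof.
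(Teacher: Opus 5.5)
Your proposal is correct and follows the paper's proof essentially step for step. The paper also bounds $|s^{\rm or}_{\theta^\star}-S_0|\le C_A'\kappa$ pointwise, using $\log(1+x)\le x$, the CSC upper bound and $U\le\kappa\,\mu_\Sigma^{\rm base}$, then deduces the band inclusion and applies the density bound under $\mathbb P_{\lambda^\star}$. The margin concern you flag is settled in the Hawkes specialisation, which assumes $\inf_{\theta\in\Theta_\circ,\,\tau\in D}\mu_k(\tau;\theta)\ge\underline\mu_k$ directly for the baselines; that is exactly what the paper's proof uses.
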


\begin{proof}
Under \cref{ass:CSC} and the bounds $a_{\rm csc}u_k U\le \sum_\ell (g_{k\ell}*N_\ell)\le b_{\rm csc}u_k U$, we obtain
\[
\log\left(\frac{\mu_1+a_{\rm csc}u_1 U}{\mu_0+b_{\rm csc}u_0 U}\right)\ \le\ s^{\rm or}_{\theta^\star}\ \le\ \log\left(\frac{\mu_1+b_{\rm csc}u_1 U}{\mu_0+a_{\rm csc}u_0 U}\right).
\]
Write $s^{\rm or}_{\theta^\star}-S_0=\log(1+\delta_1)-\log(1+\delta_0)$ with
\(
\delta_1=\frac{c_1u_1U}{\mu_1},\ \delta_0=\frac{c_0u_0U}{\mu_0},
\)
where $c_1,c_0\in[a_{\rm csc},b_{\rm csc}]$ and $\delta_0,\delta_1\ge0$. Using $\log(1+u)\le u$ for $u\ge0$,
\[
\big|s^{\rm or}_{\theta^\star}-S_0\big|
\ \le\ \log(1+\delta_1)+\log(1+\delta_0)
\ \le\ \delta_1+\delta_0
\ \le\ \left(\frac{b_{\rm csc}u_1}{\mu_1}+\frac{b_{\rm csc}u_0}{\mu_0}\right)U.
\]
Since $\mu_k\ge \underline\mu_k$, $U\le \kappa \mu_\Sigma^{\rm base}(\tau)$, and $\mu_\Sigma^{\rm base}(\tau)\le \sup_\tau \mu_\Sigma^{\rm base}(\tau)$, we get
\[
\big|s^{\rm or}_{\theta^\star}-S_0\big|\ \le\ C_A' \kappa.
\]
Therefore,
\[
\{ |s^{\rm or}_{\theta^\star}|\le 2b \}\ \subseteq\ \{ |S_0|\le 2b+C_A'\kappa \}.
\]
Multiplying by $\lambda^\star$, integrating, and dividing by $|D|$ yields
\begin{align*}
C_b(\theta^\star)
&\le \frac{1}{|D|}\int_D \1\{|S_0(\tau)|\le 2b+C_A'\kappa\}\,\lambda^\star(\tau)\,\dd\tau \\
&= \frac{\Lambda^\star(D)}{|D|}\,
\mathbb P_{\lambda^\star}\!\big(|S_0(\tau)|\le 2b+C_A'\kappa\big)
\ \le\ 2L_0\,(2b+C_A'\kappa)\,\frac{\Lambda^\star(D)}{|D|},
\end{align*}
whenever $2b+C_A'\kappa\le \epsilon_0$.
\end{proof}

\begin{lemma}[Regime B: dominant column bias]\label{lem:regB}
Assume \cref{ass:CSC} and baseline comparability $c_\mu\le \mu_1/\mu_0\le C_\mu$ as in \S\ref{app:hawkes-align}. Let $q:=b_{\rm csc}/a_{\rm csc}$ and define the bias gap
\[
\Delta_{\rm bias}\ :=\ \log\frac{u_1}{u_0}\ -\ \log q.
\]
Suppose $C_\mu\ <\ (u_1/u_0)/q$ (dominant bias), and assume further that the baseline is aligned with the bias:
\[
\mu_1(\tau;\theta^\star)\ \ge\ \mu_0(\tau;\theta^\star)\qquad\text{for a.e.\ }\tau\in D,
\]
equivalently $S_0(\tau)=\log(\mu_1/\mu_0)\ge 0$ a.e. Then for any $b>0$,
\[
\{ |s^{\rm or}_{\theta^\star}|\le 2b \}\ \subseteq\ \{ |S_0|\le 2b \}.
\]
Consequently, if the $\lambda^\star$-weighted distribution of $S_0$ has a density bounded by $L_0$ on $[0,2b]$, then
\[
C_b(\theta^\star)
\ \le\ 2L_0 b\,\frac{\Lambda^\star(D)}{|D|}
\ \le\ 2L_0 b\,K_{\rm win}\qquad\text{on }\Omega_{n_*}(h,R).
\]
\end{lemma}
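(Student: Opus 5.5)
Under the sign condition $S_0\ge 0$ and dominant bias, the plan is to show the pointwise inequality $s^{\rm or}_{\theta^\star}(\tau)\ge S_0(\tau)\ge 0$ for a.e.\ $\tau$. The inclusion follows at once: if $|s^{\rm or}_{\theta^\star}|\le 2b$ then $0\le S_0\le s^{\rm or}_{\theta^\star}\le 2b$, so $|S_0|\le 2b$. The mass bound then mirrors the last step of \cref{lem:regA}. Multiply the indicator inclusion by $\lambda^\star$, integrate over $D$ and divide by $|D|$, which gives
\[
C_b(\theta^\star)\le \frac{\Lambda^\star(D)}{|D|}\,\mathbb P_{\lambda^\star}\big(0\le S_0\le 2b\big)\le 2L_0b\,\frac{\Lambda^\star(D)}{|D|}.
\]
On $\Omega_{n_*}(h,R)$ we use $\Lambda^\star(D)/|D|\le K_{\rm win}$.

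For the pointwise inequality, start from the lower bound used in the proof of \cref{lem:regA}, which rests on \cref{ass:CSC}:
\[
s^{\rm or}_{\theta^\star}\ \ge\ \log\frac{\mu_1+a_{\rm csc}u_1U}{\mu_0+b_{\rm csc}u_0U}.
\]
Set $x:=\mu_1/\mu_0\in[1,C_\mu]$ and $y:=a_{\rm csc}u_1/(b_{\rm csc}u_0)=(u_1/u_0)/q$. The right-hand side equals $\log\frac{\mu_1+yb_{\rm csc}u_0U}{\mu_0+b_{\rm csc}u_0U}$. This is the log of a mediant-type ratio $\frac{x\mu_0+yc}{\mu_0+c}$ with $c=b_{\rm csc}u_0U\ge0$, a convex combination of $x$ and $y$ with weights $\mu_0/(\mu_0+c)$ and $c/(\mu_0+c)$. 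Dominant bias gives $y>C_\mu\ge x$, so the combination is at least $x$. Hence $s^{\rm or}_{\theta^\star}\ge\log x=S_0$, and $S_0\ge 0$ by the alignment assumption. I expect no real obstacle here. The only points needing care are the degenerate case $U=0$, where equality holds, and the a.e.\ qualifier, which is inherited from alignment and from \cref{ass:CSC}.

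One subtlety needs checking. The lemma's density hypothesis is only on $[0,2b]$, which suffices precisely because $S_0\ge0$: then $\{|S_0|\le2b\}=\{0\le S_0\le2b\}$, and its $\mathbb P_{\lambda^\star}$-probability is at most $L_0\cdot 2b$. That is exactly the constant $2L_0b$ claimed, rather than the $4L_0b$ a symmetric window would give. I would also remark that $\Delta_{\rm bias}>0$ is implied by the dominant-bias condition since $C_\mu\ge1$, though the argument only uses $y>C_\mu$ directly.
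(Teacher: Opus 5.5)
Your proposal is correct and follows the paper's proof: you use the same CSC lower bracket, reduce to $s^{\rm or}_{\theta^\star}\ge S_0\ge 0$ to get the inclusion, and integrate the indicator against $\lambda^\star$ using the one-sided density window $[0,2b]$ to obtain $2L_0b\,\Lambda^\star(D)/|D|\le 2L_0 b\,K_{\rm win}$. The only difference is cosmetic: you write the bracket as a convex combination of $\mu_1/\mu_0$ and $(u_1/u_0)/q$ to get $\phi_{\rm low}(U)\ge\phi_{\rm low}(0)$, whereas the paper differentiates $\phi_{\rm low}$ and checks the sign of the numerator, and these express the same fact.
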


\begin{proof}
For any $\tau$, define $\phi_{\rm low}(S):=\log\frac{\mu_1(\tau;\theta^\star)+a_{\rm csc}u_1 S}{\mu_0(\tau;\theta^\star)+b_{\rm csc}u_0 S}$ for $S\ge 0$. Its derivative is
\[
\phi_{\rm low}'(S)\ =\ \frac{a_{\rm csc}u_1 \mu_0\ -\ b_{\rm csc}u_0 \mu_1}{(\mu_1+a_{\rm csc}u_1 S)(\mu_0+b_{\rm csc}u_0 S)}.
\]
The dominant-bias condition $C_\mu < (u_1/u_0)/q$ implies $\frac{\mu_1}{\mu_0} < \frac{a_{\rm csc} u_1}{b_{\rm csc} u_0}$, so the numerator $a_{\rm csc} u_1 \mu_0 - b_{\rm csc} u_0 \mu_1$ is strictly positive. Thus $\phi_{\rm low}$ is strictly increasing in $S\ge 0$, with $\phi_{\rm low}(0)=S_0(\tau)$ and $\lim_{S\to\infty}\phi_{\rm low}(S)=\log(a_{\rm csc}u_1/b_{\rm csc}u_0)=\Delta_{\rm bias}$.

By the bracketing inequality from \cref{lem:regA}, we have $s^{\rm or}_{\theta^\star}(\tau)\ge \phi_{\rm low}(U(\tau))$. Since $\phi_{\rm low}$ is increasing, $\phi_{\rm low}(U)\ge \phi_{\rm low}(0)=S_0(\tau)$. Under the alignment assumption, $S_0(\tau)\ge 0$, hence
\[
s^{\rm or}_{\theta^\star}(\tau)\ \ge\ S_0(\tau)\ \ge\ 0,
\]
so $|s^{\rm or}_{\theta^\star}(\tau)| = s^{\rm or}_{\theta^\star}(\tau)$. Therefore, if $|s^{\rm or}_{\theta^\star}(\tau)|\le 2b$ then
\[
0\ \le\ S_0(\tau)\ \le\ s^{\rm or}_{\theta^\star}(\tau)\ \le\ 2b,
\]
which implies $|S_0(\tau)|\le 2b$. This proves the inclusion. The bound on $C_b(\theta^\star)$ follows by integrating $\1\{|S_0|\le 2b\}$ against $\lambda^\star$ and using the density bound.
\end{proof}

\begin{remark}[Necessity of alignment for Regime B]
If the baseline favored label 0 (so $S_0\ll -2b$) while the feedback bias favored label 1 (so $\Delta_{\rm bias}>2b$), then the score trajectory could cross zero. In the crossing region one may have $|s^{\rm or}_{\theta^\star}|\le 2b$ even though $|S_0|>2b$, invalidating the inclusion in Lemma~\ref{lem:regB}. The alignment condition rules out this cancellation mechanism.
\end{remark}

Strict separation, $\inf_\tau |s^{\rm or}_{\theta^\star}(\tau)|>b$, is sufficient but much stronger than necessary. Lemmas~\ref{lem:regA}--\ref{lem:regB} show that it suffices to make the $\lambda^\star$-mass of the ambiguous band small; the algorithmic penalties then eliminate minorities on decisive sets, and the blockwise recursion of \cref{thm:contract} handles the rest.

Throughout, we fixed a scalar threshold \(b>0\) to define the decisive sets \(S_\theta^\pm(b)\) and the ambiguous band \(A_\theta(b)\). A mild generalization is possible if, at a given update \(m\), one uses a deterministic threshold field
\(b_m:D\to(0,\infty)\) on the active block. In that case the indicators
\(\mathbf 1\{s^{\rm or}_{\theta^{(m)}}(\tau)\ge b_m(\tau)\}\) and
\(\mathbf 1\{s^{\rm or}_{\theta^{(m)}}(\tau)\le -b_m(\tau)\}\)
remain \(\mathcal G\)-predictable, and the arguments below go through with
\[
b_{\min}^{(m)}:=\operatorname*{ess inf}_{\tau\in D} b_m(\tau),
\qquad
b_{\max}^{(m)}:=\operatorname*{ess sup}_{\tau\in D} b_m(\tau).
\]
 Define
\begin{align*}
S_\theta^+\!\big(b_m(\cdot)\big)=&\{\tau:\ s^{\rm or}_\theta(\tau)\ge b_m(\tau)\}\\
S_\theta^-\!\big(b_m(\cdot)\big)=&\{\tau:\ s^{\rm or}_\theta(\tau)\le - b_m(\tau)\}\\
A_\theta\!\big(b_m(\cdot)\big)=&\{\tau:\ |s^{\rm or}_\theta(\tau)|\le b_m(\tau)\}
\end{align*}
and let \(b_{\min}^{(m)}:=\operatorname*{ess inf}_\tau b_m(\tau)\) and \(b_{\max}^{(m)}:=\operatorname*{ess sup}_\tau b_m(\tau)\).
Assumption~\ref{ass:G4} holds pointwise on all of $D$ (and hence in particular on $S_\theta^\pm\!\big(b_m(\cdot)\big)$), and the per-flip bounds of Lemma~\ref{lem:perflip} do not depend on $b$. Consequently, the sufficient condition \eqref{eq:EsufficientB} with \(b\) replaced by \(b_{\min}^{(m)}\) implies the E step elimination conclusion of Lemma~\ref{lem:E-elim} for the sets \(S_{\theta^{(m)}}^\pm\!\big(b_m(\cdot)\big)\).

The decisive set stability bound also adapts with a minimal change. Writing \(\Delta_{\rm LLR}:=s^{\rm or}_{\theta^\star}-s^{\rm or}_{\theta^{(m)}}\), the pointwise inclusions
\begin{align*}
\1_{S_{\theta^{(m)}}^+(b_m) \Delta S_{\theta^\star}^+(b_m)}\ \le&\ \1_{\{|s^{\rm or}_{\theta^\star}|\le 2b_m(\tau)\}}+\1_{\{|\Delta_{\rm LLR}|\ge b_m(\tau)\}},\\
\1_{S_{\theta^{(m)}}^-(b_m) \Delta S_{\theta^\star}^-(b_m)}\ \le&\ \1_{\{|s^{\rm or}_{\theta^\star}|\le 2b_m(\tau)\}}+\1_{\{|\Delta_{\rm LLR}|\ge b_m(\tau)\}}
\end{align*}
yield, after multiplying by \(\lambda^\star\) and integrating, the analogue of Lemma~\ref{lem:set-stab} with \(b\) replaced by \(b_{\min}^{(m)}\) in the Lipschitz term and \(A_{\theta^\star}(2b)\) replaced by \(A_{\theta^\star}(2b_{\max}^{(m)})\). On the event \(E^{\rm Lip}_{|D|}\cap E^{\rm mass}_{|D|}\),
\[
N\!\big(B_{\theta^{(m)}}\cap D_{m+1}\big)\ \le\ \frac{L_s}{b_{\min}^{(m)}} |D_{m+1}| \|\theta^{(m)}-\theta^\star\|
\ +\ N\!\big(A_{\theta^\star}(2b_{\max}^{(m)})\cap D_{m+1}\big)\ +\ 2 C_{\rm it}\sqrt{K_{\rm win}|D_{m+1}|\log|D|}.
\]
Accordingly, the blockwise recursion in \cref{thm:contract} holds with \(A=L_s/b\) replaced by \(A_m=L_s/b_{\min}^{(m)}\),
with \(N\!\big(A_{\theta^\star}(2b)\cap D_{m+1}\big)\) replaced by \(N\!\big(A_{\theta^\star}(2b_{\max}^{(m)})\cap D_{m+1}\big)\),
and with the contraction factors read as
\[
\rho_m\ =\ \omega_m+(1-\omega_m)\,A_m\,B_{\rm main},
\qquad \omega_m=\frac{|D^{(m)}|}{|D^{(m+1)}|}.
\]
Thus the extension is immediate for deterministic threshold fields. Extending the argument to random/predictable threshold fields would require additional concentration assumptions for the resulting random oracle sets, so we do not pursue that generalization here.

\subsection{Causal estimands as functionals}
\label{sec:functional-transfer}

Fix a tessellation $\{\mathcal I_j\}_{j=1}^J$ of \(D=(t^\ast,T]\times\mathcal S\) with \(|\mathcal I_j|>0\).
For a fixed allocation \(z\in\{0,1\}^J\), let \(\Nproc^z\) denote the post-treatment potential process under \(\mathrm{do}(Z=z)\). Under the regime-stable superposition model of Assumption~\ref{ass:id-structural},
\[
\Nproc^z=\Nzero^z+\None^z,
\]
with predictable component intensities
\[
\lambda_{k,\theta}^z(\tau\mid \mathcal F_{t-}^z,\CovX),\qquad k\in\{0,1\},
\]
and total intensity
\[
\lambda_\theta^z(\tau\mid \mathcal F_{t-}^z,\CovX)
=
\lambda_{0,\theta}^z(\tau\mid \mathcal F_{t-}^z,\CovX)
+
\lambda_{1,\theta}^z(\tau\mid \mathcal F_{t-}^z,\CovX).
\]
Here \(\mathcal F_t^z\) is the post-treatment filtration under allocation \(z\) from \cref{sec:identification}. The allocation \(z\) may affect both component intensities; it does not select one active component cellwise. Consequently, spillover allows \(\Nzero^z\)-events to occur in treated cells and \(\None^z\)-events to occur in untreated cells.

Define the interventional mean measure
\[
\Lambda_\theta^z(B)
:=
\bbE_\theta\!\left[\Nproc^z(B)\right]
=
\bbE_\theta\!\left[
\int_B \lambda_\theta^z(\tau\mid \mathcal F_{t-}^z,\CovX)\,\dd\tau
\right],
\qquad B\subseteq D\ \text{Borel}.
\]
Equivalently, if one wishes to keep the conditioning from \cref{sec:identification}, one may write
\[
\Lambda_\theta^z(B\mid \mathcal B_0)
:=
\bbE_\theta\!\left[\Nproc^z(B)\mid \mathcal B_0\right]
=
\bbE_\theta\!\left[
\int_B \lambda_\theta^z(\tau\mid \mathcal F_{t-}^z,\CovX)\,\dd\tau
\,\Big|\, \mathcal B_0
\right].
\]

All causal estimands in \cref{sec:causalEstimands} are finite linear functionals of \(z\mapsto \Lambda_\theta^z\). For example,
\[
\psi_j(z)=\Lambda_\theta^{z^{j=1}}(\mathcal I_j)-\Lambda_\theta^{z^{j=0}}(\mathcal I_j),
\qquad
\psi(z)=\frac{1}{J}\sum_{j=1}^J \psi_j(z),
\]
and
\[
\psi(z_a,z_b)
=
\Lambda_\theta^{z_a}(D)-\Lambda_\theta^{z_b}(D).
\]
More generally, for a finite linear functional
\[
F(\theta)=\sum_{\ell=1}^M c_\ell \Lambda_\theta^{z_\ell}(B_\ell),
\]
the transfer argument below applies verbatim.

\begin{assumption}[Intervention Lipschitz envelope]\label{ass:Lambda-Lip}
There exists \(L_\lambda<\infty\) such that, for all \(\theta,\theta'\in\Theta_\circ\), all fixed allocations \(z\in\{0,1\}^J\),
and all Borel \(B\subseteq D\),
\begin{equation}\label{eq:Lambda-Lip}
\big| \Lambda_\theta^z(B)-\Lambda_{\theta'}^z(B) \big|
\ \le\ L_\lambda |B| \|\theta-\theta'\|.
\end{equation}
\end{assumption}

\begin{proposition}[Functional transfer principle]\label{prop:transfer-functional}
Under \cref{ass:Lambda-Lip} and the blockwise recursion of \cref{thm:contract}, for any finite collection
$\{(c_\ell,z_\ell,B_\ell)\}_{\ell=1}^M$ define the functional
$F(\theta):=\sum_{\ell=1}^M c_\ell \Lambda_\theta^{z_\ell}(B_\ell)$.
Then, on the same high probability event as \cref{thm:contract}, for every update index $m\le \Tmax$,
\[
\big| F(\theta^{(m)})-F(\theta^\star) \big|
\ \le\ L_\lambda \Big(\sum_{\ell=1}^M |c_\ell| |B_\ell|\Big)\,\|\theta^{(m)}-\theta^\star\|.
\]
In particular, with $e_m=d_H^{[u_m]}(r^{(m)},r^\star)/|D^{(m)}|$ and $\xi_m$ as in \cref{thm:contract},
\[
\big| F(\theta^{(m)})-F(\theta^\star) \big|
\ \le\ L_\lambda \Big(\sum_{\ell=1}^M |c_\ell| |B_\ell|\Big)\,\big(B_{\rm main}\,e_m+\xi_m\big).
\]
If additionally $\rho<1$ in \cref{thm:contract}, then for $m\ge 1$,
\[
\big| F(\theta^{(m)})-F(\theta^\star) \big|
\ \le\ L_\lambda \Big(\sum_{\ell=1}^M |c_\ell| |B_\ell|\Big)\left(
B_{\rm main}\left(\rho^{m-1}e_1+\frac{\max_{1\le j\le m-1}\big(A\xi_j+\varepsilon_{b,j}+\delta_j\big)}{1-\rho}\right)+\xi_m\right).
\]
\end{proposition}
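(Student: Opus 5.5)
The proof reduces to three steps: linearity plus the triangle inequality, then Assumption~\ref{ass:Lambda-Lip} termwise, then the bounds already established in \cref{thm:contract}. No new probabilistic argument is needed. All statements are made on the high-probability event of \cref{thm:contract}; the functional bound itself is deterministic given the iterate.

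\emph{Step 1.} For any update index $m$, write
\[
F(\theta^{(m)})-F(\theta^\star)=\sum_{\ell=1}^M c_\ell\big(\Lambda_{\theta^{(m)}}^{z_\ell}(B_\ell)-\Lambda_{\theta^\star}^{z_\ell}(B_\ell)\big).
\]
Bound its absolute value by $\sum_\ell|c_\ell|\,|\Lambda_{\theta^{(m)}}^{z_\ell}(B_\ell)-\Lambda_{\theta^\star}^{z_\ell}(B_\ell)|$.

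To apply Assumption~\ref{ass:Lambda-Lip} we need $\theta^{(m)},\theta^\star\in\Theta_\circ$. This holds because $\theta^\star\in\Theta_\circ$ by standing assumption, and $\theta^{(m)}\in\Theta_{\rm sc}\subseteq\Theta_\circ$. The latter follows by construction of the localised M-step in Algorithm~\ref{alg:blockwise-hardem}, and for $m=0$ from Assumption~\ref{ass:warmstart}. The assumption then bounds each summand by $L_\lambda|B_\ell|\,\|\theta^{(m)}-\theta^\star\|$, which gives the first display.

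\emph{Step 2.} For $m\ge1$, substitute the M-step bound \eqref{eq:theta-from-e} of \cref{thm:contract}, $\|\theta^{(m)}-\theta^\star\|\le B_{\rm main}e_m+\xi_m$, to obtain the second display. This inequality is Proposition~\ref{prop:Mstep} at index $m-1$. For $m=0$, $e_0=0$ and the bound reduces to the warm-start radius, so the second display should be read for $m\ge1$, where $\xi_m$ is defined.

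\emph{Step 3.} When $\rho<1$, insert the geometric bound on $e_m$ stated at the end of \cref{thm:contract},
\[
e_m\le\rho^{m-1}e_1+\frac{\max_{j\le m-1}\big(A\xi_j+\varepsilon_{b,j}+\delta_j\big)}{1-\rho}.
\]
This requires $B_{\rm main}\ge0$, which holds since it is a ratio of nonnegative constants. The result is the third display.

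The only point requiring care is that all three inequalities are used on one common event. They are: Assumption~\ref{ass:Lambda-Lip} is deterministic, and the parameter bounds hold simultaneously for all $m\le\Tmax$ on the event of \cref{thm:contract}, so the final bound holds on that same event. I expect no genuine obstacle beyond checking membership in $\Theta_\circ$ and this uniformity over $m$.
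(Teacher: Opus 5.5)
Your proposal is correct and matches the paper's proof. Linearity plus Assumption~\ref{ass:Lambda-Lip} gives the Lipschitz bound for $F$, and the remaining two displays follow by substituting \eqref{eq:theta-from-e} and the geometric bound on $e_m$ from \cref{thm:contract}. Your extra checks are correct details the paper leaves implicit: that $\theta^{(m)}\in\Theta_{\rm sc}\subseteq\Theta_\circ$, and that the second display is only meaningful for $m\ge 1$, where $\xi_m$ is defined.
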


\begin{proof}
By linearity and \eqref{eq:Lambda-Lip},
$|F(\theta)-F(\theta')|\le L_\lambda(\sum_\ell |c_\ell||B_\ell|)\|\theta-\theta'\|$.
Apply this with $\theta=\theta^{(m)}$ and $\theta'=\theta^\star$.
The bounds in terms of $(e_m)$ and $(\xi_m)$ follow directly from \eqref{eq:theta-from-e} in \cref{thm:contract}.
\end{proof}

\begin{corollary}[Compact bounds for ITE/AITE/DAITE/TAITE/DTAITE]\label{cor:compact-estimands}
With $L_\lambda$ as above, the plug-in estimators at $\theta^{(m)}$ satisfy
\[
\big|\widehat\psi_j(z)-\psi_j(z)\big|\ \le\ 2L_\lambda |\mathcal I_j|\,\|\theta^{(m)}-\theta^\star\|,
\]
\[
\big|\widehat\psi(z)-\psi(z)\big|\ \le\ \frac{2L_\lambda}{J}\Big(\sum_{j=1}^J |\mathcal I_j|\Big)\,\|\theta^{(m)}-\theta^\star\|,
\]
\[
\big|\widehat\psi(z_a,z_b)-\psi(z_a,z_b)\big|\ \le\ 2L_\lambda |D|\,\|\theta^{(m)}-\theta^\star\|,
\]
\[
\left|\frac{\widehat\psi(z_a,z_b)-\psi(z_a,z_b)}{|D|}\right|
\ \le\ 2L_\lambda\,\|\theta^{(m)}-\theta^\star\|.
\]
and, more generally, any TAITE/DTAITE formed as a finite average of cell-level contrasts inherits the AITE scaling by linearity: its plug-in error is bounded by a constant multiple of \((|D|/J)\,\|\theta^{(m)}-\theta^\star\|\), where the constant depends only on the averaging scheme. In particular, on the event of \cref{thm:contract},
\[
\|\theta^{(m)}-\theta^\star\|\ \le\ B_{\rm main}\,e_m+\xi_m.
\]
Hence, whenever \(\|\theta^{(m)}-\theta^\star\|\xrightarrow{\mathbb P}0\), the following plug-in consistency statements hold:
\begin{enumerate}[nosep,label=(\alph*)]
\item ITEs are consistent for any fixed cell, more generally whenever \(\sup_j |\mathcal I_j|<\infty\);
\item AITE, and likewise TAITE/DTAITE, are consistent whenever the average cell volume is uniformly bounded, i.e.
\[
\frac{1}{J}\sum_{j=1}^J |\mathcal I_j|=\frac{|D|}{J}=O(1);
\]
\item for DAITE, parameter consistency alone yields consistency only for the normalized estimand \(|D|^{-1}\psi(z_a,z_b)\); the unnormalized DAITE additionally requires
\[
|D|\,\|\theta^{(m)}-\theta^\star\|\xrightarrow{\mathbb P}0.
\]
\end{enumerate}
\end{corollary}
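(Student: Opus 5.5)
The plan is to obtain every displayed bound as a special case of \cref{prop:transfer-functional}. For each estimand we write it as a finite linear functional $F(\theta)=\sum_\ell c_\ell\Lambda_\theta^{z_\ell}(B_\ell)$ and compute the weight $\sum_\ell|c_\ell||B_\ell|$.

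\textbf{Computing the weights.} For the ITE, $\psi_j(z)=\Lambda_\theta^{z^{j=1}}(\mathcal I_j)-\Lambda_\theta^{z^{j=0}}(\mathcal I_j)$. This has two terms with $|c_\ell|=1$ and $B_\ell=\mathcal I_j$, so the weight is $2|\mathcal I_j|$.

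For the AITE, averaging the $J$ ITEs with coefficients $\pm1/J$ gives weight $\frac{2}{J}\sum_j|\mathcal I_j|$. Since the cells partition $D$, this equals $2|D|/J$.

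For the DAITE, the two terms have $B_\ell=D$, so the weight is $2|D|$. Dividing by $|D|$ gives the normalised bound.

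For a TAITE $\psi(\mathcal Z)$, it is a convex combination of AITEs. By the triangle inequality its error is at most $\max_{z\in\mathcal Z}$ of the AITE bound, namely $(2L_\lambda|D|/J)\|\theta^{(m)}-\theta^\star\|$. A DTAITE is a difference of two TAITEs, so its constant doubles. This justifies the claim that the constant depends only on the averaging scheme.

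In each case \cref{prop:transfer-functional} gives $|F(\theta^{(m)})-F(\theta^\star)|\le L_\lambda(\text{weight})\|\theta^{(m)}-\theta^\star\|$. Then \eqref{eq:theta-from-e} of \cref{thm:contract} supplies $\|\theta^{(m)}-\theta^\star\|\le B_{\rm main}e_m+\xi_m$ on the same event.

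\textbf{Consistency statements.} If $\|\theta^{(m)}-\theta^\star\|\to0$ in probability, the following hold.

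\begin{enumerate}[label=(\alph*)]
\item The ITE bound is at most $2L_\lambda\sup_j|\mathcal I_j|\cdot o_{\mathbb P}(1)\to 0$ when the cell volumes are uniformly bounded.
\item The AITE, TAITE and DTAITE bounds are $O(|D|/J)\cdot o_{\mathbb P}(1)$, which vanishes when $|D|/J=O(1)$.
\item The normalised DAITE bound is $2L_\lambda\|\theta^{(m)}-\theta^\star\|$, which vanishes. For the unnormalised DAITE the bound is $2L_\lambda|D|\,\|\theta^{(m)}-\theta^\star\|$, which vanishes exactly under the stated extra rate condition.
\end{enumerate}

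All constants ($L_\lambda$, the averaging weights) are fixed along the array, so multiplying by $o_{\mathbb P}(1)$ preserves convergence in probability.

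\textbf{Main point of care.} The arithmetic is routine; the one thing to get right is bookkeeping. \cref{ass:Lambda-Lip} must be applied to the counterfactual allocations $z^{j=1}$, $z^{j=0}$, $z_a$ and $z_b$, and not only to the observed allocation. This is legitimate because the assumption is stated uniformly over all $z\in\{0,1\}^J$ and all Borel $B\subseteq D$. With that checked, no probabilistic argument beyond the high-probability event of \cref{thm:contract} is needed.
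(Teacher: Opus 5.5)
Your proposal is correct and follows the paper's route. The paper gives no separate proof and treats the corollary as immediate from \cref{prop:transfer-functional}: you read off $\sum_\ell|c_\ell||B_\ell|$ for each estimand (using $\sum_j|\mathcal I_j|=|D|$ from the tessellation), invoke \eqref{eq:theta-from-e}, and the consistency claims then follow because the Lipschitz bound is deterministic. One implicit check is worth stating: \cref{ass:Lambda-Lip} requires $\theta^{(m)}\in\Theta_\circ$, and this holds because the warm start and the localised M-step keep $\theta^{(m)}\in\Theta_{\rm sc}\subseteq\Theta_\circ$.
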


\section{Verification for parametric families}
\subsection{Specialization: truncated-kernel Hawkes}\label{app:hawkes}

This section verifies the model-dependent parts of \cref{ass:G0,ass:G1,ass:G2prime,ass:G4,ass:G5} for Hawkes models with nonnegative kernels of finite temporal memory \(h\) and finite spatial range \(R\). Further, it derives \cref{ass:G7prime} from the standing gradient-level regularity part of \cref{ass:G3}. Predictability of the analysed blockwise operator is enforced algorithmically (see \cref{ass:G0pred} and Lemma~\ref{lem:greedy-G0pred} in \cref{sec:perflip}). The value-level part of \cref{ass:G3} is verified explicitly below; the gradient-level part of \cref{ass:G3} (bounds on \(\Delta(\nabla_\theta\widehat\lambda)\), \(S_\infty\), and \(H_\infty\)) is used through \cref{lem:score-nearzero,lem:theta-Lip-integrands,prop:hawkes-G7prime} and should either be checked separately from the smoothness of \(\mu_k\) and the kernels or retained as an additional standing regularity assumption. Assumption~\ref{ass:G8} remains a separate model-dependent high-signal/concentration requirement; \S\ref{sec:param-regimes} only gives sufficient regimes for small ambiguous-band mass and does not constitute a full verification of \cref{ass:G8}. For convenience:
\begin{itemize}[nosep]
\item Activity and tails: \cref{ass:G0} (Proposition~\ref{prop:hawkes-G0}).
\item Window envelope: \cref{ass:G2prime} (Lemma~\ref{lem:hawkes-window} and Proposition~\ref{prop:hawkes-tail}).
\item Uniform margins and value-level single-flip locality: \cref{ass:G1} and the value-level part of \cref{ass:G3} (Lemmas~\ref{lem:hawkes-margin} and \ref{lem:B-constants}).
\item Alignment and LLR Lipschitz: \cref{ass:G4,ass:G5} (see \S\ref{app:hawkes-align} and \cref{lem:hawkes-Lip-avg}).
\item Label-to-score Lipschitz: \cref{ass:G7prime} from \cref{ass:G3} via Proposition~\ref{prop:hawkes-G7prime}.
\end{itemize}

\subsubsection{Model and parameter space}
For $k\in\{0,1\}$, the $\mathcal G$-predictable Hawkes process intensities are
\[
\lambda_k(\tau\mid \mathcal G_{t-};\theta)
=\mu_k(\tau;\theta)+\sum_{\ell=0}^1\int g_{k\ell}(\tau-\tau';\theta)\,\dd N_\ell(\tau'),
\qquad \tau=(t,x),
\]
where each \(g_{k\ell}(\cdot;\theta)\ge 0\) is jointly measurable, supported on \((0,h]\times B_R(0)\), and satisfies
\[
\sup_{\theta\in\Theta_\circ}\|g_{k\ell}(\cdot;\theta)\|_\infty<\infty,\qquad
\sup_{\theta\in\Theta_\circ}\|g_{k\ell}(\cdot;\theta)\|_{L^1}<\infty,\qquad
\sup_{\theta\in\Theta_\circ}\|g_{k\ell}(\cdot;\theta)\|_{L^2}<\infty.
\]
The truncated exponential family
\[
g_{k\ell}(t,x;\theta)=\alpha_{k\ell}e^{-\beta t}w_{k\ell}(x)\,\mathbf 1_{(0,h]}(t)\,\mathbf 1_{B_R(0)}(x)
\]
is a special case.

Under a candidate labelling $r$, the label-induced intensities (built from $\widehat N_\ell^r$) take the explicit form
\[
\widehat\lambda_k^r(\tau;\theta)
=\mu_k(\tau;\theta)+\sum_{\ell=0}^1\int g_{k\ell}(\tau-\tau';\theta)\,\dd \widehat N_\ell^r(\tau').
\]

Assume a compact parameter neighborhood $\Theta_\circ$ with
\begin{align*}
0\le \alpha_{k\ell}\le \alpha_{\max},\qquad \sup_{\theta\in\Theta_\circ,\ \tau\in D}\ \mu_k(\tau;\theta)\le \overline\mu_k\\
\inf_{\theta\in\Theta_\circ,\ \tau\in D}\ \mu_k(\tau;\theta)\ \ge\ \underline\mu_k>0
\end{align*}
for \(k\in\{0,1\}\) and write \(M_\Sigma^{\rm base}:=\overline\mu_0+\overline\mu_1\). Further assume fixed $\mathcal S\subset\bbR^d$ is bounded with finite Lebesgue measure $\mu$, and $B_R(x)$ denotes the Euclidean ball of radius $R$.

We work throughout conditional on an observed history that covers the memory window (e.g.\ \((t^\ast-h,t^\ast]\times\mathcal S\) for temporal support \(h\)), in the same sense as \S\ref{sec:setup}. For notational convenience we often specialize to the empty-history case
\begin{equation}\label{eq:Hpre}
\tag{Hpre}
N\big((t^\ast -h,t^\ast]\times\mathcal S\big)=0.
\end{equation}
In this case the pseudo-baseline defined below reduces to \(\tilde\mu_k=\mu_k\).
All bounds used in the Hawkes process verification extend verbatim to a conditioned start, provided the conditioned history covers the memory window. The analysis extends immediately to the case where the process is observed on a history interval $[0,t^\ast)$ (or $(-\infty,t^\ast)$) and we maximise the likelihood conditioned on this history.

In this setting, the intensity on $D$ admits the decomposition
\[
\lambda_k(\tau) = \underbrace{\left(\mu_k(\tau;\theta) + \sum_{\ell=0}^1 \int_{(-\infty,t^\ast)} g_{k\ell}(\tau-\tau') \dd N_\ell(\tau')\right)}_{=:\ \tilde\mu_k(\tau;\theta)} + \sum_{\ell=0}^1 \int_{[t^\ast,t)} g_{k\ell}(\tau-\tau') \dd N_\ell(\tau').
\]
The term $\tilde\mu_k$ acts as a spatially varying, predictable pseudo-baseline. Since the pre-$t^\ast$ events are fixed observations, they do not participate in the E-step labelling updates.
Furthermore, the activity bounds in Proposition~\ref{prop:hawkes-tail} rely on domination by a stationary process $\overline N^{\rm stat}$ (which possesses a history on $(-\infty, t^\ast]$), so the window envelopes $K_{\rm win}$ and concentration results remain valid without modification under a conditioned start.

We write \(\mathcal B_R:=B_R(0)\subset\bbR^d\). Let
\[
G_{\bullet\ell}:=g_{0\ell}+g_{1\ell},\qquad
\Gamma_\ell:=\int_0^h\!\!\int_{\mathcal B_R} G_{\bullet\ell}(t,x) \dd\mu(x)\dd t,\qquad
G_{\max}:=\max_\ell \|G_{\bullet\ell}\|_\infty.
\]
Assume the column $L^1$ masses $\Gamma_\ell$ are finite.

\begin{assumption}[Hawkes subcriticality]\label{ass:H-stab}
Let $A\in\bbR^{2\times 2}$ have entries $A_{k\ell}=\int_0^h\!\!\int_{\mathcal B_R} g_{k\ell}(t,x) \dd\mu(x)\dd t$. Assume the spectral radius $\xi(A)<1$.
\end{assumption}

\subsubsection{Predictability of the analysed operator (verifies \cref{ass:G0pred})}

Assumption~\ref{ass:G0pred} is enforced by the analyzed operator (Algorithm~\ref{alg:blockwise-hardem}) and is therefore model-agnostic.
In particular, Lemma~\ref{lem:greedy-G0pred} shows that the blockwise greedy E-step assigns labels non-anticipatively (eventwise $\mathcal G_{t_i}$-measurability on each processed block), and the M-step at block end $u_m$ produces $\theta^{(m)}$ that is $\mathcal G_{u_m}$-measurable since it depends only on the restriction of the data to the cumulative window $D^{(m)}=(t^\ast,u_m]\times\mathcal S$.
Thus, in the Hawkes process setting no additional model-specific verification of Assumption~\ref{ass:G0pred} is required.
We next verify the model-dependent activity/tail condition of Assumption~\ref{ass:G0}.

\begin{proposition}[Activity and exponential tails for $N(D)$]
\label{prop:hawkes-G0}
Consider the truncated kernel Hawkes model of \cref{app:hawkes} on $D=(t^\ast,T]\times \mathcal{S}$, with $g_{k\ell}\ge 0$ supported in $(0,h]\times B_R(0)$ and bounded baselines $\mu_k(\tau;\theta^\star)\le \overline\mu_k$ for $k\in\{0,1\}$. Assume Hawkes process subcriticality Assumption~\ref{ass:H-stab}, i.e.\ the reproduction matrix $A$ satisfies $\xi(A)<1$. Further assume the empty pre-history condition \eqref{eq:Hpre}. Then there exist constants $C_{\rm act}<\infty$, $\eta_{\rm act}\in(0,1)$ and $c>0$ (depending only on the Hawkes process parameters and on $\Theta_\circ$, but not on $D$) such that for all sufficiently large $|D|$,
\[
\bbE N(D)\ \le\ C_{\rm act}\,|D|,
\qquad
\bbP\!\big(N(D)\ge (1+\eta_{\rm act})C_{\rm act}|D|\big)\ \le\ e^{-c|D|}.
\]
In particular, Assumption~\ref{ass:G0} holds.
\end{proposition}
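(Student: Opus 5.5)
We will prove the statement by dominating the bivariate Hawkes process by a subcritical Poisson cluster (branching) process and then using exponential moments of cluster sizes. Under \eqref{eq:Hpre}, the process on $D$ admits the Hawkes--Oakes cluster representation. Immigrants of type $k$ arrive as a Poisson process with intensity $\mu_k(\tau;\theta^\star)\le\overline\mu_k$, and each type-$\ell$ point produces type-$k$ offspring as a Poisson process with intensity $g_{k\ell}(\cdot-\tau')$. By thinning, we couple $(N_0,N_1)$ inside a process $\overline N$ whose immigrants are homogeneous Poisson with rate $\overline\mu_k$ on $D$ (immigrants outside $D$ are absent), with the same offspring law. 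Hence $N(D)\le \overline N(D)\le \sum_{i}S_i$, where the immigrants form a Poisson process of total mean $M_\Sigma^{\rm base}|D|$ and $S_i$ is the total progeny (cluster size) of immigrant $i$. We count the full cluster even if descendants leave $D$. The $S_i$ are i.i.d. given the immigrant types and independent of the immigrant positions.

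For the mean, the expected cluster size started from type $k$ is the $k$th column sum of $(I-A)^{-1}=\sum_n A^n$. This is finite by Assumption~\ref{ass:H-stab}. Therefore $\bbE N(D)\le \overline{M}\,|D|$ with $\overline M:=\max_k\overline\mu_k$ times $\mathbf 1^\top(I-A)^{-1}\mathbf 1$, or any similar bound, and we set $C_{\rm act}$ accordingly. The constants depend only on $\overline\mu_k$ and on $A$. To make them uniform over $\Theta_\circ$ as claimed, we use $\sup_{\Theta_\circ}\xi(A(\theta))<1$, by compactness and continuity; if this fails we restrict to $\theta^\star$.

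For the tail, the main step is to show that cluster sizes have a finite exponential moment: there is $s_0>0$ with $\phi_k(s):=\bbE_k e^{sS}<\infty$ for $0<s\le s_0$. This is the classical fact for subcritical multitype Galton--Watson trees with Poisson offspring. The offspring numbers are Poisson with means $A_{k\ell}$, so $\phi$ solves the fixed point $\phi_k(s)=e^{s}\exp\{\sum_\ell A_{\ell k}(\phi_\ell(s)-1)\}$. Since $\xi(A)<1$, the implicit function theorem applied at $s=0$, $\phi=\mathbf 1$ yields a finite solution for small $s$, because the Jacobian $I-A^\top$ is invertible. Equivalently, one can cite the exponential tail of total progeny (e.g.\ Heyde / Daley--Vere-Jones). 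This is the step I expect to be the main technical obstacle, chiefly in justifying that the minimal solution equals the true moment generating function; monotone iteration from $\phi=\mathbf 1$ handles it.

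Given this, the compound Poisson Laplace transform gives $\log\bbE e^{s\sum_i S_i}\le |D|\sum_k\overline\mu_k(\phi_k(s)-1)$. By Chernoff, $\bbP(\sum S_i\ge(1+\eta)C|D|)\le\exp\{-|D|[s(1+\eta)C-\sum_k\overline\mu_k(\phi_k(s)-1)]\}$. Since $\phi_k(s)-1=s\,\bbE_kS+O(s^2)$ and $C\ge\sum_k\overline\mu_k\bbE_kS$, choosing $\eta_{\rm act}\in(0,1)$ and $s$ small makes the bracket a positive constant $c$. This yields $e^{-c|D|}$ for all $|D|$, and hence Assumption~\ref{ass:G0}.
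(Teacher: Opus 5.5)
Your proposal is correct and follows essentially the same route as the paper. Both arguments use the Poisson cluster representation, dominate $N(D)$ by a compound Poisson sum of total progenies of immigrants in $D$ (with $\bbE N^{\rm imm}(D)\le M_\Sigma^{\rm base}|D|$), use an exponential moment for subcritical multitype Poisson Galton--Watson total progeny, and conclude with a Chernoff bound. Your justification of the exponential moment (implicit function theorem plus monotone iteration) and your type-by-type compound Poisson transform are in fact more careful than the paper, which only asserts the exponential moment heuristically and works with a single dominating cluster-size variable $Z$.
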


\begin{proof}
Because $g_{k\ell}\ge 0$ and $\xi(A)<1$, the multivariate Hawkes process admits the standard Poisson cluster (branching) representation on $\bbR\times\mathcal S$ (see, e.g., the stability/cluster constructions used already in Proposition~\ref{prop:hawkes-tail}). In this representation, events arise from immigrants generated by an inhomogeneous Poisson process with intensity $\mu_k(\tau;\theta^\star)$ for each mark $k$, and offspring generated recursively, wherein a parent of mark $\ell$ produces children of mark $k$ according to an inhomogeneous Poisson process with intensity kernel $g_{k\ell}$ shifted to the parent location.

Let $N^{\rm imm}(D)$ denote the number of immigrants in $D$ (all marks). Since
$\mu_0(\tau;\theta^\star)+\mu_1(\tau;\theta^\star)\le \overline\mu_0+\overline\mu_1=:M_\Sigma^{\rm base}$ pointwise,
\[
N^{\rm imm}(D)\ \preceq\ \mathrm{Pois}(M_\Sigma^{\rm base}\,|D|),
\qquad
\bbE N^{\rm imm}(D)\ \le\ M_\Sigma^{\rm base}\,|D|.
\]

For $k\in\{0,1\}$, let $Z^{(k)}$ be the total progeny size (all marks, including the ancestor) of a
cluster started by a single immigrant of mark $k$, in the unrestricted process (no truncation by $D$).
Under $\xi(A)<1$ and compactly supported kernels, $(Z^{(k)})$ has finite mean and admits a nontrivial
exponential moment: there exists $\theta_0>0$ such that
\[
m_1\ :=\ \max_{k\in\{0,1\}}\bbE Z^{(k)}\ <\ \infty,
\qquad
m_\theta\ :=\ \max_{k\in\{0,1\}}\bbE\big[e^{\theta Z^{(k)}}\big]\ <\ \infty
\quad\text{for all }\theta\in(0,\theta_0].
\]
(Heuristically: the offspring counts are Poisson with means $A_{k\ell}$, and subcriticality implies the
associated multitype Galton--Watson total population has a finite exponential moment for small $\theta>0$.)

Now dominate $N(D)$ by the total number of points generated by immigrants in $D$:
if $(Z_i)$ are i.i.d.\ copies of a random variable $Z$ satisfying $Z\overset{d}{\ge} Z^{(k)}$ for both $k$
(e.g.\ take $Z$ to be a mixture maximising the mgf), then
\[
N(D)\ \le\ \sum_{i=1}^{N^{\rm imm}(D)} Z_i
\qquad\text{a.s.}
\]
Taking expectations gives
\[
\bbE N(D)\ \le\ \bbE N^{\rm imm}(D)\,\bbE Z\ \le\ M_\Sigma^{\rm base}\,m_1\,|D|.
\]
Thus Assumption~\ref{ass:G0} holds with $C_{\rm act}:=M_\Sigma^{\rm base}m_1$.

For the tail bound, use the compound-Poisson mgf: for any $\theta\in(0,\theta_0]$,
\begin{align*}
\bbE\big[e^{\theta N(D)}\big]
\ \le&\ \bbE\!\left[\exp\!\left(\theta\sum_{i=1}^{N^{\rm imm}(D)} Z_i\right)\right]\\
\ =&\ \exp\!\Big(\bbE N^{\rm imm}(D)\,( \bbE[e^{\theta Z}]-1)\Big)\\
\ \le&\ \exp\!\big(M_\Sigma^{\rm base}|D|\,(m_\theta-1)\big).
\end{align*}
A Chernoff bound yields, for any $x>0$,
\[
\bbP\!\big(N(D)\ge x|D|\big)
\ \le\ \exp\!\Big(-|D|\big(\theta x - M_\Sigma^{\rm base}(m_\theta-1)\big)\Big).
\]
Since $x\mapsto \sup_{\theta\in(0,\theta_0]}(\theta x - M_\Sigma^{\rm base}(m_\theta-1))$ is strictly positive for
all $x$ above the mean rate, there exist $\eta_{\rm act}\in(0,1)$ and $c>0$ such that taking
$x=(1+\eta_{\rm act})C_{\rm act}$ gives the claimed $e^{-c|D|}$ bound for all large $|D|$.
\end{proof}

\subsubsection{Margins, single-flip locality, and window envelopes (verifies \cref{ass:G1,ass:G3,ass:G2prime})}
\label{app:hawkes-B}

Lemma~\ref{lem:B-constants} provides explicit single-flip influence bounds for  $\Delta\widehat\lambda^r$ (and componentwise analogues), verifying the value-level part of Assumption~\ref{ass:G3}. Lemma~\ref{lem:hawkes-margin} yields the uniform lower margins required in Assumption~\ref{ass:G1}. Lemma~\ref{lem:hawkes-window} gives the deterministic envelope on $\Omega_n(h,R)$ and Proposition~\ref{prop:hawkes-tail} supplies the polynomial tail bound for $\Omega_{n_*}(h,R)^c$, together establishing Assumption~\ref{ass:G2prime}.

\begin{lemma}[Single-flip influence constants]\label{lem:B-constants}
Let \(G_{\bullet\ell}(\cdot;\theta):=g_{0\ell}(\cdot;\theta)+g_{1\ell}(\cdot;\theta)\). For a single flip at \(\zeta\),
\[
\int_D |\Delta\widehat\lambda^r| \dd\tau\ \le\ \|G_{\bullet,1}(\cdot;\theta)-G_{\bullet,0}(\cdot;\theta)\|_{L^1},
\]
\[
\int_D \frac{(\Delta\widehat\lambda^r)^2}{\underline\mu_\Sigma} \dd\tau\ \le\ \frac{\|G_{\bullet,1}(\cdot;\theta)-G_{\bullet,0}(\cdot;\theta)\|_{L^2}^2}{\underline\mu_\Sigma},
\qquad
\|\Delta\widehat\lambda^r\|_\infty\ \le\ \|G_{\bullet,1}(\cdot;\theta)-G_{\bullet,0}(\cdot;\theta)\|_\infty.
\]
Analogous componentwise inequalities hold with \(G_{\bullet,1}-G_{\bullet,0}\) replaced by \(g_{k1}-g_{k0}\). In particular, Assumption~\ref{ass:G3} holds with
\[
B_1:=\sup_{\theta\in\Theta_\circ}\|G_{\bullet,1}(\cdot;\theta)-G_{\bullet,0}(\cdot;\theta)\|_{L^1},
\]
\[
B_2:=\sup_{\theta\in\Theta_\circ}\frac{\|G_{\bullet,1}(\cdot;\theta)-G_{\bullet,0}(\cdot;\theta)\|_{L^2}^2}{\underline\mu_\Sigma},
\qquad
B_\infty:=\sup_{\theta\in\Theta_\circ}\|G_{\bullet,1}(\cdot;\theta)-G_{\bullet,0}(\cdot;\theta)\|_\infty,
\]
and similarly for \(B_1^{\rm comp},B_2^{\rm comp},B_\infty^{\rm comp}\).
\end{lemma}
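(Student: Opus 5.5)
The plan is to compute \(\Delta\widehat\lambda^r\) in closed form and then bound it by translation invariance of the base measure. The key observation is that, under a fixed candidate labelling \(r\), the label-induced intensities are affine functionals of the induced counting measures,
\[
\widehat\lambda_k^r(\tau;\theta)=\mu_k(\tau;\theta)+\sum_{\ell=0}^1\int g_{k\ell}(\tau-\tau';\theta)\,\dd\widehat N_\ell^r(\tau').
\]
The event locations are not resimulated when a label changes. A flip only moves the atom at \(\zeta=\gamma_j\) from one induced component to the other. So I will first write \(r'\) as \(r\) with \(r_j\) replaced by \(1-r_j\), and observe that
\[
\widehat N_\ell^{r'}-\widehat N_\ell^{r}=\pm\bigl(\1\{\ell=1\}-\1\{\ell=0\}\bigr)\,\delta_\zeta ,
\]
with sign \(\sigma_j\) (\(+1\) for a \(0\to1\) flip). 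Every other atom keeps its label.

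It follows immediately that
\[
\Delta\widehat\lambda_k^r(\tau;\theta)=\sigma_j\bigl(g_{k1}(\tau-\zeta;\theta)-g_{k0}(\tau-\zeta;\theta)\bigr),
\qquad
\Delta\widehat\lambda^r(\tau;\theta)=\sigma_j\bigl(G_{\bullet,1}-G_{\bullet,0}\bigr)(\tau-\zeta;\theta).
\]
The baseline \(\mu_k\), and also the pseudo-baseline \(\tilde\mu_k\) built from the fixed observed pre-\(t^\ast\) history, cancel in these differences. I expect this to be the step needing the most care, namely justifying that there is no cascading effect. A flip does alter the intensities at later event times, but in the label-induced construction those later events are not moved or relabelled. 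The difference is therefore exactly the single kernel term and not a sum over descendants. I will also note that \(\Delta\widehat\lambda^r\) vanishes for \(t\le t_j\), because \(g_{k\ell}\) is supported in \((0,h]\times B_R(0)\). This is consistent with the predictability used in Lemma~\ref{lem:perflip}.

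With the explicit formula in hand, the bounds are routine. For the sup-norm, \(|\Delta\widehat\lambda^r(\tau)|\le\|G_{\bullet,1}-G_{\bullet,0}\|_\infty\) holds pointwise. For the \(L^1\) and \(L^2\) bounds, I will extend the integral over \(D\) to all of \(\bbR\times\bbR^d\) using nonnegativity of the integrands, and then change variables \(\tau\mapsto\tau-\zeta\). This uses that \(\dd\tau=\dd t\otimes\dd\mu(x)\) with \(\mu\) Lebesgue measure, which is translation invariant, as assumed in the Hawkes specialization. The result is \(\int_D|\Delta\widehat\lambda^r|\le\|G_{\bullet,1}-G_{\bullet,0}\|_{L^1}\), and likewise for the squared term divided by the constant \(\underline\mu_\Sigma\).

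The componentwise statements follow identically with \(g_{k1}-g_{k0}\) in place of \(G_{\bullet,1}-G_{\bullet,0}\). The summed versions \(B_1^\Sigma,B_2^\Sigma\) follow by adding over \(k\). Finally, I will take suprema over \(\theta\in\Theta_\circ\). These suprema are finite because the kernels have uniformly bounded \(\|\cdot\|_\infty,\|\cdot\|_{L^1},\|\cdot\|_{L^2}\) norms on \(\Theta_\circ\), and the triangle inequality controls the norm of a difference by the sum of the norms. This yields the stated constants \(B_1,B_2,B_\infty\) and their componentwise analogues, uniformly in \(r\), \(j\) and \(\zeta\), as required by the value-level part of Assumption~\ref{ass:G3}.
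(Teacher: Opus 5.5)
Your proposal is correct. The paper states this lemma without a written proof, and the constants it lists are exactly what your direct computation yields: the flip moves one atom between induced components, so $\Delta\widehat\lambda_k^r=\sigma_j(g_{k1}-g_{k0})(\cdot-\zeta)$ exactly, and the bounds then follow from the pointwise sup bound and a translation-invariant change of variables. The two points of care you flag are the right ones: there is no cascading effect in the label-induced construction, and the change of variables for the $L^1$ and $L^2$ bounds needs $\mu$ to be Lebesgue measure.
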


\begin{lemma}[Per-flip componentwise margin]\label{lem:hawkes-margin}
Assume $g_{k\ell}\ge 0$ and $\inf_{\tau\in D}\mu_k(\tau;\theta)\ge \underline\mu_k>0$ for $k\in\{0,1\}$. For any single flip and $u\in[0,1]$, almost everywhere on $D$,
\[
\widehat\lambda^r(\tau)+u \Delta\widehat\lambda^r(\tau)\ \ge\ \mu_0(\tau;\theta)+\mu_1(\tau;\theta)\ \ge\ \underline\mu_0+\underline\mu_1=:\underline\mu_\Sigma,
\]
and
\[
\widehat\lambda_k^r(\tau)+u \Delta\widehat\lambda_k^r(\tau)\ \ge\ \mu_k(\tau;\theta)\ \ge\ \underline\mu_k\qquad (k=0,1).
\]
\end{lemma}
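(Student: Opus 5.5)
The plan is to prove the two lower bounds directly from the explicit Hawkes form of the label-induced intensities, using only nonnegativity of the kernels. The key is to write the interpolated intensity as a convex combination of two genuine label-induced intensities.

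For $u\in[0,1]$, expanding the definition of $\Delta\widehat\lambda^r_k$ gives
\[
\widehat\lambda_k^r(\tau)+u\,\Delta\widehat\lambda_k^r(\tau)=(1-u)\,\widehat\lambda_k^r(\tau)+u\,\widehat\lambda_k^{r'}(\tau),
\]
where $r'$ is the single-flip labelling. So it suffices to show $\widehat\lambda_k^{r}\ge \mu_k(\cdot;\theta)$ and $\widehat\lambda_k^{r'}\ge\mu_k(\cdot;\theta)$ for every labelling. A convex combination of two functions, each bounded below by $\mu_k$, is itself bounded below by $\mu_k$.

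For a generic labelling $r$, the explicit representation is
\[
\widehat\lambda_k^r(\tau;\theta)=\mu_k(\tau;\theta)+\sum_{\ell}\int g_{k\ell}(\tau-\tau';\theta)\,\dd\widehat N_\ell^r(\tau').
\]
Each $\widehat N^r_\ell$ is a nonnegative counting measure, and $g_{k\ell}\ge 0$, so the triggering sum is nonnegative pointwise. This gives $\widehat\lambda_k^r\ge\mu_k\ge\underline\mu_k$. Under a conditioned pre-history the same holds, with $\mu_k$ replaced by $\tilde\mu_k\ge\mu_k$, since the pre-$t^\ast$ contributions are also nonnegative sums of kernels.

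Summing over $k\in\{0,1\}$ and using the same convex-combination identity for $\widehat\lambda^r+u\Delta\widehat\lambda^r$ gives the total margin $\mu_0+\mu_1\ge\underline\mu_0+\underline\mu_1=\underline\mu_\Sigma$.

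The "almost everywhere" qualifier only reflects that the intensities are chosen as predictable, left-continuous versions. The inequalities hold at every $\tau$ for those versions, hence a.e. There is no real obstacle here. The only point needing care is noticing that $\widehat\lambda^r+u\Delta\widehat\lambda^r$ need not be a label-induced intensity itself, which is why the convex-combination rewriting, rather than a direct appeal to the intensity form, is the step to make explicit.
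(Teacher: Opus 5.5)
Your argument is correct and is the natural one. The paper states this lemma without a written proof and relies implicitly on the nonnegativity of the kernels and of the labelled counting measures. Your rewriting $\widehat\lambda_k^r+u\,\Delta\widehat\lambda_k^r=(1-u)\,\widehat\lambda_k^r+u\,\widehat\lambda_k^{r'}$, together with your remark that a conditioned pre-history only raises the baseline to $\tilde\mu_k\ge\mu_k$, makes exactly that step explicit.
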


\begin{lemma}[Window envelope for Hawkes]\label{lem:hawkes-window}
Let $L=h$ and $R$ as in the kernel support. On the event $\Omega_{n}(h,R)$ and under \eqref{eq:Hpre},
\[
\sup_{\tau\in D}\ (\lambda_0^\star+\lambda_1^\star)(\tau)\ \le\ M_\Sigma^{\rm base}+G_{\max}n,\qquad
\sup_{\tau\in D}\ (\widehat\lambda_0^r+\widehat\lambda_1^r)(\tau;\theta)\ \le\ M_\Sigma^{\rm base}+G_{\max}n
\]
for all $\theta\in\Theta_\circ$ and all $r\in\mathcal R_{\rm path}^{\rm flip}$. Hence \cref{ass:G2prime} holds with $C_{\rm base}=M_\Sigma^{\rm base}$ and $C_{\rm loc}=G_{\max}$.
\end{lemma}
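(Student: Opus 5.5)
The plan is a direct pathwise comparison, since both intensities are affine in the counting measures. On $\Omega_n(h,R)$, every $\tau=(t,x)\in D$ sees at most $n$ observed events in the cone $(t-h,t)\times B_R(x)$. I would prove the oracle and label-induced bounds separately, using only nonnegativity and support of the kernels, the baseline cap $\overline\mu_k$, and the empty pre-history \eqref{eq:Hpre}.

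\emph{Oracle bound.} I would write
\[
\lambda_0^\star+\lambda_1^\star=\mu_0+\mu_1+\sum_{\ell}\int G_{\bullet\ell}(\tau-\tau';\theta^\star)\,\dd N_\ell(\tau').
\]
The baseline part is at most $M_\Sigma^{\rm base}$.

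For the triggering part:
\begin{itemize}
\item Under \eqref{eq:Hpre}, only post-$t^\ast$ events contribute.
\item Since $G_{\bullet\ell}$ is supported in $(0,h]\times\mathcal B_R$, only events $\tau'$ with $t-h\le t'<t$ and $x'\in B_R(x)$ contribute.
\item Each contributing event adds at most $\|G_{\bullet\ell}\|_\infty\le G_{\max}$.
\item Summing over both $\ell$, the total is at most $G_{\max}$ times the number of events in the window. This is because $N_0+N_1=N$ and each event carries exactly one mark.
\end{itemize}

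\emph{Label-induced bound.} For $\widehat\lambda^r$ the same computation applies with $\widehat N_\ell^r$ in place of $N_\ell$. Any labelling $r$, including path iterates and their single flips, satisfies $\widehat N_0^r+\widehat N_1^r=N$ as counting measures. Relabelling therefore only redistributes the same events between columns, each contributing at most $G_{\max}$. The bound holds uniformly in $\theta\in\Theta_\circ$ because $G_{\max}$ and $\overline\mu_k$ are taken as suprema over $\Theta_\circ$.

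\emph{Boundary detail.} The only delicate point is the endpoint of the support. $G$ lives on $(0,h]$, so an event at exactly $t-h$ could contribute, while $\Omega_n$ counts the open interval $(t-h,t)$. I would handle this either by noting that it is a null event for a simple process with absolutely continuous compensator, or by evaluating at left limits and applying the window bound at $t-\epsilon$. I expect this to be the only mild obstacle.

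\emph{Conclusion.} With $n=n_*(|D|)$ this gives $K_{\rm win}=C_{\rm base}+C_{\rm loc}n_*$ with $C_{\rm base}=M_\Sigma^{\rm base}$ and $C_{\rm loc}=G_{\max}$. The tail part of \cref{ass:G2prime} is deferred to Proposition~\ref{prop:hawkes-tail}.
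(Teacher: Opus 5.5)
Your proof is correct and matches the argument the paper relies on. The paper records this lemma without a written proof, but it uses exactly the bound $\sum_\ell (G_{\bullet\ell}*N_\ell)(\tau)\le G_{\max}N((t-h,t)\times B_R(x))\le G_{\max}n$ in \S\ref{sec:param-regimes}, and your observation $\widehat N_0^r+\widehat N_1^r=N$ handles every labelling, not just those in $\mathcal R_{\rm path}^{\rm flip}$.

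On the endpoint issue, which the paper glosses over, only your second fix works. The supremum runs over all $\tau\in D$, including $t=t_i+h$ for every event $t_i$, so the ``null event'' argument does not apply. Applying $\Omega_n(h,R)$ at $(t-\epsilon,x)\in D$ for small $\epsilon>0$ does give $N([t-h,t)\times B_R(x))\le n$ directly.
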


\begin{proposition}[Uniform window occupancy tail on the observation window]\label{prop:hawkes-tail}
Consider the truncated kernel Hawkes model of \cref{app:hawkes} on the observation window
\[
D := (t^\ast,T]\times\mathcal{S} \subset \bbR\times\bbR^d,
\]
with $K=2$ marks. Assume that the interaction kernels $g_{k\ell}(t,x)\ge 0$ are supported in $(0,h]\times B_R(0)$; the baseline intensities are uniformly bounded:
    \(
      \sup_{k,t,x}\mu_k(t,x) < \infty,
    \)
and that Assumption~\ref{ass:H-stab} holds, i.e. the reproduction matrix $A$ satisfies $\xi(A)<1$.

Let $|D|:=(T-t^\ast)\,\mu(\mathcal{S})$ denote the spacetime volume of $D$, and for $n\in\bbN$ define

\[
  \Omega_n(h,R)
  := \Bigl\{
      \sup_{(t,x)\in D}
      N\bigl((t-h,t)\times B_R(x)\bigr)\le n
    \Bigr\}.
\]
For technical convenience, also define the closed-window variant
\[
  \widetilde\Omega_n(h,R)
  := \Bigl\{
      \sup_{(t,x)\in D}
      N\bigl([t-h,t]\times B_R(x)\bigr)\le n
    \Bigr\}.
\]
Since \(N((t-h,t)\times B_R(x))\le N([t-h,t]\times B_R(x))\), we have \(\widetilde\Omega_n(h,R)\subseteq \Omega_n(h,R)\) and hence \(\bbP(\Omega_n(h,R)^c)\le \bbP(\widetilde\Omega_n(h,R)^c)\).

Then for every $\eta>0$ there exist constants $c_1(\eta),c_2(\eta)>0$ (depending only on the model parameters, on $h,R$ and on $\eta$) such that,
with
\[
  n_*(|D|) := \bigl\lceil c_1(\eta)\,\log|D| \bigr\rceil,
\]
we have
\[
  \bbP\bigl(\Omega_{n_*}(h,R)^c\bigr)
  \;\le\; c_2(\eta)\,|D|^{-\eta}.
\]
\end{proposition}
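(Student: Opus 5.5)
The plan is to reduce the supremum over the uncountable family of windows to a maximum over finitely many enlarged windows, and then bound each enlarged window count by a cluster-sum with an exponential moment, as in Proposition~\ref{prop:hawkes-G0}. By the remark in the statement, it suffices to bound $\bbP(\widetilde\Omega_n(h,R)^c)$.

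\textbf{Step 1 (discretisation).} Cover $[t^\ast-h,T]$ by the intervals $I_a:=[a h,(a+2)h]$ with integer $a$. Cover $\mathcal S$ by balls $B_{2R}(y_b)$ with centres $y_b$ on an $R$-net of $\mathcal S$. Every closed window $[t-h,t]\times B_R(x)$ with $(t,x)\in D$ is contained in some $I_a\times B_{2R}(y_b)$. Since $\mu(\mathcal S)<\infty$ and $\mathcal S$ is bounded, the number of boxes satisfies $Q\le C_{h,R}\,(1+(T-t^\ast)/h)\,(1+\mu(\mathcal S))$. We would like $Q\le C|D|$. For fixed $\mathcal S$ this holds because $|D|=(T-t^\ast)\mu(\mathcal S)$. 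For a growing $\mathcal S_n$ one instead uses the net cardinality bound $\lesssim \mu(\mathcal S_n^{+R})/R^d$, which is polynomial in $|D|$; a polynomial bound is all that is needed. Hence
\[
\bbP(\widetilde\Omega_n^c)\le Q\,\max_{a,b}\bbP\big(N(I_a\times B_{2R}(y_b))>n\big).
\]

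\textbf{Step 2 (exponential moment of a box count, uniform in the box).} Use the Poisson cluster representation with the stationary dominating process $\overline N^{\rm stat}$: take immigrants at rate $M_\Sigma^{\rm base}$ on $\bbR\times\mathcal S$ (or on all of $\bbR^{1+d}$), together with i.i.d.\ multitype Galton--Watson clusters. By monotonicity of the branching construction in the baseline (the kernels are nonnegative), $N\le \overline N^{\rm stat}$ pathwise on $D$, including under \eqref{eq:Hpre}. A point of a cluster lands in the box $V:=I_a\times B_{2R}(y_b)$ only if its ancestor lies in the enlarged set $V^{+}:=\{\tau: \mathrm{dist}(\tau,V)\le (h,R)\cdot \text{generation}\}$. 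To avoid generation-dependent enlargement, bound the cluster's contribution by its total size $Z$. The number of clusters reaching $V$ is then controlled by using the cluster's spatiotemporal extent: a cluster with total progeny $Z$ has diameter at most $Z\,(h+R)$. Consequently
\[
N(V)\le\sum_{i}Z_i\,\1\{\text{ancestor}_i\in V^{+Z_i(h+R)}\},
\]
which is a compound Poisson variable. Its log-mgf at $\theta$ is
\[
M_\Sigma^{\rm base}\,\bbE\big[|V^{+Z(h+R)}|\,(e^{\theta Z}-1)\big]\le C\,\bbE\big[(1+Z)^{1+d}e^{\theta Z}\big],
\]
which is finite for small $\theta$ by the exponential moment of $Z$ under $\xi(A)<1$. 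This constant is uniform in $(a,b)$.

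\textbf{Step 3 (Chernoff and choice of $c_1$).} Chernoff gives $\bbP(N(V)>n)\le C'e^{-\theta n}$. Choosing $n=n_*=\lceil c_1\log|D|\rceil$ with $c_1\theta\ge \eta+p_Q$, where $|D|^{p_Q}$ bounds $Q$, we get
\[
\bbP(\Omega_{n_*}^c)\le Q\,C' |D|^{-c_1\theta}\le c_2|D|^{-\eta}.
\]

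\textbf{Main obstacle.} The hardest part is Step 2. One must justify the exponential moment of the total progeny of the subcritical multitype Galton--Watson process with Poisson offspring; this is standard via the fixed point of the multitype offspring mgf near $0$ when $\xi(A)<1$. One must also justify the pathwise domination of the conditioned-start process by the stationary cluster process. The Step 1 counting is routine.
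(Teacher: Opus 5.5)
Your argument is correct, but it takes a different route from the paper's. The paper discards spatial locality at the outset. It bounds every window by the aggregated count, $N([t-h,t]\times B_R(x))\le N([t-h,t]\times\mathcal S)$, and couples this ground process, via the Poisson embedding of \citet{bremaud1996stability}, below a purely temporal bivariate Hawkes process with baselines $\mu(\mathcal S)\sup\mu_k$ and kernels $\varphi^{\rm env}_{k\ell}(u)=\int g_{k\ell}(u,z)\,\dd\mu(z)$. It then passes to the stationary version and imports the exponential moment of a length-$h$ window count from \cite[Proposition~2]{hansen2015lasso}. It finishes with Chernoff and a union bound over an $O(|D|)$ grid, whose spatial coordinate is actually redundant because the dominating count does not depend on $y$.

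You keep the spatial structure instead. The cluster-extent bound turns each box count into a Poisson functional of the marked immigrant process, and Campbell's formula gives its log-Laplace transform explicitly. The only probabilistic input is then the exponential moment of the subcritical multitype Galton--Watson total progeny, which the paper already uses in \cref{prop:hawkes-G0}.

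The paper's route is shorter given the citations. However, the aggregated count has mean of order $\mu(\mathcal S)h$, so its constant $M$ grows like $e^{c\,\mu(\mathcal S)}$. It therefore yields an $O(\log|D|)$ envelope only when $\mathcal S$ is fixed and $|D|$ grows through $T-t^\ast$. Your route gives box-count tails with constants free of $\mathcal S$, at the mild cost of the factor $(1+Z)^{1+d}$, which is absorbed by shrinking $\theta$. In principle it extends to growing spatial windows.

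One small correction concerns the covering count. The number of boxes is not controlled by $1+\mu(\mathcal S)$ in general, since a set of small measure can need many $R$-balls to cover it. For growing $\mathcal S_n$ you therefore need an explicit regularity assumption, such as $\mu(\mathcal S_n^{+R})$ being polynomial in $|D_n|$. In the paper's fixed-$\mathcal S$ setting the net size is a constant, and $Q\le C|D|$ holds as you say.
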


\begin{proof}
Throughout the proof, constants may change from line to line but depend only on the Hawkes-process parameters, on $h,R$, and on the geometry of $\mathcal S$, not on $D$. Let $N_k$ be the marked counting measure for mark $k\in\{0,1\}$, and let
\[
  N := N_0 + N_1
\]
be the total (unmarked) counting measure, so that for any Borel set $B\subset D$ we have $N(B)=N_0(B)+N_1(B)$.

For each $k\in\{0,1\}$, define the spatially aggregated (ground) process
\[
  \widetilde N_k(I)
  := N_k(I\times\mathcal{S}), \qquad I\subset\bbR\ \text{Borel},
\]
and set
\[
  \widetilde N(I) := \widetilde N_0(I)+\widetilde N_1(I).
\]
Clearly, for any $(t,x)\in D$,
\begin{equation}\label{eq:spatial-majorisation-app}
  N\bigl([t-h,t]\times B_R(x)\bigr)
  \;\le\; \widetilde N\bigl([t-h,t]\bigr).
\end{equation}

Let $\lambda_k(t,x)$ denote the intensity of $N_k$ at $(t,x)$, and let $\widetilde\lambda_k(t)$ be the intensity of $\widetilde N_k$:
\[
  \widetilde\lambda_k(t)
    = \int_{\mathcal{S}} \lambda_k(t,x)\,\dd\mu(x).
\]
By the model definition and Fubini,
\[
  \widetilde\lambda_k(t)
  = \int_{\mathcal{S}} \mu_k(t,x)\,\dd\mu(x)
    + \sum_{\ell=0}^1 \int_{(t-h,t)}\!\!\int_{\mathcal{S}}\!\int_{\mathcal{S}}
          g_{k\ell}(t-s,x-y)\,N_\ell(\dd s,\dd y)\,\dd\mu(x).
\]

For each pair $(k,\ell)$ and lag $u>0$, define the envelope kernel
\[
  \varphi^{\mathrm{env}}_{k\ell}(u)
  := \int_{\bbR^d} g_{k\ell}(u,z)\,\dd\mu(z),
\]
and note that for any $y\in\mathcal{S}$,
\[
  \int_{\mathcal{S}} g_{k\ell}(u,x-y)\,\dd\mu(x)
  \;\le\; \int_{\bbR^d} g_{k\ell}(u,z)\,\dd\mu(z)
  \;=\; \varphi^{\mathrm{env}}_{k\ell}(u).
\]
Thus, using \(\mu(\mathcal{S})<\infty\), set
\[
  \overline\mu_k^{\rm agg}
  := \mu(\mathcal{S})\,\sup_{t\in[t^\ast,T],\,x\in\mathcal{S}} \mu_k(t,x)
  \ <\ \infty.
\]
Then
\[
  \widetilde\lambda_k(t)
  \;\le\; \overline\mu_k^{\rm agg}
         + \sum_{\ell=0}^1 \int_{(t-h,t)}
                 \varphi^{\mathrm{env}}_{k\ell}(t-s)\,
                 \widetilde N_\ell(\dd s).
\]

Define the purely temporal \(2\)-dimensional linear Hawkes process \(\overline N=(\overline N_0,\overline N_1)\) on \(\bbR\) with baselines \(\overline\mu_k^{\rm agg}\) and interaction kernels \(\varphi^{\mathrm{env}}_{k\ell}\), i.e.

\[
  \overline\lambda_k(t)
  = \overline\mu_k^{\rm agg}
    + \sum_{\ell=0}^1 \int_{(t-h,t)}
           \varphi^{\mathrm{env}}_{k\ell}(t-s)\,\overline N_\ell(\dd s).
\]
Its reproduction matrix is exactly $A=(A_{k\ell})$ as in Assumption~\ref{ass:H-stab}:
\[
  A_{k\ell}
  = \int_0^h \varphi^{\mathrm{env}}_{k\ell}(u)\,\dd u
  = \int_0^h\int_{\bbR^d} g_{k\ell}(u,z)\,\dd\mu(z)\,\dd u.
\]

Let $(\Pi_k)_{k\in\{0,1\}}$ be independent Poisson random measures on $[t^\ast,\infty)\times\mathbb{R}_+$ with intensity $dt\,dz$. We construct $\overline N$ by the standard Poisson embedding for (linear) Hawkes processes with nonnegative kernels; see, e.g., \cite[Proposition~14.7.I]{daley2007introduction}, \citet[Section~3]{bremaud1996stability}, and \citet{bremaud2002rate}:
\[
  \overline N_k(dt)
  :=\int_{\mathbb{R}_+}\mathbf 1_{\{z\le \overline\lambda_k(t-)\}}\,
  \Pi_k(dt,dz),
  \qquad
  \overline\lambda_k(t)
  =\overline\mu_k^{\rm agg}+\sum_\ell\int \varphi^{\rm env}_{k\ell}(t-s)\,\overline N_\ell(ds).
\]
On the same Poisson basis, define $\widetilde N$ analogously by
\[
  \widetilde N_k(dt)
  :=\int_{\mathbb{R}_+}\mathbf 1_{\{z\le \widetilde\lambda_k(t-)\}}\,
  \Pi_k(dt,dz),
\]
which is the coupled construction used in \cite[Lemma~3]{bremaud1996stability}.

Define the first-violation time
\[
  \tau:=\inf\Bigl\{t\ge t^\ast:\exists k \text{ with }\widetilde N_k([t^\ast,t])>\overline N_k([t^\ast,t])\Bigr\}.
\]
On $\{\tau>t\}$ we have $\widetilde N_\ell|_{[t^\ast,t)}\le \overline N_\ell|_{[t^\ast,t)}$ for all $\ell$, and hence, by the nonnegativity of $\varphi^{\rm env}$ (monotonicity of the linear Hawkes  process map),
\[
  \widetilde\lambda_k(t-)
  \le \overline\mu_k^{\rm agg}+\sum_\ell \int \varphi^{\rm env}_{k\ell}(t-s)\,\widetilde N_\ell(ds)
  \le \overline\mu_k+\sum_\ell\int \varphi^{\rm env}_{k\ell}(t-s)\,\overline N_\ell(ds)
  =\overline\lambda_k(t-).
\]
Therefore the acceptance region for $\widetilde N_k$ is contained in that of $\overline N_k$ at time $t$ (cf.\ the monotone coupling argument in \citet{bremaud1996stability}), so $\tau=\infty$ almost surely and $\widetilde N\le \overline N$ pathwise. In particular,
\begin{equation}\label{eq:env-domination-app}
  \widetilde N_k(J)\le \overline N_k(J)
  \quad\text{for all Borel intervals }J\subset[t^\ast,\infty),\ \ k\in\{0,1\}.
\end{equation}
Intuitively, $\widetilde N$ is obtained by thinning the envelope $\overline N$ via the common Poisson embedding \cite[Proposition~14.7.I]{daley2007introduction}.

Because the kernels are nonnegative and $\xi(A)<1$, the process $\overline N$ admits a unique stationary version $\overline N^{\mathrm{stat}}$ on $\bbR$ \citep{bremaud1996stability}. Using the standard Poisson cluster representation, we may construct the empty-history envelope $\overline N$ on $[t^\ast,\infty)$ by removing all clusters whose ancestral point lies before $t^\ast$ from $\overline N^{\mathrm{stat}}$; hence, for any interval $J\subset[t^\ast,\infty)$,
\begin{equation}\label{eq:stationary-domination-app}
  \overline N_k(J) \;\le\; \overline N^{\mathrm{stat}}_k(J)
  \quad\text{a.s.}
\end{equation}

Combining \eqref{eq:spatial-majorisation-app},
\eqref{eq:env-domination-app} and \eqref{eq:stationary-domination-app}, we obtain
the pathwise bound
\begin{equation}\label{eq:window-domination-app}
  N\bigl([t-h,t]\times B_{R'}(x)\bigr)
  \;\le\; \overline N^{\mathrm{stat}}\bigl([t-h,t]\bigr)
  := \sum_{k=0}^1 \overline N^{\mathrm{stat}}_k\bigl([t-h,t]\bigr)
\end{equation}
for every $t\in[t^\ast,T]$, every $x\in\mathcal{S}$, and for every radius $R'\ge 0$
(since the right-hand side does not depend on $R'$).

We next obtain an exponential tail bound for a single temporal window. The process $\overline N^{\mathrm{stat}}$ is a standard multivariate linear Hawkes process with nonnegative, compactly supported kernels and $\xi(A)<1$.  Therefore, we can apply \cite[Proposition~2]{hansen2015lasso} to $\overline N^{\mathrm{stat}}$ on the interval $[-h,0)$: there exists $\theta>0$ such that
\[
  \bbE\bigl[\exp\bigl( \theta\,\overline N^{\mathrm{stat}}([-h,0))\bigr)\bigr]
  < \infty.
\]
By stationarity, for any interval $I$ of length $h$ we have
\(
  \overline N^{\mathrm{stat}}(I)
  \stackrel{d}{=} \overline N^{\mathrm{stat}}([-h,0)).
\)
Fix such an interval $I$ and set
\[
  M := \bbE\bigl[\exp\bigl( \theta\,\overline N^{\mathrm{stat}}(I)\bigr)\bigr]
  < \infty.
\]
A Chernoff bound then gives
\begin{equation}\label{eq:single-window-tail-app}
  \bbP\bigl( \overline N^{\mathrm{stat}}(I) > n \bigr)
  \;\le\; M e^{-\theta n}
  \qquad\text{for all }n\ge 0.
\end{equation}
The constants $M$ and $\theta$ depend only on the Hawkes process parameters and on
$h$, not on the particular choice of $I$.

It remains to discretize $D$ and apply a union bound. Let $\Delta_t:=h/2$ and define
\[
  \mathsf{T}
  := \bigl\{t_i:=t^\ast+i\Delta_t:\ i=0,1,\dots,I_T\bigr\},
  \quad
  I_T:=\Bigl\lceil\frac{T-t^\ast}{\Delta_t}\Bigr\rceil.
\]
Then
\(
  |\mathsf{T}|
  \le 2\,(T-t^\ast)/h + 2
  \le C_t (T-t^\ast)/h
\)
for some absolute constant $C_t$.

Because $\mathcal{S}\subset\bbR^d$ is bounded, there exists a finite set $\mathcal X=\{x_1,\dots,x_J\}\subset\mathcal{S}$ such that
\[
  \mathcal{S} \subset \bigcup_{j=1}^J B_{R/2}(x_j).
\]
We keep $J$ implicit; it depends only on $\mathcal{S}$, $R$ and the dimension $d$, and in particular does not depend on the observation window $D$.

Set
\[
  \mathcal G := \mathsf{T}\times \mathcal X,\qquad
  |\mathcal G| = |\mathsf{T}|\,|\mathcal X|
  \;\le\; C_t J\,\frac{T-t^\ast}{h}
  \;=\; C_{\rm grid}\,|D|
\]
for some constant $C_{\rm grid}>0$ depending only on $h,R$ and the geometry of $\mathcal{S}$ (we used $|D|=(T-t^\ast)\,\mu(\mathcal{S})$ here).

For the covering/union-bound argument we work with enlarged spatial balls \(B_{2R}(y)\), so that any target ball \(B_R(x)\) is contained in some \(B_{2R}(y)\) from the spatial grid. For each $(s,y)\in\mathcal G$, consider the spacetime window
\[
  W_{s,y}:=[s-h,s]\times B_{2R}(y).
\]
By \eqref{eq:window-domination-app} (with radius $2R$) and \eqref{eq:single-window-tail-app}, we have
\[
  \bbP\bigl(N(W_{s,y})>n\bigr)
  \;\le\;
  \bbP\bigl(\overline N^{\mathrm{stat}}([s-h,s])>n\bigr)
  \;\le\; M e^{-\theta n}
\]
for every $(s,y)\in\mathcal G$ and all $n\ge 0$.  A union bound over the finite grid $\mathcal G$ yields
\begin{equation}\label{eq:grid-union-app}
  \bbP\Bigl(
    \max_{(s,y)\in\mathcal G} N(W_{s,y})>n
  \Bigr)
  \;\le\; |\mathcal G|\,M e^{-\theta n}
  \;\le\; C_{\rm grid} M\,|D|\,e^{-\theta n}.
\end{equation}

For any $(t,x)\in D$, choose $s\in\mathsf{T}$ and $y\in\mathcal X$ such that
\[
  |t-s|\le \Delta_t = h/2,\qquad \|x-y\|\le R/2.
\]
Then $B_R(x)\subset B_{3R/2}(y)\subset B_{2R}(y)$ and
\[
  [t-h,t]
  \subset [s-2h,s+h].
\]
The interval $[s-2h,s+h]$ is contained in the union of three intervals of length $h$, for example $[s-2h,s-h]$, $[s-h,s]$, and $[s,s+h]$.  Hence
\[
  [t-h,t]\times B_R(x)
  \subset \bigcup_{\ell=1}^3
    \bigl([s_\ell-h,s_\ell]\times B_{2R}(y)\bigr),
\]
for suitable $s_\ell\in\mathsf{T}'$ in a time grid $\mathsf{T}'$ obtained by shifting $\mathsf{T}$ by multiples of $h$.  Enlarging $\mathcal G$ if necessary to
\(
  \mathcal G':=\mathsf{T}'\times\mathcal X,
\)
we obtain
\[
  N\bigl([t-h,t]\times B_R(x)\bigr)
  \;\le\; 3\max_{(s,y)\in\mathcal G'} N\bigl([s-h,s]\times B_{2R}(y)\bigr).
\]
The cardinality of $\mathcal G'$ is still bounded by a constant multiple of $|D|$: $|\mathcal G'|\le C'_{\rm grid}|D|$ for some $C'_{\rm grid}>0$.

Taking a supremum over $(t,x)\in D$ yields
\[
  \sup_{(t,x)\in D} N\bigl([t-h,t]\times B_R(x)\bigr)
  \;\le\; 3\max_{(s,y)\in\mathcal G'} N\bigl([s-h,s]\times B_{2R}(y)\bigr).
\]
Therefore,
\[
  \bbP\!\left(
    \sup_{(t,x)\in D} N\bigl([t-h,t]\times B_R(x)\bigr) > 3n
  \right)
  \;\le\;
  \bbP\!\left(
    \max_{(s,y)\in\mathcal G'} N\bigl([s-h,s]\times B_{2R}(y)\bigr) > n
  \right).
\]
Applying \eqref{eq:grid-union-app} (with $\mathcal G'$ and possibly a different grid constant) gives
\[
  \bbP\!\left(
    \sup_{(t,x)\in D} N\bigl([t-h,t]\times B_R(x)\bigr) > 3n
  \right)
  \;\le\; C''_{\rm grid} M\,|D|\,e^{-\theta n}
\]
for some $C''_{\rm grid}>0$.

Finally, let $\eta>0$ be arbitrary and choose
\[
  n = n_0(|D|)
  := \Bigl\lceil \frac{\eta+1}{\theta}\,\log|D| \Bigr\rceil.
\]
Then
\[
  C''_{\rm grid}M\,|D|\,e^{-\theta n}
  \;\le\; C''_{\rm grid}M\,|D|\,e^{-(\eta+1)\log|D|}
  \;=\; C''_{\rm grid}M\,|D|^{-\eta}.
\]
Absorbing the multiplicative constant $3$ in front of $n$ into the definition of $c_1(\eta)$ (i.e.\ replacing $n_0(|D|)$ by $3n_0(|D|)$) and renaming constants
\[
  c_1(\eta):=\frac{3(\eta+1)}{\theta},
  \qquad
  c_2(\eta):=C''_{\rm grid}M,
\]
we obtain
\[
  n_*(|D|)
  := \bigl\lceil c_1(\eta)\,\log|D| \bigr\rceil
\]
Thus we have shown \(\bbP(\widetilde\Omega_{n_*}(h,R)^c)\le c_2(\eta)\,|D|^{-\eta}\). Since \(\Omega_{n_*}(h,R)^c\subseteq \widetilde\Omega_{n_*}(h,R)^c\), the same bound holds for \(\bbP(\Omega_{n_*}(h,R)^c)\).
\end{proof}

\begin{remark}[Local strong concavity for multivariate Hawkes]\label{rem:hawkes-local-sc}
For multivariate Hawkes processes, uniform local strong concavity of the (complete-data) log-likelihood around the truth, including for parameterizations that contain nonlinear kernel-shape parameters, is established in \citet{DAVISKRESIN_HPB}. The proof follows the local curvature strategy used in \citet{hansen2015lasso} (control of the empirical Hessian/Gram term on a fixed-radius neighborhood), and in the present spatiotemporal truncated setting the same conclusion can be obtained by combining those local curvature arguments with the same monotone coupling / window-envelope control used in our Hawkes process verification (e.g.\ the coupling behind Proposition~\ref{prop:hawkes-tail}) to transfer the needed uniform concentration bounds to the spatiotemporal observation window, verifying Assumption~\ref{ass:G6}.
\end{remark}

\subsubsection{Verification of Assumptions~\ref{ass:G4}, \ref{ass:G5} and \ref{ass:G7prime}}\label{app:hawkes-align}


\begin{assumption}[Column similarity (sufficient for uniform alignment)]
\label{ass:col-sim}
There exist deterministic numbers $\varepsilon_0,\varepsilon_1\ge 0$ such that, uniformly over $\theta\in\Theta_\circ$,
\[
\sup_{(t,x)\in(0,h]\times \mathcal B_R}
\big|g_{k0}(t,x;\theta)-g_{k1}(t,x;\theta)\big|
\ \le\ \varepsilon_k,
\qquad k\in\{0,1\}.
\]
Set $\varepsilon_{\rm col}:=\varepsilon_0+\varepsilon_1$.
\end{assumption}

\begin{remark}[Checking the column-similarity constants $\varepsilon_k$]\label{rem:epsk-check}
For the truncated exponential parametrization
\[
g_{k\ell}(t,x;\theta)
=\alpha_{k\ell}e^{-\beta t}w_{k\ell}(x)\,\mathbf 1_{(0,h]}(t)\,\mathbf 1_{\mathcal B_R}(x),
\qquad \mathcal B_R:=B_R(0),
\]
a convenient choice in Assumption~\ref{ass:col-sim} is
\[
\varepsilon_k
:=\sup_{\theta\in\Theta_\circ}\ \sup_{(t,x)\in(0,h]\times\mathcal B_R}
\big|g_{k0}(t,x;\theta)-g_{k1}(t,x;\theta)\big|.
\]
Since $e^{-\beta t}\le 1$ on $(0,h]$, we have
\begin{align*}
\varepsilon_k
&\le \sup_{\theta\in\Theta_\circ}\ \sup_{x\in\mathcal B_R}
\big|\alpha_{k0}w_{k0}(x)-\alpha_{k1}w_{k1}(x)\big| \\
&\le \sup_{\theta\in\Theta_\circ}
\Big(
|\alpha_{k0}-\alpha_{k1}|\,\|w_{k0}\|_\infty
+\alpha_{\max}\,\|w_{k0}-w_{k1}\|_\infty
\Big),
\end{align*}
using $0\le \alpha_{k\ell}\le \alpha_{\max}$ on $\Theta_\circ$ and the triangle inequality.
Thus Assumption~\ref{ass:col-sim} reduces to a simple ``parameter closeness'' condition on
$(\alpha_{k0},\alpha_{k1},w_{k0},w_{k1})$.
\end{remark}

\begin{proposition}[Uniform score alignment under column similarity]
\label{prop:hawkes-G4}
Consider the truncated Hawkes model of \cref{app:hawkes} with nonnegative kernels supported on
$(0,h]\times \mathcal B_R$.
Assume \cref{ass:G1,ass:oracle-positivity,ass:col-sim}.
Work either under the empty-history condition \eqref{eq:Hpre} or, more generally, conditional on the
marked history of $(N_0,N_1)$ on $(t^\ast-h,t^\ast]\times\mathcal S$ (so that the marked past entering the Hawkes functional is fixed under the conditioning).
Work on the window-count event $\Omega_{n_*}(h,R)$ from \cref{ass:G2prime}.
Then for every $\theta\in\Theta_\circ$, every deterministic labelling $r$, and for a.e.\ $\tau\in D$,
\[
\big|\tilde s_\theta(\tau;r)-s^{\rm or}_\theta(\tau)\big|
\ \le\ \Delta_s,
\qquad
\Delta_s := \frac{n_*(|D|)}{\underline\mu_{\min}}\varepsilon_{\rm col}.
\]
In particular, Assumption~\ref{ass:G4} holds with
\[
E^{\rm align}_{|D|}:=\Omega_{n_*}(h,R),
\qquad
\Delta_s(|D|):=\frac{n_*(|D|)}{\underline\mu_{\min}}\varepsilon_{\rm col}.
\]
Moreover, since
\[
\mathbb P\big((E^{\rm align}_{|D|})^c\big)\le c_2|D|^{-\eta_{\rm win}},
\]
after possibly decreasing the exponent and enlarging the large-\(|D|\) threshold, this matches the probability requirement in Assumption~\ref{ass:G4}.
\end{proposition}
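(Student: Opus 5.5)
The plan is a pointwise comparison of the two log-ratios. Fix $\theta\in\Theta_\circ$, a deterministic labelling $r$, and $\tau=(t,x)\in D$, and work on $\Omega_{n_*}(h,R)$. Write $\widehat\lambda_k^r(\tau;\theta)=\mu_k(\tau;\theta)+\sum_\ell (g_{k\ell}*\widehat N^r_\ell)(\tau)$ and $\lambda_k(\tau\mid\mathcal G_{t-};\theta)=\mu_k(\tau;\theta)+\sum_\ell (g_{k\ell}*N_\ell)(\tau)$. The pre-$t^\ast$ marked history is either empty or conditioned on, so it enters both expressions identically through the pseudo-baseline $\tilde\mu_k$. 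For each $k$ the two intensities therefore share a baseline and are driven by the same unlabelled points $\gamma_j$ in the cone $(t-h,t)\times B_R(x)$. They differ only in which column kernel is attached to each point: $g_{k r_j}$ in the label-induced intensity versus $g_{k r^\star_j}$ in the oracle intensity.

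Step one is the componentwise difference bound. I would write
\[
\widehat\lambda_k^r(\tau;\theta)-\lambda_k(\tau\mid\mathcal G_{t-};\theta)=\sum_{j:\,\gamma_j\in(t-h,t)\times B_R(x)}\big(g_{kr_j}-g_{kr^\star_j}\big)(\tau-\gamma_j;\theta).
\]
Each summand is either zero or equal to $\pm(g_{k0}-g_{k1})$. It is bounded by $\varepsilon_k$ by Assumption~\ref{ass:col-sim}, because $\tau-\gamma_j\in(0,h]\times\mathcal B_R$ whenever the kernel is nonzero. On $\Omega_{n_*}(h,R)$ there are at most $n_*$ such points, so $|\widehat\lambda_k^r-\lambda_k|\le n_*\varepsilon_k$.

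Step two converts this to a log-ratio bound using $|\log a-\log b|\le |a-b|/\min\{a,b\}$. Both intensities are at least $\mu_k\ge\underline\mu_k\ge\underline\mu_{\min}$, by Assumption~\ref{ass:G1} for the label-induced side and by Assumption~\ref{ass:oracle-positivity} or kernel nonnegativity for the oracle side. This gives $|\log\widehat\lambda_k^r-\log\lambda_k|\le n_*\varepsilon_k/\underline\mu_{\min}$. The triangle inequality on $\tilde s_\theta-s^{\rm or}_\theta=(\log\widehat\lambda_1^r-\log\lambda_1)-(\log\widehat\lambda_0^r-\log\lambda_0)$ then yields the bound $n_*(\varepsilon_0+\varepsilon_1)/\underline\mu_{\min}=\Delta_s$. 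The bound is uniform in $\theta$, $r$ and $\tau$ because no step used anything specific to them beyond the envelope event.

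The last step is the probability claim. Proposition~\ref{prop:hawkes-tail} (equivalently Assumption~\ref{ass:G2prime}) gives $\mathbb P(\Omega_{n_*}(h,R)^c)\le c_2|D|^{-\eta_{\rm win}}$. For large $|D|$ this is at most $|D|^{-\eta}$ for any $\eta<\eta_{\rm win}$, so setting $E^{\rm align}_{|D|}:=\Omega_{n_*}(h,R)$ meets the requirement of Assumption~\ref{ass:G4}.

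The main obstacle is the bookkeeping in step one. One has to check that the window event counts exactly the points that can contribute: the cone $(t-h,t)\times B_R(x)$ is open at $t$, which matches predictability and the kernel support $(0,h]$. One also has to confirm that pre-$t^\ast$ points carry identical labels under $r$ and $r^\star$; this is where the empty or marked pre-history conditioning is essential, since otherwise those points would add uncontrolled mismatch terms.
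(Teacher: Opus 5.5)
Your proposal is correct and takes essentially the same route as the paper. The paper also observes that the marked pre-$t^\ast$ past enters both intensities identically (so $\widehat\lambda_k^{r^\star}$ coincides with the oracle intensity), bounds each mislabelled point's kernel change by $\varepsilon_k$ over at most $n_*(|D|)$ points in the window, converts this to a log bound using the floor $\underline\mu_k$, and sums over $k$. The one cosmetic difference is how the oracle intensity is bounded below: you use kernel nonnegativity ($\lambda_k\ge\mu_k\ge\underline\mu_k$), whereas the paper invokes \cref{ass:G1} for the true labelling $r^\star$. Your justification is slightly cleaner, since \cref{ass:G1} is stated only for deterministic labellings.
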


\begin{proof}
Fix $\theta\in\Theta_\circ$ and a deterministic labelling $r$.
Let $r^\star$ denote the true labelling on $D$.
Under the Hawkes model specification,
\begin{align*}
\widehat\lambda_k^r(\tau;\theta)
=&
\mu_k(\tau;\theta)+\sum_{\ell=0}^1\int g_{k\ell}(\tau-\tau';\theta)\,d\widehat N_\ell^r(\tau'),\\
\lambda_k(\tau\mid\mathcal G_{t-};\theta)
=&
\mu_k(\tau;\theta)+\sum_{\ell=0}^1\int g_{k\ell}(\tau-\tau';\theta)\,d N_\ell(\tau').
\end{align*}

Under the stated conditioning (empty history or marked pre-$t^\ast$ history fixed),
the marked past entering the Hawkes functional is the same in $\widehat\lambda^{r^\star}$ and in the oracle intensity,
and moreover $\widehat N_\ell^{r^\star}=N_\ell$ on $D$.
Therefore $\widehat\lambda_k^{r^\star}(\tau;\theta)=\lambda_k(\tau\mid\mathcal G_{t-};\theta)$ for a.e.\ $\tau\in D$, hence
$\tilde s_\theta(\tau;r^\star)=s^{\rm or}_\theta(\tau)$.

Fix $\tau=(t,x)\in D$ and $k\in\{0,1\}$. Because $g_{k\ell}$ is supported on $(0,h]\times \mathcal B_R$,
only events in the truncated neighbourhood $(t-h,t)\times B_R(x)$ contribute. On $\Omega_{n_*}(h,R)$ there are at most
$n_*(|D|)$ such events.
For any event $\tau_i$ in this neighbourhood, if $r_i=r_i^\star$ then its contribution is the same under $r$ and $r^\star$;
if $r_i\neq r_i^\star$, then its contribution changes from the column $r_i^\star$ kernel to the column $r_i$ kernel, and
\cref{ass:col-sim} yields a pointwise bound by $\varepsilon_k$.
Therefore,
\[
\big|\widehat\lambda_k^r(\tau;\theta)-\widehat\lambda_k^{r^\star}(\tau;\theta)\big|
\ \le\ n_*(|D|)\,\varepsilon_k.
\]
Using $\widehat\lambda_k^r(\tau;\theta)\ge \underline\mu_k$ and $\widehat\lambda_k^{r^\star}(\tau;\theta)\ge \underline\mu_k$
from \cref{ass:G1}, the mean-value bound $|\log a-\log b|\le |a-b|/\underline\mu_k$ gives
\[
\big|\log\widehat\lambda_k^r(\tau;\theta)-\log\widehat\lambda_k^{r^\star}(\tau;\theta)\big|
\ \le\ \frac{n_*(|D|)\varepsilon_k}{\underline\mu_k}
\ \le\ \frac{n_*(|D|)\varepsilon_k}{\underline\mu_{\min}}.
\]
Finally,
\[
\big|\tilde s_\theta(\tau;r)-s^{\rm or}_\theta(\tau)\big|
=
\Big|\big(\log\widehat\lambda_1^r-\log\widehat\lambda_0^r\big)
-\big(\log\widehat\lambda_1^{r^\star}-\log\widehat\lambda_0^{r^\star}\big)\Big|
\le
\sum_{k=0}^1\big|\log\widehat\lambda_k^r-\log\widehat\lambda_k^{r^\star}\big|,
\]
so the claim follows by summing the previous display over $k=0,1$.
\end{proof}

\begin{assumption}[CSC and baseline comparability]\label{ass:CSC}
There exist weights $u_0,u_1>0$ with $u_0+u_1=1$, and constants $0<a_{\rm csc}\le b_{\rm csc}<\infty$ such that, for each column $\ell$ and a.e.\ point $(t,x)$ with $G_{\bullet\ell}(t,x)>0$,
\[
a_{\rm csc} u_k\ \le\ \frac{g_{k\ell}(t,x)}{G_{\bullet\ell}(t,x)}\ \le\ b_{\rm csc} u_k,\qquad k\in\{0,1\}.
\]
In addition, there exist $0<c_\mu\le C_\mu<\infty$ such that
\[
c_\mu\ \le\ \frac{\mu_1(\tau;\theta)}{\mu_0(\tau;\theta)}\ \le\ C_\mu\qquad \text{for all }\tau\in D,\ \theta\in\Theta_\circ.
\]
All constants are uniform over $\theta\in\Theta_\circ$.
\end{assumption}

\begin{lemma}[Lipschitz control for the oracle score (Assumption~\ref{ass:G5})]
\label{lem:hawkes-Lip-avg}
Let $N=(N_0,N_1)$ be a bivariate spatio-temporal Hawkes process on $\mathbb{R}\times\mathbb{R}^d$.
We observe it on $D=(t^\ast,T]\times \mathcal{S}$, where $\mathcal{S}\subset\mathbb{R}^d$ is bounded and $0<\mu(\mathcal{S})<\infty$.
Write $\tau=(t,x)$ and $d\tau:=dt\otimes d\mu(x)$.
For $k\in\{0,1\}$, the $\mathcal{G}$-predictable conditional intensities are
\[
\lambda_k(\tau;\theta)
=
\mu_k(\tau;\theta)
+
\sum_{\ell=0}^1\int_{(t-h,t)}\int_{B_R(x)}
g_{k\ell}(t-s,x-y;\theta)\,N_\ell(ds,dy),
\]
with compactly supported bounded kernels. Denote by $\lambda^\star(\tau):=\lambda_0(\tau;\theta^\star)+\lambda_1(\tau;\theta^\star)$ the true total
intensity and let
\[
s^{\rm or}_\theta(\tau):=\log\frac{\lambda_1(\tau;\theta)}{\lambda_0(\tau;\theta)}
\]
be the oracle log-likelihood ratio.

Assume:

\smallskip
\noindent{\bf (A1) Uniform positivity and relative comparability.}
There exist constants \(0<\nu_-\le \nu_+<\infty\) and \(C_\mu^{\rm rel},C_g<\infty\) such that, for all
\(\theta\in\Theta_\circ\), all \(k,\ell\), and a.e. \((u,z)\in(0,h]\times B_R(0)\),
\[
\nu_-\le \mu_k(\tau;\theta)\le \nu_+,\qquad
\mu_k(\tau;\theta^\star)\le C_\mu^{\rm rel}\mu_k(\tau;\theta),\qquad
g_{k\ell}(u,z;\theta^\star)\le C_g\,g_{k\ell}(u,z;\theta).
\]

\smallskip
\noindent{\bf (A2) Uniform component-ratio bounds on $\Theta_\circ$.}
There exist constants $0<c_{\rm rat}\le C_{\rm rat}<\infty$ such that for all $\theta\in\Theta_\circ$ and
$\tau\in D$,
\[
c_{\rm rat}\ \le\ \frac{\lambda_1(\tau;\theta)}{\lambda_0(\tau;\theta)}\ \le\ C_{\rm rat}.
\]
(For example, this holds under CSC + baseline comparability.)

\smallskip
\noindent{\bf (A3) Activity bounds on schedule enlargements.}
Let the (deterministic) schedule windows be $\{D^{(m)}\}_{m=1}^{\Tmax}$ and $\{D_m\}_{m=1}^{\Tmax}$ as in \S\ref{sec:setup}.
For any window $W=(a,b]\times\mathcal S$ in
\[
\mathcal W_{\rm sch}:=\{D\}\cup\{D^{(m)}:\ 1\le m\le \Tmax\}\cup\{D_m:\ 1\le m\le \Tmax\},
\]
define its enlargement
\[
W^+:=(a-h,b]\times\mathcal S^{+R},
\qquad
\mathcal S^{+R}:=\{y\in\bbR^d:\mathrm{dist}(y,\mathcal S)\le R\}.
\]
Assume there exist constants $K_{\rm act},\eta>0$ such that for all large $|D|$,
\[
\Pr\!\Big(\bigcap_{W\in\mathcal W_{\rm sch}}\{\sum_{\ell=0}^1 N_\ell(W^+)\le K_{\rm act}|W^+|\}\Big)\ \ge\ 1-|D|^{-\eta}.\label{eq:hawkes-activity}
\]
Define the event
\[
E^{\rm Lip}_{|D|}
:=
\bigcap_{W\in\mathcal W_{\rm sch}}\left\{\sum_{\ell=0}^1 N_\ell(W^+)\le K_{\rm act}|W^+|\right\}.
\]

\smallskip
Then Assumption~\ref{ass:G5} holds: there exists a deterministic quantity $L_s(|D|)<\infty$ such that, on
$E^{\rm Lip}_{|D|}$, for all $\theta,\theta'\in\Theta_\circ$,
\begin{equation}
\label{eq:G5-hawkes}
\int_D \big|s^{\rm or}_\theta(\tau)-s^{\rm or}_{\theta'}(\tau)\big|\,\lambda^\star(\tau)\,d\tau
\le
L_s(|D|)\,|D|\,\|\theta-\theta'\|_2.
\end{equation}
The same argument applies with $D$ replaced by any $W\in\mathcal W_{\rm sch}$, yielding the schedule-uniform bounds required in Assumption~\ref{ass:G5}. Moreover, $\Pr((E^{\rm Lip}_{|D|})^c)\le |D|^{-\eta}$ by \eqref{eq:hawkes-activity}, hence Assumption~\ref{ass:G5} holds with the same exponent.

\end{lemma}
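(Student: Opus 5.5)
We prove \eqref{eq:G5-hawkes} by a pointwise mean-value bound on $\theta\mapsto s^{\rm or}_\theta(\tau)$, followed by integration against $\lambda^\star$ and control of the resulting history-dependent terms on $E^{\rm Lip}_{|D|}$.

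\textbf{Pointwise Lipschitz bound.} Fix $\tau=(t,x)$. Write $s^{\rm or}_\theta=\log\lambda_1(\tau;\theta)-\log\lambda_0(\tau;\theta)$. Each $\lambda_k(\tau;\cdot)$ is bounded below by $\nu_->0$ by (A1), and it is $C^1$ in $\theta$ on the convex set $\Theta_\circ$ with gradient bounded as in \cref{ass:G3}. The mean-value theorem along the segment $[\theta',\theta]$ therefore gives
\[
|s^{\rm or}_\theta(\tau)-s^{\rm or}_{\theta'}(\tau)|\le \sum_{k}\sup_{\vartheta\in[\theta',\theta]}\frac{\|\nabla_\theta\lambda_k(\tau;\vartheta)\|}{\lambda_k(\tau;\vartheta)}\,\|\theta-\theta'\|.
\]
We do not use the crude bound $S_\infty/\nu_-$, because $S_\infty$ hides a dependence on the local event count. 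Instead we bound the gradient structurally:
\[
\|\nabla_\theta\lambda_k(\tau;\vartheta)\|\le C_1+C_2\,U(\tau),
\qquad U(\tau):=\sum_\ell N_\ell\bigl((t-h,t)\times B_R(x)\bigr).
\]
Here $C_1$ bounds $\|\nabla\mu_k\|$ and $C_2$ bounds $\|\nabla g_{k\ell}\|_\infty$ uniformly on $\Theta_\circ$, both from the standing smoothness assumption. It follows that
\[
|s^{\rm or}_\theta-s^{\rm or}_{\theta'}|\le 2\nu_-^{-1}\bigl(C_1+C_2U(\tau)\bigr)\|\theta-\theta'\|.
\]
Note that (A2) is not needed for this linear bound. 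It only matters if one wants $\lambda_k$ comparable to $\lambda$, for instance to use the ratio form $\nabla\lambda_k/\lambda_k\le$ const when the gradient of $g$ is dominated by $g$ itself. If we take that route instead, the relative comparability in (A1) with $C_\mu^{\rm rel},C_g$ converts $\lambda^\star$-weights into $\lambda(\cdot;\vartheta)$-weights.

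\textbf{Integration against $\lambda^\star$.} Multiply by $\lambda^\star(\tau)$ and integrate over $W\in\mathcal W_{\rm sch}$. Since $\lambda^\star\le 2\nu_++G_{\max}U(\tau)$, it suffices to control $\int_W(1+U+U^2)\,\dd\tau$. For the linear term, Fubini gives
\[
\int_W U(\tau)\,\dd\tau=\sum_\ell\int \bigl|\{\tau\in W:\ \tau'\in(t-h,t)\times B_R(x)\}\bigr|\,N_\ell(d\tau')\le h\,|B_R|\,N(W^+),
\]
and on $E^{\rm Lip}_{|D|}$ this is at most $h|B_R|K_{\rm act}|W^+|$. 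Since $|W^+|\le C|W|$ for a window of temporal length at least $h$, with $C$ depending on $h,R,\mathcal S$, the linear term is $O(|W|)$.

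\textbf{Main obstacle: the quadratic term.} The term $\int_W U^2$ is where the difficulty lies. We bound it by $\sup_W U\cdot\int_W U$. Intersecting $E^{\rm Lip}_{|D|}$ with the window event $\Omega_{n_*}(h,R)$ gives $\sup U\le n_*=O(\log|D|)$, so
\[
\int_W U^2\,\dd\tau\lesssim n_*K_{\rm act}|W|.
\]
This yields a deterministic constant $L_s(|D|)=C(1+n_*(|D|))$. That choice is allowed, because \cref{ass:G5} permits a deterministic sequence. The stated event $E^{\rm Lip}_{|D|}$ must then be enlarged by intersection with $\Omega_{n_*}$. Alternatively we keep $E^{\rm Lip}_{|D|}$ as stated and absorb the quadratic term by a second-order activity bound; the first route is the safer one.

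\textbf{Assembly.} Combining the steps gives
\[
\int_W|s^{\rm or}_\theta-s^{\rm or}_{\theta'}|\,\lambda^\star\,\dd\tau\le L_s(|D|)\,|W|\,\|\theta-\theta'\|
\]
simultaneously for all $W\in\mathcal W_{\rm sch}$. The failure probability is at most $|D|^{-\eta}+c_2|D|^{-\eta_{\rm win}}$ by (A3) and \cref{prop:hawkes-tail}.
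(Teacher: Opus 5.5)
Your argument proves a weaker statement than the lemma, and you acknowledge this yourself. The lemma asserts \eqref{eq:G5-hawkes} on the activity event $E^{\rm Lip}_{|D|}$ of (A3) \emph{alone}, with failure probability $|D|^{-\eta}$ at the same exponent. Your bound instead needs $E^{\rm Lip}_{|D|}\cap\Omega_{n_*}(h,R)$, costs an extra $c_2|D|^{-\eta_{\rm win}}$, and yields $L_s(|D|)=O(1+n_*(|D|))=O(\log|D|)$.

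The whole loss comes from bounding the weight and the denominator separately: you use $1/\lambda_k\le 1/\nu_-$ and $\lambda^\star\le 2\nu_++G_{\max}U$, and together these produce the $\int_W U^2$ term. On $E^{\rm Lip}_{|D|}$ alone that term cannot be controlled. The constraint $N(W^+)\le K_{\rm act}|W^+|$ still allows $U$ to be of order $|W^+|$ on a set of measure of order $h|B_R|$, so $\int_W U^2$ can be of order $|W|^2$. Neither your fallback ``second-order activity bound'' nor anything in (A3) rescues this route. The $\log|D|$ factor is also not harmless downstream: $A=L_s/b$ enters $\rho=\omega+(1-\omega)AB_{\rm main}$ and the compatibility condition $L_{s,n}B_{{\rm main},n}=o(b_n)$.

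The missing idea is to let $\lambda^\star$ cancel the denominator rather than bound it. Comparing baselines and kernels termwise, (A1) gives $\lambda_k^\star(\tau)\le C_{\rm rel}\lambda_k(\tau;\vartheta)$ for every $\vartheta\in\Theta_\circ$, with $C_{\rm rel}=\max\{C_\mu^{\rm rel},C_g\}$. Hence $\lambda^\star\le C_{\rm rel}(\lambda_0+\lambda_1)(\tau;\vartheta)$. Then (A2) gives $(\lambda_0+\lambda_1)/\lambda_k\le 1+\max\{C_{\rm rat},1/c_{\rm rat}\}=:C_{\rm mix}$. Uniformly over $\vartheta$ on the segment, and using your gradient bound, this yields
\[
\lambda^\star(\tau)\sum_{k=0}^1\frac{\|\nabla_\theta\lambda_k(\tau;\vartheta)\|}{\lambda_k(\tau;\vartheta)}
\le C_{\rm rel}C_{\rm mix}\sum_{k=0}^1\|\nabla_\theta\lambda_k(\tau;\vartheta)\|
\le 2C_{\rm rel}C_{\rm mix}\bigl(C_1+C_2U(\tau)\bigr).
\]
This integrand is linear in $U$. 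Your Fubini step then finishes on $E^{\rm Lip}_{|D|}$ alone, with
\[
L_s=2C_{\rm rel}C_{\rm mix}\bigl(C_1+C_2h|B_R|K_{\rm act}|W^+|/|W|\bigr).
\]
No window-count event and no $\log|D|$ are needed. This is exactly where (A2) enters, so your remark that (A2) ``is not needed'' misreads its role. It is the hypothesis that makes the \emph{other} component $\lambda^\star_{1-k}$ comparable to $\lambda_k(\cdot;\vartheta)$, which (A1) alone cannot do.
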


\begin{proof}
We first develop an explicit relative-comparability constant. Under (A1), for every \(\theta\in\Theta_\circ\), every $\tau$ and every $k$
\begin{align*}
\lambda_k^\star(\tau)
=&\mu_k(\tau;\theta^\star)+\sum_{\ell=0}^1\int g_{k\ell}(\tau-\tau';\theta^\star)\,\dd N_\ell(\tau')\\
\le&
C_{\rm rel}\Bigl(\mu_k(\tau;\theta)+\sum_{\ell=0}^1\int g_{k\ell}(\tau-\tau';\theta)\,\dd N_\ell(\tau')\Bigr)\\
=&
C_{\rm rel}\lambda_k(\tau;\theta),
\end{align*}
where \(C_{\rm rel}:= \max\{C_\mu^{\rm rel}, C_g\}\). In particular, \(\lambda^\star(\tau)\le C_{\rm rel}\bigl(\lambda_0(\tau;\theta)+\lambda_1(\tau;\theta)\bigr)\).

Next, we find a mean-value bound for $s^{\rm or}$ and removal of denominators. Fix $\theta,\theta'\in\Theta_\circ$. Since
$s^{\rm or}_\theta(\tau)=\log\lambda_1(\tau;\theta)-\log\lambda_0(\tau;\theta)$ and $\lambda_k(\tau;\vartheta)\ge \nu_->0$,
the mean value theorem gives, for each $\tau$,
\[
\big|s^{\rm or}_\theta(\tau)-s^{\rm or}_{\theta'}(\tau)\big|
\le
\|\theta-\theta'\|_2\,
\sup_{\vartheta\in[\theta,\theta']}
\sum_{k=0}^1\frac{\|\nabla_\theta \lambda_k(\tau;\vartheta)\|_2}{\lambda_k(\tau;\vartheta)}.
\]
Multiplying by $\lambda^\star(\tau)$ and using $\lambda^\star\le C_{\rm rel}(\lambda_0+\lambda_1)$ yields
\[
\lambda^\star(\tau)\sum_{k=0}^1\frac{\|\nabla_\theta \lambda_k(\tau;\vartheta)\|_2}{\lambda_k(\tau;\vartheta)}
\le
C_{\rm rel}\sum_{k=0}^1 \frac{\lambda_0(\tau;\vartheta)+\lambda_1(\tau;\vartheta)}{\lambda_k(\tau;\vartheta)}\,
\|\nabla_\theta \lambda_k(\tau;\vartheta)\|_2.
\]
By the ratio bounds (A2), for all $k$,
\[
\frac{\lambda_0+\lambda_1}{\lambda_k}
=1+\frac{\lambda_{1-k}}{\lambda_k}
\le 1+\max\{C_{\rm rat},1/c_{\rm rat}\}
=:C_{\rm mix}.
\]
Hence, uniformly in $\vartheta\in\Theta_\circ$,
\[
\lambda^\star(\tau)\sum_{k=0}^1\frac{\|\nabla_\theta \lambda_k(\tau;\vartheta)\|_2}{\lambda_k(\tau;\vartheta)}
\le
C_{\rm rel}C_{\rm mix}\sum_{k=0}^1 \|\nabla_\theta \lambda_k(\tau;\vartheta)\|_2.
\]
Therefore,
\begin{equation}
\label{eq:key-reduction}
\int_D \big|s^{\rm or}_\theta-s^{\rm or}_{\theta'}\big|\,\lambda^\star\,d\tau
\le
C_{\rm rel}C_{\rm mix}\,\|\theta-\theta'\|_2
\int_D \sup_{\vartheta\in\Theta_\circ}\sum_{k=0}^1\|\nabla_\theta \lambda_k(\tau;\vartheta)\|_2\,d\tau.
\end{equation}

We now obtain a linear bound on gradients in terms of local counts. Because $\Theta_\circ$ is compact and the kernel family is truncated-exponential with bounded $w_{k\ell}$,
there exist deterministic constants $C_0,C_1<\infty$ (depending only on $\Theta_\circ,h,R,w_\infty$ and the baseline
smoothness on $\Theta_\circ$) such that, for all $\vartheta\in\Theta_\circ$ and $\tau=(t,x)$,
\begin{equation}
\label{eq:grad-linear-fixed}
\sum_{k=0}^1 \|\nabla_\theta \lambda_k(t,x;\vartheta)\|_2
\le
C_0 + C_1\sum_{\ell=0}^1 N_\ell\!\big((t-h,t)\times B_R(x)\big).
\end{equation}

We proceed to develop a spatio-temporal Fubini bound for the averaged local counts. Let
\[
v_R:=\sup_{y\in\bbR^d}\mu(B_R(y))<\infty
\]
(which equals $\mu(B_R)$ when $\mu$ is Lebesgue measure). Here and below we write $D^+:=(t^\ast-h,T]\times \mathcal S^{+R}$, i.e.\ $D^+=W^+$ with $W=D$. For each $\ell\in\{0,1\}$,
\begin{align}
\int_{t^\ast}^T\int_{\mathcal{S}} N_\ell\!\big((t-h,t)\times B_R(x)\big)\,\mu(dx)\,dt
&=
\int_{(t^\ast-h,T)}\int_{\bbR^d}
\Big(\int_{t^\ast}^T \mathbf{1}\{s\in(t-h,t)\}\,dt\Big)\\ \quad&
\cdot \Big(\int_S \mathbf{1}\{y\in B_R(x)\}\,\mu(dx)\Big)\,N_\ell(ds,dy)
\nonumber\\
&\le
\int_{D^+} h\,v_R\,N_\ell(ds,dy)
=
h\,v_R\,N_\ell(D^+),
\label{eq:fubini-fixed}
\end{align}
since $\int_{t^\ast}^T \mathbf{1}\{s\in(t-h,t)\}\,dt\le h$ and
$\int_S \mathbf{1}\{y\in B_R(x)\}\,\mu(dx)=\mu(\mathcal{S}\cap B_R(y))\le v_R$,
and the integrand vanishes outside $D^+$. The same computation with $t$ integrated over $(a,b]$ (instead of $(t^\ast,T]$) yields
\[
\int_a^b\int_{\mathcal S} N_\ell((t-h,t)\times B_R(x))\,\mu(dx)\,dt
\ \le\ h v_R\,N_\ell(W^+),
\qquad W=(a,b]\times\mathcal S,
\]
so on $E^{\rm Lip}_{|D|}$ we have $N_\ell(W^+)\le K_{\rm act}|W^+|$ uniformly over $W\in\mathcal W_{\rm sch}$.

We now conclude (Assumption~\ref{ass:G5}) on $E^{\rm Lip}_{|D|}$. Combine \eqref{eq:key-reduction}, \eqref{eq:grad-linear-fixed} and \eqref{eq:fubini-fixed} to get
\[
\int_D \big|s^{\rm or}_\theta-s^{\rm or}_{\theta'}\big|\,\lambda^\star\,d\tau
\le
C_{\rm rel}C_{\rm mix}\,\|\theta-\theta'\|_2
\Big(
C_0|D| + C_1 h v_R \sum_{\ell=0}^1 N_\ell(D^+)
\Big).
\]
On $E^{\rm Lip}_{|D|}$ we have $\sum_{\ell=0}^1 N_\ell(D^+)\le K_{\rm act}|D^+|$, hence
\[
\int_D \big|s^{\rm or}_\theta-s^{\rm or}_{\theta'}\big|\,\lambda^\star\,d\tau
\le
C_{\rm rel}C_{\rm mix}\,\|\theta-\theta'\|_2
\Big(
C_0|D| + C_1 h v_R K_{\rm act}|D^+|
\Big).
\]
Define
\[
L_s(|D|)
:=C_{\rm rel}C_{\rm mix}\left(C_0 + C_1 h v_R K_{\rm act}\,\frac{|D^+|}{|D|}\right),
\]
which is deterministic given the window geometry and model constants. Then \eqref{eq:G5-hawkes} follows.
Finally, by \eqref{eq:hawkes-activity} we have $\Pr((E^{\rm Lip}_{|D|})^c)\le |D|^{-\eta}$, completing the verification of Assumption~\ref{ass:G5}.
\end{proof}


\begin{proposition}[Label-to-score Lipschitz for truncated Hawkes]
\label{prop:hawkes-G7prime}
Assume \cref{ass:G0,ass:G0pred,ass:G1,ass:G2prime,ass:G3}.
Assume the Hawkes process kernels are supported on $(0,L]\times B_R(0)$ and are nonnegative.
Define
\[
V_{L,R}:=L\,\mu(\mathcal B_R),
\qquad\text{so that}\qquad
|(t,t+L]\times B_R(x)|\le V_{L,R}\ \ \text{for all }(t,x).
\]
Redefine (if needed) $K_{\rm win}\leftarrow K_{\rm win}\vee 1$, where $K_{\rm win}$ is the deterministic envelope from
\cref{ass:G2prime}.

For an event $\gamma_j=(t_j,x_j)$ and a schedule endpoint $u\in\{u_m\}_{m\le\Tmax}$ define the forward cone on $D^{[u]}$,
\[
C_j^{[u]}:=\big((t_j,u]\cap(t_j,t_j+L]\big)\times B_R(x_j).
\]
For $z_{\rm Fr}>0$, define the cone-count event
\[
E^{\rm cone}(z_{\rm Fr}):=
\bigcap_{u\in\{u_m\}_{m\le\Tmax}}\ \bigcap_{j\le N^{[u]}}
\left\{
N(C_j^{[u]})
\ \le\
V_{L,R}\,K_{\rm win}
+\sqrt{2V_{L,R}\,K_{\rm win}\,z_{\rm Fr}}
+\frac{2}{3}z_{\rm Fr}
\right\}.
\]

Define the deterministic constants
\[
L_{\log}:=\frac{S_\infty}{\underline\mu_{\min}},\qquad
C_{\log}:=\frac{\overline B_\infty^{\rm comp}}{\underline\mu_{\min}}
+\frac{S_\infty\,B_\infty^{\rm comp}}{\underline\mu_{\min}^2},
\]
and
\[
\bar C_0:=V_{L,R}\,C_{\log}+2L_{\log}+\overline B_1^\Sigma,\qquad
\bar C_1:=\sqrt{2V_{L,R}}\,C_{\log},\qquad
\bar C_2:=\frac{2}{3}C_{\log}.
\]

Then on $\Omega_{n_*}(L,R)\cap E^{\rm cone}(z_{\rm Fr})$, for every update $m\in\{0,\dots,\Tmax-1\}$ with $u=u_{m+1}$
and $r=r^{(m+1)}$, uniformly over $\theta\in\Theta_{\rm sc}$,
\[
\left\Vert \nabla f_{r}^{[u]}(\theta) - \nabla f_{r^\star}^{[u]}(\theta)\right\Vert
\ \le\ 
\frac{d_H^{[u]}(r,r^\star)}{|D^{[u]}|}
\Big(\bar C_0 K_{\rm win} + \bar C_1\sqrt{K_{\rm win} z_{\rm Fr}} + \bar C_2 z_{\rm Fr}\Big).
\]
In particular (by uniformity in $\theta$), the same bound holds at any restricted M-step maximiser
$\hat\theta_r^{[u]}\in\arg\max_{\theta\in\Theta_{\rm sc}}\ell_r^{[u]}(\theta)$, which is exactly \cref{ass:G7prime}.

Moreover, by Lemma~\ref{lem:predictable-cone} and Lemma~\ref{lem:Freedman}, for each $z_{\rm Fr}>0$,
\[
\mathbb P\!\left(\big(E^{\rm cone}(z_{\rm Fr})\big)^c\cap\Omega_{n_*}(L,R)\cap E_{\rm act}\right)
\ \le\ (1+\eta_{\rm act})C_{\rm act}|D|\,\Tmax\,e^{-z_{\rm Fr}}.
\]
\end{proposition}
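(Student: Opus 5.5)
The plan is to reduce the gradient discrepancy to a sum over the mislabelled events and bound each term by a local (cone) count. For fixed $u=u_{m+1}$, $r=r^{(m+1)}$ and $\theta\in\Theta_{\rm sc}$, I would use the canonical telescoping path of Definition~\ref{def:telescoping-path}, restricted to the indices $i$ with $t_i\le u$ and $r_i\neq r_i^\star$. Writing $r^{(0)}=r,\dots,r^{(\mathfrak I)}=r^\star$, we have
\[
\nabla\ell_r^{[u]}(\theta)-\nabla\ell_{r^\star}^{[u]}(\theta)=\sum_{j}\big(\nabla\ell_{r^{(j-1)}}^{[u]}(\theta)-\nabla\ell_{r^{(j)}}^{[u]}(\theta)\big).
\]
It therefore suffices to show that a single flip at $\gamma_{i_j}$ changes $\nabla\ell^{[u]}$ by at most $\bar C_0K_{\rm win}+\bar C_1\sqrt{K_{\rm win}z_{\rm Fr}}+\bar C_2z_{\rm Fr}$, deterministically on $\Omega_{n_*}(L,R)\cap E^{\rm cone}(z_{\rm Fr})$. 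Dividing by $|D^{[u]}|$ and counting $d_H^{[u]}(r,r^\star)$ terms then gives the bound, and it is uniform in $\theta$ because nothing is random beyond these two events. The intermediate labellings need not lie on the iterate path, but the per-flip bounds below only use the Hawkes structure and the deterministic constants of \cref{ass:G1,ass:G3}, which hold for all labellings.

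For one flip at $\gamma_j=(t_j,x_j)$, I would split $\nabla\ell^{[u]}=\sum_i\nabla\log\widehat\lambda^{r}_{r_i}(\gamma_i)-\int_{D^{[u]}}\sum_k\nabla\widehat\lambda_k^r$ into three pieces.
\begin{enumerate}[label=(\roman*)]
\item The compensator term changes by at most $\int\sum_k\|\Delta\nabla\widehat\lambda_k^r\|\le\overline B_1^\Sigma$, which is at most $\overline B_1^\Sigma K_{\rm win}$ since $K_{\rm win}\ge1$.
\item The flipped event's own summand changes from $\nabla\log\widehat\lambda_0$ to $\nabla\log\widehat\lambda_1$, because predictability makes the pre-$t_j$ intensities equal. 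By \cref{ass:G1} and $S_\infty$, this contributes at most $2L_{\log}\le 2L_{\log}K_{\rm win}$.
\item Every other event $\gamma_i$ with $i\neq j$ changes only if $\gamma_i$ lies in the forward cone $C_j^{[u]}$, by finite support of the kernels. For such events,
\[
\nabla\log\widehat\lambda^{r'}_k-\nabla\log\widehat\lambda^{r}_k=\frac{\Delta\nabla\widehat\lambda_k}{\widehat\lambda_k^{r'}}+\nabla\widehat\lambda_k^r\Big(\frac1{\widehat\lambda_k^{r'}}-\frac1{\widehat\lambda_k^{r}}\Big),
\]
which by \cref{ass:G1,ass:G3} is bounded in norm by $C_{\log}$. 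Summing over the cone gives at most $C_{\log}N(C_j^{[u]})$.
\end{enumerate}
On $E^{\rm cone}(z_{\rm Fr})$ the cone count is at most $V_{L,R}K_{\rm win}+\sqrt{2V_{L,R}K_{\rm win}z_{\rm Fr}}+\tfrac23z_{\rm Fr}$, so piece (iii) produces exactly the $V_{L,R}C_{\log}$, $\bar C_1$ and $\bar C_2$ terms. Adding the three pieces yields the stated constants.

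The step I expect to be the main obstacle is the probability bound for $E^{\rm cone}$. For each event $\gamma_j$, the time $t_j$ is a stopping time and $x_j$ is $\mathcal G_{t_j}$-measurable, so Lemma~\ref{lem:predictable-cone} makes $\mathbf 1_{C_j^{[u]}}$ a $\mathcal G$-predictable integrand. I would write $N(C_j^{[u]})=\int\mathbf 1_{C_j^{[u]}}\,dM^\star+\int_{C_j^{[u]}}\lambda^\star$. On $\Omega_{n_*}$ the compensator part is at most $V_{L,R}K_{\rm win}$, and the predictable variation is at most the same quantity. Applying Lemma~\ref{lem:Freedman} with $H=1$ gives the one-sided deviation $\sqrt{2Vz_{\rm Fr}+z_{\rm Fr}^2/9}+z_{\rm Fr}/3\le\sqrt{2Vz_{\rm Fr}}+\tfrac23z_{\rm Fr}$ outside probability $e^{-z_{\rm Fr}}$.

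The subtle point is that the number of cones is random. I would index the cones by event number $j$, replacing $\gamma_j$ by the $j$th-event stopping time and using an empty cone when there is no $j$th event. On $E_{\rm act}$ only $j\le(1+\eta_{\rm act})C_{\rm act}|D|$ matter, and the schedule contributes $\Tmax$ endpoints $u$. A union bound over these then gives $(1+\eta_{\rm act})C_{\rm act}|D|\,\Tmax\,e^{-z_{\rm Fr}}$.

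Finally, since the deterministic bound holds for every $\theta\in\Theta_{\rm sc}$, it holds in particular at any selected maximiser $\hat\theta_r^{[u]}$. This gives \cref{ass:G7prime}.
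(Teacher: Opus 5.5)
Your proposal is correct and follows the paper's proof essentially step for step. Both use the canonical telescoping flip path and the same three-way split of a single flip: the compensator via $\overline B_1^\Sigma$, the flipped event itself via $2L_{\log}$, and forward-cone events via $C_{\log}N(C_j^{[u]})$. Both absorb constants using $K_{\rm win}\ge 1$, and both control $E^{\rm cone}$ with Lemma~\ref{lem:predictable-cone}, Freedman with $H=1$, and a union bound. You add two small points of care that the paper leaves implicit: indexing cones by a deterministic event number $j$ (empty cone if there is no $j$th event), and noting that the per-flip bounds never invoke the window envelope for off-path intermediate labellings.
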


\begin{proof}
Fix $u$ and $\theta\in\Theta_{\rm sc}$. Let $(r^{(j)})_{j=0}^{\mathfrak I}$ be the canonical telescoping path from $r$ to
$r^\star$ from \cref{def:telescoping-path}, where $\mathfrak I=d_H^{[u]}(r,r^\star)$ and each adjacent pair
differs by a single flip at some index $i_j\le N^{[u]}$. By the triangle inequality,
\[
\big\|\nabla f_r^{[u]}(\theta)-\nabla f_{r^\star}^{[u]}(\theta)\big\|
\le
\sum_{j=1}^{\mathfrak I}
\big\|\nabla f_{r^{(j-1)}}^{[u]}(\theta)-\nabla f_{r^{(j)}}^{[u]}(\theta)\big\|.
\]
It therefore suffices to bound the change in $\nabla f^{[u]}$ under a single flip.

Fix a single flip at $\gamma_{i}=(t_{i},x_{i})$ and denote the two labellings before/after the flip by $r$ and $r'$.
Write $W:=D^{[u]}$ for brevity. By definition,
\[
\nabla f_r^{[u]}(\theta)
=\frac{1}{|W|}\left\{\sum_{j:\,t_j\le u}\nabla_\theta\log\widehat\lambda_{r_j}^r(\gamma_j;\theta)
-\int_W \nabla_\theta \widehat\lambda^r(\tau;\theta)\,d\tau\right\}.
\]

With respect to the event-sum part, at the flipped event $\gamma_i$, predictability implies
$\widehat\lambda_k^{r'}(\gamma_i;\theta)=\widehat\lambda_k^{r}(\gamma_i;\theta)$ for $k\in\{0,1\}$, so the only change
is the component index inside $\log(\cdot)$. Hence
\[
\left\|\nabla\log\widehat\lambda_{r'_{i}}^{r'}(\gamma_i;\theta)
-\nabla\log\widehat\lambda_{r_{i}}^{r}(\gamma_i;\theta)\right\|
\le 2\max_{k\in\{0,1\}}\|\nabla\log\widehat\lambda_k^{r}(\gamma_i;\theta)\|
\le 2L_{\log}.
\]

For any other event $\gamma_j$ with $t_j\le u$, changing the label at $\gamma_{i}$ can only affect
$\widehat\lambda_\cdot(\gamma_j;\theta)$ if $\gamma_j\in C_{i}^{[u]}$ (kernel support $(0,L]\times B_R$).
On $C_{i}^{[u]}$, for each $k$,
\begin{align*}
\big\|\nabla\log\widehat\lambda_k^{r'}(\tau;\theta)-\nabla\log\widehat\lambda_k^{r}(\tau;\theta)\big\|
&=
\left\|\frac{\nabla\widehat\lambda_k^{r'}(\tau;\theta)}{\widehat\lambda_k^{r'}(\tau;\theta)}
-\frac{\nabla\widehat\lambda_k^{r}(\tau;\theta)}{\widehat\lambda_k^{r}(\tau;\theta)}\right\|\\
&\le
\frac{\|\Delta(\nabla \widehat\lambda_k^r)(\tau;\theta)\|}{\underline\mu_{\min}}
+\frac{\|\nabla \widehat\lambda_k^{r}(\tau;\theta)\|}{\underline\mu_{\min}^2}\,|\Delta\widehat\lambda_k^r(\tau;\theta)|
\ \le\ C_{\log},
\end{align*}
by \cref{ass:G1,ass:G3}. Therefore the event-sum contribution is bounded by
\[
2L_{\log}+C_{\log}\,N(C_{i}^{[u]}).
\]

With respect to the compensator part, by the $L^1$ gradient influence bound in \cref{ass:G3},
\[
\left\|\int_W \nabla\widehat\lambda^{r'}(\tau;\theta)\,d\tau - \int_W \nabla\widehat\lambda^{r}(\tau;\theta)\,d\tau\right\|
\le \int_W \|\Delta(\nabla\widehat\lambda^{r})(\tau;\theta)\|\,d\tau
\le \overline B_1^\Sigma.
\]

Combining (i) and (ii),
\[
\big\|\nabla f_{r'}^{[u]}(\theta)-\nabla f_{r}^{[u]}(\theta)\big\|
\le
\frac{1}{|W|}\Big(2L_{\log}+C_{\log}N(C_{i}^{[u]})+\overline B_1^\Sigma\Big).
\]
On $E^{\rm cone}(z_{\rm Fr})$, substitute the bound on $N(C_i^{[u]})$ and use $K_{\rm win}\ge 1$ to absorb
$2L_{\log}+\overline B_1^\Sigma$ into the $K_{\rm win}$ term, giving
\[
\big\|\nabla f_{r'}^{[u]}(\theta)-\nabla f_{r}^{[u]}(\theta)\big\|
\le
\frac{1}{|W|}\Big(\bar C_0 K_{\rm win} + \bar C_1\sqrt{K_{\rm win} z_{\rm Fr}} + \bar C_2 z_{\rm Fr}\Big).
\]

We now sum over the telescoping path. Summing the single-flip bound over $\mathfrak I=d_H^{[u]}(r,r^\star)$ flips yields the stated inequality for
$\|\nabla f_r^{[u]}(\theta)-\nabla f_{r^\star}^{[u]}(\theta)\|$.
Uniformity in $\theta$ allows evaluation at $\theta=\hat\theta_r^{[u]}$.

Fix $(u,j)$ and consider $C_j^{[u]}$. By Lemma~\ref{lem:predictable-cone},
$\mathbf 1_{C_j^{[u]}}$ is $\mathcal G$-predictable, so Freedman inequality applies to
\(
N(C_j^{[u]})-\int_{C_j^{[u]}}\lambda^\star(\tau)\,d\tau.
\)
On $\Omega_{n_*}(L,R)$ we have $\|\lambda^\star\|_\infty\le K_{\rm win}$, hence
$\int_{C_j^{[u]}}\lambda^\star\le K_{\rm win}|C_j^{[u]}|\le V_{L,R}K_{\rm win}$.
Therefore
\[
\mathbb P\!\left(
N(C_j^{[u]})>V_{L,R}K_{\rm win}+\sqrt{2V_{L,R}K_{\rm win}z_{\rm Fr}}+\frac23 z_{\rm Fr},\ \Omega_{n_*}(L,R)
\right)\le e^{-z_{\rm Fr}}.
\]
On $E_{\rm act}$, the number of cones in the defining intersection of $E^{\rm cone}(z_{\rm Fr})$ is at most
$\sum_{m=1}^{\Tmax} N^{[u_m]}\le \Tmax N(D)\le (1+\eta_{\rm act})C_{\rm act}|D|\,\Tmax$, so the union bound yields
the displayed probability bound in the proposition.
\end{proof}

\subsection{Specialization: (in)homogeneous Poisson with dispersion}\label{sec:apppoisson}

We next consider the Poisson limit of Hawkes processes, i.e.\ \(\xi(A)=0\), so that \(g_{k\ell}\equiv 0\) and \(\lambda_k=\mu_k\). The same no-feedback simplification holds for a common predictable dispersion field \(V(\tau)\) in a Cox-type specification \(\lambda_k(\tau)=V(\tau)\mu_k(\tau)\), provided \(V\) is controlled so that the bounded-intensity and concentration assumptions continue to hold. In this case there is no label feedback: \(\widehat\lambda_k^r\equiv \lambda_k\) for every labelling \(r\). Hence Assumption~\ref{ass:G3} holds with zero flip-influence constants, Assumption~\ref{ass:G4} holds with \(\Delta_s=0\), and Assumption~\ref{ass:G2prime} reduces to a bounded-intensity condition with \(K_{\rm win}=O(1)\). The remaining requirements reduce to ordinary regularity conditions for the two-component Poisson family, namely positive bounded intensities, local curvature/identifiability of the Poisson log-likelihood, and the decisive-set mass condition in Assumption~\ref{ass:G8}. Under those conditions, the hard-EM floor in \cref{thm:contract} has no feedback-propagation term and is driven only by oracle-minority mass on decisive sets together with the ambiguous-band mass \(C_b(\theta^\star)\). Thus hard assignments may plateau at this classification floor, whereas SEM-style averaging can still remain consistent because label errors do not propagate through time.

\begin{lemma}[Ambiguous-band mass for Poisson]\label{lem:poisson-Cb}
Let
\[
S_0(\tau):=s^{\rm or}_{\theta^\star}(\tau)
=\log\frac{\lambda_1(\tau\mid\mathcal H_{t-};\theta^\star)}{\lambda_0(\tau\mid\mathcal H_{t-};\theta^\star)}.
\]
Suppose the $\lambda^\star$-weighted distribution of $S_0$ has a density bounded by $L_0$ on $[-2b,2b]$. Then
\[
C_b(\theta^\star)
=\frac{1}{|D|}\int_D \1\{|S_0(\tau)|\le 2b\}\,\lambda^\star(\tau)\,\dd\tau
\ \le\ 4L_0 b\,K_{\rm win}.
\]
\end{lemma}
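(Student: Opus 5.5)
The plan is to repeat the Regime A/B computation in its simplest form. In the Poisson case there is no feedback, so \(S_0\) is just the oracle log-ratio itself. The bound then follows from two facts: the \(\lambda^\star\)-weighted law of \(S_0\) puts small mass on the band, and the total intensity is bounded by the envelope.

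First, rewrite \(C_b(\theta^\star)\) through the \(\lambda^\star\)-weighted probability measure \(\mathbb P_{\lambda^\star}\) on \(D\) introduced before Lemma~\ref{lem:regA}:
\[
C_b(\theta^\star)=\frac{\Lambda^\star(D)}{|D|}\,\mathbb P_{\lambda^\star}\!\big(|S_0(\tau)|\le 2b\big),
\qquad \Lambda^\star(D)=\int_D\lambda^\star(\tau)\,\dd\tau .
\]
This identity holds by definition of \(\mathbb P_{\lambda^\star}\).

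Second, bound the probability using the density hypothesis. Since the law of \(S_0\) under \(\mathbb P_{\lambda^\star}\) has a density bounded by \(L_0\) on \([-2b,2b]\), integrating that density over an interval of length \(4b\) gives
\[
\mathbb P_{\lambda^\star}\!\big(|S_0|\le 2b\big)\le 4bL_0 .
\]
Any atom at the endpoints is ruled out by the density assumption.

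Third, bound the prefactor. In the Poisson specialization Assumption~\ref{ass:G2prime} reduces to a bounded-intensity condition, so \(\sup_{\tau}\lambda^\star(\tau)\le K_{\rm win}\), either deterministically or on the window event. Hence \(\Lambda^\star(D)\le K_{\rm win}|D|\), and therefore \(\Lambda^\star(D)/|D|\le K_{\rm win}\). Multiplying the two bounds gives \(C_b(\theta^\star)\le 4L_0bK_{\rm win}\).

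I do not expect a real obstacle. The only point to check is interpretive: \(\lambda^\star\) may in general be random, for example through a Cox dispersion field \(V\), so the density hypothesis has to be read pathwise for the realised weighted measure, exactly as in Lemmas~\ref{lem:regA}--\ref{lem:regB}. Under that reading every step is deterministic given the path, and the envelope \(K_{\rm win}\) is the one from the bounded-intensity reduction described in the Poisson specialization.
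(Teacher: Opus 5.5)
Your proposal is correct and follows the paper's proof, which is the one-line argument ``immediate from the definition, $\lambda^\star\le K_{\rm win}$, and the density bound.'' Your three steps spell out exactly that: normalize by $\mathbb P_{\lambda^\star}$, bound the band probability by $4bL_0$, and bound the prefactor $\Lambda^\star(D)/|D|$ by $K_{\rm win}$.
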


\begin{proof}
Immediate from the definition, $\lambda^\star(\tau)\le K_{\rm win}$, and the density bound.
\end{proof}

\end{document}